\DeclareMathOperator{\Imag}{Im}
\DeclareMathOperator{\fvec}{\mathtt{Vec}_\mathbb{K}}
\DeclareMathOperator{\set}{\mathtt{Set}}
\DeclareMathOperator{\CABA}{\mathtt{CABA}}
\DeclareMathOperator{\JSL}{\mathtt{JSL}}
\DeclareMathOperator{\TBJSL}{\mathtt{StJSL}}
\DeclareMathOperator{\ACDL}{\mathtt{AlgCDL}}
\DeclareMathOperator{\tvec}{\mathtt{StVec}_\mathbb{K}}
\DeclareMathOperator{\poset}{\mathtt{Poset}}
\DeclareMathOperator{\At}{At}
\DeclareMathOperator{\Ps}{\mathtt{P}}
\DeclareMathOperator{\Psc}{\mathtt{R}}
\DeclareMathOperator{\md}{Mod}
\DeclareMathOperator{\Coalg}{Coalg}
\DeclareMathOperator{\rec}{Rec}
\DeclareMathOperator{\Alg}{Alg}
\DeclareMathOperator{\T}{\tau}
\DeclareMathOperator{\Lan}{\mathscr{L}}
\DeclareMathOperator{\coeq}{coeq}
\DeclareMathOperator{\eq}{eq}
\DeclareMathOperator{\E}{\mathtt{E}}
\DeclareMathOperator{\M}{\mathtt{M}}
\DeclareMathOperator{\lepi}{\longrightarrow \mkern-15mu\rightarrow}
\theoremstyle{plain}
\theoremstyle{plain}\newtheorem{proposition}[thm]{Proposition}
\theoremstyle{plain}\newtheorem{lemma}[thm]{Lemma}
\theoremstyle{plain}\newtheorem{example}[thm]{Example}
\theoremstyle{plain}\newtheorem*{remark}{Remark}
\begin{document}

\title[Unveiling Eilenberg--type correspondences]{Unveiling Eilenberg--type correspondences: \ \ \ \ \ \ \ \  Birkhoff's Theorem for (finite) algebras + Duality}

\author[Julian Salamanca]{Julian Salamanca}
\address{CWI Amsterdam, The Netherlands}	
\email{salamanc@cwi.nl}  
\thanks{The research of this author is funded by the Dutch NWO project 612.001.210.}	

%



\keywords{Birkhoff's theorem, Birkhoff's theorem for finite algebras, Eilenberg--type correspondence, duality, (pseudo)variety of algebras, (co)monad, (pseudo)(co)equational theory.}
\subjclass{}


\begin{abstract}
  \noindent The purpose of the present paper is to show that: 
	\[
		\text{Eilenberg--type correspondences = Birkhoff's theorem for (finite) algebras + duality.} 
	\]
	We consider algebras for a monad $\mathsf{T}$ on a category $\mathcal{D}$ and we study (pseudo)varieties of 
	$\mathsf{T}$--algebras. Pseudovarieties of algebras are also known in the literature as varieties of finite algebras. Two well--known theorems that characterize varieties 
	and pseudovarieties of algebras play an important role here: Birkhoff's theorem and Birkhoff's theorem for finite algebras, the latter also known as Reiterman's theorem. 
	We prove, under mild assumptions, a categorical version of Birkhoff's theorem for (finite) algebras to establish a one--to--one correspondence between 
	(pseudo)varieties of $\mathsf{T}$--algebras and (pseudo)equational $\mathsf{T}$--theories. Now, if $\mathcal{C}$ is a category that is dual to $\mathcal{D}$ and 
	$\mathsf{B}$ is the comonad on $\mathcal{C}$ that is the dual of $\mathsf{T}$, we get a one--to--one correspondence between (pseudo)equational $\mathsf{T}$--theories 
	and their dual, (pseudo)coequational $\mathsf{B}$--theories. Particular instances of 
	(pseudo)coequational $\mathsf{B}$-theories have been already studied in language theory under the name of ``varieties of languages'' to establish Eilenberg--type correspondences. 
	All in all, we get a one--to--one correspondence between (pseudo)varieties of $\mathsf{T}$--algebras and (pseudo)coequational $\mathsf{B}$--theories, which will be shown 
	to be exactly the nature of Eilenberg--type correspondences.
\end{abstract}

\maketitle

\section*{Introduction}

The goal of the present paper is to show that: 
\[
	\text{Eilenberg--type correspondences = Birkhoff's theorem for (finite) algebras + duality.} 
\]
Eilenberg's theorem is an important result in algebraic language theory, stating that there is a one--to--one correspondence between certain classes of 
regular languages, called varieties of languages, and certain classes of monoids, called pseudovarieties of monoids \cite[Theorem 34]{eilenberg}. The concept of regular language, 
which is defined in terms of deterministic automata, has an equivalent machine--independent algebraic definition, namely, a language recognized by a finite monoid. Recognizable 
languages on an alphabet $\Sigma$ are inverse images of monoid homomorphisms with domain $\Sigma^*$ and as codomain 
any finite monoid. This algebraic approach allows us to study various kinds of recognizable languages where the notion of homomorphism between algebras 
is a key ingredient.

\noindent The study of algebras and classes of algebras is a main subject of study in universal algebra. A well--known theorem in this area is 
Birkhoff's variety theorem \cite{birk}, which states that a class of algebras of a given type is defined by a set of equations if and only if it is a variety, i.e., 
it is closed under homomorphic images, subalgebras, and products. Later, a Birkhoff's theorem for finite algebras was also obtained \cite{banab,reit}, also known as Reiterman's theorem. 
In Birkhoff's theorem for finite algebras the kind of equations are of a more general kind, they can be defined by using  topological techniques or can be equivalently defined by using the 
so--called implicit operations. In this case, the classes of algebras considered are pseudovarieties of algebras, 
also known as varieties of finite algebras, which are defined as classes of finite algebras of the same type that are closed under homomorphic images, subalgebras and finite products. 

\noindent To state Eilenberg--type theorems, which establish one--to--one correspondences between (pseudo)varieties of algebras and (pseudo)varieties of languages, 
one has to define and find the corresponding notion of a (pseudo)variety of languages which is, in general, a non--trivial problem. There are Eilenberg--type correspondences in the 
literature such as, e.g., \cite{pin} for pseudovarieties of ordered monoids and ordered semigroups, the one in \cite{reut} for pseudovarieties of finite 
dimensional $\mathbb{K}$--algebras, \cite{polak} for pseudovarieties of idempotent semirings and \cite[Theorem 39]{jan1} for varieties of monoids. 

\noindent The work in the present paper has its basis in \cite{mikolaj,sbr}. We take the main idea given in \cite{mikolaj}, where algebras 
for a monad $\mathsf{T}$ on $\mathcal{D}$ are considered, to define the natural notion of a (pseudo)variety of $\mathsf{T}$--algebras. In order to characterize the kind of equations 
defining a (pseudo)variety of $\mathsf{T}$--algebras, we use the natural approach of capturing equations as epimorphisms in $\mathcal{D}$, i.e., congruences, and add the condition that 
those equations are closed under substitution. These properties are captured in categorical terms to define the notion of a (pseudo)equational $\mathsf{T}$--theory. We obtain that, 
under mild assumptions, (pseudo)varieties of $\mathsf{T}$--algebras are exactly classes of (finite) $\mathsf{T}$--algebras that are defined by (pseudo)equational 
$\mathsf{T}$--theories, and that they are in one--to--one correspondence. This will give us a categorical version of Birkhoff's theorem for (finite) $\mathsf{T}$--algebras. Once we get this  
one--to--one correspondence between (pseudo)varieties of $\mathsf{T}$--algebras and (pseudo)equational $\mathsf{T}$--theories, we use a category $\mathcal{C}$ that is dual to 
$\mathcal{D}$ and a result given in \cite{sbr} that allows us to define a canonical comonad $\mathsf{B}$ on $\mathcal{C}$ that is dual to $\mathsf{T}$ and lift the duality between 
$\mathcal{C}$ and $\mathcal{D}$ to their corresponding Eilenberg--Moore categories. With this duality, there is a canonical correspondence between (pseudo)equational 
$\mathsf{T}$--theories and their corresponding dual, i.e., (pseudo)coequational $\mathsf{B}$--theories. Our most important examples of (pseudo)coequational $\mathsf{B}$--theories 
are those given in Eilenberg--type correspondences, i.e., ``varieties of languages''. All in all, we get a one--to--one correspondence between (pseudo)varieties of $\mathsf{T}$--algebras 
and (pseudo)coequational $\mathsf{B}$--theories. We will show how this concept of (pseudo)coequational $\mathsf{B}$--theories coincides with the different notions of 
``varieties of languages'' in Eilenberg--type correspondences, which bring us to our slogan, Eilenberg--type correspondences = Birkhoff's theorem for (finite) algebras + duality. 
As a consequence, we can summarize Eilenberg--type correspondences in the following picture:
\begin{center}	
	\begin{tikzpicture}[>=stealth,shorten >=3pt,
		node distance=2.5cm,on grid,auto,initial text=,
		accepting/.style={thick,double distance=2.5pt}]
		\node () at (0,-0.05) {(pseudo)varieties};
		\node () at (0,-0.45) {of $\mathsf{T}$--algebras};
		\draw[double,thick,<->] (1.5,-0.25) -- (7.5,-0.25);
		\node () at (4.5,0) {Eilenberg--type correspondences};

		\node () at (11.6,0.3) {$op$};
		\node () at (9.5,-0.2) {$\left(\begin{array}{c}
						\text{(pseudo)equational}\\
						\text{$\mathsf{T}$--theories}\end{array}\right)$};
	\end{tikzpicture}
\end{center}
where `$op$' denotes the dual operator. This easy to understand and straightforward one--to--one correspondence gives us what we called an abstract Eilenberg--type correspondence 
for (pseudo)varieties of $\mathsf{T}$--algebras, Proposition \ref{eilvar} and \ref{eilpvar}, from which we recover and discover particular instances of Eilenberg--type correspondences 
for different kinds of algebraic structures, i.e., $\mathsf{T}$--algebras. It is worth mentioning that Eilenberg--type correspondences have not been fully understood for the last forty years, 
which can be witnessed by the numerous published results on the subject that deal with specific kinds of algebras such as \cite{jan1,eilenberg,perpin,pin,polak,reut,wilke} and 
categorical generalizations such as \cite{adamek,mikolaj,urbat,jste} in which the direct relation between ``varieties of languages'' and equational theories, by using duality, is not studied 
or explored to find and justify the defining properties of a ``variety of languages''.

\noindent {\bf Related work.} We briefly summarize here some related work (see the Conclusions for a more detailed discussion.) There are various generalizations of Birkhoff's theorem  
for (finite) algebras such as \cite{awodeyh,bana,banab,chen}. In order to derive Eilenberg--type correspondences, in this paper we prove a categorical versions of Birkhoff's theorem for 
algebras and finite algebras, which are stated, under mild assumptions, as one--to--one correspondences between (pseudo)varieties of algebras and (pseudo)equational theories. The 
variety version is derived from \cite{bana} and the pseudovariety version is based on the observation that pseudovarieties of algebras are directed unions of of equational classes of 
finite algebras \cite[Proposition 4]{banab}. It is worth mentioning that the proof presented here for Birkhoff's theorem for finite $\mathsf{T}$--algebras does not involve the use of 
topology nor profinite techniques, in contrast to \cite{banab,chen,reit}.

\noindent Related work such as \cite{gehrke00,gehrke0,gehrke,gehrke1} 
have influenced and motivated the use of duality in language theory to characterize recognizable languages and to derive local versions of Eilenberg--type correspondences. 
In the present paper, the use of duality is a key aspect that helps us to understand and unveil Eilenberg--type correspondences. 

\noindent There are some works in which categorical approaches to derive Eilenberg--type correspondences are used, notably \cite{adamek,mikolaj,urbat,jste}. The work in 
\cite{urbat} subsumes 
the work made in \cite{adamek,mikolaj} and the present paper subsumes the work made in \cite{jste,mikolaj}. The kind of algebras considered in \cite{adamek} are algebras with a monoid 
structure which restricts the kind of algebras one can consider, e.g., Eilenberg's theorem \cite[Theorem 34s]{eilenberg} for pseudovarieties of semigroups cannot be derived 
from \cite{adamek}. A different approach to get a general Eilenberg--type theorem is the approach given in \cite{mikolaj} were the algebras considered are algebras for a monad 
$\mathsf{T}$ on $\set^S$, for a fixed set $S$. The fact that all the monads considered are on $\set^S$ was not general enough to cover cases such as 
\cite{pin,polak} in which the varieties of languages are not necessarily Boolean algebras. The approach in \cite{mikolaj} of considering algebras for a monad $\mathsf{T}$ is 
also considered and generalized in \cite{jste,urbat} 
as well as in the present paper. One of the main challenges in categorical approaches to Eilenberg--type correspondences is to define the right concept of a ``variety of languages''. 
The definition of a ``variety of languages'' given in \cite{urbat} depends of finding what they call a ``unary representation'', which is a set of unary operations on a free algebra 
satisfying certain properties, see \cite[Definition 3.7.]{urbat}. From this ``unary representation'' one can construct syntactic algebras and define the kind derivatives that define 
a ``variety of languages''. The definition of a ``variety of languages'' in the present paper is a categorical one which avoids the explicit definition of derivatives and existence of syntactic algebras. 
In the present paper, derivatives are captured coalgebraically and syntactic algebras are not used to prove the abstract Eilenberg--type correspondences 
theorems, but both of those concepts can be easily obtained via duality in each concrete case. 
Coalgebraic approaches, from which one can easily define the concept of a ``variety of languages'', are not used in \cite{mikolaj,urbat}. Another important related work is 
\cite{jan1}, in which an Eilenberg--type correspondence for varieties of monoids is shown, which is an Eilenberg--type correspondence that can be derived from 
the present paper but not from \cite{adamek,mikolaj,urbat}. The work made in \cite{jan1} motivates the study of Eilenberg--type correspondences for other classes of algebras different 
than pseudovarieties. It is worth mentioning that in \cite{jan1} the duality between equations and coequations is studied for the first time in the context of an Eilenberg--type correspondence.

\noindent All in all, the contributions of the present paper can be summarized as follows:
\begin{enumerate}[-]
	\item To unveil Eilenberg--type correspondences and show that:
		\[
			\text{Eilenberg--type correspondences = Birkhoff's theorem for (finite) algebras + duality.}
		\]
	\item To show and understand where ``varieties of languages'' come from, that is:
		\[
			\text{``varieties of languages'' = duals of (pseudo)equational theories.}
		\]
		This fact was conjectured by the author in \cite{jste}.
	\item To provide categorical versions of Birkhoff's theorem for (finite) $\mathsf{T}$--algebras as one--to--one correspondences between (pseudo)varieties of 
		$\mathsf{T}$--algebras and (pseudo)equational $\mathsf{T}$--theories, which are easily obtained from \cite{banab,bana}. 
		A categorical definition of a (pseudo)equa\-tional theory is given. 
	\item To show a categorical version of Birkhoff's theorem for finite algebras without the use of topology nor profinite techniques.
	\item To provide a general and abstract Eilenberg--type correspondence theorem that encompasses existing Eilenberg--type 
		correspondences from the literature. Not only for (local) pseudovarieties of algebras but also for (local) varieties of algebras.
	\item To derive Eilenberg--type correspondences without the use of syntactic algebras.
	\item To show that the notion of derivatives used to define the different kinds of ``varieties of languages'' in Eilenberg--type correspondences is exactly 
		the coalgebraic structure of an object, which is easily derived via duality, is most of the cases, from the notion of an algebra homomorphism.
\end{enumerate}

\noindent We present the content of our paper as follows: In Section 1, we fix some notation and some categorical facts we will use through the paper. In Section 2, we state  
an abstract Eilenberg--type correspondence for varieties of $\mathsf{T}$--algebras, which is derived from a categorical version of Birkhoff's theorem plus duality. 
As an application, we derive Eilenberg--type correspondences for some varieties of algebras. In Section 3, we redo Section 2, for the case of pseudovarieties of 
$\mathsf{T}$--algebras, i.e., we use Birkhoff's theorem for finite $\mathsf{T}$--algebras. In Section 4, we do a similar work for local (pseudo)varieties of $X$--generated 
$\mathsf{T}$--algebras whose proof easily follows from what is done in Section 2 and Section 3. Then we finish in Section 5 with the conclusions of this work.

\section{Preliminaries}\label{prel}

We introduce the notation and some facts that we will use in the paper. 
We assume that the reader is familiar with basic concepts from category theory and (co)algebra, see, e.g., \cite{awodey,januc}.

\noindent Given a category $\mathcal{D}$ and a monad $\mathsf{T}=(T,\eta,\mu)$ on $\mathcal{D}$, we denote 
the category of (Eilenberg--Moore) $\mathsf{T}$-algebras and their homomorphisms by $\Alg (\mathsf{T})$. Objects in $\Alg (\mathsf{T})$ are pairs $\mathbf{X}=(X,\alpha)$ 
where $X$ is an object in $\mathcal{D}$ and $\alpha\in \mathcal{D}(TX,X)$ is a morphism $\alpha:TX\to X$ in $\mathcal{D}$ that satisfies the identities 
$\alpha\circ \eta_X=id_X$ and $\alpha\circ T\alpha=\alpha\circ \mu_X$ . A homomorphism from a 
$\mathsf{T}$-algebra $\mathbf{X_1}=(X_1,\alpha_1)$ to a $\mathsf{T}$-algebra $\mathbf{X_2}=(X_2,\alpha_2)$ is a morphism $h\in \mathcal{D}(X_1,X_2)$ 
such that $h\circ \alpha_1=\alpha_2\circ Th$. 

\noindent Dually, given a category $\mathcal{C}$ and a comonad $\mathsf{B}=(B,\epsilon,\delta)$ on $\mathcal{C}$, 
$\Coalg (\mathsf{B})$ denotes the category of (Eilenberg--Moore) $\mathsf{B}$-coalgebras. Objects in $\Coalg (\mathsf{B})$ 
are pairs $\mathbf{Y}=(Y,\beta)$ where $Y$ is an object in $\mathcal{C}$ and $\beta\in \mathcal{C}(Y,BY)$ satisfies the identities 
$\epsilon_Y\circ \beta=id_Y$ and $B\beta\circ \beta=\delta_Y\circ \beta$. A homomorphism from a 
$\mathsf{B}$-coalgebra $\mathbf{Y_1}=(Y_1,\beta_1)$ to a $\mathsf{B}$-coalgebra $\mathbf{Y_2}=(Y_2,\beta_2)$ is a morphism $h\in \mathcal{C}(Y_1,Y_2)$ 
such that $\beta_2 \circ h=Bh\circ \beta_1$.

\noindent If $\mathcal{D}$ and $\mathcal{C}$ are dual categories and $\mathsf{T}$ is a monad on $\mathcal{D}$, then there is a canonical comonad $\mathsf{B}$ on $\mathcal{C}$ 
such that the duality between $\mathcal{D}$ and $\mathcal{C}$ lifts to their corresponding Eilenberg--Moore categories.

\begin{proposition}\label{monadtocomonad}
	\cite[Proposition 14]{sbr} 
	Let $F:\mathcal{C}\to \mathcal{D}$ and $G:\mathcal{D}\to \mathcal{C}$ be contravariant functors that form a duality 
	with natural isomorphisms $\eta^{GF}:Id_\mathcal{C}\Rightarrow GF$ and $\eta^{FG}:Id_\mathcal{D}\Rightarrow FG$. 
	Let $\mathsf{T}=(T,\eta, \mu)$ be a monad on $\mathcal{D}$. Then $\mathsf{B}=(B,\epsilon,\delta)$,
	where $B = GTF$ and $\epsilon, \delta$ are defined as:
	\begin{align*}
		\epsilon &= (GLF \xRightarrow{G\eta_{F}} GF \xRightarrow{(\eta^{GF})^{-1}} Id_\mathcal{C})  \\
		\delta &= (GLF \xRightarrow{G\mu_{F}} GLLF \xRightarrow{GL(\eta^{FG})^{-1}_{LF}} GLFGLF ),
	\end{align*}
	is a comonad on $\mathcal{C}$.
	Further, the duality between $F$ and $G$ lifts 
	to a duality between $\widehat{F}:\Coalg (\mathsf{B})\to \Alg (\mathsf{T})$ and 
	$\widehat{G}:\Alg (\mathsf{T})\to \Coalg (\mathsf{B})$.\qed
\end{proposition}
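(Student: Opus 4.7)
The plan is to verify the three comonad axioms for $(B,\epsilon,\delta)$ by transporting the corresponding monad axioms of $\mathsf{T}$ through the duality, and then to build the lifted functors $\widehat{F}$, $\widehat{G}$ on objects and note that they inherit the equivalence from $F$ and $G$. Observe first that $B=GTF$ is covariant, since it composes two contravariant functors with one covariant. Because $F$ and $G$ form a duality, they are quasi-inverses, so $\eta^{GF}$ and $\eta^{FG}$ satisfy triangle identities that equate $F\eta^{GF}_{Y}$ with $(\eta^{FG}_{FY})^{-1}$, and symmetrically on the other side, after the appropriate whiskering. These identities, together with naturality of $\eta$, $\mu$, $\eta^{GF}$ and $\eta^{FG}$, are the main workhorses.

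For the counit axioms $(B\epsilon)\circ\delta = \mathrm{id}_{B} = (\epsilon B)\circ\delta$, I would expand each composite into a pasting of the basic cells $G\mu_{F}$, $GT(\eta^{FG})^{-1}_{TF}$, $G\eta_{F}$ and $(\eta^{GF})^{-1}$ suitably whiskered, slide $\epsilon$ past the inserted $FG$-copy using naturality, collapse it via a triangle identity, and finally apply the monad unit law $\mu\circ T\eta = \mathrm{id}_{T} = \mu\circ \eta T$ under $G$ to reduce to the identity. Coassociativity $(B\delta)\circ\delta = (\delta B)\circ\delta$ proceeds analogously: both sides unfold to composites containing two copies of $G\mu_{F}$ and two insertions of $FG$; naturality of $\mu$ and $\eta^{FG}$ moves the two $FG$-insertions into position, and the monad associativity $\mu\circ \mu T = \mu\circ T\mu$ applied under $G$ finishes the comparison.

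For the lifting, given a $\mathsf{T}$-algebra $(X,\alpha)$ I would assign the $\mathsf{B}$-coalgebra on $GX$ with structure map
\[
GX \xrightarrow{G\alpha} GTX \xrightarrow{GT((\eta^{FG}_{X})^{-1})} GTFGX \;=\; BGX,
\]
checking the coalgebra axioms by the same bookkeeping, using $\alpha\circ\eta_{X} = \mathrm{id}_{X}$ and $\alpha\circ T\alpha = \alpha\circ \mu_{X}$. A symmetric construction defines $\widehat{F}$ on $\mathsf{B}$-coalgebras, after which $\widehat{G}\widehat{F}$ and $\widehat{F}\widehat{G}$ collapse to identities through $\eta^{GF}$ and $\eta^{FG}$. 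The main obstacle is coassociativity, whose diagram involves two distinct insertions of $\eta^{FG}$ at objects $TF$ and $TTF$; the cleanest route is to draw the pasting (or string) diagram for each side and interleave naturality squares with the triangle identity in the right order, rather than trying to push symbols blindly.
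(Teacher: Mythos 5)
The paper does not actually prove this proposition: it is imported verbatim from \cite[Proposition 14]{sbr} and stamped with \qed, so there is no in-paper argument to compare against. Your sketch is the standard (and correct) way to establish it, and it matches what the cited reference does: transport the monad laws through the duality, and define $\widehat{G}(X,\alpha)$ as $GX \xrightarrow{G\alpha} GTX \xrightarrow{GT(\eta^{FG}_X)^{-1}} GTFGX = BGX$ (your structure map has the right variance and is exactly the one that makes the lifting work), with $\widehat{F}$ defined symmetrically and the natural isomorphisms $\widehat{G}\widehat{F}\cong Id$, $\widehat{F}\widehat{G}\cong Id$ inherited componentwise from $\eta^{GF}$, $\eta^{FG}$ once one checks those components are (co)algebra morphisms.

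One point deserves more care than you give it. You assert that because $F$ and $G$ are quasi-inverse, $\eta^{GF}$ and $\eta^{FG}$ ``satisfy triangle identities.'' That is not automatic for an arbitrary pair of natural isomorphisms exhibiting an equivalence, and the identity is genuinely used: in the counit law $(\epsilon B)\circ\delta = id_B$, after sliding $G\eta$ past the inserted $FG$ by naturality and cancelling $G(\mu_{FY}\circ \eta_{TFY})=id$, one is left with $(\eta^{GF})^{-1}_{GTFY}\circ G\bigl((\eta^{FG}_{TFY})^{-1}\bigr)$, which is the identity precisely when $G\eta^{FG}_Z=(\eta^{GF}_{GZ})^{-1}$. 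If the given isomorphisms fail this, the displayed formulas for $\epsilon$ and $\delta$ need not satisfy the comonad laws. The fix is standard --- every equivalence can be promoted to an adjoint equivalence by replacing one of the two isomorphisms --- but you should either say that explicitly or note that the hypothesis ``form a duality'' is being read as a dual adjunction (as in \cite{sbr}), where the triangle identities hold by definition. With that caveat, the rest of your bookkeeping (naturality of $\mu$ and $\eta^{FG}$ for coassociativity, the algebra laws for the lifted coalgebra structure) goes through as you describe.
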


\noindent The following is a list of the categories we will use in the examples given in this paper:
\begin{center}
\begin{tabular}{|c|m{5cm}|m{7.5cm}|}
	\hline
	Category & Objects & Morphisms \\
	\hline
	$\set$ & Sets  & Functions \\
	\hline
	$\CABA$ & Complete atomic Boolean algebras & Complete Boolean algebra homomorphisms \\
	\hline 
	$\poset$ & Partially ordered sets & Order preserving functions \\
	\hline
	$\ACDL$ & Algebraic completely distributive lattices & Complete lattice homomorphisms \\
	\hline
	$\fvec$ & $\mathbb{K}$--vector spaces & Linear maps \\
	\hline
	$\tvec$ & Topological $\mathbb{K}$--vector spaces that are Stone spaces, i.e, they have compact, Hausdorff and zero dimensional topology & Linear continuous maps \\
	\hline 
	$\JSL$ & Join semilattices with zero& Join semilattice homomorphisms preserving zero \\
	\hline 
	$\TBJSL$ & Topological join semilattices with zero that are Stone spaces  & Continuous join semilattice homomorphisms preserving zero\\
	\hline	
\end{tabular}
\end{center}
We will use the facts that $\set$ is dual to $\CABA$, $\poset$ is dual to $\ACDL$, $\fvec$ is dual to $\tvec$ for a finite field $\mathbb{K}$ and that $\JSL$ is dual to $\TBJSL$. 
For a given concrete category $\mathcal{C}$, we denote the full subcategory of $\mathcal{C}$ consisting of its finite objects by $\mathcal{C}_f$. In the case of $\Alg(\mathsf{T})$ and 
$\Coalg(\mathsf{B})$,  we denote them by $\Alg_f(\mathsf{T})$ and $\Coalg_f(\mathsf{B})$, respectively.

\noindent Let $\mathcal{D}$ be a category and let $\mathscr{E}$ and $\mathscr{M}$ be classes of morphisms in $\mathcal{D}$. $\mathscr{E}/\mathscr{M}$ is called a 
{\it factorization system} on $\mathcal{D}$ if:
\begin{enumerate}[i)]
	\item Each of $\mathscr{E}$ and $\mathscr{M}$ is closed under composition with isomorphisms,
	\item Every morphism $f$ in $\mathcal{D}$ has a factorization $f=m\circ e$, with $e\in \mathscr{E}$ and $m\in \mathscr{M}$.
	\item Given any commutative diagram 
			\begin{center}	
				\begin{tikzpicture}[>=stealth,shorten >=3pt,
						node distance=2.5cm,on grid,auto,initial text=,
						accepting/.style={thick,double distance=2.5pt}]
						\node (a) at (0,0) {$\cdot$};
						\node (b) at (2,0) {$\cdot$};
						\node (c) at (0,1.5) {$\cdot$};
						\node (d) at (2,1.5) {$\cdot$};
						\path[right hook->] (a) edge [] node [below] {$m$} (b);
						\path[->>] (c) edge [] node [above] {$e$} (d);
						\path[->] (c) edge [] node [left] {$f$} (a)
								(d) edge [] node [right] {$g$} (b);
				\end{tikzpicture}
			\end{center}
			with $e\in \mathscr{E}$ and $m\in \mathscr{M}$, there is a unique {\it diagonal fill--in}, i.e., a unique morphism $d$ such that the following diagram commutes:
			\begin{center}	
				\begin{tikzpicture}[>=stealth,shorten >=3pt,
						node distance=2.5cm,on grid,auto,initial text=,
						accepting/.style={thick,double distance=2.5pt}]
						\node (a) at (0,0) {$\cdot$};
						\node (b) at (2,0) {$\cdot$};
						\node (c) at (0,1.5) {$\cdot$};
						\node (d) at (2,1.5) {$\cdot$};
						\path[right hook->] (a) edge [] node [below] {$m$} (b);
						\path[->>] (c) edge [] node [above] {$e$} (d);
						\path[->] (c) edge [] node [left] {$f$} (a)
								(d) edge [] node [above] {$d$} (a)
								(d) edge [] node [right] {$g$} (b);
				\end{tikzpicture}
			\end{center}
\end{enumerate}
A factorization system $\mathscr{E}/\mathscr{M}$ is {\it proper} if every morphism in $\mathscr{E}$ is epi and every morphism in $\mathscr{M}$ is mono. 
We will use the following facts about factorization systems \cite{ahs}.
\begin{lemma}\label{lemmafs}
	Let $\mathcal{D}$ be a category and $\mathscr{E}/\mathscr{M}$ be a factorization system on $\mathcal{D}$ such that every morphism in 
	$\mathscr{M}$ is mono. Then $f\circ g\in \mathscr{E}$ implies $f\in \mathscr{E}$.\qed
\end{lemma}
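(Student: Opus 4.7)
The plan is to factor $f$ using the system and then exhibit the $\mathscr{M}$-component as an isomorphism, from which membership of $f$ in $\mathscr{E}$ is immediate by closure under composition with isomorphisms.

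Write $f:B\to C$ and $g:A\to B$, and factor $f=m\circ e$ with $e:B\to D$ in $\mathscr{E}$ and $m:D\to C$ in $\mathscr{M}$. The goal is to prove that $m$ is an isomorphism; once that is done, $f=m\circ e\in\mathscr{E}$ by clause (i) of the definition of a factorization system. So the whole game reduces to producing an inverse for $m$.

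To do this, I would set up the square whose top edge is $f\circ g:A\to C$, whose bottom edge is $m:D\to C$, whose left edge is $e\circ g:A\to D$, and whose right edge is $\mathrm{id}_C:C\to C$. Commutativity holds because $m\circ(e\circ g)=(m\circ e)\circ g=f\circ g=\mathrm{id}_C\circ(f\circ g)$. By hypothesis $f\circ g\in\mathscr{E}$ and $m\in\mathscr{M}$, so the diagonal fill-in property applies and yields a unique $d:C\to D$ with $d\circ(f\circ g)=e\circ g$ and, crucially, $m\circ d=\mathrm{id}_C$. Thus $d$ is a right inverse of $m$; and since $m\in\mathscr{M}$ is mono by assumption, the equality $m\circ d\circ m=m\circ\mathrm{id}_D$ forces $d\circ m=\mathrm{id}_D$, so $m$ is an isomorphism.

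The only place where any cleverness is needed is the choice of square in the previous step: the trick is to put the morphism that is known to lie in $\mathscr{E}$ (namely $f\circ g$) on the $\mathscr{E}$-edge of the square rather than trying to factor $f\circ g$ separately. There is no real obstacle beyond spotting this set-up; everything else, including orientation of the fill-in and the mono cancellation that upgrades a section to a two-sided inverse, is routine.
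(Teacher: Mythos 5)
Your proof is correct and follows essentially the same route as the paper: factor $f=m\circ e$, use the diagonal fill--in against the square with $f\circ g$ as the $\mathscr{E}$--edge and $m$ as the $\mathscr{M}$--edge to get a section $d$ of $m$, then upgrade it to a two--sided inverse by mono cancellation, so $m$ is iso and $f=m\circ e\in\mathscr{E}$. Nothing to add.
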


\begin{lemma}\label{lemmafs1}
	Let $\mathcal{D}$ be a category, $\mathsf{T}=(T,\eta,\mu)$ a monad on $\mathcal{D}$ and $\mathscr{E}/\mathscr{M}$ a proper factorization system on 
	$\mathcal{D}$. If $T$ preserves the morphisms in $\mathscr{E}$ then 
	$\Alg(\mathsf{T})$ inherits the same $\mathscr{E}/\mathscr{M}$ factorization system.\qed
\end{lemma}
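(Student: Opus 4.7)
The plan is to show that each algebra homomorphism $h\colon (X_1,\alpha_1)\to (X_2,\alpha_2)$ factors as an $\mathscr{E}$-morphism followed by an $\mathscr{M}$-morphism inside $\Alg(\mathsf{T})$, by first factoring $h=m\circ e$ in $\mathcal{D}$ via the given system and then transporting the algebra structure to the intermediate object $Z$. The key observation is that, because $h$ is a $\mathsf{T}$-algebra map, the outer square
\[
	m\circ e\circ \alpha_1 \;=\; h\circ \alpha_1 \;=\; \alpha_2\circ Th \;=\; \alpha_2\circ Tm\circ Te
\]
commutes. This lets me apply the diagonal fill-in property to the inner square whose top edge is $Te$ and whose right edge is $m$: since the hypothesis ``$T$ preserves $\mathscr{E}$'' gives $Te\in\mathscr{E}$ and $m\in\mathscr{M}$, I obtain a unique $\alpha_Z\colon TZ\to Z$ with $\alpha_Z\circ Te=e\circ\alpha_1$ and $m\circ\alpha_Z=\alpha_2\circ Tm$. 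These equations say exactly that $e$ and $m$ become algebra homomorphisms, provided $\alpha_Z$ is genuinely an Eilenberg--Moore structure.

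Next I would verify the two algebra axioms for $\alpha_Z$. For the unit law, I post-compose with $m$: $m\circ\alpha_Z\circ\eta_Z=\alpha_2\circ Tm\circ\eta_Z=\alpha_2\circ \eta_{X_2}\circ m=m=m\circ id_Z$, and $m$ mono (by properness of the factorization system) yields $\alpha_Z\circ\eta_Z=id_Z$. For associativity the same trick works: $m\circ\alpha_Z\circ T\alpha_Z = \alpha_2\circ T(m\circ\alpha_Z) = \alpha_2\circ T\alpha_2\circ T^2m = \alpha_2\circ\mu_{X_2}\circ T^2m = \alpha_2\circ Tm\circ\mu_Z = m\circ\alpha_Z\circ\mu_Z$, so again monicity of $m$ gives $\alpha_Z\circ T\alpha_Z=\alpha_Z\circ\mu_Z$. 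This is the step where the hypothesis on $T$ is essential and where properness of $\mathscr{E}/\mathscr{M}$ gets used.

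It remains to verify the diagonal fill-in and closure under isomorphisms in $\Alg(\mathsf{T})$. Given a commutative square of homomorphisms with $e\in\mathscr{E}$ and $m\in\mathscr{M}$ on the underlying objects, the diagonal fill-in $d$ in $\mathcal{D}$ exists and is unique. To see $d$ is a homomorphism, I compute $m\circ d\circ\alpha_B=g\circ\alpha_B=\alpha_D\circ Tg=\alpha_D\circ Tm\circ Td=m\circ\alpha_C\circ Td$, and cancel $m$ since it is mono. Uniqueness of the fill-in in $\Alg(\mathsf{T})$ is inherited from uniqueness in $\mathcal{D}$. Finally, closure of the lifted classes under composition with isomorphisms in $\Alg(\mathsf{T})$ follows directly from the corresponding closure in $\mathcal{D}$, since $\mathsf{T}$-algebra isomorphisms are precisely algebra morphisms whose underlying $\mathcal{D}$-morphism is an iso.

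I expect no conceptual obstacle, only the bookkeeping of applying the fill-in property at the right place; the single nontrivial ingredient is the preservation of $\mathscr{E}$ by $T$, which is exactly what makes the square inside $\mathcal{D}$ admit a diagonal and hence delivers the algebra structure on $Z$.
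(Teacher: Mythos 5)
Your proposal is correct and follows essentially the same route as the paper: obtain the algebra structure on the intermediate object as the diagonal fill--in of the square with $Te\in\mathscr{E}$ against $m\in\mathscr{M}$, then verify the unit and associativity laws and the homomorphism property of the fill--in by cancelling the mono $m$. Your verifications are in fact slightly leaner than the paper's (you use naturality of $\eta$ and $\mu$ together with the equation $m\circ\alpha_Z=\alpha_2\circ Tm$, so you never need $Te$ or $TTe$ to be epi), but this is a cosmetic streamlining rather than a different argument.
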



\section{Eilenberg--type correspondences for varieties of $\mathsf{T}$--algebras}\label{secbirk}

Varieties of algebras have been studied in universal algebra and equational logic. In particular, Birkhoff's variety theorem (see, e.g., \cite{birk,bys}) states that 
a class of algebras of the same type is a variety, i.e., it is closed under homomorphic images, subalgebras and (not necessarily finite) products, if and only if 
it is definable by equations. As a consequence, for a fixed type of algebras, we get a one--to--one correspondence between varieties of algebras 
and equational theories. Birkhoff's theorem has been generalized to a categorical level, see, e.g., \cite{ahs,awodeyh,bana,barr}, to characterize subcategories of a given category that are, 
in some sense, equationally defined. In this section, we provide, under mild assumptions, a Birkhoff's theorem for varieties of $\mathsf{T}$--algebras, Theorem \ref{birkthm}, which, 
in order to derive Eilenberg--type correspondences, will be stated as a one--to--one correspondence between varieties of $\mathsf{T}$--algebras and equational $\mathsf{T}$--theories. 
A categorical definition of an equational $\mathsf{T}$--theory will be given. 

\noindent Next we dualize the categorical definition of an equational $\mathsf{T}$--theory to get that of a coequational $\mathsf{B}$--theory, where $\mathsf{B}$ is a 
comonad on a category $\mathcal{C}$. Our main contribution is to note that particular instances of coequational $\mathsf{B}$--theories have been already studied 
in the literature under the name of ``varieties of languages'' to establish Eilenberg--type correspondences, e.g., \cite[Theorem 39]{jan1}. Thus, if we assume that $\mathcal{D}$ 
and $\mathcal{C}$ are dual categories and that the comonad $\mathsf{B}$ is the dual of the monad $\mathsf{T}$, as in Proposition \ref{monadtocomonad}, then, by duality, 
we get a one--to--one correspondence between equational $\mathsf{T}$--theories and coequational $\mathsf{B}$--theories. All in all, we get a one--to--one correspondence 
between varieties of $\mathsf{T}$--algebras and coequational $\mathsf{B}$--theories, which is the abstract Eilenberg--type correspondence for 
varieties of $\mathsf{T}$--algebras, Proposition \ref{eilvar}. The main facts in this section can be summarized in the following picture:
\begin{center}	
	\begin{tikzpicture}[>=stealth,shorten >=3pt,
		node distance=2.5cm,on grid,auto,initial text=,
		accepting/.style={thick,double distance=2.5pt}]
		\node () at (0,-0.05) {Varieties of};
		\node () at (0,-0.45) {$\mathsf{T}$--algebras};
		\draw[double,thick,<->] (1,-0.25) -- (6,-0.25);
		\node () at (3.5,-0.5) {Birkhoff's thm.};
		\node () at (3.5,-0.9) {Theorem \ref{birkthm}};

		\node () at (7,-0.05) {Equational};
		\node () at (7,-0.45) {$\mathsf{T}$--theories};

		\draw[double,thick,<->] (7,-0.75) -- (7,-1.7);
		\node () at (7.7,-1.2) {Duality};
		\node () at (7,-1.75) {Coequational};
		\node () at (7,-2.15) {$\mathsf{B}$--theories};
		\draw[double,thick,<->] (0,-0.75) -- (0,-1.95)--(6,-1.95);
		\node () at (2.5,-2.2) {Eilenberg--type correspondence};
		\node () at (2.5,-2.6) {Proposition \ref{eilvar}};
	\end{tikzpicture}
\end{center}

\noindent where each arrow symbolizes a one--to--one correspondence and $\mathsf{B}$ is the comonad that is the dual of the monad $\mathsf{T}$.

\subsection{Birkhoff's Theorem for $\mathsf{T}$--algebras}\label{seceqth}

The concept of a variety of algebras and an equationally defined class can be formulated in categorical terms to prove a categorical version of Birkhoff's theorem \cite{ahs,awodeyh,bana,barr}. 
We prove in this subsection that, under mild assumptions, there is a one--to--one correspondence between varieties of $\mathsf{T}$--algebras and equational $\mathsf{T}$--theories, 
Theorem \ref{birkthm}, where $\mathsf{T}$ is a monad on a category $\mathcal{D}$. The definition of variety of $\mathsf{T}$--algebras, which depends on the concept of 
homomorphic images and subalgebras, will be defined by using a factorization system $\mathscr{E}/\mathscr{M}$ on $\mathcal{D}$. In order to define the concept of 
equational $\mathsf{T}$--theories, we base our approach on \cite[Definition II.14.16]{bys}. After providing the assumptions and basic definitions needed to state Theorem \ref{birkthm}, 
its proof will easily follow from the assumptions needed by using the work by Banaschewski and Herrlich \cite{bana}.

\noindent We fix a complete category $\mathcal{D}$, a monad $\mathsf{T}=(T,\eta,\mu)$ on $\mathcal{D}$, a factorization system $\mathscr{E}/\mathscr{M}$ on $\mathcal{D}$ 
and a full subcategory $\mathcal{D}_0$ of $\mathcal{D}$. We will use the following assumptions:
\begin{enumerate}
	\item[(B1)] The factorization system $\mathscr{E}/\mathscr{M}$ is proper. That is, every map in $\mathscr{E}$ is an epimorphism and every map in $\mathscr{M}$ is a monomorphism.
	\item[(B2)] For every $X\in \mathcal{D}_0$, the free $\mathsf{T}$--algebra $\mathbf{TX}=(TX,\mu_X)$ is 
				{\it projective with respect to $\mathscr{E}$ in $\Alg(\mathsf{T})$}. That is, for every $h\in \Alg(\mathsf{T})(\mathbf{TX},\mathbf{B})$ with $X\in \mathcal{D}_0$ and 
				$e\in \Alg (\mathsf{T})(\mathbf{A},\mathbf{B})\cap \mathscr{E}$ there exists $g\in \Alg(\mathsf{T})(\mathbf{TX},\mathbf{A})$ such that the following diagram commutes:
				\begin{center}	
				\begin{tikzpicture}[>=stealth,shorten >=3pt,
						node distance=2.5cm,on grid,auto,initial text=,
						accepting/.style={thick,double distance=2.5pt}]
						\node (a) at (0,0) {$B$};
						\node (b) at (3,0) {$A$};
						\node (c) at (3,1.5) {$TX$};
						\path[<<-] (a) edge [] node [below] {$e$} (b);
						\path[<-] (a) edge [] node [above] {$h$} (c);
						\path[<-,dashed] (b) edge [] node [right] {$g$} (c);
				\end{tikzpicture}
				\end{center}
	\item [(B3)] For every $\mathbf{A}\in \Alg(\mathsf{T})$ there exists $X_A\in\mathcal{D}_0$ and $s_A\in \Alg(\mathsf{T})(\mathbf{TX_A},\mathbf{A})\cap \mathscr{E}$.
	\item[(B4)] $T$ preserves morphisms in $\mathscr{E}$.
	\item[(B5)] For every $X\in \mathcal{D}_0$, there is, up to isomorphism, only a set of $\mathsf{T}$--algebra morphisms in $\mathscr{E}$ with domain $\mathbf{TX}$.
\end{enumerate}

\noindent The notion of a variety of $\mathsf{T}$--algebras, which depends on the concept of homomorphic images and subalgebras, will be defined by using the factorization system 
$\mathscr{E}/\mathscr{M}$ on $\mathcal{D}$, which is lifted to $\Alg(\mathsf{T})$ using (B1) and (B4), Lemma \ref{lemmafs1}. The role of $\mathcal{D}_0$ is that the objects from which ``variables'' for the equations are considered are objects in $\mathcal{D}_0$. Assumption (B2) of $\mathbf{TX}$ being projective with respect to $\mathscr{E}$, 
$X\in \mathcal{D}_0$, will play a fundamental role in relating varieties of algebras with equational theories. 
Assumption (B3) guarantees that every algebra in $\Alg(\mathsf{T})$ is the homomorphic image of a free $\mathsf{T}$--algebra with object of generators from $\mathcal{D}_0$. 
Condition (B5) will allow us to define the equational theory for a given variety of algebras. 

\noindent For Birkhoff's classical variety theorem \cite{birk}, we can take $\mathcal{D}=\mathcal{D}_0=\set$, $\mathscr{E}=$ surjections, 
$\mathscr{M}=$ injections, and $\mathsf{T}$ to be the term monad for a given type of algebras $\T$, i.e., $TX=T_{\T}(X)$, the set of terms of type 
$\T$ on the set of variables $X$ (see Example \ref{excleqth} and Example \ref{exbirkhoff}). Another important example will be given by $\mathcal{D}=\poset$, with 
$\mathcal{D}_0=$ discrete posets (i.e., we do not want the ``variables'' to be ordered) to obtain a Birkhoff theorem for ordered algebras \cite{bloom}. 

\noindent We now give the necessary definitions to formulate our Birkhoff theorem for $\mathsf{T}$--algebras. We start by defining varieties of $\mathsf{T}$--algebras.

\begin{defi}
	Let $\mathcal{D}$ be a complete category, $\mathsf{T}$ a monad on $\mathcal{D}$ and $\mathscr{E}/\mathscr{M}$ a factorization system on $\mathcal{D}$. 
	Let $K$ be a class of algebras in $\Alg(\mathsf{T})$. We say that $K$ {\it is closed under $\mathscr{E}$--quotients} if $\mathbf{B}\in \Alg(\mathsf{T})$ for every  
	$e\in \Alg(\mathsf{T})(\mathbf{A},\mathbf{B})\cap \mathscr{E}$ with $\mathbf{A}\in K$. We say that $K$ {\it is closed under $\mathscr{M}$--subalgebras} if 
	$\mathbf{B}\in \Alg(\mathsf{T})$ for every $m\in \Alg(\mathsf{T})(\mathbf{B},\mathbf{A})\cap \mathscr{M}$ with $\mathbf{A}\in K$. We say that 
	$K$ {\it is closed under products} if $\prod_{i\in I} \mathbf{A}_i\in K$ for every set $I$ such that $\mathbf{A}_i\in K$, $i\in I$.
	A class $V$ of algebras in $\Alg (\mathsf{T})$ is called a {\it variety of $\mathsf{T}$--algebras } if it is closed under $\mathscr{E}$--quotients, 
	$\mathscr{M}$--subalgebras and products. 
\end{defi}

\noindent Now, we define the other main concept to state our Bikhoff's theorem for $\mathsf{T}$--algebras, namely, the concept of an equational $\mathsf{T}$--theory.

\begin{defi}\label{defeqth}
	Let $\mathcal{D}$ be a category, $\mathsf{T}$ a monad on $\mathcal{D}$, $\mathcal{D}_0$ a full subcategory of $\mathcal{D}$ and $\mathscr{E}/\mathscr{M}$ a 
	factorization system on $\mathcal{D}$. An {\it equational $\mathsf{T}$--theory on $\mathcal{D}_0$} is a family of $\mathsf{T}$--algebra morphisms  
	$\E=\{TX\overset{e_X}{\lepi} Q_X\}_{X\in \mathcal{D}_0}$ in $\mathscr{E}$ such that 	for any $X,Y\in \mathcal{D}_0$ and any $g\in \Alg(\mathsf{T})(\mathbf{TX},\mathbf{TY})$ 
	there exists $g'\in \Alg(\mathsf{T})(\mathbf{Q_X},\mathbf{Q_Y})$ such that the following diagram 
	commutes:
	\begin{center}	
		\begin{tikzpicture}[>=stealth,shorten >=3pt,
				node distance=2.5cm,on grid,auto,initial text=,
				accepting/.style={thick,double distance=2.5pt}]
				\node (a) at (2,1.5) {$TY$};
				\node (b) at (2,0) {$Q_Y$};
				\node (c) at (0,1.5) {$TX$};
				\node (d) at (0,0) {$Q_X$};
				\path[->>] (a) edge [] node [right] {$e_Y$} (b)
						(c) edge [] node [left] {$e_X$} (d);
				\path[->] (c) edge [] node [above] {$\forall g$} (a);
				\path[->,dashed] (d) edge [] node [below] {$g'$} (b);
		\end{tikzpicture}
	\end{center}	
\end{defi}

\noindent Intuitively, in the setting of Birkhoff's classical variety theorem, for every object $X\in \mathcal{D}_0$ (i.e., a set of variables) the morphism $e_X$, which we asssume to be a 
surjection, 
represents the set of equations $\ker(e_X)$, which is a congruence on $\mathbf{TX}$, i.e., it is an equivalence relation on $TX$ which is closed under the componentwise algebric operations. 
Commutativity of the diagram above means that the family of all equations $\{\ker(e_X)\}_{X\in \mathcal{D}_0}$ is closed under any substitution 
$g\in \Alg(\mathsf{T})(\mathbf{TX},\mathbf{TY})$. The previous definition generalizes the definition of an equational theory to a categorical level, cf. \cite[Definition II.14.16]{bys}.

\begin{example}\label{excleqth}
	Consider the case $\mathcal{D}=\mathcal{D}_0=\set$, $\mathscr{E}=$ surjections and $\mathscr{M}=$ injections. For a given type of algebras $\T$, consider the monad 
	$\mathsf{T}_{\T}=(T_{\T},\eta,\mu)$ such that $T_{\T}(X)$ is the set of terms for $\T$ on variables $X$, see \cite[Definition II.10.1]{bys}. The unit $\eta_X:X\to T_{\T}(X)$ is the inclusion 
	function and multiplication $\mu_X:T_{\T}(T_{\T}(X))\to T_{\T}(X)$ is the identity map. Now, $\Alg(\mathsf{T}_{\T})$ is the category of algebras $\mathbf{A}=(A,\alpha)$ of type $\T$, 
	where $\alpha: T_{\T}(A)\to A$ is the evaluation $\alpha(t)$ in $A$ of each term $t\in T_{\T}(A)$. An equational $\mathsf{T}_{\T}$--theory on $\mathcal{D}_0=\set$ is a family of 
	surjective homomorphisms  $\E=\{T_{\T}(X)\overset{e_X}{\lepi} Q_X\}_{X\in \set}$ in $\Alg(\mathsf{T}_{\T})$ such that every $\ker(e_x)$ is a congruence on $T_{\T}(X)$ 
	and the family $\{\ker(e_X)\}_{X\in \mathcal{D}_0}$ is closed under substitution, i.e., for $\left(p(x_1,\ldots,x_n),q(x_1,\ldots,x_n)\right)\in \ker(e_X)$ and 
	$r_x\in T_{\T}(Y)$, $x\in X$, we have that $(p(r_{x_1},\ldots,r_{x_n}),q(r_{x_1},\ldots,r_{x_n}))\in \ker(e_Y)$, where $t(r_{x_1},\ldots,r_{x_n})$ is the term in 
	$T_{\T}(Y)$ obtained from $t(x_1,\ldots,x_n)\in T_{\T}(X)$ by replacing each variable $x_i$ by $r_{x_i}$, $i=1,\ldots,n$.\qed
\end{example}

\begin{example}\label{exineqth}
	Consider the case $\mathcal{D}=\poset$, $\mathcal{D}_0=$ the full subcategory of discrete posets, $\mathscr{E}=$ surjections and $\mathscr{M}=$ embeddings. 
	Let $\T$ be a type of algebras. An {\it ordered algebra of type $\T$} is a triple $A=(A,\leq_A,\{f_A:A^{n_f}\to A\}_{f\in \T})$ such that $(A,\leq_A)\in \poset$ and all the 
	functions $f_A:A^{n_f}\to A$ are 	order preserving, where the order in $A^{n_f}$ is componentwise, $f\in \T$. We can define the monad $\mathsf{T}_{\T}=(T_{\T},\eta,\mu)$ 
	where $T_{\T}(X,\leq_X)$ is the poset $(T_{\T}(X),\leq_{\mathsf{T}_{\T}(X)})$ defined as: $x\leq_{\mathsf{T}_{\T}(X)} y$ for every $x,y\in X$ such that $x\leq_X y$, and 
	$f(t_1,\ldots,t_{n_f})\leq_{\mathsf{T}_{\T}(X)} f(q_1,\ldots,q_{n_f})$ for every $f\in \T$ and terms $t_i,q_i\in T_{\T}(X)$ such that $t_i\leq_{\mathsf{T}_{\T}(X)} q_i$, $i=1,\ldots,n_f$. 
	Algebras in $\Alg(\mathsf{T}_{\T})$ are ordered algebras of type $\T$.

	\noindent An equational $\mathsf{T}_{\T}$--theory is a family $\E=\{T_{\T}(X)\overset{e_X}{\lepi} Q_X\}_{X\in \mathcal{D}_0}$ of surjective homomorphisms, which are trivially 
	order preserving since $T_{\T}(X)$ is discrete for any $X\in \mathcal{D}_0$, such that $\overrightarrow{\ker}(e_X)$ is an admissible preorder on $T_{\T}(X)$ 
	\footnote{A preorder $\sqsubseteq$ on an ordered algebra $(A,\leq_A,\{f_A:A^{n_f}\to A\}_{f\in \T})$ of type $\T$ is {\it compatible} if for every 
	$f\in \T$ and $a_i,b_i\in A$ with $a_i\sqsubseteq b_i$, $i=1,\ldots, n_f$, 	we have that $f_A(a_1,\ldots,a_{n_f})\sqsubseteq f_A(b_1,\ldots,b_{n_f})$. A preorder $\sqsubseteq$ 
	is {\it admissible} if it is compatible and $a\sqsubseteq b$ whenever $a\leq_A b$. The congruence $\theta_\sqsubseteq$ on $A$ induced by the compatible preorder 
	$\sqsubseteq$ is the relation $\theta_\sqsubseteq$ on $A$ defined as $\theta_\sqsubseteq:=\sqsubseteq\cap \sqsubseteq^{-1}$. Then $(A/\theta_\sqsubseteq,\leq_\sqsubseteq)$ 
	is an ordered algebra with the order given by $[x]\leq_\sqsubseteq [y]$ iff $x\sqsubseteq y$. See \cite{bloom}.}, where $\overrightarrow{\ker}(e_X):=\{(u,v)\mid e_X(u)\leq e_X(v)\}$, 
	and the family $\{\overrightarrow{\ker}(e_x)\}_{X\in \mathcal{D}_0}$ is closed under substitution as in the previous example. 
	In this case, $\overrightarrow{\ker}(e_X)$ represents the equations and inequations of terms with variables in $X$ in the equational $\mathsf{T}_{\T}$--theory. 
	Note that if we take $\mathcal{D}_0=\poset$ then condition (B2) does not hold.\qed
\end{example}

\noindent Given an equational $\mathsf{T}$--theory $\E=\{TX\overset{e_X}{\lepi} Q_X\}_{X\in \mathcal{D}_0}$ and an algebra $\mathbf{A}\in \Alg(\mathsf{T})$, we say that 
$\mathbf{A}$ {\it satisfies} $\E$, denoted as $\mathbf{A}\models \E$, if $\mathbf{A}$ is {\it $\E$--injective}, that is, if for every $X\in \mathcal{D}_0$ and 
every $f\in \Alg(\mathsf{T}) (\mathbf{TX},\mathbf{A})$ there exists a $\mathsf{T}$--algebra morphism $g_f$ such that $f=g_f\circ e_X$.

\noindent Intuitively, $\mathbf{A}\models \E$ if for every assignment $f\in \Alg(\mathsf{T})(\mathbf{TX},\mathbf{A})$ of the 
variables $X$ to elements of the algebra $\mathbf{A}$, all the equations represented by $e_X\colon TX \lepi Q_X$ hold in $\mathbf{A}$. Given an equational $\mathsf{T}$--theory 
$\E$ we denote the {\it models of} $\E$ by $\md (\E)$, that is:
\[
	\md (\E):=\{\mathbf{A}\in \Alg(\mathsf{T}) \mid \mathbf{A}\models \E\}
\]
A class $K$ of $\mathsf{T}$--algebras is {\it defined} by $\E$ if $K=\md (\E)$.

\begin{thm}[Birkhoff's Theorem for $\mathsf{T}$--algebras]\label{birkthm}
	Let $\mathcal{D}$ be a complete category, $\mathsf{T}$ a monad on $\mathcal{D}$, $\mathcal{D}_0$ a full subcategory of $\mathcal{D}$ and $\mathscr{E}/\mathscr{M}$ 
	a factorization system on $\mathcal{D}$. Assume (B1) to (B5). Then a class $K$ of $\mathsf{T}$--algebras is a variety of $\mathsf{T}$--algebras if and only if 
	it is defined by an equational $\mathsf{T}$--theory on $\mathcal{D}_0$. Additionally, varieties of $\mathsf{T}$--algebras are in one--to--one correspondence with 
	equational $\mathsf{T}$--theories on $\mathcal{D}_0$. \qed
\end{thm}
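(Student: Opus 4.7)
The plan is to establish the two directions of the equivalence separately and then upgrade the result to a genuine bijection by showing that every equational theory is canonically recovered from its class of models. One direction shows that for any equational $\mathsf{T}$-theory $\E$ the class $\md(\E)$ is a variety of $\mathsf{T}$-algebras; the other associates to each variety $V$ a canonical equational theory $\E_V$ and proves $V=\md(\E_V)$. The bijection then follows by showing that any equational theory $\E$ already coincides, up to isomorphism of its quotients, with $\E_{\md(\E)}$.

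For the soundness direction, fix $\E=\{e_X\colon \mathbf{TX}\twoheadrightarrow \mathbf{Q}_X\}_{X\in\mathcal{D}_0}$ and verify the three closure properties of $\md(\E)$. Products are immediate from their universal property. For $\mathscr{E}$-quotients, let $e\colon\mathbf{A}\twoheadrightarrow\mathbf{B}$ be in $\mathscr{E}$ with $\mathbf{A}\models\E$; assumption (B2) lifts any $f\colon \mathbf{TX}\to\mathbf{B}$ to $\tilde f\colon \mathbf{TX}\to\mathbf{A}$, then $\mathbf{A}\models\E$ factors $\tilde f$ through $e_X$, and post-composing with $e$ yields the required factorization of $f$. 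For $\mathscr{M}$-subalgebras, let $m\colon \mathbf{B}\hookrightarrow \mathbf{A}$ with $\mathbf{A}\models\E$; given $f\colon\mathbf{TX}\to\mathbf{B}$, factor $m\circ f$ through $e_X$ and apply the diagonal fill-in of the proper factorization system (B1) to the resulting square (with $e_X\in\mathscr{E}$ and $m\in\mathscr{M}$) to extract the desired lift of $f$ through $e_X$.

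For the completeness direction, let $V$ be a variety and fix $X\in\mathcal{D}_0$. By (B5), choose a set of representatives $\{f_i\colon \mathbf{TX}\twoheadrightarrow \mathbf{A}_i\}_{i\in I_X}$ of all $\mathsf{T}$-algebra morphisms in $\mathscr{E}$ with domain $\mathbf{TX}$ and codomain in $V$; form the product $\prod_i\mathbf{A}_i\in V$ and $\mathscr{E}/\mathscr{M}$-factor the induced map $\mathbf{TX}\to\prod_i\mathbf{A}_i$ as $e_X\colon \mathbf{TX}\twoheadrightarrow\mathbf{Q}_X$ followed by a morphism in $\mathscr{M}$. Closure of $V$ under products and $\mathscr{M}$-subalgebras forces $\mathbf{Q}_X\in V$, and by construction every morphism from $\mathbf{TX}$ into an algebra of $V$ factors through $e_X$. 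To see that $\E_V=\{e_X\}_{X\in\mathcal{D}_0}$ is closed under substitution, given $g\colon \mathbf{TX}\to\mathbf{TY}$, $\mathscr{E}/\mathscr{M}$-factor $e_Y\circ g=m'\circ e'$ with $e'\colon\mathbf{TX}\twoheadrightarrow\mathbf{Q}'$; then $\mathbf{Q}'$ is an $\mathscr{M}$-subalgebra of $\mathbf{Q}_Y\in V$, hence in $V$, so the universal property of $e_X$ yields $h\colon\mathbf{Q}_X\to\mathbf{Q}'$ with $h\circ e_X=e'$, and $g':=m'\circ h$ is the required morphism. The equality $V=\md(\E_V)$ then splits into two inclusions: $V\subseteq\md(\E_V)$ by $\mathscr{E}/\mathscr{M}$-factoring any $f\colon \mathbf{TX}\to\mathbf{A}\in V$ and using the universal property of $e_X$; for the converse, given $\mathbf{A}\models\E_V$, assumption (B3) supplies $s_A\colon \mathbf{TX_A}\twoheadrightarrow\mathbf{A}$ in $\mathscr{E}$; factor $s_A=g\circ e_{X_A}$ and conclude via Lemma \ref{lemmafs} that $g\in\mathscr{E}$, so that $\mathbf{A}$ is an $\mathscr{E}$-quotient of $\mathbf{Q}_{X_A}\in V$.

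Finally, for the bijection, start from an arbitrary equational theory $\E$ and show that each $e_X$ coincides, up to isomorphism, with the canonical quotient built from $V:=\md(\E)$. The crucial observation is that $\mathbf{Q}_X$ itself lies in $\md(\E)$: any $h\colon\mathbf{TY}\to\mathbf{Q}_X$ lifts via (B2) to $g\colon\mathbf{TY}\to\mathbf{TX}$ with $e_X\circ g=h$, and substitution-closure factors $e_X\circ g$ through $e_Y$. Combined with the defining property that every $\mathbf{TX}\to\mathbf{A}\in V$ factors through $e_X$, this pins down $e_X$ up to isomorphism as the maximal $\mathscr{E}$-quotient of $\mathbf{TX}$ landing in $V$, matching the construction of $\E_V$. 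The step I expect to be most delicate is the verification of substitution-closure of $\E_V$, since it is exactly here that (B2), the factorization system, and the variety-closure conditions must be combined in a single diagram-chase; the remaining arguments reduce to routine applications of projectivity, diagonal fill-in, and Lemma \ref{lemmafs}.
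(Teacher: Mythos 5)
Your proposal is correct, and its overall architecture (soundness of $\md(\E)$, a canonical theory $\E_V$ for each variety $V$, and the recovery $\E\cong\E_{\md(\E)}$ via the fact that each $\mathbf{Q_X}$ is itself a model) matches the paper's: your soundness argument is the paper's Proposition \ref{hspclosureb} almost verbatim, and your bijectivity argument combines the paper's Lemma \ref{lemmafree} with the uniqueness argument of Proposition \ref{injth}. The one place where you genuinely diverge is the completeness direction. The paper does not construct $\E_V$ by hand in the global case; it invokes Banaschewski--Herrlich \cite{bana} to get that a variety $V$ is a reflective subcategory of $\Alg(\mathsf{T})$ with unit components in $\mathscr{E}$, takes $e_X$ to be the unit at $\mathbf{TX}$, and obtains substitution-closure from naturality of the unit. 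You instead build $e_X$ explicitly: use (B5) to collect a set of representatives of all $\mathscr{E}$-quotients of $\mathbf{TX}$ with codomain in $V$, form their product, and $\mathscr{E}/\mathscr{M}$-factor the induced map; closure of $V$ under products and $\mathscr{M}$-subalgebras puts $\mathbf{Q_X}$ in $V$, and substitution-closure follows from factoring $e_Y\circ g$ and using that its $\mathscr{E}$-part again lands in $V$. This is exactly the construction the paper itself carries out, but only for the local version (Proposition \ref{lpropvartoeq} in the appendix), so your route makes the global theorem self-contained at the cost of repeating the subdirect-product argument, whereas the paper's route delegates the existence of the reflector and gains naturality for free. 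Your chain of reasoning is sound throughout; the only implicit ingredients you should make explicit are that the $\mathscr{E}/\mathscr{M}$-factorizations you take live in $\Alg(\mathsf{T})$ rather than just $\mathcal{D}$ (this is Lemma \ref{lemmafs1}, requiring (B1) and (B4)) and that products in $\Alg(\mathsf{T})$ exist because the forgetful functor creates the limits that the complete category $\mathcal{D}$ has.
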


\noindent From the previous theorem we have the following.

\begin{example}\label{exbirkhoff}
	By considering the monad and the categories given in Example \ref{excleqth} we obtain the classical Birkhoff variety theorem \cite{birk}.\qed
\end{example}

\begin{example}\label{exorderedbirkhoff}
	By considering the monad and the categories given in Example \ref{exineqth} we obtain the Birkhoff variety theorem for ordered algebras \cite{bloom}.\qed
\end{example}

\begin{example}[cf. \text{\cite[Theorem 39]{jan1}}]\label{exjan1}
	Consider the case $\mathcal{D}=\mathcal{D}_0=\set$, $\mathsf{T}$ the monad given by $TX=X^*$, where $X^*$ is the free monoid on $X$, $\mathscr{E}=$ surjections 
	and $\mathscr{M}=$ injections. We have that conditions (B1) to (B5) are fullfilled. Therefore we have a one--to--one correspondence between varieties of monoids and 
	equational $\mathsf{T}$--theories. Now, consider the category $\mathcal{C}=\mathcal{C}_0=\CABA$ which is dual to $\set$ and let $\mathsf{B}$ be the comonad on $\CABA$ 
	that is dual to the 
	monad $\mathsf{T}$ on $\set$, i.e, $B$ is defined, up to isomorphism, as $B(2^X)=2^{X^*}$. Then, by duality, we have a one--to--one correspondence between equational 
	$\mathsf{T}$--theories $\E$ and its dual $\E^\partial$, i.e., families of 
	monomorphisms $\{S_X\xhookrightarrow{m_X} BX \}_{X\in \mathcal{C}_0=\mathcal{C}}$ in $\Coalg(\mathsf{B})$ such that for any $X,Y\in \mathcal{C}_0$ and any 
	$g\in \Coalg(\mathsf{B})(\mathbf{BX},\mathbf{BY})$ there exists $g'\in \Coalg(\mathsf{B})(\mathbf{S_X},\mathbf{S_Y})$ such that $m_Y\circ g=g'\circ m_X$.

	\noindent This notion of the dual of an equational $\mathsf{T}$--theory is equivalent with the --more complicated-- notion of a variety of languages in \cite[Definition 35]{jan1} 
	(see Example \ref{exjan11} for more details). In this setting, we get a one--to--one correspondence between varieties of monoids and duals of equational 
	$\mathsf{T}$--theories, i.e., varieties of languages. This is exactly the Eilenberg--type theorem \cite[Theorem 39]{jan1}.\qed
\end{example}

\subsection{Abstract Eilenberg--type correspondence for varieties of $\mathsf{T}$--algebras}\label{seceilv}

As we saw in Example \ref{exjan1}, by Using Theorem \ref{birkthm} and duality, we can derive Eilenberg--type correspondences for varieties of $\mathsf{T}$--algebras. 
In this subsection we state the abstract Eilenberg--type correspondence for varieties of $\mathsf{T}$--algebras, i.e., a one--to--one correspondence between 
varieties of $\mathsf{T}$--algebras and duals of equational $\mathsf{T}$--theories, and then instantiate this result to some particular cases. 
We dualize the definition of an equational $\mathsf{T}$--theory as follows.

\begin{defi}\label{defcoeqth}
	Let $\mathcal{C}$ be a category, $\mathsf{B}=(B,\epsilon,\delta)$ a comonad on $\mathcal{C}$, $\mathscr{E}/\mathscr{M}$ a factorization system on $\mathcal{C}$ and 
	$\mathcal{C}_0$ a full subcategory of $\mathcal{C}$. A {\it coequational $\mathsf{B}$--theory on $\mathcal{C}_0$} is a family of $\mathsf{B}$--coalgebra morphisms 
	$\M=\{S_Y\xhookrightarrow{m_Y} BY\}_{Y\in \mathcal{C}_0}$ in $\mathscr{M}$ such that 
	for any $X,Y\in \mathcal{C}_0$ and any $g\in \Coalg(\mathsf{B})(\mathbf{BX},\mathbf{BY})$ there exists 	$g'\in \Coalg(\mathsf{B})(\mathbf{S_X},\mathbf{S_Y})$ such that 
	the following diagram 
	commutes:
	\begin{center}
		\begin{tikzpicture}[>=stealth,shorten >=3pt,
				node distance=2.5cm,on grid,auto,initial text=,
				accepting/.style={thick,double distance=2.5pt}]
				\node (a) at (2,0) {$S_Y$};
				\node (b) at (2,1.5) {$BY$};
				\node (c) at (0,0) {$S_X$};
				\node (d) at (0,1.5) {$BX$};
				\path[right hook->] (a) edge [] node [right] {$m_Y$} (b)
						(c) edge [] node [left] {$m_X$} (d);
				\path[->,dashed] (c) edge [] node [below] {$g'$} (a);
				\path[->] (d) edge [] node [above] {$\forall g$} (b);
		\end{tikzpicture}
	\end{center}
\end{defi}

\noindent Intuitively, every $\mathbf{S_Y}$ is a $\mathsf{B}$--subcoalgebra of the cofree coalgebra $\mathbf{BY}=(BY,\delta_Y)$, and the family $\{S_Y\}_{Y\in \mathcal{C}_0}$ is 
closed under any coalgebra morphism, i.e., for every morphism $g\in \Coalg(\mathsf{B})(\mathbf{BX},\mathbf{BY})$, $x\in S_X$ implies $g(x)\in S_Y$. As an example of coequational 
$\mathsf{B}$--theories we have the ``varieties of languages'' defined in \cite[Definition 35]{jan1} which we describe in a simpler way in Example \ref{exjan11}. 
Now, with the previous definition, Theorem \ref{birkthm} and duality, we have the following.

\begin{prop}[Abstract Eilenberg--type correspondence for varieties of $\mathsf{T}$--algebras]\label{eilvar}
	Let $\mathcal{D}$ be a complete category, $\mathsf{T}$ a monad on $\mathcal{D}$, $\mathscr{E}/\mathscr{M}$ a factorization system on $\mathcal{D}$ and $\mathcal{D}_0$ a 
	full subcategory of $\mathcal{D}$. Assume (B1) to (B5). Let $\mathcal{C}$ be a category that is dual to $\mathcal{D}$, $\mathcal{C}_0$ the  
	corresponding dual category of $\mathcal{D}_0$ and let $\mathsf{B}$ be the comonad on $\mathcal{C}$ that is dual to $\mathsf{T}$ which is defined as in 
	Proposition \ref{monadtocomonad}. 
	Then there is a one--to--one correspondence between varieties of $\mathsf{T}$--algebras and coequational $\mathsf{B}$--theories on $\mathcal{C}_0$.
\end{prop}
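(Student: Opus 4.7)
The plan is to obtain the claimed bijection by composing two bijections that are already essentially on the table: the algebra side is handled by Theorem \ref{birkthm}, and the passage from equational $\mathsf{T}$--theories to coequational $\mathsf{B}$--theories is a purely formal consequence of dualizing Definition \ref{defeqth} along the duality provided by Proposition \ref{monadtocomonad}. Concretely, I would proceed in three steps: first invoke Birkhoff to identify varieties of $\mathsf{T}$--algebras with equational $\mathsf{T}$--theories on $\mathcal{D}_0$; second, use the lifted duality $\widehat{F}\dashv\vdash\widehat{G}$ between $\Alg(\mathsf{T})$ and $\Coalg(\mathsf{B})$ to transport the defining diagram of an equational $\mathsf{T}$--theory to that of a coequational $\mathsf{B}$--theory; third, verify that this transport is a bijection between the two collections of theories.

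For step two, the key observation is that $\widehat{G}$ sends a free $\mathsf{T}$--algebra $\mathbf{TX}$ (with $X\in\mathcal{D}_0$) to the cofree $\mathsf{B}$--coalgebra $\mathbf{B}(GX)$ with $GX\in\mathcal{C}_0$, because $B=GTF$ and $FG\cong Id$; conversely $\widehat{F}$ sends $\mathbf{BY}$ to $\mathbf{T}(FY)$. Since $F,G$ are contravariant equivalences, they exchange the classes $\mathscr{E}$ and $\mathscr{M}$ of the factorization systems on $\mathcal{D}$ and $\mathcal{C}$ (we let $\mathcal{C}$ carry the factorization system obtained by applying $G$ to $\mathscr{E}/\mathscr{M}$). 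Thus an $\mathscr{E}$--quotient $TX\overset{e_X}{\lepi} Q_X$ in $\Alg(\mathsf{T})$ is sent by $\widehat{G}$ to an $\mathscr{M}$--embedding $\widehat{G}(Q_X)\xhookrightarrow{m_{GX}} B(GX)$ in $\Coalg(\mathsf{B})$, and the assignment $X\mapsto GX$ is a bijection between the object classes of $\mathcal{D}_0$ and $\mathcal{C}_0$. Dualizing the commuting square of Definition \ref{defeqth} term by term yields exactly the commuting square of Definition \ref{defcoeqth}: any coalgebra morphism $g\in\Coalg(\mathsf{B})(\mathbf{B}X',\mathbf{B}Y')$ arises as $\widehat{G}$ of a unique $\mathsf{T}$--algebra morphism between free algebras, so closure under substitution on one side is equivalent to closure under corestriction of cofree-coalgebra morphisms on the other.

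For step three, I would argue that the assignment $\E\mapsto \E^{\partial}:=\{\widehat{G}(e_X)\}_{X\in\mathcal{D}_0}$ is inverted by $\M\mapsto\M^{\partial}:=\{\widehat{F}(m_Y)\}_{Y\in\mathcal{C}_0}$, up to the natural isomorphisms $\eta^{GF}$ and $\eta^{FG}$; this is a routine verification using that $\widehat{F}\widehat{G}$ and $\widehat{G}\widehat{F}$ are naturally isomorphic to the identity, together with the fact that a family indexed by $\mathcal{D}_0$ is the same data as a family indexed by $\mathcal{C}_0$ via $G$.

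The main obstacle I anticipate is bookkeeping rather than mathematics: making precise how the factorization system $\mathscr{E}/\mathscr{M}$ on $\mathcal{D}$ (which is lifted to $\Alg(\mathsf{T})$ by Lemma \ref{lemmafs1}) transports to a factorization system on $\mathcal{C}$ that lifts correctly to $\Coalg(\mathsf{B})$, and checking that the substitution-closure condition really corresponds, under $\widehat{G}$, to the stated closure condition for families of subcoalgebras of cofree coalgebras. Once this dictionary is set up, the proposition follows by composing the two bijections.
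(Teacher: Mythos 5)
Your proposal is correct and follows exactly the route the paper takes: the paper derives Proposition \ref{eilvar} by composing the bijection of Theorem \ref{birkthm} with the transport of equational $\mathsf{T}$--theories to coequational $\mathsf{B}$--theories along the lifted duality $\widehat{F}\hadj\widehat{G}$ of Proposition \ref{monadtocomonad}, treating the dualization of Definition \ref{defeqth} into Definition \ref{defcoeqth} as the purely formal bookkeeping you describe. The extra details you supply (that $\widehat{G}$ carries free algebras to cofree coalgebras since $B=GTF$ and $FG\cong Id$, and that $G$ exchanges $\mathscr{E}$ and $\mathscr{M}$) are exactly the verifications the paper leaves implicit.
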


\noindent In the rest of this section we show some particular instances of Eilenberg--type correspondences derived from the the previous proposition. We start with the 
continuation of Example \ref{exjan1} by describing the defining properties 
of the coequational $\mathsf{B}$--theories that correspond to varieties of monoids. Then we do a similar work to derive Eilenberg--type correspondences for other kind of varieties 
such as semigroups, groups, monoid actions, ordered monoids, vector spaces and idempotent semirings. It is worth mentioning that these kind of results, except for the one in 
Example \ref{exjan11} which was shown in \cite[Theorem 39]{jan1}, seem to be new and cannot be derived with categorical approaches given in \cite{adamek,mikolaj,urbat}.

\begin{example}[Example \ref{exjan1} continued]\label{exjan11}
	From the setting of Example \ref{exjan1} we get a one--to--one correspondence between varieties of monoids and coequational $\mathsf{B}$--theories on $\CABA$. 
	The latter can be characterized as operators $\Lan$ on $\set$ such that for every $X\in \set$:
	\begin{enumerate}[i)]
		\item $\Lan(X)\in \CABA$ and it is a subalgebra of the complete atomic Boolean algebra $\set(X^*,2)$ of subsets of $X^*$, i.e., every element in $\Lan (X)$ is a language on $X$.
		\item $\Lan(X)$ is closed under left and right derivatives. That is, if $L\in \Lan(X)$ and $x\in X$ then $_xL,L_x\in \Lan(X)$, where $_xL(w)=L(wx)$ and $L_x(w)=L(xw)$, 
			$w\in X^*$.
		\item $\Lan$ is closed under morphic preimages. That is, for every $Y\in \set$, homomorphism of monoids $h:Y^*\to X^*$ and $L\in \Lan(X)$, we have that 
			$L\circ h\in \Lan(Y)$.
	\end{enumerate}

	\noindent The previous notion of a coequational $\mathsf{B}$--theory on $\CABA$ is equivalent with the --more complicated-- notion of a variety of languages in 
	\cite[Definition 35]{jan1}. This one--to--one correspondence is exactly the Eilenberg--type theorem \cite[Theorem 39]{jan1} (see Example \ref{exua} and Appendix for more details).\qed
\end{example}

\noindent The following example describes explicitely Eilenberg--type correspondences for varieties of algebras of any given type $\T$ where each function symbol in $\T$ has finite arity. 

\begin{example}\label{exua}
	Let $\T$ be a type of algebras where each function symbol $g\in \T$ has arity $n_{g}\in \mathbb{N}$ and let $K$ be a variety of algebras of type $\T$. Consider the case 
	$\mathcal{D}=\mathcal{D}_0=\set$, $\mathscr{E}=$ surjections, $\mathscr{M}=$ injections and let $\mathsf{T}_K$ be the monad such that for every $X\in \set$, 
	$T_KX$ is the underlying set of the free algebra in $K$ on $X$ generators (see \cite[Definition II.10.9]{bys} and \cite[VI.8]{maclane}). 
	We have that $\CABA$ is dual to $\set$, so we can consider 
	$\mathcal{C}=\mathcal{C}_0=\CABA$. By using the duality between $\CABA$ and $\set$, each coequational $\mathsf{B}$--theory can be indexed by $\mathcal{D}_0=\set$ and 
	can be presented, up to isomorphism, as a family $\{S_X\xhookrightarrow{m_X} 2^{T_KX}\}_{X\in \set}$. From this, we present coequational $\mathsf{B}$--theories as 
	operators $\Lan$ on $\set$ given by $\Lan(X):=\Imag(m_X)$. Then, we get a one--to--one correspondence between 
	varieties of algebras in $K$ and operators $\Lan$ on $\set$ such that for every $X\in \set$:
	\begin{enumerate}[i)]
		\item $\Lan(X)\in \CABA$ and it is a subalgebra of the complete atomic Boolean algebra $\set(T_KX,2)$ of subsets of $T_KX$.
		\item $\Lan(X)$ is closed under derivatives with respect to the type $\T$. That is, for every $g \in \T$ of arity $n_g$, every $1\leq i\leq n_g$, 
			every $t_j\in T_KX$, $1\leq j< n_g$, and every $L\in \Lan(X)$ we have that $L^{(i)}_{(g,t_1,\ldots ,t_{n_g-1})}\in \Lan(X)$ where 
			$L^{(i)}_{(g,t_1,\ldots ,t_{n_g-1})}\in \set(T_KX,2)$ 
			is defined as 
			\[
				L^{(i)}_{(g,t_1,\ldots ,t_{n_g-1})}(t)=L(g(t_1,\ldots,t_{i-1},t,t_{i},\ldots,t_{n_g-1}))
			\]
			$t\in T_KX$. That is, for every function symbol $g\in \T$ we get $n_g$ kinds of derivatives.
		\item $\Lan$ is closed under morphic preimages. That is, for every $Y\in \set$, homomorphism of $\mathsf{T}_K$--algebras $h:T_KY\to T_KX$ and $L\in \Lan(X)$, we have 
			that $L\circ h\in \Lan(Y)$.
	\end{enumerate}
	In fact, we have the following:
	\begin{enumerate}[a)]
		\item Condition i) above follows from the fact that $\Lan(X)=\Imag(m_X)\cong S_X\in \CABA$ and $\Imag(m_X)\subseteq \set(T_KX,2)$.
		\item Condition ii) above follows from lifting the duality between $\set$ and $\CABA$ to a duality between $\Alg(\mathsf{T_K})$ and $\Coalg(\mathsf{B})$. In fact, every 
			surjective $\mathsf{T}_K$--algebra morphism $e_X:T_KX\lepi Q_X$ defines the injective morphism $\set(e_X,2)$ in 
			$\Coalg(\mathsf{B})$ which is defined as $\set(e_X,2)(f)=f\circ e_X$, $f\in \set(Q_X,2)$, and from this we have:
			\[
				\Lan(X)=\Imag(\set(e_X,2))=\{f\circ e_X \mid f\in \set(Q_X,2)\}.
			\]
			Closure of $\Lan(X)$ under under derivatives with respect to the type $\T$ follows from the fact that $e_X$ is a $\mathsf{T}_K$--algebra morphism. In fact, 
			for every $f\in \set(Q_X,2)$ we have:
			\begin{align*}
				(f\circ e_X)^{(i)}_{(g,t_1,\ldots ,t_{n_g-1})}(t)&=(f\circ e_X)(g(t_1,\ldots,t_{i-1},t,t_{i},\ldots,t_{n_g-1}))\\
							&=f(e_X(g(t_1,\ldots,t_{i-1},t,t_{i},\ldots,t_{n_g-1})))\\
							&=f(g(e_X(t_1),\ldots,e_X(t_{i-1}),e_X(t),e_X(t_{i}),\ldots,e_X(t_{n_g-1}))))\\
							&=\left(f^{(i)}_{(g,e_X(t_1),\ldots ,e_X(t_{n_g-1}))}\circ e_X\right)(t)
			\end{align*}
			where the function $f^{(i)}_{(g,e_X(t_1),\ldots ,e_X(t_{n_g-1}))}\in \set(Q_X,2)$ is defined for every $q\in Q_X$ as 
			$f^{(i)}_{(g,e_X(t_1),\ldots ,e_X(t_{n_g-1}))}(q)=f(g(e_X(t_1),\ldots,e_X(t_{i-1}),q,e_X(t_{i}),\ldots,e_X(t_{n_g-1})))$. Therefore, 
			$(f\circ e_X)^{(i)}_{(g,t_1,\ldots ,t_{n_g-1})}=f^{(i)}_{(g,e_X(t_1),\ldots ,e_X(t_{n_g-1}))}\circ e_X \in \Lan(X)$, i.e., $\Lan(X)$ is closed under 
			derivatives with respect to the type $\T$.

			\noindent Conversely, any $S\in \CABA$ closed under derivatives with respect to the type $\T$ such that $S$ is a subalgebra of $\set(T_KX,2)\in \CABA$ will 
			define, by duality, the canonical surjective function  $e_S:T_KX\to T_KX/\theta_S$ where $\theta_S\subseteq T_KX\times T_KX$ is defined as:
			\[
				\theta_S:=\{(v,w)\in T_KX\times T_KX \mid \exists A\in \At(S)\text{ s.t. }A(w)=A(v)=1\}
			\]
			where $\At(S)$ is the set of atoms of $S$. Clearly, $\theta_S$ is an equivalence relation on $T_KX$ since $\At(S)$ is a partition of $T_KX$. We only need to show that 
			$\theta_S$ is an $\T$--congruence on $\mathbf{T_KX}$ \footnote{A $\T$--congruence on an algebra $\mathbf{A}$ of type
			$\T$ is an equivalence relation $\theta \subseteq A\times A$ on $A$ such that for every $g\in \T$ of arity $n_g$, every $1\leq i\leq n_g$ and $a_j\in A$, 
			$1\leq j< n_g$, the property $(u,v)\in \theta$ implies $(g(a_1,\ldots a_{i-1},u,a_{i},\ldots a_{n_g-1}),g(a_1,\ldots a_{i-1},v,a_{i},\ldots a_{n_g-1}))\in \theta$. 
			(cf. \cite[Definition II.5.1]{bys})}. In fact, let $g\in \T$ of arity $n_g$, let $1\leq i\leq n_g$, let $t_j\in T_KX$, $1\leq j< n_g$, and assume $(u,v)\in \theta$, 
			i.e., there exists $A\in \At(S)$ such that $A(u)=A(v)=1$, we have to show that $(g(t_1,\ldots t_{i-1},u,t_{i},\ldots t_{n_g-1}),g(t_1,\ldots t_{i-1},v,t_{i},\ldots t_{n_g-1}))\in \theta$. 
			If $(g(t_1,\ldots t_{i-1},u,t_{i},\ldots t_{n_g-1}),g(t_1,\ldots t_{i-1},v,t_{i},\ldots t_{n_g-1}))\notin \theta$ then there exists $B\in \At(S)$ such that 
			$B(g(t_1,\ldots t_{i-1},u,t_{i},\ldots t_{n_g-1}))\neq B(g(t_1,\ldots t_{i-1},v,t_{i},\ldots t_{n_g-1}))$ which means that 
			$B^{(i)}_{(g,t_1,\ldots ,t_{n_g-1})}(u)\neq B^{(i)}_{(g,t_1,\ldots ,t_{n_g-1})}(v)$ with $B^{(i)}_{(g,t_1,\ldots ,t_{n_g-1})}\in S$ by closure under derivatives with 
			respect to the type $\T$. Therefore $A\cap B^{(i)}_{(g,t_1,\ldots ,t_{n_g-1})}$ is an element in $S$ such that $0<A\cap B^{(i)}_{(g,t_1,\ldots ,t_{n_g-1})}<A$ 
			which contradicts the fact that $A$ is an atom. 
			This proves that $(g(t_1,\ldots t_{i-1},u,t_{i},\ldots t_{n_g-1}),g(t_1,\ldots t_{i-1},v,t_{i},\ldots t_{n_g-1}))\in \theta$, which means that $e_S$ is a surjective 
			$\mathsf{T}$--algebra morphism.
		\item Condition iii) above is the commutativity of the diagram in Definition \ref{defcoeqth}.
	\end{enumerate}
	Conversely, each operator $\Lan$ on $\set$ with the properties i), ii) and iii) above defines the coequational $\mathsf{B}$--theory $\{\Lan(X)\xhookrightarrow{i_X} 2^{T_KX}\}_{X\in \set}$ 
	where $i_X$ is the inclusion $\mathsf{B}$--coalgebra morphism. Note that conditions i) and ii) above are exactly the properties that $\Lan(X)$ is a $\mathsf{B}$--subcoalgebra 
	of $\set(T_KX,2)$.\qed
\end{example}

\begin{example}\label{exvv1}
	From the previous general example we can provide details for the properties i), ii) and iii) in Example \ref{exjan11}. In fact, for the case of monoids we have the type 
	$\T=\{e,\cdot\}$ where $e$ is a nullary function symbol and $\cdot$ is a binary function symbol. We write $x\cdot y$ for $\cdot(x,y)$. By considering 
	the variety $K$ of monoids, we get the monad $\mathsf{T}_K$ such that $T_KX=X^*$, where $X^*$ is the free monoid on $X$. Then, we have:
	\begin{enumerate}[1)]
		\item Properties i) and iii) in Example \ref{exua} trivially become properties i) and iii) in Example \ref{exjan11}.
		\item Property ii) in Example \ref{exua} does not give us any kind of derivatives for the nullary function symbol $e\in \T$, but will give us the derivatives 
			$L^{(1)}_{(\cdot,u)}$ and $L^{(2)}_{(\cdot,u)}$ for the binary function symbol $\cdot\in \T$, $u\in T_KX=X^*$, which are defined for every 
			$w\in X^*$ as
			\[
				L^{(1)}_{(\cdot,u)}(w)=L(w\cdot u)=L(wu) \text{\ \ and\ \ } L^{(2)}_{(\cdot,u)}(w)=L(u\cdot w)=L(uw)
			\]
			which are respectively the left and right derivatives of $L$ with respect to $u$.
	\end{enumerate}

	\noindent In a similar way, from Example \ref{exua}, we get the following Eilenberg--type correspondences

	\begin{enumerate}[(1)]
		\item A one--to--one correspondence between varieties of semigroups and operators $\Lan$ on $\set$ such that for every $X\in \set$:
			\begin{enumerate}[i)]
				\item $\Lan(X)\in \CABA$ and it is a subalgebra of the complete atomic Boolean algebra $\set(X^+,2)$ of subsets of $X^+$, i.e., every element in $\Lan (X)$ is a 
					language on $X$ not containing the empty word.
				\item $\Lan(X)$ is closed under left and right derivatives. That is, if $L\in \Lan(X)$ and $x\in X$ then $_xL,L_x\in \Lan(X)$, where $_xL(w)=L(wx)$ and 
					$L_x(w)=L(xw)$, $w\in X^+$.
				\item $\Lan$ is closed under morphic preimages. That is, for every $Y\in \set$, homomorphism of semigroups $h:Y^+\to X^+$ and $L\in \Lan(X)$, we have 
					that $L\circ h\in \Lan(Y)$.
			\end{enumerate}
		\item A one--to--one correspondence between varieties of groups and operators $\Lan$ on $\set$ such that for every $X\in \set$:
			\begin{enumerate}[i)]
				\item $\Lan(X)\in \CABA$ and it is a subalgebra of the complete Boolean algebra $\set(\mathfrak{F}_G(X),2)$ of subsets of the free group $\mathfrak{F}_G(X)$ 
					on $X$.
				\item $\Lan(X)$ is closed under left and right derivatives and inverses. That is, if $L\in \Lan(X)$ and $x\in X$ then $_xL,L_x,L^{-1}\in \Lan(X)$, where 
					$_xL(w)=L(wx)$, $L_x(w)=L(xw)$ and $L^{-1}(w)=L(w^{-1})$, $w\in \mathfrak{F}_G(X)$.
				\item $\Lan$ is closed under morphic preimages. That is, for every $Y\in \set$, homomorphism of groups $h:\mathfrak{F}_G(Y)\to \mathfrak{F}_G(X)$ and 
					$L\in \Lan(X)$, we have that $L\circ h\in \Lan(Y)$.
			\end{enumerate}
		\item For a fixed monoid $\mathbf{M}=(M,e,\cdot)$ a one--to--one correspondence between varieties of $\mathbf{M}$--actions, i.e., dynamical systems on $\mathbf{M}$, and 
			operators $\Lan$ on $\set$ such that for every $X\in \set$:
			\begin{enumerate}[i)]
				\item $\Lan(X)\in \CABA$ and it is a subalgebra of the complete atomic Boolean algebra $\set(M\times X,2)$ of subsets of $M\times X$.	
				\item $\Lan(X)$ is closed under translations. That is, if $L\in \Lan(X)$ and $m\in M$ then $mL\in \Lan(X)$, where 
					$mL(n,x)=L(m\cdot n, x)$, $(n,x)\in M\times X$.
				\item $\Lan$ is closed under morphic preimages. That is, for every $Y\in \set$, homomorphism of $\mathbf{M}$--actions 
					$h:M\times Y\to M\times X$ (i.e., $h(m\cdot(n,y))=m \cdot h(n,y)$) and $L\in \Lan(X)$, we have that $L\circ h\in \Lan(Y)$.
			\end{enumerate}
		\item Consider the type of algebras $\T=\{\cdot, (\underline{\ \ })^\omega\}$ where $\cdot$ is a binary operation and $(\underline{\ \ })^\omega$ is a unary operation. 
			Now, let $\mathsf{T}$ be the free monad on $\set$ for the algebras of type $\T$ that satisfy the following equations:
			\begin{center}
				\begin{tabular}{m{4cm}m{4cm}m{4cm}}
				$(x\cdot y)\cdot z=x\cdot (y\cdot z)$ &  $x^\omega\cdot y=x^\omega$ & $(y\cdot x^\omega)^\omega=y\cdot x^\omega$\\
				$(x^n)^\omega=x^\omega$, $n\geq 1$ & $(x\cdot y)^\omega= x\cdot (y\cdot x)^\omega$ & \ \\
				\end{tabular}
			\end{center}
			Here $x\cdot y$ is the product of $x$ and $y$, in that order, and $x^\omega$ represents the infinite product $x\cdot x\cdot \cdots$. 
			Hence, for every $X\in \set$ the algebra $\mathbf{TX}$ has as carrier set the set $X^+\cup X^{(\omega)}$, where $X^{(\omega)}$ represents 
			the set of all ultimately periodic sequences in $X^\omega$, i.e., every element in $X^{(\omega)}$ is of the form $uv^\omega$ for some $u,v\in X^+$, and 
			$X^+\cup X^{(\omega)}$ has the natural operations $\cdot$ of concatenation and $(\underline{\ \ })^\omega$ of ``infinite power''.\\
			\noindent In this case, we get a one--to--one correspondence between varieties of semigroups with infinite exponentiation and operators $\Lan$ on $\set$ 
			such that for every $X\in \set$:
			\begin{enumerate}[i)]
				\item $\Lan(X)\in \CABA$ and it is a subalgebra of the complete atomic Boolean algebra $\set(X^+\cup X^{(\infty)},2)$ of subsets of $X^+\cup X^{(\infty)}$.	
				\item $\Lan(X)$ is closed under left and right derivatives and infinite exponentiation. That is, if $L\in \Lan(X)$ and $u\in X^+\cup X^{(\infty)}$ then 
					$_uL,L_u,L^\omega \in \Lan(X)$, where $_uL(w)=L(wu)$, $L_u(w)=L(uw)$ and $L^\omega(w)=L(w^\omega)$, $w\in X^+\cup X^{(\omega)}$.
				\item $\Lan$ is closed under morphic preimages. That is, for every $Y\in \set$, homomorphism of $\mathsf{T}$--algebras
					$h:Y^+\cup Y^{(\infty)}\to X^+\cup X^{(\infty)}$ and $L\in \Lan(X)$, we have that $L\circ h\in \Lan(Y)$.\qed
			\end{enumerate}			
	\end{enumerate}
\end{example}

\noindent We can do a similar work as in Example \ref{exua} to get Eilenberg--type correspondences for varieties of ordered algebras for any given type.

\begin{example}\label{exuoa}
	Let $\T$ be a type of algebras where each function symbol $g\in \T$ has arity $n_{g}\in \mathbb{N}$ and let $K$ be a variety of ordered algebras of type $\T$. Consider the case 
	$\mathcal{D}=\poset$, $\mathcal{D}_0=$ discrete posets, $\mathscr{E}=$ surjections, $\mathscr{M}=$ embeddings and let $\mathsf{T}_K$ be the monad such that for every 
	$\mathbf{X}=(X,\leq)\in \poset$, $T_K\mathbf{X}:=(T_KX,\leq_{T_KX})$ is the underlying poset of the free ordered algebra in $K$ on $\mathbf{X}$ generators 
	(see, e.g., \cite[Proposition 1]{bloomp}). 
	We have that $\ACDL$ is dual to $\poset$, so we can consider $\mathcal{C}=\ACDL$, $\mathcal{C}_0=\CABA$. Similar to Example \ref{exua}, by using the duality between 
	$\poset$ and $\ACDL$, each coequational $\mathsf{B}$--theory can be indexed by $\set$ (i.e., we consider every object $X\in \set$ as the object $(X,=)\in\poset$, which is in 
	$\mathcal{D}_0$) 
	and can be presented, up to isomorphism, as a family 
	$\{S_X\xhookrightarrow{m_X} 2^{T_KX}\}_{X\in \set}$, then we present coequational $\mathsf{B}$--theories as operators $\Lan$ on $\set$ given by 
	$\Lan(X):=\Imag(m_X)$. Then, we get a one--to--one correspondence between varieties of ordered algebras in $K$ and operators $\Lan$ on $\set$ such that for every $X\in \set$:
	\begin{enumerate}[i)]
		\item $\Lan(X)\in \ACDL$ and it is a subalgebra of the algebraic completely distributive lattice $\poset(T_KX,\mathbf{2}_c)\cong \set(T_KX,2)$ of subsets of $T_KX$. Here 
			$\mathbf{2}_c\in \poset$ is the two--element chain.
		\item $\Lan(X)$ is closed under derivatives with respect to the type $\T$. That is, for every $g \in \T$ of arity $n_g$, every $1\leq i\leq n_g$, 
			every $t_j\in T_KX$, $1\leq j< n_g$, and every $L\in \Lan(X)$ we have that $L^{(i)}_{(g,t_1,\ldots ,t_{n_g-1})}\in \Lan(X)$ where 
			$L^{(i)}_{(g,t_1,\ldots ,t_{n_g-1})}\in \set(T_KX,2)$ 
			is defined as 
			\[
				L^{(i)}_{(g,t_1,\ldots ,t_{n_g-1})}(t)=L(g(t_1,\ldots,t_{i-1},t,t_{i},\ldots,t_{n_g-1}))
			\]
			$t\in T_KX$. That is, for every function symbol $g\in \T$ we get $n_g$ kinds of derivatives.
		\item $\Lan$ is closed under morphic preimages. That is, for every $Y\in \set$, homomorphism of $\mathsf{T}_K$--algebras $h:T_KY\to T_KX$ and $L\in \Lan(X)$, we have 
			that $L\circ h\in \Lan(Y)$.\qed
	\end{enumerate}
\end{example}

\begin{example}[cf. \text{\cite[Theorem 5.8]{pin}}]\label{exvv3}
	From the previous example we can obtain Eilenberg--type correspondences for varieties of ordered semigroups, varieties of ordered monoids, varieties of ordered groups, and so on. 
	For instance, for the case of varieties of ordered semigroups we can consider the type $\T=\{\cdot\}$ where $\cdot$ is a binary function symbol and $K$ is the variety of ordered 
	semigroups. Then we get a one--to--one correspondence between varieties of ordered semigroups and operators $\Lan$ on $\set$ such that for every $X\in \set$:
	\begin{enumerate}[i)]
		\item $\Lan(X)\in \ACDL$ and it is a subalgebra of the algebraic completely distributive lattice $\set(X^+,2)$ of subsets of $X^+$, i.e. every element in $\Lan (X)$ 
			is a language on $X$ not containing the empty word. In particular, $\Lan(X)$ is closed under unions and intersections.
		\item $\Lan(X)$ is closed under left and right derivatives. That is, if $L\in \Lan(X)$ and $x\in X$ then $_xL,L_x\in \Lan(X)$, where $_xL(w)=L(wx)$ and 
			$L_x(w)=L(xw)$, $w\in X^+$.
		\item $\Lan$ is closed under morphic preimages. That is, for every $Y\in \set$, homomorphism of semigroups $h:Y^+\to X^+$ and $L\in \Lan(X)$, we have 
			that $L\circ h\in \Lan(Y)$.\qed
	\end{enumerate}
\end{example}

\begin{example}[cf. \text{\cite[Th\'eor\`eme III.1.1.]{reut}}]\label{exvv4}
	Let $\mathbb{K}$ be a finite field. Consider the case $\mathcal{D}=\mathcal{D}_0=\fvec$, $\mathscr{E}=$ surjections, and $\mathscr{M}=$ injections. 
	We have that $\tvec$ is dual to $\fvec$, so we can consider $\mathcal{C}=\mathcal{C}_0=\tvec$. For every set $X$ denote by $\mathsf{V}(X)$ the $\mathbb{K}$--vector space 
	with basis $X$. Consider the monad $T(\mathsf{V}(X))=\mathsf{V}(X^*)$, where $X^*$ is the free monoid on $X$. Then we get a one--to--one correspondence between 
	varieties of $\mathbb{K}$--algebras and operators $\Lan$ on $\set$ such that for every $X\in \set$:
	\begin{enumerate}[i)]
		\item $\Lan(X)\in \tvec$ and it is a subspace of the space $\fvec(\mathsf{V}(X^*),\mathbb{K})$ where the topology on $\fvec(\mathsf{V}(X^*),\mathbb{K})$ is the subspace 
			topology of the product $\mathbb{K}^{\mathsf{V}(X^*)}$.
		\item $\Lan(X)$ is closed under left and right derivatives. That is, if $L\in \Lan(X)$ and $v\in \mathsf{V}(X^*)$ then $_vL,L_v\in \Lan(X)$, where $_vL(w)=L(wv)$ and 
			$L_v(w)=L(vw)$, $w\in \mathsf{V}(X^*)$.
		\item $\Lan$ is closed under morphic preimages. That is, for every $Y\in \set$, $\mathbb{K}$--linear map $h:\mathsf{V}(Y^*)\to \mathsf{V}(X^*)$ and 
			$L\in \Lan(X)$, we have that $L\circ h\in \Lan(Y)$.\qed
	\end{enumerate}
\end{example}

\begin{example}[cf. \text{\cite[Theorem 5 (iii)]{polak}}]\label{exvv5}
	Consider the case $\mathcal{D}=\JSL$, $\mathcal{D}_0=$ free join semilattices, i.e., $\mathcal{D}_0=\{(\mathcal{P}_f(X),\cup)\mid X\in \set\}$, where $\mathcal{P}_f(X)$ is the 
	set of all finite subsets of $X$, $\mathscr{E}=$ surjections and $\mathscr{M}=$ injections. We have that $\TBJSL$ is dual to $\JSL$, so we can consider 
	$\mathcal{C}=\TBJSL$ and $\mathcal{C}_0=\{ \JSL((\mathcal{P}_f(X),\cup),2)\mid X\in \set\}$. Let $\mathsf{T}$ be the monad on $\JSL$ such that $T(S,\lor)$ is the 
	free idempotent semiring on $(S,\lor)\in \JSL$. Then we get a one--to--one correspondence between varieties of idempotent semirings and operators $\Lan$ on $\set$ such that 
	for every $X\in \set$:
	\begin{enumerate}[i)]
		\item $\Lan(X)\in \TBJSL$ and it is a subspace of $\set(X^*,2)$ where the topology given on $\set(X^*,2)$ is the subspace 
			topology of the product $2^{X^*}$. In particular, $\Lan(X)$ is closed under union. 
		\item $\Lan(X)$ is closed under left and right derivatives. That is, if $L\in \Lan(X)$ and $x\in X$ then $_xL,L_x\in \Lan(X)$, where $_xL(w)=L(wx)$ and 
			$L_x(w)=L(xw)$, $w\in X^*$.
		\item $\Lan$ is closed under morphic preimages. That is, for every $Y\in \set$, semiring homomorphism $h:\mathcal{P}_f(Y^*)\to \mathcal{P}_f(X^*)$ and 
			$L\in \Lan(X)$, we have that $L^\sharp\circ h\circ \eta_{Y^*}\in \Lan(Y)$, where $\eta_{Y^*}\in \set(Y^*,\mathcal{P}_f(Y^*))$ and 
			$L^\sharp\in \JSL(\mathcal{P}_f(X^*),2)$ are defined as $\eta_{Y^*}(w)=\{w\}$ and $L^\sharp (\{w_1,\ldots, w_n\})=\bigvee_{i=1}^n L(w_i)$. 
			Note that the composite $L^\sharp\circ h\circ \eta_{Y^*}$ is the same as $h^{(-1)}(L)$ defined in \cite{polak}. The reason of the exponent $^\sharp$ and 
			the use of $\eta_{Y^*}$ is that we are using the isomorphism:
			\begin{align*}
				\JSL(\mathcal{P}_f(X^*),&2)\cong \set(X^*,2) \\
				& \ f \mapsto f\circ \eta_{X^*}\\
				& \ L^\sharp \mapsfrom L
			\end{align*}
	\end{enumerate}
	See the Appendix for more details.\qed
\end{example}

\begin{remark}
	Note that Eilenberg--type correspondences for varieties of $\mathbb{K}$--algebras and idempotent semirings can also be obtained from Example \ref{exua}.
\end{remark}

\section{Eilenberg--type correspondences for pseudovarieties of $\mathsf{T}$--algebras}\label{secreit}

This section is similar to the previous one with the restriction that all the algebras considered are finite. We state a categorical version of Birkhoff's theorem for finite 
$\mathsf{T}$--algebras and an abstract Eilenberg--type correspondence for pseudovarieties of $\mathsf{T}$--algebras. We use the prefix `pseudo' to indicate that 
all the algebras considered are finite. That is, a pseudovariety of $\mathsf{T}$--algebras is a variety of finite $\mathsf{T}$--algebras, which is a class of finite $\mathsf{T}$--algebras 
closed under homomorphic images, subalgebras and finite producs. The Birkhoff variety theorem for finite algebras has been previously proved to prove that a class 
of finite algebras of the same type is a pseudovariety, i.e., it is closed under subalgebras, homomorphic images and finite products, if and only if it is defined by `extended equations' 
\cite{reit,banab}. An `extended equation' is a concept that generalizes the concept of an equation and can be defined by using topological techniques or, alternatively, by implicit 
operations \cite{reit,banab}. Reiterman's proof for the Birkhoff theorem for finite algebras involves topological methods in which the set of $n$--ary implicit operations is the 
completion of the set of $n$--ary terms \cite{reit}. A topological approach was also explored by Banaschewski by using uniformities \cite{banab}. Recently, in \cite{chen}, profinite 
techniques were used to define the concept of profinite equations which are the kind of equations that define pseudovarieties of $\mathsf{T}$--algebras.

\noindent We provide a categorical version of the Birkhoff theorem for finite algebras, Theorem \ref{reitthm}, which, under mild assumptions, establishes a one--to--one 
correspondence between pseudovarieties of $\mathsf{T}$--algebras and pseudoequational $\mathsf{T}$--theories. Different versions of this theorem such as \cite{reit,banab,chen} 
use topological approaches and/or profinite techniques. In the present paper, topological approaches and profinite techniques are not used, thus avoiding constructions of certain limits 
and profinite completions, which gives us a better and basic understanding on how pseudovarieties are characterized. The main strategy we follow to state and 
prove our theorem is that pseudovarieties of algebras are exactly directed unions of equational classes of finite algebras, which is a fact that was proved 
in \cite{baldwin,banab,eilsch}. The definition of pseudoequational $\mathsf{T}$--theories is based on the previous observation and the categorical dual of ``varieties of languages'' that was 
used by the author to derive an Eilenberg--type correspondence for $\mathsf{T}$--algebras \cite{jste}.

\noindent As in the previous section, the main purpose of this approach is to derive Eilenberg--type correspondences for pseudovarieties of $\mathsf{T}$--algebras. This is summarized in the 
following picture:

\begin{center}	
	\begin{tikzpicture}[>=stealth,shorten >=3pt,
		node distance=2.5cm,on grid,auto,initial text=,
		accepting/.style={thick,double distance=2.5pt}]
		\node () at (0,-0.05) {Pseudovarieties};
		\node () at (0,-0.45) {of $\mathsf{T}$--algebras};
		\draw[double,thick,<->] (1.5,-0.25) -- (5.5,-0.25);
		\node () at (3.5,-0.5) {Birkhoff's thm. for};
		\node () at (3.5,-0.9) {finite $\mathsf{T}$--alg. Thm. \ref{reitthm}};

		\node () at (7,-0.05) {Pseudoequational};
		\node () at (7,-0.45) {$\mathsf{T}$--theories};

		\draw[double,thick,<->] (7,-0.75) -- (7,-1.7);
		\node () at (7.7,-1.2) {Duality};
		\node () at (7,-1.75) {Pseudocoequational};
		\node () at (7,-2.15) {$\mathsf{B}$--theories};
		\draw[double,thick,<->] (0,-0.75) -- (0,-1.95)--(5.5,-1.95);
		\node () at (2.5,-2.2) {Eilenberg--type correspondence};
		\node () at (2.5,-2.6) {Proposition \ref{eilpvar}};
	\end{tikzpicture}
\end{center}

\subsection{The Birkhoff theorem for finite $\mathsf{T}$--algebras}

\noindent Throughout this section, we fix a complete concrete category $\mathcal{D}$ such that its forgetful functor preserves epis, monos and products, a monad 
$\mathsf{T}=(T,\eta,\mu)$ on 
$\mathcal{D}$, a full subcategory $\mathcal{D}_0$ of $\mathcal{D}$ and a factorization system $\mathscr{E}/\mathscr{M}$ on $\mathcal{D}$. We make the following assumptions:
\begin{enumerate}
	\item[(B$_f$1)] The factorization system $\mathscr{E}/\mathscr{M}$ is proper.
	\item[(B$_f$2)] For every $X\in \mathcal{D}_0$, the free $\mathsf{T}$--algebra 
				$\mathbf{TX}=(TX,\mu_X)$ is {\it projective with respect to $\mathscr{E}$ in $\Alg(\mathsf{T})$}. That is, for every $h\in \Alg(\mathsf{T})(\mathbf{TX},\mathbf{B})$ 
				with $X\in \mathcal{D}_0$ and $e\in \Alg (\mathsf{T})(\mathbf{A},\mathbf{B})\cap \mathscr{E}$ there exists $g\in \Alg(\mathsf{T})(\mathbf{TX},\mathbf{A})$ 
				such that $e\circ g=h$.
	\item [(B$_f$3)] For every finite $\mathbf{A}\in \Alg(\mathsf{T})$ there exists $X_A\in\mathcal{D}_0$ and 
			$s_A\in \Alg(\mathsf{T})(\mathbf{TX_A},\mathbf{A})\cap \mathscr{E}$.
	\item[(B$_f$4)] $T$ preserves morphisms in $\mathscr{E}$.
\end{enumerate}

\noindent In order to talk about finite algebras, we assume that the category $\mathcal{D}$ is a concrete category. That is, if $U:\mathcal{D}\to \set$ is the forgetful functor 
for the concrete category $\mathcal{D}$, then an object $X\in \mathcal{D}$ is {\it finite} if $U(X)$ is a finite set. Similarly, an algebra $\mathbf{A}\in \Alg(\mathsf{T})$ is {\it finite} 
if its carrier object $A\in \mathcal{D}$ is finite. The algebras of interest will be the objects $\Alg_f(\mathsf{T})$ of finite algebras in $\Alg(\mathsf{T})$. The factorization system 
$\mathscr{E}/\mathscr{M}$ on $\mathcal{D}$, which is lifted to $\Alg(\mathsf{T})$ by using (B$_f$1) and (B$_f$4), allows us to define the concept of homomorphic image and subalgebra. 
In this case, the requirement of the forgetful functor $U$ preserving epis, monos and products, will give us the property that subalgebras, homomorphic images and finite products 
of finite algebras are also finite. The purpose of the subcategory $\mathcal{D}_0$ is that the objects from which ``variables'' for the 
equations are considered are objects in $\mathcal{D}_0$. Assumption (B$_f$3) guarantees that every algebra is the homomorphic image of a free one with object of generators in 
$\mathcal{D}_0$.

\noindent To obtain Birkhoff's theorem for finite algebras we can consider $\mathcal{D}=\set$, $\mathcal{D}_0=$ finite sets, $\mathscr{E}=$ surjections, $\mathscr{M}=$ injections, 
and $\mathsf{T}$ to be the term monad for a given type of algebras $\T$, i.e., $TX=T_{\T}(X)$, the set of terms of type $\T$ on the set of variables $X$ (see Example \ref{excleqth}). 
Another important example will be given by $\mathcal{D}=\poset$ and $\mathcal{D}_0$ to be the full subcategory of finite discrete posets (as before, we do not want the 
``variables'' to be ordered). 

\noindent Now, we will define the main concepts needed to state our categorical Birkhoff's theorem for finite $\mathsf{T}$--algebras.

\begin{defi}\label{defpet}
	Let $\mathcal{D}$ be a complete concrete category such that its forgetful functor preserves epis, $\mathsf{T}$ a monad on $\mathcal{D}$, $\mathcal{D}_0$ a full 
	subcategory of $\mathcal{D}_0$ and $\mathscr{E}/\mathscr{M}$ a factorization system on $\mathcal{D}$. Assume (B$_f$1) and (B$_f$4). A {\it pseudoequational 
	$\mathsf{T}$--theory on $\mathcal{D}_0$} is an operator $\Ps$ on $\mathcal{D}_0$ such that for every $X\in \mathcal{D}_0$, $\Ps(X)$ is a nonempty collection of 
	$\mathsf{T}$--algebra morphisms in $\mathscr{E}$ with domain $\mathbf{TX}$ and finite codomain and:
	\begin{enumerate}[i)]
		\item For every finite set $I$ and $f_i\in \Ps(X)$, $i\in I$, there exists $f\in \Ps(X)$ such that every $f_i$ factors through $f$, $i\in I$.
		\item For every $e\in \Ps(X)$ with codomain $\mathbf{A}$ and every $\mathsf{T}$--algebra morphism $e'\in \mathscr{E}$ with domain $\mathbf{A}$ 
			we have that $e'\circ e\in \Ps(X)$.
		\item For every $Y\in \mathcal{D}_0$, $f\in \Ps(X)$ and $h\in \Alg(\mathsf{T})(\mathbf{TY},\mathbf{TX})$ we have that $e_{f\circ h}\in \Ps(Y)$ where 
			$f\circ h=m_{f\circ h}\circ e_{f\circ h}$ is the factorization of $f\circ h$.
	\end{enumerate}
\end{defi}

\noindent Pseudovarieties of algebras are exactly directed unions of equational classes of finite algebras \cite{baldwin,banab,eilsch}. With this in mind, we can give an 
intuition of the previous definition. In fact, for each object $X\in \mathcal{D}_0$ of variables every morphism $f\in \Ps(X)$ represents a set of equations on $X$, namely $\ker(f)$, which 
can be equivalently given by a $\mathsf{T}$--algebra morphism in $\mathscr{E}$ with domain $\mathbf{TX}$. Condition i) says that the set of all the equations on a fixed $X$ 
is a directed set, i.e., for every set of equations $f_i\in \Ps(X)$, $i\in I$, with $I$ finite, there is an upper bound $f\in \Ps(X)$. Here $f$ is an upper bound of $\{f_i\mid i\in I\}$ 
if every $f_i$ factors through $f$. Condition iii) says that all the equations considered are preserved under any substitution $h\in \Alg(\mathsf{T})(\mathbf{TY},\mathbf{TX})$ 
of variables in $Y$ by terms in $TX$, this condition is related to the commutativity of the diagram given in Definition \ref{defeqth}. Condition ii) is needed for uniqueness of 
the pseudoequational theory defining a given pseudovariety of algebras. In fact, two directed unions of equational classes of finite algebras can give us the same pseudovariety, but if 
we put the requirement of being downward closed, which is the requirement in condition ii), then we get uniqueness.

\noindent Given an algebra $\mathbf{A}\in \Alg_f(\mathsf{T})$, we say that $\mathbf{A}$ {\it satisfies} $\Ps$, denoted as $\mathbf{A}\models \Ps$, if for every 
$X\in \mathcal{D}_0$ and $f\in\Alg(\mathsf{T})(\mathbf{TX},\mathbf{A})$ we have that $f$ factors through some morphism in $\Ps(X)$. We denote by 
$\md_f (\Ps)$ the {\it finite models} of $\Ps$, that is:
\[
	\md_f (\Ps):=\{\mathbf{A}\in \Alg_f(\mathsf{T}) \mid \mathbf{A}\models \Ps\}
\]
A class $K$ of finite $\mathsf{T}$--algebras is {\it defined} by $\Ps$ if $K=\md_f(\Ps)$.

\noindent Let $K$ be a class of algebras in $\Alg_f(\mathsf{T})$. We say that $K$ {\it is closed under $\mathscr{E}$--quotients} if $\mathbf{B}\in K$ for every  
$e\in \Alg(\mathsf{T})(\mathbf{A},\mathbf{B})\cap \mathscr{E}$ with $\mathbf{A}\in K$. We say that $K$ {\it is closed under $\mathscr{M}$--subalgebras} if 
$\mathbf{B}\in K$ for every $m\in \Alg(\mathsf{T})(\mathbf{B},\mathbf{A})\cap \mathscr{M}$ with $\mathbf{A}\in K$. We say that $K$ 
{\it is closed under finite products} if $\prod_{i\in I} \mathbf{A}_i\in K$ for every finite set $I$ such that $\mathbf{A}_i\in K$, $i\in I$.

\begin{defi}
	Let $\mathcal{D}$ be a complete concrete category, $\mathsf{T}$ a monad on $\mathcal{D}$ and $\mathscr{E}/\mathscr{M}$ a factorization system on $\mathcal{D}$. 
	A class $K$ of finite algebras in $\Alg (\mathsf{T})$ is called a {\it pseudovariety of $\mathsf{T}$--algebras } if it is closed under $\mathscr{E}$--quotients, 
	$\mathscr{M}$--subalgebras and finite products. 
\end{defi}

\begin{example}
	Consider the setting $\mathcal{D}=\set$, $\mathcal{D}_0=$ finite sets, $\mathscr{E}=$ surjections, $\mathscr{M}=$ injections, and $\mathsf{T}$ to be the term monad for a given 
	type of algebras $\T$. Then we have that equational classes of finite algebras are examples of pseudovarieties of $\mathsf{T}$--algebras. For example, finite semigroups, 
	finite monoids, finite groups, finite vector spaces, finite Boolean algebras, finite lattices, and so on. In \cite{banab}, some non--equational examples of pseudovarieties are shown 
	such as:
	\begin{enumerate}[(1)]
		\item the finite commutative monoids satisfying some identity $x^n=x^{n+1}$, $n=1,2,\ldots$,
		\item the finite cancellation monoids,
		\item the finite abelian $p$--groups, for a given prime number $p$, and
		\item the finite products of finite fields of a given prime characteristic.
	\end{enumerate}
	Each of those pseudovarieties is not equational. In fact, every equation satisfied in the given pseudovariety is also satisfied in the larger pseudovariety, i.e., 
	the pseudovariety of all commutative monoids for (1), the pseudovariety of all monoids for (2), the pseudovariety of all abelian groups for (3), and the pseudovariety of all 
	commutative rings with unit of a given prime characteristic for (4).\qed
\end{example}

\noindent Now we can formulate our categorical Birkhoff's theorem for finite $\mathsf{T}$--algebras as follows.

\begin{thm}[Birkhoff's Theorem for finite $\mathsf{T}$--algebras]\label{reitthm}
	Let $\mathcal{D}$ be a complete concrete category such that its forgetful functor preserves epis, monos and products, $\mathsf{T}$ a monad on $\mathcal{D}$, $\mathcal{D}_0$ 
	a full subcategory of $\mathcal{D}_0$ and $\mathscr{E}/\mathscr{M}$ a factorization system on $\mathcal{D}$. Assume (B$_f$1) to (B$_f$4). Then a class $K$ of finite 
	$\mathsf{T}$--algebras is a pseudovariety of $\mathsf{T}$--algebras if and only if is defined by a pseudoequational $\mathsf{T}$--theory on $\mathcal{D}_0$. Additionally, 
	pseudovarieties of $\mathsf{T}$--algebras are in one--to--one correspondence with pseudoequational $\mathsf{T}$--theories on $\mathcal{D}_0$. \qed
\end{thm}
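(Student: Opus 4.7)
The plan is to establish both implications of the equivalence and then upgrade to a bijection by proving uniqueness of the pseudoequational theory defining a given pseudovariety. The natural construction is: given a pseudovariety $K$, set
\[
	\Ps_K(X):=\{\,e\in \Alg(\mathsf{T})(\mathbf{TX},\mathbf{A})\cap \mathscr{E}\mid \mathbf{A}\in K\,\}
\]
for each $X\in \mathcal{D}_0$, and show that (a) $\Ps_K$ is a pseudoequational $\mathsf{T}$--theory, (b) $\md_f(\Ps_K)=K$, (c) $\md_f(\Ps)$ is a pseudovariety for every pseudoequational $\mathsf{T}$--theory $\Ps$, and (d) any pseudoequational theory $\Ps$ with $\md_f(\Ps)=K$ coincides with $\Ps_K$.

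For (a), properties (ii) and (iii) of Definition \ref{defpet} follow because $\mathscr{E}$--quotients and $\mathscr{M}$--subalgebras of members of $K$ remain in $K$: post\-composition with an $\mathscr{E}$--map leaves the codomain in $K$, while the $\mathscr{E}/\mathscr{M}$--factorization of $f\circ h$ exhibits its intermediate object as an $\mathscr{M}$--subalgebra of a member of $K$. Property (i) is where closure of $K$ under finite products enters: given $f_1,\ldots,f_n\in \Ps_K(X)$, factor the induced tuple map into the (finite) product as $m\circ e$; its $\mathscr{E}$--part $e$ lands in an $\mathscr{M}$--subalgebra of the product and is the required upper bound. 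For (b), the inclusion $K\subseteq\md_f(\Ps_K)$ uses factorization: any $h\colon\mathbf{TY}\to \mathbf{A}$ with $\mathbf{A}\in K$ factors as $m_h\circ e_h$, and $e_h\in \Ps_K(Y)$ since its codomain is an $\mathscr{M}$--subalgebra of $\mathbf{A}$. The reverse inclusion uses (B$_f$3) and Lemma \ref{lemmafs}: a finite $\mathbf{A}\models \Ps_K$ is an $\mathscr{E}$--quotient of $\mathbf{TX_A}$ via $s_A$, which must factor through some $f\in \Ps_K(X_A)$; Lemma \ref{lemmafs} then forces the second factor into $\mathscr{E}$, so $\mathbf{A}$ is an $\mathscr{E}$--quotient of a member of $K$ and hence in $K$. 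Step (c) follows by the same factorization and projectivity techniques, using that the forgetful functor $U$ preserves products, epis, and monos to keep $\md_f(\Ps)$ within $\Alg_f(\mathsf{T})$.

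Finally, for uniqueness (d), any $f\in \Ps(X)$ with codomain $\mathbf{A}$ satisfies $\mathbf{A}\in \md_f(\Ps)=K$ (lift any $h\colon \mathbf{TY}\to \mathbf{A}$ along $f$ using projectivity (B$_f$2) and apply property (iii)), so $\Ps(X)\subseteq \Ps_K(X)$; conversely, any $f\in \Ps_K(X)$ has codomain $\mathbf{A}\in K=\md_f(\Ps)$, so $f$ factors as $m\circ g$ through some $g\in \Ps(X)$, and Lemma \ref{lemmafs} forces $m\in \mathscr{E}$, whence property (ii) gives $f=m\circ g\in \Ps(X)$. The main technical obstacle I anticipate is precisely (d): property (ii) of Definition \ref{defpet} is indispensable to rule out strictly smaller directed families defining the same pseudovariety, and the argument has to thread projectivity (B$_f$2), the diagonal fill--in for $\mathscr{E}/\mathscr{M}$, and the substitution axiom (iii) together in order to reconstruct the full theory from its class of finite models.
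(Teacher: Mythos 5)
Your proposal is correct and follows essentially the same route as the paper's proof: the same construction $\Ps_K$, the same use of finite products plus $\mathscr{E}/\mathscr{M}$--factorization for directedness, Lemma \ref{lemmafs} to force the comparison maps into $\mathscr{E}$, projectivity (B$_f$2) together with the substitution axiom to show that codomains of theory members are models, and the downward--closure axiom (ii) to obtain uniqueness. The only points left implicit (and worth a line each in a full write--up) are that the second factor in ``$f$ factors through $g\in\Ps(X)$'' is itself a $\mathsf{T}$--algebra morphism, which follows from (B$_f$1) and (B$_f$4) via Lemma \ref{lemmafs1}, and that $\Ps_K(X)$ is nonempty.
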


\noindent Now we derive Birkhoff's theorem for pseudovarieties of (ordered) algebras for a given type, then show an 
example of a particular pseudovariety of algebras with its defining pseudoequational $\mathsf{T}$--theory and finish this subsection by deriving Eilenberg's theorem 
\cite[Theorem 34]{eilenberg} to show a one--to--one correspondence between pseudovarieties of monoids and pseudovarieties of languages.

\begin{example}
	Consider the case $\mathcal{D}=\set$, $\mathcal{D}_0=$ finite sets, $\mathscr{E}=$ surjections, $\mathscr{M}=$ injections and, for a given type of algebras 
	$\T$, let $\mathsf{T}_{\T}$ be the term monad for $\T$. Then, by Theorem \ref{reitthm}, a class of algebras of type 
	$\T$ is a pseudovariety if and only if it is defined by a pseudoequational $\mathsf{T}_{\T}$--theory.\qed
\end{example}	

\begin{example}
	Consider the case $\mathcal{D}=\poset$, $\mathcal{D}_0=$ finite discrete posets, $\mathscr{E}=$ surjections, $\mathscr{M}=$ embeddings and, 
	for a given type of algebras $\T$, let $\mathsf{T}_{\T}$ be the monad on $\poset$ defined in Example \ref{exineqth}. Then, by Theorem \ref{reitthm}, a 
	class of ordered algebras of type $\T$ is a pseudovariety if and only if it is defined by a pseudoequational $\mathsf{T}_{\T}$--theory.\qed
\end{example}	

\begin{example}
	Consider the case $\mathcal{D}=\set$, $\mathcal{D}_0=$ finite sets, $\mathsf{T}$ the monad given by $TX=X^*$, where $X^*$ is the free 
	monoid on $X$, $\mathscr{E}=$ surjections, and $\mathscr{M}=$ injections. We have that conditions (B$_f$1) to (B$_f$4) are fullfilled. In this case, we have that 
	$\Alg(\mathsf{T})$ is the category of monoids. To describe the pseudovariety of all commutative monoids satisfying some identity $x^n=x^{n+1}$, $n=1,2,\ldots$, 
	we define $\Ps$ on $\mathcal{D}_0$ as follows:
	\begin{enumerate}[-]
		\item For every $X\in \mathcal{D}_0$, and $n=1,2\ldots$, we define the surjective homomorphism of monoids $e_n:X^*\lepi \mathfrak{F}_n(X)$, where 
			$\mathfrak{F}_n(X)$ is the free commutative monoid on $X$ that satisfies the identity $x^n=x^{n+1}$. That is, $\mathfrak{F}_n(X)=(\set(X,\mathbb{N}),\cdot, 0)$ 
			where $0\in \set(X,\mathbb{N})$ is the zero function, i.e., $0(x)=0$ for every $x\in X$, and $\cdot$ is defined on $\set(X,\mathbb{N})$ as 
			$(f\cdot g)(x)=\min\{n,f(x)+g(x)\}$. $e_n$ is defined on the set of generators $X$ as $e_n(x)=\chi_x$, where $\chi_x(x)=1$ and $\chi_x(y)=0$ for $x\neq y$. 
			Define $\Ps(X)$ as:
			\[
				\Ps(X)=\{e'\circ e_n\mid n\in \mathbb{N}^+\text{ and $e'$ is a $\mathsf{T}$--algebra morphism in $\mathscr{E}$ with domain $\mathfrak{F}_n(X)$}\}
			\]
	\end{enumerate}
	We have then that $\Ps$ is a pseudoequational $\mathsf{T}$--theory and $\md_f(\Ps)$ is the pseudovariety of all finite commutative monoids that satisfy some identity 
	$x^n=x^{n+1}$, $n=1,2,\ldots$.\qed
\end{example}

\noindent In the next example we derive Eilenberg's variety theorem \cite[Theorem 3.4.]{eilenberg}. Given a finite set $\Sigma$, i.e., an {\it alphabet}, a {\it language} $L$ on $\Sigma$ 
is a subset $L$ of $\Sigma^*$, i.e., a collection of words with letters in $\Sigma$. We identify a language $L$ on $\Sigma$ by its characteristic function $L:\Sigma^*\to 2$. A language 
$L$ on $\Sigma$ is {\it recognizable} if there exists a finite monoid $\mathbf{A}$, a homomorphism of monoids $h:\Sigma^*\to A$ and a function $L':A\to 2$ such that $L'\circ h=L$. We denote 
by $\rec(\Sigma)$ the Boolean algebra of all recognizable languages on $\Sigma$. A {\it pseudovariety of languages} is an operator $\Lan$ such that for every finite set $\Sigma$ we have:
\begin{enumerate}[i)]
	\item $\Lan(\Sigma)$ is a subalgebra of the Boolean algebra $\rec(\Sigma)$,
	\item $\Lan(\Sigma)$ is closed under left and right derivatives. That is, $_aL,L_a\in \Lan(\Sigma)$ for every $L\in \Lan(\Sigma)$ and $a\in \Sigma$, and
	\item $\Lan$ is closed under morphic preimages. That is, for every alphabet $\Gamma$, homomorphism of monoids $h:\Gamma^*\to \Sigma^*$ and $L\in \Lan(\Sigma)$, 
		we have that $L\circ h\in \Lan(\Gamma)$.
\end{enumerate}
Eilenberg's variety theorem \cite[Theorem 34]{eilenberg} says that there is a one--to--one correspondence between pseudovarieties of monoids and 
pseudovarieties of languages. This theorem is derived from Theorem \ref{reitthm} as follows.

\begin{example}[Eilenberg's variety theorem]\label{exeil}
	Consider the setting as in the previous example, i.e., $\mathcal{D}=\set$, $\mathcal{D}_0=$ finite sets, $\mathsf{T}$ the monad given by $TX=X^*$, 
	where $X^*$ is the free monoid on $X$, $\mathscr{E}=$ surjections, and $\mathscr{M}=$ injections. Then, we have a one--to--one correspondence between pseudovarieties of 
	monoids, i.e., pseudovarieties of $\mathsf{T}$--algebras, and pseudoequational $\mathsf{T}$--theories 
	on $\mathcal{D}_0$. Now, we have that pseudoequational $\mathsf{T}$--theories on $\mathcal{D}_0$ are in one--to--one correspondence with pseudovarieties of languages. 
	In fact, every pseudoequational $\mathsf{T}$--theory $\Ps$ on $\mathcal{D}_0$ defines the pseudovariety of languages $\Lan_{\Ps}$ defined as 
	$\Lan_{\Ps}(X):=\bigcup_{e\in \Ps(X)}\Imag(\set(e,2))$, and every pseudovariety of languages $\Lan$ defines the pseudoequational $\mathsf{T}$--theory $\Ps_{\Lan}$ on 
	$\mathcal{D}_0$ such that $\Ps_{\Lan}(X)$ is the collection of all $\mathsf{T}$--algebra morphisms $e\in \mathscr{E}$ with domain $\mathbf{TX}$ and finite codomain such that 
	$\Imag(\set(e,2))\subseteq \Lan(X)$, $X\in \mathcal{D}_0$. Furthermore, this correspondence is bijective, that is, for every pseudoequational $\mathsf{T}$--theory $\Ps$ on 
	$\mathcal{D}_0$ and every pseudovariety of languages $\Lan$ we have that $\Ps=\Ps_{\Lan_{\Ps}}$ and $\Lan=\Lan_{\Ps_{\Lan}}$ (see Example \ref{exuaf} for more details). \qed
\end{example}

\subsection{Abstract Eilenberg--type correspondence for pseudovarieties of $\mathsf{T}$--algebras}\label{secreiteil}

As we saw in Example \ref{exeil}, we can derive Eilenberg--type correspondences for pseudovarieties of $\mathsf{T}$--algebras from Birkhoff's theorem for finite $\mathsf{T}$--algebras,  
Theorem \ref{reitthm}. Eilenberg--type correspondences for pseudovariaties of $\mathsf{T}$--algebras are exactly one--to--one correspondences between pseudovarieties of 
$\mathsf{T}$--algebras and duals of pseudoequational $\mathsf{T}$--theories. By dualizing the definition of a pseudoequational $\mathsf{T}$--theory we get the following. 

\begin{defi}\label{defpcoet}
	Let $\mathcal{C}$ be a concrete category such that its forgetful functor preserves monos, $\mathsf{B}=(\mathsf{B},\epsilon,\delta)$ a comonad on $\mathcal{C}$, 
	$\mathscr{E}/\mathscr{M}$ a factorization system on $\mathcal{C}$ and $\mathcal{C}_0$ a full subcategory of $\mathcal{C}$. Assume (B$_f$1) and that $B$ preserves 
	the morphisms in $\mathscr{M}$. A {\it pseudocoequational $\mathsf{B}$--theory on $\mathcal{C}_0$} is an operator $\Psc$ on $\mathcal{C}_0$ such that for every 
	$X\in \mathcal{C}_0$, $\Psc(X)$ is a nonempty collection of $\mathsf{B}$--coalgebra morphisms in $\mathscr{M}$ with codomain $\mathbf{BX}$ and finite domain and:
	\begin{enumerate}[i)]
		\item For every finite set $I$ and $f_i\in \Psc(X)$, $i\in I$, there exists $f\in \Psc(X)$ such that every $f_i$ factors through $f$, $i\in I$.
		\item For every $m\in \Psc(X)$ with domain $\mathbf{A}$ and every $\mathsf{B}$--coalgebra morphism $m'\in \mathscr{M}$ with codomain $\mathbf{A}$ 
			we have that $m\circ m'\in \Psc(X)$.
		\item For every $Y\in \mathcal{C}_0$, $f\in \Psc(X)$ and $h\in \Coalg(\mathsf{B})(\mathbf{BX},\mathbf{BY})$ we have that $m_{h\circ f}\in \Psc(Y)$ where 
			$h\circ f=m_{h\circ f}\circ e_{h\circ f}$ is the factorization of $h\circ f$.
	\end{enumerate}
\end{defi}

\noindent With the previous definition, Theorem \ref{reitthm} and duality, we have the following:

\begin{prop}[Abstract Eilenberg--type correspondence for pseudovarieties of $\mathsf{T}$--al\-gebras]\label{eilpvar}
	Let $\mathcal{D}$ be a complete concrete category such that its forgetful functor preserves epis, monos and products, $\mathsf{T}$ a monad on $\mathcal{D}$, $\mathcal{D}_0$ 
	a full subcategory of $\mathcal{D}_0$ and $\mathscr{E}/\mathscr{M}$ a factorization system on $\mathcal{D}$. Assume (B$_f$1) to (B$_f$4). Let $\mathcal{C}$ be a category 
	that is dual to $\mathcal{D}$, let $\mathcal{C}_0$ be dual of $\mathcal{D}_0$ and $\mathsf{B}$ be the comonad on $\mathcal{C}$ that is dual to the monad $\mathsf{T}$ on 
	$\mathcal{D}$ which is defined as in Proposition \ref{monadtocomonad}. Then there is a one--to--one correspondence between pseudovarieties of $\mathsf{T}$--algebras 
	and pseudocoequational $\mathsf{B}$--theories on $\mathcal{C}_0$.
\end{prop}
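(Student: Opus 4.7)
The plan is to chain together two one-to-one correspondences: the Birkhoff correspondence of Theorem~\ref{reitthm} and a correspondence between pseudoequational $\mathsf{T}$--theories and pseudocoequational $\mathsf{B}$--theories induced by duality. Under the stated hypotheses, Theorem~\ref{reitthm} already provides a bijection between pseudovarieties of $\mathsf{T}$--algebras and pseudoequational $\mathsf{T}$--theories on $\mathcal{D}_0$, so the entire content of the proposition reduces to establishing a bijection
\[
    \{\,\text{pseudoequational $\mathsf{T}$--theories on $\mathcal{D}_0$}\,\}\;\longleftrightarrow\;\{\,\text{pseudocoequational $\mathsf{B}$--theories on $\mathcal{C}_0$}\,\}.
\]

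To build this bijection I would use the lifted duality $\widehat{F} : \Coalg(\mathsf{B})\to \Alg(\mathsf{T})$ and $\widehat{G}:\Alg(\mathsf{T})\to \Coalg(\mathsf{B})$ provided by Proposition~\ref{monadtocomonad}. Since the duality between $\mathcal{D}$ and $\mathcal{C}$ restricts to one between $\mathcal{D}_0$ and $\mathcal{C}_0$, each object $X\in\mathcal{D}_0$ has a dual $X'\in\mathcal{C}_0$, and the free algebra $\mathbf{TX}$ corresponds under $\widehat{G}$ to the cofree coalgebra $\mathbf{B}X'$. Given a pseudoequational $\mathsf{T}$--theory $\Ps$, define
\[
    \Psc(X') := \{\,\widehat{G}(e) \mid e\in \Ps(X)\,\},
\]
and conversely, given a pseudocoequational $\mathsf{B}$--theory $\Psc$, define $\Ps(X):=\{\,\widehat{F}(m)\mid m\in \Psc(X')\,\}$. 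Since $\widehat{F}$ and $\widehat{G}$ are quasi-inverses, these assignments are mutually inverse once I check they land in the correct class.

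The main verification is then to match the three defining clauses. Because a proper factorisation system on $\mathcal{D}$ transports under the duality to a proper factorisation system on $\mathcal{C}$ that swaps $\mathscr{E}$ and $\mathscr{M}$, the dualised morphisms in $\Psc(X')$ are indeed $\mathscr{M}$--coalgebra morphisms into $\mathbf{B}X'$ with finite domain (finiteness being preserved because the forgetful functors on both sides preserve the relevant structure). Clause (i) of Definition~\ref{defpet} (finite families of quotients admit a common refinement) dualises to clause (i) of Definition~\ref{defpcoet} because ``factors through'' is self-dual with source and target of the mediating arrow swapped. Clause (ii) (postcomposition of an $e\in\Ps(X)$ with any further $\mathscr{E}$--quotient stays in $\Ps(X)$) dualises to clause (ii) (precomposition of an $m\in\Psc(X')$ with any further $\mathscr{M}$--inclusion stays in $\Psc(X')$). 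Clause (iii), the substitution/naturality clause, dualises cleanly because any $h\in\Alg(\mathsf{T})(\mathbf{TY},\mathbf{TX})$ corresponds to $\widehat{G}(h)\in\Coalg(\mathsf{B})(\mathbf{B}X',\mathbf{B}Y')$, and the $\mathscr{E}/\mathscr{M}$ factorisation of $f\circ h$ in $\Alg(\mathsf{T})$ corresponds under $\widehat{G}$ to the $\mathscr{M}/\mathscr{E}$ factorisation of $\widehat{G}(h)\circ \widehat{G}(f)$ in $\Coalg(\mathsf{B})$.

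The main obstacle will be bookkeeping rather than conceptual depth: I need to carefully transport the factorisation system through the duality and confirm that the $\mathscr{E}$-part on one side indexes exactly the $\mathscr{M}$-part on the other, both for the families themselves and for the factorisations appearing inside clause (iii). Once this dictionary is set up, the equivalence of the three clauses is a diagrammatic, essentially mechanical check, and combining it with Theorem~\ref{reitthm} delivers the claimed one-to-one correspondence.
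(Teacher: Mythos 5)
Your proposal is correct and follows exactly the route the paper intends: the paper derives Proposition~\ref{eilpvar} by composing the bijection of Theorem~\ref{reitthm} with the duality of Proposition~\ref{monadtocomonad}, under which Definition~\ref{defpcoet} is by construction the clause-by-clause dual of Definition~\ref{defpet}. Your explicit verification that the three clauses and the $\mathscr{E}/\mathscr{M}$ factorisations transport correctly is precisely the (largely mechanical) content the paper leaves implicit.
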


\noindent We will consider the same settings given in the examples in subsection \ref{seceilv} to obtain the following Eilenberg--type correspondences for pseudovarieties of 
$\mathsf{T}$--algebras. In all of them, we only need to change the category $\mathcal{D}_0$ and condition i) for the operators $\Lan$ (see Appendix for their details).

\begin{example}[cf. Example \ref{exua}]\label{exuaf}
	In this example we obtain an Eilenberg--type correspondence for any pseudovariety of algebras of any given type $\T$, where each of the function symbols in $\T$ has finite arity. 
	Let $\T$ be a type of algebras where each function symbol $g\in \T$ has arity $n_{g}\in \mathbb{N}$ and let $K$ be a variety of algebras for of type $\T$. Consider the case 
	$\mathcal{D}=\set$, $\mathcal{D}_0=\set_f$, $\mathscr{E}=$ surjections, $\mathscr{M}=$ injections and let $\mathsf{T}_K$ be the monad such that for every $X\in \set$, 
	$T_KX$ is the underlying set of the free algebra in $K$ on $X$ generators (see \cite[Definition II.10.9]{bys} and \cite[VI.8]{maclane}). 
	We have that $\CABA$ is dual to $\set$, so we can consider $\mathcal{C}=\CABA$ and $\mathcal{C}_0=\CABA_f$. In this case, we get a one--to--one correspondence 
	between pseudovarieties of algebras in $K$ and operators $\Lan$ on $\set_f$ such that for every $X\in \set_f$:
	\begin{enumerate}[i)]
		\item $\Lan(X)$ is a Boolean algebra and it is a subalgebra of the complete atomic Boolean algebra $\set(T_KX,2)$ of subsets of $T_KX$ such that 
			for every $L\in \Lan(X)$ there exists a finite algebra $\mathbf{A}$ in $K$, a morphism $h\in \Alg(\mathsf{T}_K)(\mathbf{T_KX},\mathbf{A})$ and $L'\in \set(A,2)$ 
			such that $L=L'\circ h$.
		\item $\Lan(X)$ is closed under derivatives with respect to the type $\T$. That is, for every $g \in \T$ of arity $n_g$, every $1\leq i\leq n_g$, 
			every $t_j\in T_KX$, $1\leq j< n_g$, and every $L\in \Lan(X)$ we have that $L^{(i)}_{(g,t_1,\ldots ,t_{n_g-1})}\in \Lan(X)$ where 
			$L^{(i)}_{(g,t_1,\ldots ,t_{n_g-1})}\in \set(T_KX,2)$ 
			is defined as 
			\[
				L^{(i)}_{(g,t_1,\ldots ,t_{n_g-1})}(t)=L(g(t_1,\ldots,t_{i-1},t,t_{i},\ldots,t_{n_g-1}))
			\]
			$t\in T_KX$.
		\item $\Lan$ is closed under morphic preimages. That is, for every $Y\in \set_f$, homomorphism of $\mathsf{T}_K$--algebras $h:T_KY\to T_KX$ and $L\in \Lan(X)$, we have 
			that $L\circ h\in \Lan(Y)$.
	\end{enumerate}
	
	\noindent In fact, let $\Ps$ be a pseudoequational $\mathsf{T}_K$--theory on $\set_f$ and let $\Lan$ be an operator on $\set_f$ satisfying the three properties above. Then:
	\begin{enumerate}[a)]
		\item Define the operator $\Lan_{\Ps}$ on $\set_f$ as $\Lan_{\Ps}(X):=\bigcup_{e\in \Ps(X)}\Imag(\set(e,2))$. 
			We claim that $\Lan_{\Ps}$ satisfies the three properties above. In fact, as the family $\Ps(X)$ is directed in the sense of Definition \ref{defpet} i), then the union 
			$\bigcup_{e\in \Ps(X)}\Imag(\set(e,2))\subseteq \set(T_KX,2)$ is a directed union of finite objects in $\CABA$ which is a Boolean subalgebra of $\set(T_KX,2)$. 
			As each $e\in \Ps(X)$ has as codomain a finite algebra in $K$ then $\Imag(\set(e,2))$ is a subset of $\set(T_KX,2)$ which is closed under derivatives 
			with respect to the type $\T$ (see Example \ref{exua}). The previous argument shows that $\Lan_{\Ps}$ satisfies properties i) and ii) above. 
			Now, closure under morphic preimages follows from property iii) in Definition \ref{defpet}. Therefore, $\Lan_{\Ps}$ satisfies the three properties above.
		\item Define the operator $\Ps_{\Lan}$ on $\set_f$ such that $\Ps_{\Lan}(X)$ is the collection of all $\mathsf{T}_K$--algebra morphisms 
			$e\in \mathscr{E}$ with domain $\mathbf{T_KX}$ and finite codomain such that $\Imag(\set(e,2))\subseteq \Lan(X)$. We claim that $\Ps_{\Lan}$ is a pseudoequational 
			$\mathsf{T}_K$--theory. In fact, we have that $\Ps_{\Lan}(X)$ is nonempty since 
			$e:\mathbf{T_KX}\lepi \mathbf{1}\in \Ps_{\Lan}(X)$, where $\mathbf{1}$ is the one--element $\mathsf{T}_K$--algebra. 
			By definition, we have that $\Ps_{\Lan}(X)$ satisfies property ii) in Definition \ref{defpet}, and, it also satisfies property iii) in Definition \ref{defpet} since $\Lan$ is closed under 
			morphic preimages. Now, consider a family $\{T_KX\overset{e_i}{\lepi} A_i\}_{i\in I}$ in $\Ps_{\Lan}(X)$ with $I$ finite such that $\Imag(\set(e_i,2))\subseteq \Lan(X)$, we need 
			to find a morphism $e\in \Ps_{\Lan}(X)$ such that every $e_i$ factors through $e$. In fact, let $\mathbf{A}$ be the product of $\prod_{i\in I}\mathbf{A_i}$ with projections 
			$\pi_i:A\to A_i$, then, by the universal property of $\mathbf{A}$ there exists a $\mathsf{T}_K$--algebra morphism $f:T_KX\to A$ such that $\pi_i\circ f=e_i$, for every $i\in I$. 
			Let $f=m_f\circ e_f$ be the factorization of $f$ in $\Alg(\mathsf{T}_K)$. We claim that $e=e_f$ is a morphism in $\Ps_{\Lan}(X)$ such that 
			every $e_i$ factors through $e$. Clearly, from the construction above, each $e_i$ factors through $e=e_f$. Now, let's prove that $\Imag(\set(e,2))\subseteq \Lan(X)$. In fact, 
			let $\mathbf{S}$ be the codomain of $e=e_f$ and let $g\in \set(S,2)$. We have to prove that $g\circ e\in \Lan(X)$ which follows from the following straightforward identity:
			\[
				g\circ e=\bigcup_{s\in g}\left(\bigcap_{i\in I} h_{i,s}\circ e_i\right)
			\]
			where $h_{i,s}\in \set(A_i,2)$ is the set $\{\pi_i(m_f(s))\}$ (i.e., we express the subset $g$ of $S$ as the union of its points and each point $s\in S$ is represented as 
			$\bigcap_{i\in I} h_{i,s}\circ \pi_i\circ m_f$). Now, for every $s\in S$ and $i\in I$ the composition $h_{i,s}\circ e_i$ belongs to $\Lan(X)$ since 
			$h_{i,s}\circ e_i\in \Imag(\set(e_i,2))\subseteq \Lan(X)$. As $S$ and $I$ are finite then $g\circ e\in \Lan(X)$ because $\Lan(X)$ is a Boolean algebra.
		\item We have that $\Ps=\Ps_{\Lan_{\Ps}}$. In fact, for every $X\in \set_f$ the inclusion $\Ps(X)\subseteq \Ps_{\Lan_{\Ps}}(X)$ is obvious. Now, to prove that 
			$\Ps_{\Lan_{\Ps}}(X)\subseteq \Ps(X)$, let $e'\in \Alg(\mathsf{T})(\mathbf{T_KX},\mathbf{A})\cap\mathscr{E}$ with finite codomain such that $e'\in \Ps_{\Lan_{\Ps}}(X)$, i.e., 
			$\Imag(\set(e',2))\subseteq \bigcup_{e\in \Ps(X)}\Imag(\set(e,2))$. Then the previous inclusion means that for every $f\in \set(A,2)$ there exists $e_f\in \Ps(X)$ and $g_f$ such 
			that $f\circ e'=g_f\circ e_f$. As $\{e_f\mid f\in \set(A,2)\}$ is finite, then there exists $e\in \Ps(X)$ such that each $e_f$ factors through $e$. We will prove that $e'$ 
			factors through $e\in \Ps(X)$ which will imply that $e'\in \Ps(X)$, since $\Ps$ is a pseudoequational $\mathsf{T}_K$--theory. It is enough to show that 
			$\ker(e)\subseteq \ker (e')$. 
			In fact, assume that $(u,v)\in \ker(e)$ and define $f'\in\set(A,2)$ as $f'(x)=1$ iff $x=e'(u)$. Then, as $e_{f'}$ factors through $e$ we have that $\ker(e)\subseteq \ker(e_{f'})$ 
			which implies $(u,v)\in \ker(e_{f'})$. But $\ker(e_{f'})\subseteq \ker(g_{f'}\circ e_{f'})=\ker(f'\circ e')$, which implies that $(u,v)\in \ker(f'\circ e')$, i.e., 
			$1=f'(e'(u))=f'(e'(v))$, but the later equality means that $e'(u)=e'(v)$ by definition of $f'$, i.e., $(u,v)\in \ker(e')$ as desired.
		\item We have that $\Lan=\Lan_{\Ps_{\Lan}}$. In fact, for every $X\in \set_f$ the inclusion $\Lan_{\Ps_{\Lan}}(X)\subseteq \Lan(X)$ is obvious. Now, 
			to prove $\Lan(X)\subseteq \Lan_{\Ps_{\Lan}}(X)$ we need to find for every $L\in \Lan(X)$ a surjective homomorphism $e:T_KX\to A$ with $\mathbf{A}\in K$ such that 
			$L\in \Imag(\set(e,2))\subseteq \Lan(X)$. In fact, for $L\in \Lan(X)$ let $e':T_KX\to B$ be a  homomorphism with $\mathbf{B}\in K$ and $g\in \set(B,2)$ such that 
			$L=g\circ e'$, this can be done by property i) above. Let $\langle\mkern-3mu\langle L\rangle\mkern-3mu\rangle$ be the subset of $\set(T_KX,2)$ 
			obtained from $\{L\}$ which is closed under Boolean combinations and derivatives with respect to the type $\T$. We show that 
			$\langle\mkern-3mu\langle L\rangle\mkern-3mu\rangle \in \Coalg_f(\mathsf{B})$, that is, we show that $\langle\mkern-3mu\langle L\rangle\mkern-3mu\rangle$ is a 
			finite object in $\CABA$ that is closed under derivatives with respect to the type $\T$. In fact, 
			$\Imag(\set(e',2))\in \Coalg_f(\mathsf{B})$ is such that $\langle\mkern-3mu\langle L\rangle\mkern-3mu\rangle\subseteq \Imag(\set(e',2))$, which implies that 
			$\langle\mkern-3mu\langle L\rangle\mkern-3mu\rangle$ is a finite Boolean algebra, i.e., an object in $\Coalg_f(\mathsf{B})$. By construction of 
			$\langle\mkern-3mu\langle L\rangle\mkern-3mu\rangle$ we have that $L\in \langle\mkern-3mu\langle L\rangle\mkern-3mu\rangle\subseteq \Lan(X)$ 
			since $\Lan$ satisfies properties i) and ii) above. Now, let $i\in \Coalg(\mathsf{B})(\langle\mkern-3mu\langle L\rangle\mkern-3mu\rangle,\set(T_KX,2))$ 
			be the inclusion morphism, then by duality we have that the dual morphism $e$ in $\Alg(\mathsf{T}_K)$ of $i$ is such that $L\in \Imag(\set(e,2))\subseteq \Lan(X)$ 
			(in fact, $\Imag(\set(e,2))=\langle\mkern-3mu\langle L\rangle\mkern-3mu\rangle$). 
			Note that the codomain of $e$ is in $K$ since it is an $\mathscr{E}$--quotient of $\mathbf{B}\in K$. 
	\end{enumerate}
	\begin{remark}
		Note that, for every ``language'' $L\in \set(T_KX,2)$, the object $\langle\mkern-3mu\langle L\rangle\mkern-3mu\rangle$ in d) above is the $\mathsf{B}$--subcoalgebra of 
		$\set(T_KX,2)$ generated by $L$ which implies, by duality, that its dual is the syntactic algebra $\mathbf{S}_L$ of $L$. Additionally, by using duality and the construction of 
		$\langle\mkern-3mu\langle L\rangle\mkern-3mu\rangle$, we have that every ``language'' in $\Imag(\set(e,2))$ (i.e., recognized by the syntactic algebra of $L$) is a Boolean 
		combination of derivatives of $L$, where $e$ is the dual of the inclusion $i\in \Coalg(\mathsf{B})(\langle\mkern-3mu\langle L\rangle\mkern-3mu\rangle,\set(T_KX,2))$.\qed
	\end{remark}
\end{example}

\begin{example}\label{expvv1}
	From the previous example, we get the following Eilenberg--type correspondences:

	\begin{enumerate}[(1)]
		\item \cite[Theorem 34]{eilenberg} A one--to--one correspondence between pseudovarieties of monoids and operators $\Lan$ on $\set_f$ such that for every $X\in \set_f$:
			\begin{enumerate}[i)]
				\item $\Lan(X)$ is a Boolean subalgebra of $\set(X^*,2)$ such that for every $L\in \Lan(X)$ there exists a finite monoid $\mathbf{M}$, a homomorphism 
					$h\in \Alg(\mathsf{T})(\mathbf{TX},\mathbf{M})$ and $L'\in \set(M,2)$ such that $L'\circ h=L$, i.e., $L$ is a recognizable language on $X$.
				\item $\Lan(X)$ is closed under left and right derivatives. That is, if $L\in \Lan(X)$ and $x\in X$ then $_xL,L_x\in \Lan(X)$, where $_xL(w)=L(wx)$ and 
					$L_x(w)=L(xw)$, $w\in X^*$.
				\item $\Lan$ is closed under morphic preimages. That is, for every $Y\in \set$, homomorphism of monoids $h:Y^*\to X^*$ and $L\in \Lan(X)$, we have 
					that $L\circ h\in \Lan(Y)$.
			\end{enumerate}
		\item \cite[Theorem 34s]{eilenberg} A one--to--one correspondence between pseudovarieties of semigroups and operators $\Lan$ on $\set_f$ such that for every $X\in \set_f$:
			\begin{enumerate}[i)]
				\item $\Lan(X)$ is a Boolean subalgebra of $\set(X^+,2)$ such that for every $L\in \Lan(X)$ there exists a finite semigroup $\mathbf{S}$, a homomorphism 
					$h\in \Alg(\mathsf{T})(\mathbf{TX},\mathbf{S})$ and $L'\in \set(S,2)$ such that $L'\circ h=L$, i.e., $L$ is a recognizable language on $X$ not containing 
					the empty word.
				\item $\Lan(X)$ is closed under left and right derivatives. That is, if $L\in \Lan(X)$ and $x\in X$ then $_xL,L_x\in \Lan(X)$, where $_xL(w)=L(wx)$ and 
					$L_x(w)=L(xw)$, $w\in X^+$.
				\item $\Lan$ is closed under morphic preimages. That is, for every $Y\in \set$, homomorphism of semigroups $h:Y^+\to X^+$ and $L\in \Lan(X)$, we have 
					that $L\circ h\in \Lan(Y)$.
			\end{enumerate}
		\item A one--to--one correspondence between pseudovarieties of groups and operators $\Lan$ on $\set_f$ such that for every $X\in \set_f$:
			\begin{enumerate}[i)]
				\item $\Lan(X)$ is a Boolean subalgebra of $\set(\mathfrak{F}_G(X),2)$ such that for every $L\in \Lan(X)$ there exists a finite group $\mathbf{G}$, a 
					homomorphism $h\in \Alg(\mathsf{T})(\mathbf{TX},\mathbf{G})$ and $L'\in \set(G,2)$ such that $L'\circ h=L$.
				\item $\Lan(X)$ is closed under left and right derivatives and inverses. That is, if $L\in \Lan(X)$ and $x\in X$ then $_xL,L_x,L^{-1}\in \Lan(X)$, where 
					$_xL(w)=L(wx)$, $L_x(w)=L(xw)$ and $L^{-1}(w)=L(w^{-1})$, $w\in \mathfrak{F}_G(X)$.
				\item $\Lan$ is closed under morphic preimages. That is, for every $Y\in \set$, homomorphism of groups $h:\mathfrak{F}_G(Y)\to \mathfrak{F}_G(X)$ and 
					$L\in \Lan(X)$, we have that $L\circ h\in \Lan(Y)$.
			\end{enumerate}
		\item For a fixed monoid $\mathbf{M}=(M,e,\cdot)$, a one--to--one correspondence between pseudovarieties of $\mathbf{M}$--actions, i.e., dynamical systems on 
			$\mathbf{M}$, and operators $\Lan$ on $\set_f$ such that for every $X\in \set_f$:
			\begin{enumerate}[i)]
				\item $\Lan(X)$ is a Boolean subalgebra of $\set(M\times X,2)$ such that for every $L\in \Lan(X)$ there exists a finite $\mathbf{M}$--action $\mathbf{S}$, 
					a homomorphism $h\in \Alg(\mathsf{T})(\mathbf{TX},\mathbf{S})$ and $L'\in \set(S,2)$ such that $L'\circ h=L$.
				\item $\Lan(X)$ is closed under translations. That is, if $L\in \Lan(X)$ and $m\in M$ then $mL\in \Lan(X)$, where 
					$mL(n,x)=L(m\cdot n, x)$, $(n,x)\in M\times X$.
				\item $\Lan$ is closed under morphic preimages. That is, for every $Y\in \set$, homomorphism of $\mathbf{M}$--actions 
					$h:M\times Y\to M\times X$ (i.e., $h(m\cdot(n,y))=m\cdot h(n,y)$) and $L\in \Lan(X)$, we have that $L\circ h\in \Lan(Y)$.
			\end{enumerate}
		\item (cf. \cite{wilke}) A one--to--one correspondence between pseudovarieties of semigroups with infinite exponentiation and 
			operators $\Lan$ on $\set_f$ such that for every $X\in \set_f$:
			\begin{enumerate}[i)]
				\item $\Lan(X)$ is a Boolean subalgebra of $\set(X^+\cup X^{(\infty)},2)$ such that for every $L\in \Lan(X)$ there exists a finite semigroup with infinite exponentiation 
					$\mathbf{S}$, a homomorphism $h\in \Alg(\mathsf{T})(\mathbf{TX},\mathbf{S})$ and $L'\in \set(S,2)$ such that $L'\circ h=L$.
				\item $\Lan(X)$ is closed under left and right derivatives and infinite exponentiation. That is, if $L\in \Lan(X)$ and $u\in X^+\cup X^{(\infty)}$ then 
					$_uL,L_u,L^\omega \in \Lan(X)$, where $_uL(w)=L(wu)$, $L_u(w)=L(uw)$ and $L^\omega(w)=L(w^\omega)$, $w\in X^+\cup X^{(\omega)}$.
				\item $\Lan$ is closed under morphic preimages. That is, for every $Y\in \set$, homomorphism of $\mathsf{T}$--algebras
					$h:Y^+\cup Y^{(\infty)}\to X^+\cup X^{(\infty)}$ and $L\in \Lan(X)$, we have that $L\circ h\in \Lan(Y)$.\qed
			\end{enumerate}
	\end{enumerate}
\end{example}

\noindent In the next example we obtain an Eilenberg--type correspondence for any variety of ordered algebras for a given type $\T$ such that each function symbol in 
$\T$ has a finite arity.

\begin{example}[cf. Example \ref{exuoa}]\label{exuoaf}
	Let $\T$ be a type of algebras where each function symbol $g\in \T$ has arity $n_{g}\in \mathbb{N}$ and let $K$ be a variety of ordered algebras of type $\T$. Consider the case 
	$\mathcal{D}=\poset$, $\mathcal{D}_0=$ finite discrete posets, $\mathscr{E}=$ surjections, $\mathscr{M}=$ embeddings and let $\mathsf{T}_K$ be the monad such that for every 
	$\mathbf{X}=(X,\leq)\in \poset$, $T_K\mathbf{X}:=(T_KX,\leq_{T_KX})$ is the underlying poset of the free ordered algebra in $K$ on $\mathbf{X}$ generators 
	(see \cite[Proposition 1]{bloomp}). We have that $\ACDL$ is dual to $\poset$, so we can consider $\mathcal{C}=\ACDL$, $\mathcal{C}_0=\CABA_f$. In this case, 
	we get a one--to--one correspondence between pseudovarieties of ordered algebras in $K$ and operators $\Lan$ on $\set_f$ such that for every $X\in \set_f$:
	\begin{enumerate}[i)]
		\item $\Lan(X)$ is a distributive sublattice of $\poset(T_KX,\mathbf{2}_c)\cong \set(T_KX,2)$ of subsets of $T_KX$ such that 
			for every $L\in \Lan(X)$ there exists a finite ordered algebra $\mathbf{A}$ in $K$, a morphism $h\in \Alg(\mathsf{T}_K)(\mathbf{T_KX},\mathbf{A})$ and 
			$L'\in \poset(A,\mathbf{2}_c)$ such that $L=L'\circ h$. Here $\mathbf{2}_c\in \poset$ is the two--element chain.
		\item $\Lan(X)$ is closed under derivatives with respect to the type $\T$. That is, for every $g \in \T$ of arity $n_g$, every $1\leq i\leq n_g$, 
			every $t_j\in T_KX$, $1\leq j< n_g$, and every $L\in \Lan(X)$ we have that $L^{(i)}_{(g,t_1,\ldots ,t_{n_g-1})}\in \Lan(X)$ where 
			$L^{(i)}_{(g,t_1,\ldots ,t_{n_g-1})}\in \set(T_KX,2)$ 
			is defined as 
			\[
				L^{(i)}_{(g,t_1,\ldots ,t_{n_g-1})}(t)=L(g(t_1,\ldots,t_{i-1},t,t_{i},\ldots,t_{n_g-1}))
			\]
			$t\in T_KX$. That is, for every function symbol $g\in \T$ we get $n_g$ kinds of derivatives.
		\item $\Lan$ is closed under morphic preimages. That is, for every $Y\in \set$, homomorphism of $\mathsf{T}_K$--algebras $h:T_KY\to T_KX$ and $L\in \Lan(X)$, we have 
			that $L\circ h\in \Lan(Y)$.\qed
	\end{enumerate}	
\end{example}

\begin{example}[\text{\cite[Theorem 5.8]{pin}} cf. Example \ref{exvv3}]\label{expvv3}
	From the previous example we can obtain Eilenberg--type correspondences for pseudovarieties of ordered semigroups, pseudovarieties of ordered monoids, pseudovarieties of 
	ordered groups, and so on. For instance, for the case of pseudovarieties of ordered monoids we can consider the type $\T=\{e,\cdot\}$ where $e$ is a nullary function symbol, 
	$\cdot$ is a binary function symbol and $K$ is the variety of ordered monoids. Then we get a one--to--one correspondence between pseudovarieties of ordered monoids and 
	operators $\Lan$ on $\set_f$ such that for every $X\in \set_f$:
	\begin{enumerate}[i)]
		\item $\Lan(X)$ is a distributive sublattice of the distributive lattice $\set(X^*,2)$ of subsets of $X^*$, i.e., every element in $\Lan (X)$ 
			is a language on $X$, such that every $L\in \Lan(X)$ is a regular language.
		\item $\Lan(X)$ is closed under left and right derivatives. That is, if $L\in \Lan(X)$ and $x\in X$ then $_xL,L_x\in \Lan(X)$, where $_xL(w)=L(wx)$ and 
			$L_x(w)=L(xw)$, $w\in X^*$.
		\item $\Lan$ is closed under morphic preimages. That is, for every $Y\in \set$, homomorphism of monoids $h:Y^*\to X^*$ and $L\in \Lan(X)$, we have 
			that $L\circ h\in \Lan(Y)$.\qed
	\end{enumerate}
\end{example}

\begin{example}[cf. \text{\cite[Th\'eor\`eme III.1.1.]{reut}} and Example \ref{exvv4}]\label{expvv4}
	Let $\mathbb{K}$ be a finite field. Consider the case $\mathcal{D}=\fvec$, $\mathcal{D}_0=$ finite $\mathbb{K}$--vector spaces, $\mathscr{E}=$ surjections and 	
	$\mathscr{M}=$ injections. We have that $\tvec$ is dual to $\fvec$, so we can consider $\mathcal{C}=\tvec$ and $\mathcal{C}_0=$ finite $\mathbb{K}$--vector spaces. 
	For every set $X$ denote by $\mathsf{V}(X)$ the $\mathbb{K}$--vector space with basis $X$. Consider the monad $T(\mathsf{V}(X))=\mathsf{V}(X^*)$, where $X^*$ is the 
	free monoid on $X$. Then we get a one--to--one correspondence between pseudovarieties of $\mathbb{K}$--algebras and operators $\Lan$ on $\set_f$ such that for every 
	$X\in \set_f$:
	\begin{enumerate}[i)]
		\item $\Lan(X)$ is a $\mathbb{K}$--vector space which is a subspace of $\fvec(\mathsf{V}(X^*),\mathbb{K})$ such that every element $S$ in $\Lan(X)$ 
			is a {\it recognizable series} on $X$, i.e., there exists a $\mathbb{K}$--algebra morphism $h: \mathbf{TX}\to \mathbf{A}$, with $\mathbf{A}$ finite, and 
			$S'\in \fvec(A,\mathbb{K})$ such that $S'\circ h=S$.
		\item $\Lan(X)$ is closed under left and right derivatives. That is, if $L\in \Lan(X)$ and $v\in \mathsf{V}(X^*)$ then $_vL,L_v\in \Lan(X)$, where $_vL(w)=L(wv)$ and 
			$L_v(w)=L(vw)$, $w\in \mathsf{V}(X^*)$.
		\item $\Lan$ is closed under morphic preimages. That is, for every $Y\in \set$, $\mathbb{K}$--linear map $h:\mathsf{V}(Y^*)\to \mathsf{V}(X^*)$ and 
			$L\in \Lan(X)$, we have that $L\circ h\in \Lan(Y)$.\qed
	\end{enumerate}
\end{example}

\begin{example}[\text{\cite[Theorem 5 (iii)]{polak}} cf. Example \ref{exvv5}]\label{expvv5}
	Consider the case $\mathcal{D}=\JSL$, $\mathcal{D}_0=$ finite free join semilattices, i.e., $\mathcal{D}_0=\{(\mathcal{P}(X),\cup)\mid X\in \set_f\}$, where $\mathcal{P}$ is the 
	powerset operator, $\mathscr{E}=$ surjections and $\mathscr{M}=$ injections. We have that $\TBJSL$ is dual to $\JSL$, so we can consider 
	$\mathcal{C}=\TBJSL$ and $\mathcal{C}_0=\{ \JSL((\mathcal{P}(X),\cup),2)\mid X\in \set_f\}$. Let $\mathsf{T}$ be the monad on $\JSL$ such that $T(S,\lor)$ is the 
	free idempotent semiring on $(S,\lor)\in \JSL$. Then we get a one--to--one correspondence between pseudovarieties of idempotent semirings and operators $\Lan$ on $\set_f$ 
	such that for every $X\in \set_f$:
	\begin{enumerate}[i)]
		\item $\Lan(X)$ is a join subsemilattice of $\set(X^*,2)$ such that every $L\in \Lan(X)$ is a regular language. In particular, $\Lan(X)$ is closed under union. 
		\item $\Lan(X)$ is closed under left and right derivatives. That is, if $L\in \Lan(X)$ and $x\in X$ then $_xL,L_x\in \Lan(X)$, where $_xL(w)=L(wx)$ and 
			$L_x(w)=L(xw)$, $w\in X^*$.
		\item $\Lan$ is closed under morphic preimages. That is, for every $Y\in \set$, semiring homomorphism $h:\mathcal{P}_f(Y^*)\to \mathcal{P}_f(X^*)$ and 
			$L\in \Lan(X)$, we have that $L^\sharp\circ h\circ \eta_{Y^*}\in \Lan(Y)$ (see Example \ref{exvv5}).\qed
	\end{enumerate}
\end{example}
\begin{remark}
	Note that Eilenberg--type correspondences for pseudovarieties of $\mathbb{K}$--algebras and idempotent semirings can also be obtained from Example \ref{exuaf}.
\end{remark}

\section{Local Eilenberg--type correspondences}\label{seclocal}

In this section, we provide abstract versions of local Eilenberg--type correspondences for local (pseudo)varieties of $\mathsf{T}$--algebras. Local Eilenberg--type correspondences 
have been studied in \cite{adamekl,gehrke}. The main idea of local Eilenberg--type correspondences is to work with a fixed alphabet, which in our notation reduces to consider 
the case in which the category $\mathcal{D}_0$ has only one object, say $X$. In order to do this, the kind of algebras considered in this local version are algebras that are generated 
by the object $X$ in the following sense. 

\begin{defi}
	Let $\mathcal{D}$ be a category, $\mathsf{T}$ a monad on $\mathcal{D}$, $\mathscr{E}/\mathscr{M}$ a factorization system on $\mathcal{D}$ and $X\in \mathcal{D}$. 
	An algebra $\mathbf{A}\in \Alg(\mathsf{T})$ is $X$--generated if $\Alg(\mathsf{T})(\mathbf{TX},\mathbf{A})\cap \mathscr{E}$ is nonempty.
\end{defi}

\noindent We have that $\mathscr{E}$--quotients of $X$--generated $\mathsf{T}$--algebras are $X$--generated, but this property does not hold in general for 
$\mathscr{M}$--subalgebras and products. Thus, we will restrict our attention to $X$--generated $\mathscr{M}$--subalgebras, i.e., $\mathscr{M}$--subalgebras that are 
$X$--generated, and subdirect products. The latter are defined as follows.

\begin{defi}
	Let $\mathcal{D}$ be a complete category, $\mathsf{T}$ a monad on $\mathcal{D}$, $\mathscr{E}/\mathscr{M}$ a proper factorization system on $\mathcal{D}$ 
	such that $T$ preserves the morphisms in $\mathscr{E}$. Let $X\in \mathcal{D}$ and  
	let $\mathbf{A}_i$ be an $X$--generated $\mathsf{T}$--algebra with $e_i\in \Alg(\mathsf{T})(\mathbf{TX},\mathbf{A}_i)\cap \mathscr{E}$, $i\in I$. We define the 
	{\it subdirect product} of the family $\{(\mathbf{A}_i,e_i)\}_{i\in I}$ as the $X$--generated $\mathscr{M}$--subalgebra $\mathbf{S}$ of $\prod_{i\in I}\mathbf{A}_i$ described 
	in the following commutative diagram:
	\begin{center}	
		\begin{tikzpicture}[>=stealth,shorten >=3pt,
				node distance=2.5cm,on grid,auto,initial text=,
				accepting/.style={thick,double distance=2.5pt}]
				\node (a) at (0,1.5) {$TX$};
				\node (b) at (0,0) {$\prod_{i\in I}A_i$};
				\node (c) at (2.5,0) {$A_j$};
				\node (d) at (-2.5,0) {$S$};
				\path[->>] (a) edge [] node [right] {$e_j$} (c)
							edge [] node [left] {$e_e\ $} (d);
				\path[->] (b) edge [] node [below] {$\pi_j$} (c)
						(a) edge [] node [right] {$e$} (b);
				\path[right hook->] (d) edge [] node [below] {$m_e$} (b);
		\end{tikzpicture}
	\end{center}
	where $e$ is obtained from the morphisms $e_j$, $j\in I$, and the universal property of the product $\prod_{i\in I}\mathbf{A}_i$ and 
	$e=m_e\circ e_e$ is the factorization of $e$. We say that the subdirect product $\mathbf{S}$ defined above is {\it finite} if $I$ is a finite set.
\end{defi}

\noindent To obtain local versions of Eilenberg--type correspondences, the concept of (pseudo)variety used is: classes of (finite) $X$--generated $\mathsf{T}$--algebras 
closed under $\mathscr{E}$--quotients, $X$--generated $\mathscr{M}$--subalgebras and (finite) subdirect products. We state the two corresponding local versions in the 
rest of this section. Proofs are made in a similar way by using local versions of Birkhoff's theorem for (finite) $\mathsf{T}$--algebras.

\subsection{Eilenberg--type correspondence for local varieties of $\mathsf{T}$--algebras}

In this subsection, we provide Eilenberg--type correspondences for local varieties of $\mathsf{T}$--algebras. For this purpose, as in Section \ref{secbirk}, we first 
provide a local version of Birkhoff's theorem.

\noindent We fix a complete category $\mathcal{D}$, a monad $\mathsf{T}=(T,\eta,\mu)$ on $\mathcal{D}$, $\mathscr{E}/\mathscr{M}$ a factorization system 
on $\mathcal{D}$ and $X\in \mathcal{D}$. We will use the following assumptions:

\begin{enumerate}
	\item[(b1)] The factorization system $\mathscr{E}/\mathscr{M}$ is proper.
	\item[(b2)] The free $\mathsf{T}$--algebra $\mathbf{TX}=(TX,\mu_X)$ is {\it projective with respect to $\mathscr{E}$ in $\Alg(\mathsf{T})$}.
	\item[(b3)] $T$ preserves morphisms in $\mathscr{E}$.
	\item[(b4)] There is, up to isomorphism, only a set of $\mathsf{T}$--algebra morphisms in $\mathscr{E}$ with domain $\mathbf{TX}$.
\end{enumerate}

\begin{defi}
	Let $\mathcal{D}$ be a complete category, $\mathsf{T}$ a monad on $\mathcal{D}$, and $\mathscr{E}/\mathscr{M}$ a factorization system on $\mathcal{D}$. 
	Assume (b1) and (b3). Let $X\in \mathcal{D}$. A class $K$ of $X$--generated $\mathsf{T}$--algebras is a {\it local variety of $X$--generated $\mathsf{T}$--algebras} 
	if it is closed under $\mathscr{E}$--quotients, $X$--generated $\mathscr{M}$--subalgebras and subdirect products. 
\end{defi}

\begin{defi}\label{defleqth}
	Let $\mathcal{D}$ be a category, $\mathsf{T}$ a monad on $\mathcal{D}$, $X\in \mathcal{D}$ and $\mathscr{E}/\mathscr{M}$ a 
	factorization system on $\mathcal{D}$. A {\it local equational $\mathsf{T}$--theory on $X$} is a $\mathsf{T}$--algebra morphism $TX\overset{e_X}{\lepi} Q_X$ in $\mathscr{E}$ 
	such that for any $g\in \Alg(\mathsf{T})(\mathbf{TX},\mathbf{TX})$ there exists $g'\in \Alg(\mathsf{T})(\mathbf{Q_X},\mathbf{Q_X})$ such that the 
	following diagram commutes:
	\begin{center}	
		\begin{tikzpicture}[>=stealth,shorten >=3pt,
				node distance=2.5cm,on grid,auto,initial text=,
				accepting/.style={thick,double distance=2.5pt}]
				\node (a) at (2,1.5) {$TX$};
				\node (b) at (2,0) {$Q_X$};
				\node (c) at (0,1.5) {$TX$};
				\node (d) at (0,0) {$Q_X$};
				\path[->>] (a) edge [] node [right] {$e_X$} (b)
						(c) edge [] node [left] {$e_X$} (d);
				\path[->] (c) edge [] node [above] {$\forall g$} (a);
				\path[->,dashed] (d) edge [] node [below] {$g'$} (b);
		\end{tikzpicture}
	\end{center}	
\end{defi}

\noindent Note that, in the setting of Example \ref{excleqth}, a local equational $\mathsf{T}$--theory $TX\overset{e_X}{\lepi} Q_X$ on $X$ is characterized, up to isomorphism, by 
its kernel $\ker (e_X)$. In this case, the property of $e_X$ being a local equational $\mathsf{T}$--theory is exactly the property of $\ker(e_X)$ being a fully invariant congruence of 
$\mathbf{TX}$ \cite[Definition II.14.1]{bys}. This generalizes the definition of an equational theory over $X$ in \cite[Definition II.14.9]{bys} to a categorical level.

\noindent Given a local equational $\mathsf{T}$--theory $TX\overset{e_X}{\lepi} Q_X$ on $X$ and an $X$--generated $\mathsf{T}$--algebra $\mathbf{A}$, we say that 
$\mathbf{A}$ {\it satisfies} $e_X$, denoted as $\mathbf{A}\models e_X$, if every morphism $f\in \Alg(\mathsf{T})(\mathbf{TX},\mathbf{A})$ factors through $e_X$. 
We denote by $\md (e_X)$ the {\it $X$--generated models of} $e_X$, that is:
\[
	\md(e_X)=\{\mathbf{A}\in \Alg(\mathsf{T})\mid \text{ $\mathbf{A}$ is $X$--generated and $\mathbf{A}\models e_X$}\}.
\]
A class $K$ of $X$--generated $\mathsf{T}$--algebras is {\it defined} by $e_X$ if $K=\md(e_X)$.

\begin{thm}[Local Birkhoff's theorem for $\mathsf{T}$--algebras]\label{lbirkthm}
	Let $\mathcal{D}$ be a complete category, $\mathsf{T}$ a monad on $\mathcal{D}$, $X\in \mathcal{D}$ and $\mathscr{E}/\mathscr{M}$ a factorization system on 
	$\mathcal{D}$. Assume (b1) to (b4). Then a class $K$ of $X$--generated $\mathsf{T}$--algebras is a local variety of $X$--generated $\mathsf{T}$--algebras 
	if and only if is defined by a local equational $\mathsf{T}$--theory on $X$. Additionally, local varieties of $X$--generated $\mathsf{T}$--algebras are in one--to--one 
	correspondence with local equational $\mathsf{T}$--theories on $X$.\qed
\end{thm}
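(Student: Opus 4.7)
The plan is to mirror the proof strategy of Theorem \ref{birkthm} but specialize it to the single object $X$, replacing the family of equations indexed by $\mathcal{D}_0$ with a single $\mathsf{T}$--algebra morphism and replacing arbitrary products with subdirect products. One direction consists of showing that $\md(e_X)$ is a local variety whenever $e_X$ is a local equational $\mathsf{T}$--theory; the other consists of constructing a canonical $e_X$ from a given local variety $K$ and proving that $K = \md(e_X)$ and that the two assignments are mutually inverse up to isomorphism.

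For the construction direction, given a local variety $K$, I would use assumption (b4) to extract a representative set of all $\mathsf{T}$--algebra morphisms $\{e_i\colon \mathbf{TX}\lepi \mathbf{A}_i\}_{i\in I}$ in $\mathscr{E}$ with $\mathbf{A}_i\in K$. Form the induced map $e\colon TX\to \prod_{i\in I}A_i$ via the universal property of the product, factor it as $e=m_e\circ e_X$ with $e_X\in \mathscr{E}$, $m_e\in \mathscr{M}$, and define the candidate local equational theory to be $e_X\colon TX\lepi Q_X$. By construction $\mathbf{Q_X}$ is a subdirect product of the $\mathbf{A}_i$, hence lies in $K$, and each $e_i$ factors through $e_X$ via the projection $\pi_i\circ m_e$. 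To verify the substitution axiom of Definition~\ref{defleqth}, take any $g\in \Alg(\mathsf{T})(\mathbf{TX},\mathbf{TX})$, factor $e_X\circ g = m\circ e'$; the codomain of $e'$ is an $X$--generated $\mathscr{M}$--subalgebra of $\mathbf{Q_X}\in K$, hence lies in $K$ by closure, so $e'$ is one of the $e_i$'s and therefore factors through $e_X$, giving the required $g'$.

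Next I would check $K=\md(e_X)$. If $\mathbf{A}\in K$ and $f\in \Alg(\mathsf{T})(\mathbf{TX},\mathbf{A})$, factor $f=m_f\circ e_f$: the image is an $X$--generated $\mathscr{M}$--subalgebra of $\mathbf{A}$, hence in $K$, so $e_f$ factors through $e_X$ and thus so does $f$. Conversely, an $X$--generated $\mathbf{A}\models e_X$ comes with $e\colon TX\lepi A$ that factors as $e=g\circ e_X$; by Lemma~\ref{lemmafs} (using (b1)) the factor $g$ lies in $\mathscr{E}$, exhibiting $\mathbf{A}$ as an $\mathscr{E}$--quotient of $\mathbf{Q_X}\in K$, so $\mathbf{A}\in K$. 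For the other direction, given a local equational $\mathsf{T}$--theory $e_X$, closure of $\md(e_X)$ under $\mathscr{E}$--quotients follows from projectivity (b2): any map into the quotient lifts and then factors through $e_X$. Closure under $X$--generated $\mathscr{M}$--subalgebras and under subdirect products both reduce to a diagonal fill--in in the factorization system $\mathscr{E}/\mathscr{M}$, applied to the square obtained by composing with the inclusion into the ambient algebra (respectively the product embedding).

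Finally, to show the correspondence is bijective, one checks that the two constructions are mutually inverse. If $e_X$ is a local equational theory, the canonical morphism $e'_X$ associated with $K=\md(e_X)$ is forced to be isomorphic to $e_X$: on one hand $\mathbf{Q_X}\in K$ so $e_X$ appears in the defining family of $e'_X$, giving a splitting $e_X=\pi\circ e'_X$; on the other hand $\mathbf{Q'_X}\in K=\md(e_X)$, so $e'_X$ factors through $e_X$, and the epimorphism property yields mutually inverse isomorphisms. I expect the main obstacle to be the verification of closure under subdirect products from the substitution condition on $e_X$: one must carefully assemble the individual factorizations through $e_X$ into a single morphism into the product and then apply the diagonal fill--in, which is where the properness of $\mathscr{E}/\mathscr{M}$ and the lift of the factorization system to $\Alg(\mathsf{T})$ (Lemma~\ref{lemmafs1}, using (b3)) enter essentially.
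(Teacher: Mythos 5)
Your proposal is correct and follows essentially the same route as the paper: closure of $\md(e_X)$ under $\mathscr{E}$--quotients, $X$--generated $\mathscr{M}$--subalgebras and subdirect products via projectivity and the factorization system, the canonical $e_X$ built by factoring the induced map into the product of all $\mathscr{E}$--quotients of $\mathbf{TX}$ landing in $K$ (a set by (b4)), the substitution axiom checked by factoring $e_X\circ g$ through an $X$--generated subalgebra of $\mathbf{Q_X}$, and bijectivity by mutual factorization of epimorphisms. The only cosmetic difference is that for subdirect products you invoke the diagonal fill--in directly where the paper compares two $\mathscr{E}/\mathscr{M}$--factorizations of the same morphism; these are equivalent.
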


\begin{defi}\label{deflcoeqth}
	Let $\mathcal{C}$ be a category, $\mathsf{B}$ a comonad on $\mathcal{C}$, $Y\in \mathcal{C}$ and $\mathscr{E}/\mathscr{M}$ a factorization system on $\mathcal{C}$. 
	A {\it local coequational $\mathsf{B}$--theory on $Y$} is a $\mathsf{B}$--coalgebra morphism $S_Y\xhookrightarrow{m_Y} BY$ in $\mathscr{M}$ such that 
	for any $g\in \Coalg(\mathsf{B})(\mathbf{BY},\mathbf{BY})$ there exists $g'\in \Coalg(\mathsf{B})(\mathbf{S_Y},\mathbf{S_Y})$ such that 
	the following diagram 
	commutes:
	\begin{center}
		\begin{tikzpicture}[>=stealth,shorten >=3pt,
				node distance=2.5cm,on grid,auto,initial text=,
				accepting/.style={thick,double distance=2.5pt}]
				\node (a) at (2,0) {$S_Y$};
				\node (b) at (2,1.5) {$BY$};
				\node (c) at (0,0) {$S_Y$};
				\node (d) at (0,1.5) {$BY$};
				\path[right hook->] (a) edge [] node [right] {$m_Y$} (b)
						(c) edge [] node [left] {$m_Y$} (d);
				\path[->] (d) edge [] node [above] {$\forall g$} (b);
				\path[->,dashed] (c) edge [] node [below] {$g'$} (a);
		\end{tikzpicture}
	\end{center}
\end{defi}

\noindent With the previous definition, Theorem \ref{lbirkthm} and duality, we have the following.

\begin{prop}[Abstract Eilenberg--type correspondence for varieties of $X$--generated $\mathsf{T}$--algebras]\label{leilvar}
	Let $\mathcal{D}$ be a complete category, $\mathsf{T}$ a monad on $\mathcal{D}$, $\mathscr{E}/\mathscr{M}$ a factorization system on $\mathcal{D}$ and $X\in \mathcal{D}$. 
	Assume (b1) to (b4). Let $\mathcal{C}$ be a category that is dual to $\mathcal{D}$, $Y$ the  
	corresponding dual object of $X$ and let $\mathsf{B}$ be the comonad on $\mathcal{C}$ that is dual to $\mathsf{T}$ which is defined as in 
	Proposition \ref{monadtocomonad}. Then there is a one--to--one correspondence between local varieties of $X$--generated $\mathsf{T}$--algebras and local coequational 
	$\mathsf{B}$--theories on $Y$.
\end{prop}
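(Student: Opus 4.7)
The plan is to obtain the desired bijection as the composition of two bijections: first apply Theorem \ref{lbirkthm} to replace local varieties of $X$--generated $\mathsf{T}$--algebras by local equational $\mathsf{T}$--theories on $X$, and then apply the duality of Proposition \ref{monadtocomonad} to transport each local equational $\mathsf{T}$--theory on $X$ to a local coequational $\mathsf{B}$--theory on $Y$. The hypotheses (b1)--(b4) are exactly what is needed to invoke Theorem \ref{lbirkthm}, so the first step is immediate.

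For the second step, I would fix the contravariant functors $F:\mathcal{C}\to \mathcal{D}$, $G:\mathcal{D}\to \mathcal{C}$ witnessing the duality, together with their lifted duality $\widehat{F}:\Coalg(\mathsf{B})\to \Alg(\mathsf{T})$, $\widehat{G}:\Alg(\mathsf{T})\to \Coalg(\mathsf{B})$ from Proposition \ref{monadtocomonad}. Under $\widehat{G}$ the free $\mathsf{T}$--algebra $\mathbf{TX}$ is sent (up to the canonical natural isomorphism) to the cofree $\mathsf{B}$--coalgebra $\mathbf{BY}$, where $Y:=GX$. Moreover, any duality turns a proper factorization system into a proper factorization system with the roles of $\mathscr{E}$ and $\mathscr{M}$ swapped; by (b1) and (b3) this lifts to $\Alg(\mathsf{T})$ and dually to $\Coalg(\mathsf{B})$. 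Hence $\widehat{G}$ turns a morphism $e_X:\mathbf{TX}\lepi \mathbf{Q_X}$ in $\mathscr{E}\cap \Alg(\mathsf{T})$ into a morphism $m_Y:\mathbf{S_Y}\hookrightarrow \mathbf{BY}$ in $\mathscr{M}\cap \Coalg(\mathsf{B})$, and this assignment is a bijection between equivalence classes (up to iso in the respective slice/coslice) of such morphisms.

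Next I would verify that the substitution--closure condition defining a local equational $\mathsf{T}$--theory (Definition \ref{defleqth}) dualizes precisely to the closure condition defining a local coequational $\mathsf{B}$--theory (Definition \ref{deflcoeqth}). Given $g\in \Alg(\mathsf{T})(\mathbf{TX},\mathbf{TX})$ and the witnessing $g'\in \Alg(\mathsf{T})(\mathbf{Q_X},\mathbf{Q_X})$ making the square in Definition \ref{defleqth} commute, applying $\widehat{G}$ yields $\widehat{G}(g)\in \Coalg(\mathsf{B})(\mathbf{BY},\mathbf{BY})$ and $\widehat{G}(g')\in \Coalg(\mathsf{B})(\mathbf{S_Y},\mathbf{S_Y})$ making the square in Definition \ref{deflcoeqth} commute, with arrows reversed and $e_X$ replaced by $m_Y$; the converse is obtained identically via $\widehat{F}$. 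Because $\widehat{G}$ and $\widehat{F}$ are mutually inverse up to natural isomorphism, this gives a bijection between local equational $\mathsf{T}$--theories on $X$ and local coequational $\mathsf{B}$--theories on $Y$.

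Composing the two bijections yields the one--to--one correspondence of Proposition \ref{leilvar}. The only delicate point is checking that the proper factorization system on $\mathcal{D}$ transports through the duality to a proper factorization system on $\mathcal{C}$ whose classes of epis and monos lift properly along $\Coalg(\mathsf{B})\to \mathcal{C}$, so that the image of an $\mathscr{E}$--quotient in $\Alg(\mathsf{T})$ is genuinely an $\mathscr{M}$--subobject in $\Coalg(\mathsf{B})$; this is the step where I expect to spend the most care, but it follows formally from (b1), (b3) and Lemma \ref{lemmafs1} applied on both sides of the duality.
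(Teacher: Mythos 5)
Your proposal is correct and follows exactly the route the paper takes: the paper derives Proposition \ref{leilvar} directly from Theorem \ref{lbirkthm} composed with the duality of Proposition \ref{monadtocomonad}, under which $\mathbf{TX}$ corresponds to $\mathbf{BY}$, $\mathscr{E}$-quotients in $\Alg(\mathsf{T})$ to $\mathscr{M}$-subobjects in $\Coalg(\mathsf{B})$, and the substitution square of Definition \ref{defleqth} to the closure square of Definition \ref{deflcoeqth}. The details you supply about transporting the proper factorization system through the duality are the right ones and are what the paper leaves implicit.
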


\begin{example}\label{exlvv}
	By fixing an object $X\in \mathcal{D}_0$, we can get corresponding local versions of the Eilenberg--type correspondences showed in subsection \ref{seceilv} for varieties 
	of $X$--generated $\mathsf{T}$--algebras. For example, the local version of Example \ref{exjan11} reads as follows: There is a one--to--one correspondence between 
	varieties of $X$--generated monoids and subalgebras $\mathbf{S}\in \CABA$ of the complete atomic Boolean algebra $\set(X^*,2)$ such that:
	\begin{enumerate}[i)]
		\item $S$ is closed under left and right derivatives. That is, for every $L\in S$ and $x\in X$, $_xL,L_x\in S$.
		\item $S$ is closed under morphic preimages. That is, for every homomorphism of monoids $h:X^*\to X^*$ and $L\in S$, we have that $L\circ h\in S$.\qed
	\end{enumerate}
\end{example}

\subsection{Eilenberg--type correspondences for local pseudovarieties of $\mathsf{T}$--algebras}

In this subsection, we provide an abstract Eilenberg--type correspondence for local pseudovarieties of $\mathsf{T}$--algebras. In order to do this, we first provide a local 
version of Birkhoff's theorem for finite $\mathsf{T}$--algebras. 

\noindent We fix a complete concrete category $\mathcal{D}$ such that its forgetful functor preserves epis, monos and products, a monad $\mathsf{T}=(T,\eta,\mu)$ on $\mathcal{D}$, 
$X\in \mathcal{D}$ and a factorization system $\mathscr{E}/\mathscr{M}$ on $\mathcal{D}$. We will need the following assumptions:

\begin{enumerate}
	\item[(b$_f$1)] The factorization system $\mathscr{E}/\mathscr{M}$ is proper.
	\item[(b$_f$2)] The free $\mathsf{T}$--algebra $\mathbf{TX}=(TX,\mu_X)$ is {\it projective with respect to $\mathscr{E}$ in $\Alg(\mathsf{T})$}.
	\item[(b$_f$3)] $T$ preserves morphisms in $\mathscr{E}$.
\end{enumerate}

\begin{defi}\label{deflpet}
 	Let $\mathcal{D}$ be a concrete category such that its forgetful functor preserves epis and monos, $\mathsf{T}$ a monad on $\mathcal{D}$, $X\in \mathcal{D}$ and 
	$\mathscr{E}/\mathscr{M}$ a factorization system on $\mathcal{D}$. Assume (b$_f$1) and (b$_f$3). A {\it local pseudoequational $\mathsf{T}$--theory on $X$} is a nonempty 
	collection $\mathtt{P}_X$ of $\mathsf{T}$--algebra morphisms in $\mathscr{E}$ with domain $\mathbf{TX}$ and finite codomain such that:
	\begin{enumerate}[i)]
		\item For every finite set $I$ and $f_i\in \mathtt{P}_X$, $i\in I$, there exists $f\in \mathtt{P}_X$ such that $f_i$ factors through $f$, $i\in I$.
		\item For every $e\in \mathtt{P}_X$ with codomain $\mathbf{A}$ and every $\mathsf{T}$--algebra morphism $e'\in \mathscr{E}$ with domain $\mathbf{A}$ 
			we have that $e'\circ e\in \mathtt{P}_X$.
		\item For every $f\in \mathtt{P}_X$ and $h\in \Alg(\mathsf{T})(\mathbf{TX},\mathbf{TX})$ we have that $e_{f\circ h}\in \mathtt{P}_X$ where 
			$f\circ h=m_{f\circ h}\circ e_{f\circ h}$ is the factorization of $f\circ h$.
	\end{enumerate}
\end{defi}

\noindent Given an $X$--generated algebra $\mathbf{A}\in \Alg_f(\mathsf{T})$, we say that $\mathbf{A}$ {\it satisfies} $\mathtt{P}_X$, denoted as $\mathbf{A}\models \mathtt{P}_X$, 
if every $f\in\Alg(\mathsf{T})(\mathbf{TX},\mathbf{A})$ factors through some morphism in $\mathtt{P}_X$. We denote by $\md_f (\mathtt{P}_X)$ the {\it finite $X$--generated 
models of} $\mathtt{P}_X$, that is:
\[
	\md_f (\mathtt{P}_X):=\{\mathbf{A}\in \Alg_f(\mathsf{T}) \mid \text{$\mathbf{A}$ is $X$--generated and }\mathbf{A}\models \mathtt{P}_X\}.
\]
A class $K$ of finite $X$--generated $\mathsf{T}$--algebras is {\it defined} by $\mathtt{P}_X$ if $K=\md_f(\mathtt{P}_X)$.

\begin{defi}
	Let $\mathcal{D}$ be a complete concrete category, $\mathsf{T}$ a monad on $\mathcal{D}$, $\mathscr{E}/\mathscr{M}$ a factorization system on $\mathcal{D}$ and 
	$X\in \mathcal{D}$ a finite object. A class $K$ of finite $X$--generated algebras in $\Alg (\mathsf{T})$ is called a {\it local pseudovariety of $X$--generated 
	$\mathsf{T}$--algebras } if it is closed under $\mathscr{E}$--quotients, $X$--generated $\mathscr{M}$--subalgebras and finite subdirect products. 
\end{defi}

\begin{thm}[Local Birkhoff's Theorem for finite $\mathsf{T}$--algebras]\label{lreitthm}
	Let $\mathcal{D}$ be a concrete complete category such that its forgetful functor preserves epis, monos and products, $\mathsf{T}$ a monad on $\mathcal{D}$, 
	$X\in \mathcal{D}$ and $\mathscr{E}/\mathscr{M}$ a factorization system on $\mathcal{D}$. Assume (b$_f$1) to (b$_f$3). Then a class $K$ of finite $X$--generated 
	$\mathsf{T}$--algebras is a local pseudovariety of $X$--generated $\mathsf{T}$--algebras if and only if is defined by a local pseudoequational $\mathsf{T}$--theory on $X$. 
	Additionally, local pseudovarieties of $X$--generated $\mathsf{T}$--algebras are in one--to--one correspondence with local pseudoequational $\mathsf{T}$--theories on $X$.\qed
\end{thm}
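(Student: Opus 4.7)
The plan is to mirror the proof strategy of Theorem \ref{reitthm}, but localized to a single object $X$: instead of indexing equations by $\mathcal{D}_0$, everything happens over the free algebra $\mathbf{TX}$, and we systematically replace ``$\mathscr{M}$--subalgebra'' by ``$X$--generated $\mathscr{M}$--subalgebra'' and ``finite product'' by ``finite subdirect product'', with the subdirect product construction defined in Section \ref{seclocal} playing the role of the product.

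First I would define the forward assignment. Given a local pseudovariety $K$ of $X$--generated $\mathsf{T}$--algebras, set
\[
	\mathtt{P}_X(K) := \{ e \in \Alg(\mathsf{T})(\mathbf{TX},\mathbf{A})\cap \mathscr{E} \mid \mathbf{A}\in K \}.
\]
Nonemptiness comes from the empty subdirect product, which is in $K$ and is $X$--generated (using that $U$ preserves products, its carrier is terminal). Property (i) of Definition \ref{deflpet} follows from closure under finite subdirect products: a finite family $\{f_i:\mathbf{TX}\lepi \mathbf{A}_i\}_{i\in I}$ produces the subdirect product $\mathbf{S}$ with canonical quotient $e_e:\mathbf{TX}\lepi \mathbf{S}$, and each $f_i$ factors through $e_e$ via $\pi_i\circ m_e$. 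Property (ii) is closure under $\mathscr{E}$--quotients. For property (iii), given $f:\mathbf{TX}\lepi \mathbf{A}$ in $\mathtt{P}_X(K)$ and $h\in \Alg(\mathsf{T})(\mathbf{TX},\mathbf{TX})$, the factorization $f\circ h = m_{f\circ h}\circ e_{f\circ h}$ exhibits the codomain of $e_{f\circ h}$ as an $X$--generated $\mathscr{M}$--subalgebra of $\mathbf{A}\in K$, hence in $K$.

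Next I would show that $K(\mathtt{P}_X):=\md_f(\mathtt{P}_X)$ is a local pseudovariety. Closure under $\mathscr{E}$--quotients: if $\mathbf{A}\models \mathtt{P}_X$ and $q:\mathbf{A}\lepi \mathbf{B}$ with $\mathbf{B}$ $X$--generated, pick any $g\in \Alg(\mathsf{T})(\mathbf{TX},\mathbf{B})$; by projectivity (b$_f$2) lift to $\tilde g:\mathbf{TX}\to \mathbf{A}$ with $q\circ \tilde g=g$, factor $\tilde g$ through some $e\in \mathtt{P}_X$, and compose with the resulting quotient (applying (ii)) to conclude. Closure under $X$--generated $\mathscr{M}$--subalgebras: if $m:\mathbf{B}\hookrightarrow \mathbf{A}$ and $e':\mathbf{TX}\lepi \mathbf{B}$ witnesses $X$--generation, then $m\circ e'$ factors through some $f\in \mathtt{P}_X$, and applying (iii) to this factorization yields a morphism in $\mathtt{P}_X$ through which $e'$ factors, i.e.\ $\mathbf{B}\models \mathtt{P}_X$. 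Closure under finite subdirect products uses (i) to bound the witnessing family $\{f_i\}$ by a single $f\in \mathtt{P}_X$; the subdirect product $\mathbf{S}$ is then an $\mathscr{E}$--quotient of the codomain of $f$, so already in $K(\mathtt{P}_X)$ by the quotient closure just established.

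Finally I would check that the two assignments are mutually inverse. For $\mathtt{P}_X(\md_f(\mathtt{P}_X))=\mathtt{P}_X$, the inclusion $\supseteq$ is immediate from the definition of satisfaction; for $\subseteq$, given $e:\mathbf{TX}\lepi \mathbf{A}$ with $\mathbf{A}\models \mathtt{P}_X$, the defining factorization of $e$ through some $e_0\in \mathtt{P}_X$ combined with (ii) yields $e\in \mathtt{P}_X$. For $\md_f(\mathtt{P}_X(K))=K$, the inclusion $\supseteq$ is trivial, and for $\subseteq$ one observes that if $\mathbf{A}$ is a finite $X$--generated algebra modelling $\mathtt{P}_X(K)$ via $e:\mathbf{TX}\lepi \mathbf{A}$, then $e = g\circ f$ for some $f:\mathbf{TX}\lepi \mathbf{B}\in K$ in $\mathtt{P}_X(K)$; since $\mathscr{E}/\mathscr{M}$ is proper, Lemma \ref{lemmafs} forces $g\in \mathscr{E}$, exhibiting $\mathbf{A}$ as an $\mathscr{E}$--quotient of $\mathbf{B}\in K$, so $\mathbf{A}\in K$. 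The main subtlety I anticipate is the bookkeeping in the subdirect product argument and ensuring $X$--generation is preserved throughout; in particular, one must verify that the codomain of $e_{f\circ h}$ in (iii) is genuinely an $X$--generated $\mathscr{M}$--subalgebra of $\mathbf{A}$, which uses properness of the factorization system together with $T$ preserving $\mathscr{E}$ (b$_f$3) to lift the factorization to $\Alg(\mathsf{T})$ via Lemma \ref{lemmafs1}.
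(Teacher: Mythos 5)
Your proposal is correct and follows essentially the same route as the paper's proof: define $\mathtt{P}_X(K)$ and $\md_f(\mathtt{P}_X)$, verify the three axioms of Definition \ref{deflpet} from the three closure properties and vice versa, and check the two assignments are mutually inverse using Lemma \ref{lemmafs} and property ii). The one place you genuinely diverge is closure of $\md_f(\mathtt{P}_X)$ under finite subdirect products: the paper does a direct diagram chase with the subdirect product construction, whereas you first use directedness (property i)) to bound the witnessing family by a single $f\in\mathtt{P}_X$ and then observe that $\mathbf{S}$ is an $\mathscr{E}$--quotient of the codomain of $f$, which is itself a model by the local analogue of Lemma \ref{lemmafr}; this reduction is valid (it needs $\mathscr{E}$ closed under composition and uniqueness of factorizations) and is arguably cleaner. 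Two small slips worth fixing: in the closure under $X$--generated $\mathscr{M}$--subalgebras you cite property (iii), but the step actually rests on uniqueness of the $\mathscr{E}/\mathscr{M}$--factorization of $m\circ g$ (compare the paper's Proposition \ref{hspclosurer} ii)), and you must run the argument for an arbitrary $g\in\Alg(\mathsf{T})(\mathbf{TX},\mathbf{B})$, not only for the witnessing epimorphism $e'$; likewise the inclusion $\mathtt{P}_X\subseteq\mathtt{P}_X(\md_f(\mathtt{P}_X))$ is not immediate from the definition of satisfaction but needs projectivity (b$_f$2) together with property (iii), as in Lemma \ref{lemmafr}. Neither affects the correctness of the overall argument.
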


\begin{defi}\label{deflpcet}
 	Let $\mathcal{C}$ be a concrete category such that its forgetful functor preserves monos, $\mathsf{B}$ a comonad on $\mathcal{C}$, $Y\in \mathcal{C}$ and 
	$\mathscr{E}/\mathscr{M}$ a factorization system on $\mathcal{C}$. Assume (b$_f$1) and that $B$ preserves the morphisms in $\mathscr{M}$. 
	A {\it local pseudocoequational $\mathsf{B}$--theory on $Y$} is a nonempty collection $\mathtt{R}_Y$ of 
	$\mathsf{B}$--coalgebra morphisms in $\mathscr{M}$ with codomain $\mathbf{BY}$ and finite codomain such that:
	\begin{enumerate}[i)]
		\item For every finite set $I$ and $f_i\in \mathtt{R}_Y$, $i\in I$, there exists $f\in \mathtt{R}_Y$ such that $f_i$ factors through $f$, $i\in I$.
		\item For every $m\in \mathtt{R}_Y$ with domain $\mathbf{A}$ and every $\mathsf{B}$--coalgebra morphism $m'\in \mathscr{M}$ with codomain $\mathbf{A}$ 
			we have that $m\circ m'\in \mathtt{R}_Y$.
		\item For every $f\in \mathtt{R}_Y$ and $h\in \Coalg(\mathsf{T})(\mathbf{BY},\mathbf{BY})$ we have that $m_{h\circ f}\in \mathtt{R}_Y$ where 
			$h\circ f=m_{h\circ f}\circ e_{h\circ f}$ is the factorization of $h\circ f$.
	\end{enumerate}
\end{defi}

\noindent With the previous definition, Theorem \ref{lreitthm} and duality, we have the following.

\begin{prop}[Abstract Eilenberg--type correspondence for pseudovarieties of $X$--gen\-erated $\mathsf{T}$--algebras]\label{leilpvar}
	Let $\mathcal{D}$ be a complete category, $\mathsf{T}$ a monad on $\mathcal{D}$, $\mathscr{E}/\mathscr{M}$ a factorization system on $\mathcal{D}$ and $X\in \mathcal{D}$. 
	Assume (b$_f$1) to (b$_f$3). Let $\mathcal{C}$ be a category that is dual to $\mathcal{D}$, $Y$ the corresponding dual object of $X$ and let $\mathsf{B}$ be the comonad on 
	$\mathcal{C}$ that is dual to $\mathsf{T}$ which is defined as in Proposition \ref{monadtocomonad}. 
	Then there is a one--to--one correspondence between local varieties of $X$--generated $\mathsf{T}$--algebras and local coequational $\mathsf{B}$--theories on $Y$.
\end{prop}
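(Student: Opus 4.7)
The plan is to derive the proposition by composing two bijections: first, the bijection between local pseudovarieties of $X$-generated $\mathsf{T}$-algebras and local pseudoequational $\mathsf{T}$-theories on $X$ given by Theorem~\ref{lreitthm}; second, a duality-induced bijection between local pseudoequational $\mathsf{T}$-theories on $X$ and local pseudocoequational $\mathsf{B}$-theories on $Y$. Since assumptions (b$_f$1) to (b$_f$3) are already in force, the first bijection is immediate from Theorem~\ref{lreitthm}, so all the work is in establishing the second.

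To construct the second bijection, I would invoke Proposition~\ref{monadtocomonad} to lift the duality $F \dashv\vdash G$ between $\mathcal{C}$ and $\mathcal{D}$ to a duality $\widehat{F}\dashv\vdash\widehat{G}$ between $\Coalg(\mathsf{B})$ and $\Alg(\mathsf{T})$. Because contravariant duality swaps epis and monos, the proper factorization system $\mathscr{E}/\mathscr{M}$ on $\mathcal{D}$ (restricted to $\Alg(\mathsf{T})$ via (b$_f$1) and (b$_f$3)) is sent to $\mathscr{M}/\mathscr{E}$ on $\Coalg(\mathsf{B})$. Moreover, since $Y$ is dual to $X$, the free algebra $\mathbf{TX}$ corresponds under $\widehat{G}$ to the cofree coalgebra $\mathbf{BY}$, and finiteness of codomain is sent to finiteness of domain because the underlying duality restricts to finite objects. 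Consequently, each $\mathsf{T}$-algebra morphism $e \in \mathscr{E}$ with domain $\mathbf{TX}$ and finite codomain is sent by $\widehat{G}$ to a $\mathsf{B}$-coalgebra morphism $\widehat{G}(e) \in \mathscr{M}$ with codomain $\mathbf{BY}$ and finite domain.

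Given this, I define the correspondence by $\mathtt{P}_X \mapsto \mathtt{R}_Y := \{\widehat{G}(e) \mid e \in \mathtt{P}_X\}$, with inverse given by $\widehat{F}$. I then need to check that the three conditions of Definition~\ref{deflpet} translate exactly into the three conditions of Definition~\ref{deflpcet}. Condition (i) translates because factorizations $e_i = h \circ f$ in $\Alg(\mathsf{T})$ correspond to factorizations $\widehat{G}(e_i) = \widehat{G}(f) \circ \widehat{G}(h)$ in $\Coalg(\mathsf{B})$, so ``all $f_i$ factor through $f$'' becomes ``all $\widehat{G}(f_i)$ factor through $\widehat{G}(f)$''. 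Condition (ii), closure under post-composition by $\mathscr{E}$-morphisms, becomes closure under pre-composition by $\mathscr{M}$-morphisms. Condition (iii) becomes closure under coalgebra substitutions $h' \in \Coalg(\mathsf{B})(\mathbf{BY}, \mathbf{BY})$ followed by factorization, using that $\Alg(\mathsf{T})(\mathbf{TX}, \mathbf{TX}) \cong \Coalg(\mathsf{B})(\mathbf{BY}, \mathbf{BY})^{\mathrm{op}}$ under the lifted duality.

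The main obstacle I expect is the bookkeeping around the factorization system under duality: ensuring that the canonical $(\mathscr{E}, \mathscr{M})$-factorization of $f \circ h$ in $\Alg(\mathsf{T})$ dualizes to the canonical $(\mathscr{E}, \mathscr{M})$-factorization of $\widehat{G}(h) \circ \widehat{G}(f)$ in $\Coalg(\mathsf{B})$, so that condition (iii) on each side is equivalent. This follows from the uniqueness of factorizations up to isomorphism together with the fact that $\widehat{G}$ sends $\mathscr{E}$ to $\mathscr{M}$ and $\mathscr{M}$ to $\mathscr{E}$, but the verification has to be done carefully. Once this is in place, the abstract Eilenberg-type correspondence follows by composing the two bijections.
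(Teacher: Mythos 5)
Your proposal is correct and follows exactly the route the paper takes: the paper offers no explicit proof beyond the remark that the result follows from Theorem~\ref{lreitthm} together with the lifted duality of Proposition~\ref{monadtocomonad}, which is precisely your composition of the two bijections. You are in fact more careful than the paper in spelling out that the lifted duality swaps $\mathscr{E}$ with $\mathscr{M}$, sends $\mathbf{TX}$ to $\mathbf{BY}$, preserves finiteness, and carries canonical factorizations to canonical factorizations, so that Definition~\ref{deflpet} dualizes clause-by-clause to Definition~\ref{deflpcet}.
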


\begin{example}\label{exlpvv}
	By fixing an object $X\in \mathcal{D}_0$, we can get corresponding local versions of the Eilenberg--type correspondences showed in subsection \ref{secreiteil} for pseudovarieties 
	of $X$--generated $\mathsf{T}$--algebras. For example, the local version of (1) in Example \ref{expvv1} reads as follows: There is a one--to--one correspondence between 
	pseudovarieties of $X$--generated monoids and Boolean algebras $\mathbf{S}$ that are subalgebras of the complete atomic Boolean algebra $\set(X^*,2)$ such that:
	\begin{enumerate}[i)]
		\item Every element in $S$ is a recognizable language on $X$.
		\item $S$ is closed under left and right derivatives. That is, for every $L\in S$ and $x\in X$, $_xL,L_x\in S$.
		\item $S$ is closed under morphic preimages. That is, for every homomorphism of monoids $h:X^*\to X^*$ and $L\in S$, we have that $L\circ h\in S$.\qed
	\end{enumerate}
\end{example}

\section{Conclusions}

We proved that Eilenberg--type correspondences = Birkhoff's theorem for (finite) algebras + duality. The main contribution of the present paper was to realize that the concept 
of a ``variety of languages'' that is used in Eilenberg--type correspondences corresponds to the dual of the (pseudo)equational theory that defines the (pseudo)variety of algebras, which was 
conjectured by the author in \cite{jste}. This not only allows us to understand where ``varieties of languages'' come from but also to get an abstract and general result that 
encompasses both existing and new Eilenberg--type correspondences.

\noindent Our algebras of interest are $\mathsf{T}$--algebras, where $\mathsf{T}$ is a monad on a category $\mathcal{D}$. We 
stated and proved, under mild assumptions, categorical versions of Birkhoff's theorem for $\mathsf{T}$--algebras and Birkhoff's theorem for finite $\mathsf{T}$--algebras, the latter also 
known as Reiterman's theorem for $\mathsf{T}$--algebras. In order to get Eilenberg--type correspondences we stated Birkhoff's theorem for (finite) $\mathsf{T}$--algebras as a 
one--to--one correspondence between (pseudo)varieties of $\mathsf{T}$--algebras and (pseudo)equational $\mathsf{T}$--theories. The previous observation led us to define the 
notion of a (pseudo)equational $\mathsf{T}$--theory, i.e., collections of ``(pseudo)equations'' that are 
deductively closed. The proof of Birkhoff's theorem for $\mathsf{T}$--algebras is obtained from \cite{bana}. On the other hand, the proof of Birkhoff's theorem for finite 
$\mathsf{T}$--algebras follows the general idea used by the author in \cite{jste} to prove a general Eilenberg--type correspondence and the idea that pseudovarieties of algebras are 
exactly directed unions of equational classes of finite algebras \cite[Proposition 4]{banab}. It is worth mentioning that the proof of Birkhoff's theorem for finite algebras in the present paper 
has the advantage of avoiding topological and profinite techniques, which are usually used to prove this theorem. This help us to understand the proof of Birkhoff's theorem for finite algebras 
in a more basic setting.

\noindent Once we stated our categorical versions of Birkhoff's theorem and Birkhoff's theorem for finite algebras, with the use of duality, we stated our abstract Eilenberg--type 
correspondence theorems for varieties and pseudovarieties of $\mathsf{T}$--algebras. Then, in a similar way, in Section \ref{seclocal}, we derived corresponding local versions of 
Birkhoff's theorem for (finite) $X$--generated $\mathsf{T}$--algebras and their corresponding Eilenberg--type correspondences. These local versions can be seen as particular 
instances of the previous work in which the category $\mathcal{D}_0$ has only one object, say $X$. Proofs are made in a similar way by restricting the kind of algebras to 
$X$--generated $\mathsf{T}$--algebras. Thus the closure properties for local (pseudo)varieties are closure under $\mathscr{E}$--quotients, $X$--generated 
$\mathscr{M}$--subalgebras and (finite) subdirect products. 

\noindent From our abstract Eilenberg--type correspondence theorems we derived both existing and new Eilenberg--type correspondences, including the ones in the following table:

\begin{center}
\begin{tabular}{|m{3.7cm}|m{2.5cm}|m{2.4cm}|m{2.1cm}|m{2cm}|}
	\hline
	Eilenberg--type correspondence for... & Pseudovariety version & Local pseudovariety version & Variety version & Local variety version \\
	\hline
	Semigroups & \cite[Thm. 34s]{eilenberg} (Ex. \ref{expvv1} (2)) & \cite{gehrke} (Ex. \ref{exlpvv}) & Ex. \ref{exvv1} (1) & Ex. \ref{exlvv} \\
	\hline
	Ordered semigroups & \cite{pin} (Ex. \ref{expvv3}) & \cite{gehrke} (Ex. \ref{exlpvv}) & Ex. \ref{exvv3} & Ex. \ref{exlvv} \\
	\hline
	Monoids & \cite[Thm. 34]{eilenberg} (Ex. \ref{expvv1} (1)) & \cite{gehrke} (Ex. \ref{exlpvv}) & \cite[Thm. 39]{jan1} (Ex. \ref{exjan11}) & Ex. \ref{exlvv} \\
	\hline
	Ordered monoids & \cite{pin} (Ex. \ref{expvv3}) & \cite{gehrke} (Ex. \ref{exlpvv}) & Ex. \ref{exvv3} & Ex. \ref{exlvv} \\
	\hline
	Groups & Ex. \ref{expvv1} (3) & Ex. \ref{exlpvv} & Ex. \ref{exvv1} (2) & Ex. \ref{exlvv} \\
	\hline
	Ordered groups & Ex. \ref{expvv3} & Ex. \ref{exlpvv} & Ex. \ref{exvv3} & Ex. \ref{exlvv} \\
	\hline
	Monoid actions (dynamical systems) & Ex. \ref{expvv1} (4) & Ex. \ref{exlpvv} & Ex. \ref{exvv1} (3) & Ex. \ref{exlvv} \\
	\hline
	Semigroups with infinite exponentiation & cf. \cite{wilke} (Ex. \ref{expvv1} (5)) & cf. \cite[Thm. 6.3]{urbat} (Ex. \ref{exlpvv}) & Ex. \ref{exvv1} (4) & Ex. \ref{exlvv}\\
	\hline
	$\mathbb{K}$--algebras for a finite field $\mathbb{K}$ & \cite{reut} (Ex. \ref{expvv4}) & \cite{adamekl} for $\mathbb{K}=\mathbb{Z}_2$ (Ex. \ref{exlpvv}) & Ex. \ref{exvv4} 
			& Ex. \ref{exlvv}\\
	\hline
	Idempotent semirings & \cite{polak} (Ex. \ref{expvv5}) & \cite{adamekl} (Ex. \ref{exlpvv}) & Ex. \ref{exvv5} & Ex. \ref{exlvv} \\
	\hline
	Algebras of type $\T$ in a variety & Ex. \ref{exuaf} & Ex. \ref{exlpvv} & Ex. \ref{exua} & Ex. \ref{exlvv} \\
	\hline
	Ordered algebras of type $\T$ in a variety & Ex. \ref{exuoaf} & Ex. \ref{exlpvv} & Ex. \ref{exuoa} & Ex. \ref{exlvv} \\
	\hline
\end{tabular}
\end{center}

\subsection{Related work:} We will discuss some of the related work of this paper.

\noindent \textsc{Birkhoff's theorem:} We discuss categorical approaches for Birkhoff's theorem such as \cite{awodeyh,bana,barr}. The main purpose of the present 
paper was to prove abstract Eilenberg--type correspondences which, in the case of varieties of $\mathsf{T}$--algebras, led us to state a Birkhoff's theorem for $\mathsf{T}$--algebras 
in which every variety is defined by a unique collection of equations and vice versa. This version is obtained from \cite{bana} after defining the right notion of an equational 
$\mathsf{T}$--theory. In \cite{awodeyh}, the defining properties for a variety are taken with respect to the factorization system $\mathscr{E}/\mathscr{M}$ where $\mathscr{E}=$ 
regular epi and $\mathscr{M}=$ mono, \cite[Definition 2.1 and 2.2]{awodeyh}. In approaches such as \cite{awodeyh,bana} the morphisms that represent equations are epimorphisms 
with projective domain, which is stated in condition (B2), and there is also the requirement of having enough projectives, which is implied by conditions (B2) and (B3). 
In \cite{barr}, they work also with $\mathsf{T}$--algebras and a factorization system $\mathscr{E}/\mathscr{M}$ on $\Alg(\mathsf{T})$. In the present paper, the factorization system 
$\mathscr{E}/\mathscr{M}$ is on $\mathcal{D}$ which, under conditions (B1) and (B4), is lifted to $\Alg(\mathsf{T})$, but not every factorization system on $\Alg(\mathsf{T})$ is induced 
by one on the base category $\mathcal{D}$ (e.g., on $\set$ there is no factorization system that corresponds in any way to epimorphisms in the categories of monoids or of rings). 
In \cite{barr}, the defining properties of a variety are closure under $U$--split quotients, $\mathscr{M}$--subalgebras and products, where $U:\Alg(\mathsf{T})\to \mathcal{D}$ is the 
forgetful functor.

\noindent  \textsc{Birkhoff's theorem for finite algebras:} Birkhoff's theorem for finite algebras, which is also known as Reiterman's theorem \cite{reit}, has been generalized in \cite{banab,chen}. The approach in \cite{reit} was to consider implicit operations. Implicit operations generalize the notion of terms. Equations given by implicit operations are the kind of equations that define 
pseudovarieties. The proof given in \cite{reit} involves the use of topology in which the set of $n$--ary implicit operations is the completion of the set of $n$--ary terms. 
In \cite{banab}, a topological approach is also considered by using uniformities, and it is also shown that pseudovarieties are exactly directed unions of equational classes of finite algebras 
\cite[Proposition 4]{banab}. In \cite{chen}, a categorical approach is considered to prove a Reiterman's theorem for $\mathsf{T}$--algebras, this is done by using profinite techniques 
to define the notion of profinite equation which are the kind of equations that allow to define and characterize pseudovarieties. In \cite{chen}, for a given monad $\mathsf{T}$ on 
$\mathcal{D}$ they define the profinite monad $\widehat{\mathsf{T}}$ on the profinite completion $\widehat{\mathcal{D}}$ of the category $\mathcal{D}_f$ which is done by using 
limits (in fact, right Kan extensions). The approach in the present paper do not use topological nor profinite techniques, and it is based in the fact that pseudovarieties are 
exactly directed unions of equational classes of finite algebras, see \cite[Proposition 4]{banab} and \cite{baldwin,eilsch}. Nevertheless, profinite and topological techniques can be easily 
brought to the scene in the present paper if we identify the family of morphisms $\Ps(X)$ by its limit, where $\Ps$ is a pseudoequational $\mathsf{T}$--theory. This would have led us to 
deal with profinite completions and topological spaces, in particular, profinite monoids, Stone spaces and Stone duality. We prefer to avoid this approach for the following reasons:
\begin{enumerate}[a)]
	\item Make the present work more accessible to some readers.
	\item To present a different approach without using topology and profinite techniques.
	\item Eilenberg--type correspondences deal with pseudocoequational theories rather than its dual, i.e., pseudoequational theories. 
\end{enumerate}

\noindent  \textsc{Categorical Eilenberg--type correspondences:} There are some categorical approaches for Eilenberg--type correspondences in the literature such as \cite{adamek,mikolaj,jste,urbat}. In \cite{adamek,mikolaj,urbat} only 
pseudovarieties are considered, i.e., all the algebras are finite, while in \cite{jste} as well as in the present paper we can also consider varieties of algebras. In this respect, 
Eilenberg--type correspondences such as \cite[Theorem 39]{jan1} or the ones derived in examples of subsection \ref{seceilv} cannot be derived from \cite{adamek,mikolaj,urbat}. 
The work in \cite{urbat} subsumes the work made in \cite{adamek,mikolaj}. The main setting in \cite{adamek,urbat} is to consider predual categories, i.e., categories that are dual on finite objects. The main purpose of this preduality is to define pseudovarieties of algebras on one category and varieties of languages on the other one. In \cite{mikolaj}, no duality is involved. 
The definition of varieties of languages given in \cite{mikolaj} is restricted in the sense that it is always a Boolean algebra, which in our present paper reduces to consider 
$\mathcal{D}=\set$, $\mathcal{D}_0=\set_f^S$, $\mathcal{C}=\CABA$ and $\mathcal{C}_0=\CABA_f^S$, where $S$ is a fixed set. Eilenbeg--type correspondences such as 
\cite{pin,reut} cannot be derived 
from \cite{mikolaj}. In \cite{adamek}, all the algebras considered have a monoid structure which restricts the kind of algebras one can consider, e.g., a semigroup version of Eilenberg's theorem \cite[Theorem 34s]{eilenberg} cannot be derived from \cite{adamek}. Those previous two limitations in the kind of varieties of languages and the kind of algebras one can 
consider are overcome in \cite{jste,urbat} as well as in the present paper by considering algebras for a monad $\mathsf{T}$, which is the main idea in \cite{mikolaj}. 
In \cite{urbat}, the preduality considered as its main setting is lifted, under mild assumptions, to a full duality between one of the categories and the profinite completion of the other one. 
From this profinite completion, the concept of profinite equations is defined which are the kind of equations that define pseudovarieties of algebras. On the other hand, the definition of 
``varieties of languages'' given in \cite{urbat} depends on finding a ``unary representation'' which is a set of unary operations on a free algebra satisfying certain properties 
\cite[Definition 37]{urbat} and requires non--trivial work. From this unary representation they construct syntactic algebras and define the kind of derivatives that define a variety of languages. 
In \cite{jste} as well as in the present paper, the use of derivatives is not explicitely made which is captured in a more transparent and categorical way by using coalgebras, from this, 
the righ notion of derivatives easily follows by using duality and the defining properties of a $\mathsf{T}$--algebra (epi)morphism  
(see, e.g., Example \ref{exjan11}). The coalgebraic approach used in the present paper gave us the advantage to obtain what we called an abstract Eilenberg--type correspondence 
in which the concept of ``variety of languages'' is the one of being a (pseudo)coequational $\mathsf{B}$--theory, whose definition does not depend on finding the right notion of 
derivatives, contrary to \cite{adamek,mikolaj,urbat}, and does not depend on the existence of syntactic algebras, contrary to \cite{mikolaj,urbat}. Also, requirements considered in \cite{urbat} 
such as existence of ``unary representations'' or the fact that $\mathcal{D}$ and $\mathcal{C}$ are dual on finite objects are not needed to state our abstract Eilenberg--type correspondences. 
Nevertheless, in specific applications such as, e.g., Example \ref{exuaf}, the fact that the functor $\set(\underline{\ \ },2)$, which is part of the duality between $\set$ and $\CABA$, 
preserves finite objects allowed us to identify the dual of the family $\Ps(X)$ with the Boolean algebra $\Lan_{\Ps}(X):=\bigcup_{e\in \Ps(X)}\Imag(\set(e,2))$. But again, the fact that 
$\mathcal{C}$ and $\mathcal{D}$ are dual on finite objects does not play any role in our abstract Eilenberg--type correspondence.

\noindent The work in the present paper subsumes the work made by the author in \cite{jste} from which most of the ideas presented in this paper were obtained. The main idea 
of considering ``varieties of languages'' as coequations was initially made in \cite[Proposition 12]{jste} which was the starting point to suspect that ``varieties of languages'' are 
exactly duals of equational theories. With this in mind, the main work was focused on finding a proper definition of a (pseudo)equational theory and state categorical versions of 
Birkhoff's theorem and Birkhoff's theorem for finite $\mathsf{T}$--algebras to get one--to--one correspondences between (pseudo)varieties of $\mathsf{T}$--algebras and 
(pseudo)equational $\mathsf{T}$--theories. As a consequence, we now clearly understand where ``varieties of languages'' come from and how to derive and find their defining 
properties in each particular case, e.g., derivatives come from the properties that characterize a $\mathsf{T}$--algebra (epi)morphism and closure under morphic 
preimages, which is a property that it is always present, comes from the substitution property in a (pseudo)equational theory.

\noindent  \textsc{The use of duality:} Related work such as \cite{gehrke0,gehrke,gehrke1} have influenced and motivated the use of duality in language theory to characterize 
recognizable languages and to derive local versions of Eilenberg--type correspondences. In fact, in this paper we show that duality is an ingredient to obtain (abstract) 
Eilenberg--type correspondences. The most important aspect of this is in each concrete case of an Eilenberg--type correspondence, in which, by using the interaction between 
algebra and coalgebra and equations and coequations, one can easily find the right notion of derivatives as shown in the examples. Previous categorical approaches for 
Eilenberg--type correspondences such as \cite{adamek,mikolaj,jste,urbat} have used duality, either explicitely or implicitely, but the fact that the dual of a pseudovariety of languages 
is exactly the pseudoequational theory that defines the given pseudovariety of algebras has not been brought to light to derive and understand Eilenberg--type correspondences. 
For instance:
\begin{enumerate}[i)]
	\item None of the other categorical approaches to derive Eilenberg--type correspondences relates the dual of a (pseudo)variety of languages with a (pseudo)equational theory 
		or equations to obtain the defining properties of a (pseudo)variety of languages.
	\item Derivatives are not directly obtained via duality in \cite{adamek,mikolaj,urbat}, no interaction between algebra and coalgebra or equations and coequations.
	\item Approaches such as \cite{adamek,urbat} require that the categories $\mathcal{C}_f$ and $\mathcal{D}_f$ are dual in order to obtain an Eilenberg--type correspondence, 
		which in our abstract Eilenberg--type correspondence theorem is not necessary.
\end{enumerate}
It's is worth mentioning that some of the aspects of this paper have been previously studied, either implicitely or explicitely, in relation with Eilenberg--type correspondences. In fact:
\begin{enumerate}[i)]
	\item In \cite[Section 4]{gehrke00}, it is mentioned, for the concrete case of monoids, that the dual of a ``local variety of languages'' will induce a set of (in)equations. 
		In the present paper, this is seen in Example \ref{exlpvv} for the free (ordered) monoid monad on $\set$ ($\poset$).
	\item In \cite{gehrke0}, recognizable subsets of an algebra with a single binary operation are studied, which in the present paper is the case of the local version of Example \ref{exuaf} 
		for the type of algebras $\T'$ which consists of a single binary operation. In fact, ``closure under residuals w.r.t. singleton denominators'' in \cite{gehrke0} is the same 
		as closure under derivatives with respect to $\T'$ in the sense of Example \ref{exuaf}.
	\item The duality between equations and coequations in the context of an Eilenberg--type correspondence is studied for the first time in \cite{jan1}. There, the 
		``varieties of languages'' considered have a closure property which is defined in terms of coequations \cite[Definition 40]{jan1}.
	\item The treatment of ``varieties of languages'' as sets of coequations was considered in \cite[Proposition 2]{jste}. There, in the conclusions, was also conjectured that 
		``varieties of languages'' are exactly (pseudo)coequational theories and their dual are the defining (pseudo)equational theories for the (pseudo)varieties of algebras.
\end{enumerate}

\noindent  \textsc{Syntactic algebras:} Another important observation and conclusion of the present paper is about the use of syntactic algebras. In Eilenberg's original proof \cite[Theorem 34]{eilenberg} the use of 
syntactic monoids (semigroups) \cite[VII.1]{eilenberg} helped to prove his theorem. As in Eilenberg's proof, the use of syntactic algebras was also used in 
\cite{mikolaj,pin,polak,reut,urbat,jste} for establishing Eilenberg--type correspondences. Categorical approaches such as \cite{mikolaj,jste,urbat} generalized the concept of syntactic 
algebra. In \cite{mikolaj,urbat} syntactic algebras are obtained, under mild assumptions, by means of a congruence, while in \cite{jste} are obtained by using generalized pushouts, 
under the condition that $T$ preserves weak generalized pushouts. As we saw in the present paper, the use of syntactic algebras is not needed in order to establish abstract 
Eilenberg--type correspondences. Nonetheless, the study of syntactic algebras has their own importance in language theory and categorical generalizations of them such 
as \cite{mikolaj,jste,urbat} might deserve a further study.

\subsection{Future work:} Some of the future work that can be done based on the work presented in this paper include the following:

\begin{enumerate}[i)]
	\item The relation of equational $\mathsf{T}$--theories with monad morphisms $\alpha:\mathsf{T}\to \mathsf{S}$ as in \cite{barr}. 
	\item To find new Eilenberg--type correspondences as an application of our abstract (local) Eilenberg--type theorem for (pseudo)varieties of $\mathsf{T}$--algebras.
	\item To study Eilenberg--type correspondences for other classes of algebras, e.g., Eilenberg--type correspondences for quasivarieties. Which can be made by modifying the notion 
		of an equational theory given in the present paper by allowing $\mathsf{T}$--algebra morphisms with arbitrary domain (not necessarily a free one) see, e.g., \cite{bana}.		
	\item To find applications of the Eilenberg--type correspondences we can derive from the present paper. One example of this could be 
		to characterize the (pseudo)equational $\mathsf{B}$--theory that defines a particular (pseudo)variety of $\mathsf{T}$--algebras. This kind of problem 
		has been studied before in which the pseudovariety of aperiodic monoids is defined by the variety of languages in which every language is star--free 
		\cite{schut}.
	\item The study and applications of the dual theorems in this paper. That is, coBirkhoff's theorem \cite{awodeyh,kurz,kurzr}, coReiterman's theorem and a new subject that we can 
		call coEilenberg--type correspondences which are naturally defined as one--to--one correspondences between (pseudo)covarieties of $\mathsf{B}$--coalgebras 
		and (pseudo)equational 	$\mathsf{T}$--theories.
	\item To develop and study a general theory for syntactic algebras. As we mentioned, syntactic algebras are not used in the present paper to establish abstract 
		Eilenberg--type correspondences. In a previous research made by the author in \cite{jste}, to prove a general Eilenberg--type theorem, syntactic algebras were also 
		considered and constructed abstractly as a generalized pushout \cite[Proposition 10]{jste}. General syntactic algebras were also considered and constructed in \cite{mikolaj,urbat}.
\end{enumerate}

\section*{Acknowledgements}
I would like to thank Marcello Bonsangue, Jan Rutten and Alexander Kurz for their support, comments, and suggestions during the writing of this paper. 
I thank Henning Urbat, Ji\v{r}\'i Ad\'amek, Liang--Ting Chen and Stefan Milius for earlier discussions we had on this subject. I also thank anonymous referees for their valuable comments 
and suggestions on an earlier version of this paper.


\newpage

\appendix
\section{}

\subsection{Details for Section \ref{prel}}

\subsubsection{Proof of Lemma \ref{lemmafs}}

\ \\ {\bf Lemma \ref{lemmafs}.} {\it 
	Let $\mathcal{D}$ be a category and $\mathscr{E}/\mathscr{M}$ be a factorization system on $\mathcal{D}$ such that every morphism in 
	$\mathscr{M}$ is mono. Then $f\circ g\in \mathscr{E}$ implies $f\in \mathscr{E}$.
}

\begin{proof}
	Put $f=m\circ e$ with $e\in \mathscr{E}$ and $m\in \mathscr{M}$. Then $m\circ e\circ g=f\circ g\in \mathscr{E}$. From the diagram
	\begin{center}	
				\begin{tikzpicture}[>=stealth,shorten >=3pt,
						node distance=2.5cm,on grid,auto,initial text=,
						accepting/.style={thick,double distance=2.5pt}]
						\node (a) at (0,0) {$\cdot$};
						\node (b) at (2,0) {$\cdot$};
						\node (c) at (0,1.5) {$\cdot$};
						\node (d) at (2,1.5) {$\cdot$};
						\path[right hook->] (a) edge [] node [below] {$m$} (b);
						\path[->>] (c) edge [] node [above] {$m\circ e\circ g$} (d);
						\path[->] (c) edge [] node [left] {$e\circ g$} (a)
								(d) edge [] node [right] {$id$} (b);
				\end{tikzpicture}
	\end{center}
	using the diagonal fill--in we get a morphism $d$ such that $m\circ d=id$. Now, from $m\circ d\circ m=m$, by using the fact that $m$ is mono, we have 
	$d\circ m=id$, i.e., $m$ is iso. Therefore $f=m\circ e\in \mathscr{E}$ since $e\in \mathscr{E}$ and $m$ is iso.
\end{proof}

\subsubsection{Proof of Lemma \ref{lemmafs1}}

\ \\ {\bf Lemma \ref{lemmafs1}.} {\it
	Let $\mathcal{D}$ be a category, $\mathsf{T}=(T,\eta,\mu)$ a monad on $\mathcal{D}$, and $\mathscr{E}/\mathscr{M}$ a proper factorization system on 
	$\mathcal{D}$. If $T$ preserves the morphisms in $\mathscr{E}$ then $\Alg(\mathsf{T})$ inherits the same $\mathscr{E}/\mathscr{M}$ factorization system. That is: 
	\begin{enumerate}[A)]
		\item Given $\mathbf{A}=(A,\alpha)$ and $\mathbf{B}=(B,\beta)$ such that $\mathbf{A},\mathbf{B}\in \Alg(\mathsf{T})$, if 
			$f\in \Alg(\mathsf{T})(\mathbf{A},\mathbf{B})$ is factored as $f=m\circ e$ in $\mathcal{D}$ where $e\in\mathscr{E}$, $m\in \mathscr{M}$ and $C\in \mathcal{D}$ is the 
			domain of $m$, then there 
			exists a unique $\gamma\in \mathcal{D}(TC,C)$ such that $\mathbf{C}=(C,\gamma)\in \Alg(\mathsf{T})$ and $m$ and $e$ are $\mathsf{T}$--algebra morphisms.
		\item Given any commutative diagram 
			\begin{center}	
				\begin{tikzpicture}[>=stealth,shorten >=3pt,
						node distance=2.5cm,on grid,auto,initial text=,
						accepting/.style={thick,double distance=2.5pt}]
						\node (a) at (0,0) {$C$};
						\node (b) at (2,0) {$D$};
						\node (c) at (0,1.5) {$A$};
						\node (d) at (2,1.5) {$B$};
						\path[right hook->] (a) edge [] node [below] {$m$} (b);
						\path[->>] (c) edge [] node [above] {$e$} (d);
						\path[->] (c) edge [] node [left] {$f$} (a)
								(d) edge [] node [right] {$g$} (b);
				\end{tikzpicture}
			\end{center}
			in $\Alg(\mathsf{T})$ with $e\in \mathscr{E}$ and $m\in \mathscr{M}$, the unique {\it diagonal fill--in} morphism $d$ such that the diagram 
			\begin{center}	
				\begin{tikzpicture}[>=stealth,shorten >=3pt,
						node distance=2.5cm,on grid,auto,initial text=,
						accepting/.style={thick,double distance=2.5pt}]
						\node (a) at (0,0) {$C$};
						\node (b) at (2,0) {$D$};
						\node (c) at (0,1.5) {$A$};
						\node (d) at (2,1.5) {$B$};
						\path[right hook->] (a) edge [] node [below] {$m$} (b);
						\path[->>] (c) edge [] node [above] {$e$} (d);
						\path[->] (c) edge [] node [left] {$f$} (a)
								(d) edge [] node [above] {$d$} (a)
								(d) edge [] node [right] {$g$} (b);
				\end{tikzpicture}
			\end{center}
			commutes is a morphism in $\Alg(\mathsf{T})$.
	\end{enumerate}
}

\begin{proof}\hfill
	\begin{enumerate}[A)]
		\item As $f\in \Alg(\mathsf{T})(A,B)$ and $f=m\circ e$, we get the following commutative diagram:
			\begin{center}	
			\begin{tikzpicture}[>=stealth,shorten >=3pt,
				node distance=2.5cm,on grid,auto,initial text=,
				accepting/.style={thick,double distance=2.5pt}]
				\node (a) at (0,0) {$A$};
				\node (b) at (2,0) {$C$};
				\node (c) at (4,0) {$B$};
				\node (d) at (0,2) {$TA$};
				\node (e) at (2,2) {$TC$};
				\node (f) at (4,2) {$TB$};
				\path[->>] (a) edge [] node [above] {$e$} (b)
						(d) edge [] node [below] {$Te$} (e);
				\path[right hook->] (b) edge [] node [below] {$m$} (c);
				\path[->] (d) edge [] node [left] {$\alpha$} (a)
						(e) edge [] node [below] {$Tm$} (f)
						(f) edge [] node [right] {$\beta$} (c);
				\path[->,dashed] (e) edge [] node [right] {$\gamma$} (b);
				\path[->,bend angle=30] (a) edge [bend right,looseness=0.7] node [below] {$f$} (c)
						(d) edge [bend left,looseness=0.7] node [above] {$Tf$} (f);
			\end{tikzpicture}
			\end{center}
			where $\gamma\in \mathcal{D}(TC,C)$ is obtained by the diagonal fill--in property since $Te\in \mathscr{E}$. Now, we prove that $C=(C,\gamma)\in \Alg(\mathsf{T})$. 
			In fact,
			\begin{enumerate}[i)]
				\item We prove that $\gamma\circ \eta_C=id_C$. In fact, we have the following commutative diagram:
					\begin{center}	
					\begin{tikzpicture}[>=stealth,shorten >=3pt,
						node distance=2.5cm,on grid,auto,initial text=,
						accepting/.style={thick,double distance=2.5pt}]
						\node (a) at (5.7,2) {$A$};
						\node (b) at (2,0) {$C$};
						\node (c) at (3.8,0) {$B$};
						\node (d) at (2,4) {$TA$};
						\node (e) at (2,2) {$TC$};
						\node (f) at (3.8,2) {$TB$};
						\node (g) at (0,4) {$A$};
						\node (h) at (0,2) {$C$};
						\node (i) at (5.7,0) {$C$};
						\path[->>] (a) edge [] node [right] {$e$} (i)
								(d) edge [] node [left] {$Te$} (e)
								(g) edge [] node [left] {$e$} (h);
						\path[right hook->] (b) edge [] node [below] {$m$} (c);
						\path[left hook->]	(i) edge [] node [below] {$m$} (c);
						\path[->] (d) edge [] node [above] {$\alpha$} (a)
								(e) edge [] node [below] {$Tm$} (f)
								(f) edge [] node [right] {$\beta$} (c)
								(e) edge [] node [right] {$\gamma$} (b)
								(a) edge [] node [below] {$f$} (c)
								(d) edge [] node [right] {$Tf$} (f)
								(g) edge [] node [above] {$\eta_A$} (d)
								(h) edge [] node [below] {$\eta_C$} (e);
					\end{tikzpicture}
					\end{center}
					from this, starting from $A$ at the top left corner and finishing at $B$ we have that $m\circ \gamma \circ \eta_C \circ e=m\circ e\circ \alpha\circ \eta_A$. 
					From that equation, using the fact that $\alpha\circ \eta_A=id_A$, since $A\in \Alg(\mathsf{T})$, and the fact that $m$ is mono and $e$ is epi, we have that 
					$\gamma\circ \eta_C=id_C$.
				\item We prove that $\gamma\circ \mu_C=\gamma\circ T\gamma$. In fact, we have the following commutative diagram:
					\begin{center}	
					\begin{tikzpicture}[>=stealth,shorten >=3pt,
						node distance=2.5cm,on grid,auto,initial text=,
						accepting/.style={thick,double distance=2.5pt}]
						\node (a) at (0,0) {$TC$};
						\node (b) at (3,0) {$C$};
						\node (c) at (6,0) {$B$};
						\node (d) at (3,1) {$TB$};
						\node (e) at (-1,3) {$TTC$};
						\node (f) at (1,3) {$TA$};
						\node (g) at (5,3) {$TB$};
						\node (h) at (7,3) {$C$};
						\node (i) at (3,3) {$TTB$};
						\node (j) at (3,5) {$TA$};
						\node (k) at (0,6) {$TTA$};
						\node (l) at (3,6) {$TTC$};
						\node (m) at (6,6) {$TC$};
						\path[->>] (f) edge [] node [right] {$Te$} (a)
								(k) edge [] node [left] {$TTe$} (e)
									edge [] node [above] {$TTe$} (l)
								(j) edge [] node [above] {$Te$} (m);
						\path[right hook->] (b) edge [] node [below] {$m$} (c);
						\path[left hook->]	(h) edge [] node [right] {$m$} (c);
						\path[->] (a) edge [] node [below] {$\gamma$} (b)
									edge [] node [above] {$Tm$} (d)
								(g) edge [] node [left] {$\beta$} (c)
								(m) edge [] node [left] {$Tm$} (g)
									edge [] node [right] {$\gamma$} (h)
								(d) edge [] node [above] {$\beta$} (c)
								(l) edge [] node [above] {$\mu_C$} (m)
								(k) edge [] node [above] {$\mu_A$} (j)
									edge [] node [right] {$T\alpha$} (f)
									edge [] node [right] {$TTf$} (i)
								(e) edge [] node [left] {$T\gamma$} (a)
								(f) edge [] node [above] {$\ Tf$} (d)
								(i) edge [] node [right] {$T\beta$} (d)
									edge [] node [above] {$\mu_B$} (g)
								(j) edge [] node [right] {$Tf$} (g);
					\end{tikzpicture}
					\end{center}
					Then by following the external arrows we get that $m\circ \gamma\circ \mu_C\circ TTe=m\circ \gamma\circ T\gamma\circ TTe$. Then, from that equation, since 
					$m$ is mono and $TTe$ is epi, we have that $\gamma\circ \mu_C=\gamma\circ T\gamma$.
			\end{enumerate}
		\item Put $\mathbf{B}=(B,\beta)$ and $\mathbf{C}=(C,\gamma)$. Then we have that $m\circ \gamma\circ Td\circ Te=m\circ d\circ \beta\circ Te$. From that equality, 
			since $m$ is mono and $Te$ is epi, we get $\gamma\circ Td=d\circ \beta$, i.e., $d\in \Alg(\mathsf{T})(\mathbf{B},\mathbf{C})$.
	\end{enumerate}
\end{proof}

\subsection{Details for Section \ref{secbirk}}

\subsubsection{Proof of Birkhoff's Theorem for $\mathsf{T}$--algebras}

The proof for Birkhoff's theorem for $\mathsf{T}$--algebras can be made by following the same ideas for standard proofs of Birkhoff's theorem, see, e.g., \cite{bys}. In our case 
we deal with equational $\mathsf{T}$--theories and we have a fixed the subcategory $\mathcal{D}_0$ of ``variables'', which is the main difference with respect to some other versions 
such as \cite{awodeyh,bana,barr}. We will derive Theorem \ref{birkthm} from the following basic facts and some facts from \cite{ahs,bana}.

\begin{lemma}\label{lemmafree}
	Let $\mathcal{D}$ be a category, $\mathscr{E}/\mathscr{M}$ a factorization system on $\mathcal{D}$, $\mathsf{T}=(T,\eta,\mu)$ a monad on $\mathcal{D}$ and 
	$\mathcal{D}_0$ a full subcategory of $\mathcal{D}$. Assume (B2). Let $\E=\{TX\overset{e_X}{\lepi} Q_X\}_{X\in \mathcal{D}_0}$ be an 
	equational $\mathsf{T}$--theory on $\mathcal{D}_0$. Then $\mathbf{Q_X}\in \md(\E)$ for every $X\in \mathcal{D}_0$.
\end{lemma}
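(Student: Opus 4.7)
The plan is to unpack the definition of $\mathbf{A} \models \E$ and show it directly for $\mathbf{A} = \mathbf{Q_X}$. Concretely, fixing $X \in \mathcal{D}_0$, I need to verify that for every $Y \in \mathcal{D}_0$ and every $f \in \Alg(\mathsf{T})(\mathbf{TY}, \mathbf{Q_X})$, there exists a $\mathsf{T}$-algebra morphism $g_f \colon \mathbf{Q_Y} \to \mathbf{Q_X}$ with $f = g_f \circ e_Y$.

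First I would use assumption (B2) applied to $\mathbf{TY}$: since $\mathbf{TY}$ is projective with respect to $\mathscr{E}$ in $\Alg(\mathsf{T})$, and $e_X \colon \mathbf{TX} \twoheadrightarrow \mathbf{Q_X}$ belongs to $\mathscr{E}$, the morphism $f \in \Alg(\mathsf{T})(\mathbf{TY}, \mathbf{Q_X})$ lifts through $e_X$: there exists $h \in \Alg(\mathsf{T})(\mathbf{TY}, \mathbf{TX})$ such that
\[
e_X \circ h = f.
\]
This is the essential step where the projectivity hypothesis on free algebras with generators in $\mathcal{D}_0$ is used; it transports the arbitrary assignment $f$ into an assignment between two free algebras that come from $\mathcal{D}_0$.

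Next I would invoke the defining property of an equational $\mathsf{T}$-theory (Definition \ref{defeqth}): since $h \in \Alg(\mathsf{T})(\mathbf{TY}, \mathbf{TX})$ and both $X, Y \in \mathcal{D}_0$, there exists $h' \in \Alg(\mathsf{T})(\mathbf{Q_Y}, \mathbf{Q_X})$ making the square
\[
h' \circ e_Y = e_X \circ h
\]
commute. Combining the two equalities yields $f = e_X \circ h = h' \circ e_Y$, so taking $g_f := h'$ gives the required factorization, proving $\mathbf{Q_X} \models \E$.

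There is no genuine obstacle here; the argument is essentially a diagram chase. The only subtlety worth flagging is conceptual: the lemma is exactly the statement that projectivity of free algebras on objects of $\mathcal{D}_0$ (assumption (B2)) together with substitution closure of an equational theory ``on $\mathcal{D}_0$'' cooperate to make each $\mathbf{Q_X}$ into a ``generic'' model of $\E$. This is why (B2) is precisely tailored to $\mathcal{D}_0$ rather than to all of $\mathcal{D}$, and it explains the role of $\mathcal{D}_0$ pointed out after the list of assumptions in the paper.
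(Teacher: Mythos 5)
Your proof is correct and follows exactly the paper's argument: lift $f$ through $e_X$ using the projectivity assumption (B2), then apply the substitution-closure square of the equational theory to obtain $g' \circ e_Y = e_X \circ h = f$. Nothing to add.
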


\begin{proof}
	Let $Y\in \mathcal{D}_0$ and let $f\in \Alg(\mathsf{T}) (\mathbf{TY},\mathbf{Q_X})$. Then we have the following commutative diagram:
	\begin{center}	
		\begin{tikzpicture}[>=stealth,shorten >=3pt,
			node distance=2.5cm,on grid,auto,initial text=,
			accepting/.style={thick,double distance=2.5pt}]
			\node (b) at (4,2) {$Q_X$};
			\node (c) at (0,2) {$TY$};
			\node (d) at (2,2.4) {$Q_Y$};
			\node (e) at (6,2) {$TX$};
			\path[->>] (c) edge [] node [above] {$e_Y$} (d);
			\path[->>] (e) edge [] node [above] {$e_X$} (b);
			\path[->,dashed] (d) edge [] node [above] {$g'$} (b);
			\path[->,bend angle=10] (c) edge [bend right,looseness=0.7] 
						node [below] {$f$} (b);
			\path[->,bend angle=45,dashed] (c) edge [bend left,looseness=0.7] 
						node [above] {$g$} (e);			
		\end{tikzpicture}
	\end{center}
	where $g\in \Alg(\mathsf{T})(\mathbf{TY},\mathbf{TX})$ is obtained from $f$ and $e_X$ using assumption (B2) and $g'$ is obtained from the fact that $\E$ is an 
	equational $\mathsf{T}$--theory. Therefore, $f$ factors through $e_Y$ and hence $\mathbf{Q_X}\in \md(\E)$.
\end{proof}

\begin{proposition}\label{injth}
	Let $\mathcal{D}$ be a category, $\mathscr{E}/\mathscr{M}$ a factorization system on $\mathcal{D}$, $\mathsf{T}=(T,\eta,\mu)$ a monad on $\mathcal{D}$ and 
	$\mathcal{D}_0$ a full subcategory of $\mathcal{D}$. Assume (B1) and (B2). For $i=1,2$, let $\E_i=\{TX\overset{(e_i)_X}{\lepi} (Q_i)_X\}_{X\in \mathcal{D}_0}$ be an 
	equational $\mathsf{T}$--theory on $\mathcal{D}_0$. If $\E_1\neq \E_2$ then $\md (\E_1)\neq \md(\E_2)$.
\end{proposition}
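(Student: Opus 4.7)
The plan is to prove the contrapositive: assume $\md(\E_1) = \md(\E_2)$ and deduce $\E_1 = \E_2$, where equality of equational theories is understood, in the natural way, up to isomorphism of the quotient objects $Q_X$ over $\mathbf{TX}$ (since the families consist of morphisms in $\mathscr{E}$, each $(e_i)_X$ is determined up to such isomorphism by its corresponding quotient of $\mathbf{TX}$).

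First I would apply Lemma \ref{lemmafree} to both theories: for every $X \in \mathcal{D}_0$, $\mathbf{(Q_1)_X} \in \md(\E_1)$ and $\mathbf{(Q_2)_X} \in \md(\E_2)$. Using the assumed equality $\md(\E_1) = \md(\E_2)$, we obtain $\mathbf{(Q_1)_X} \in \md(\E_2)$ and $\mathbf{(Q_2)_X} \in \md(\E_1)$. Unfolding the definition of $\mathbf{A} \models \E$ and applying it to the specific $\mathsf{T}$-algebra morphism $(e_1)_X \in \Alg(\mathsf{T})(\mathbf{TX}, \mathbf{(Q_1)_X})$ (with the choice $Y = X$ in the definition of $\md(\E_2)$), we get $h_1 \in \Alg(\mathsf{T})(\mathbf{(Q_2)_X}, \mathbf{(Q_1)_X})$ such that $h_1 \circ (e_2)_X = (e_1)_X$. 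Symmetrically, applying the same reasoning to $(e_2)_X \in \Alg(\mathsf{T})(\mathbf{TX}, \mathbf{(Q_2)_X})$ and the model $\mathbf{(Q_2)_X} \in \md(\E_1)$ produces $h_2 \in \Alg(\mathsf{T})(\mathbf{(Q_1)_X}, \mathbf{(Q_2)_X})$ with $h_2 \circ (e_1)_X = (e_2)_X$.

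Next I would verify that $h_1$ and $h_2$ are mutually inverse. Composing the two identities we find
\[
h_1 \circ h_2 \circ (e_1)_X = h_1 \circ (e_2)_X = (e_1)_X = id_{(Q_1)_X} \circ (e_1)_X,
\]
and since $(e_1)_X \in \mathscr{E}$ is an epimorphism by assumption (B1), it can be cancelled on the right, giving $h_1 \circ h_2 = id_{(Q_1)_X}$. An identical argument using that $(e_2)_X$ is epi yields $h_2 \circ h_1 = id_{(Q_2)_X}$. Therefore $h_1$ is an isomorphism in $\Alg(\mathsf{T})$ under $\mathbf{TX}$, i.e., the quotient morphisms $(e_1)_X$ and $(e_2)_X$ are isomorphic, hence they represent the same element of the equational theory at $X$. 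Since this holds for every $X \in \mathcal{D}_0$, we conclude $\E_1 = \E_2$.

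There is no real obstacle in this argument: everything falls out once Lemma \ref{lemmafree} puts the quotients themselves into the models of each theory, after which the proper factorization assumption (B1) gives the cancellation needed to promote the mutual factorizations into inverse isomorphisms. The only subtlety worth flagging explicitly in the write-up is that equational theories are compared up to the canonical notion of equivalence on $\mathscr{E}$-quotients of $\mathbf{TX}$, which is the only sensible notion given Definition \ref{defeqth}.
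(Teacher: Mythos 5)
Your proposal is correct and follows essentially the same route as the paper: both invoke Lemma \ref{lemmafree} to place $\mathbf{(Q_i)_X}$ in $\md(\E_i)$, extract the two mutual factorizations from membership in the other theory's models, and use that $(e_i)_X$ is epi (assumption (B1)) to cancel and obtain the isomorphism; the paper merely packages this as a contradiction inside the assumption $\E_1\neq\E_2$ rather than as a direct contrapositive. Your explicit remark that theories are compared up to isomorphism over $\mathbf{TX}$ matches the paper's own convention.
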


\begin{proof} 
	As $\E_1\neq \E_2$, there exists $X\in \mathcal{D}_0$ such that $(e_1)_X\neq (e_2)_X$, i.e., there is no isomorphism 
	$\phi\in \Alg(\mathsf{T})(\mathbf{(Q_1)_X},\mathbf{(Q_2)_X})$ such that $\phi\circ (e_1)_X=(e_2)_X$. We have that $\mathbf{(Q_1)_X}\notin \md(\E_2)$ or 
	$\mathbf{(Q_2)_X}\notin \md(\E_1)$. In fact, if we assume by contradiction that $\mathbf{(Q_1)_X}\in \md(\E_2)$ and 
	$\mathbf{(Q_2)_X}\in \md(\E_1)$ then, from the fact that $\mathbf{(Q_1)_X}\in \md(\E_2)$, we get the commutative diagram:
	\begin{center}	
		\begin{tikzpicture}[>=stealth,shorten >=3pt,
				node distance=2.5cm,on grid,auto,initial text=,
				accepting/.style={thick,double distance=2.5pt}]
				\node (b) at (5,2) {$(Q_1)_X$};
				\node (c) at (0,2) {$TX$};
				\node (d) at (2.5,2) {$(Q_2)_X$};
				\path[->>] (c) edge [] node [above] {$(e_2)_X$} (d);
				\path[->,dashed] (d) edge [] node [above] {$g_{21}$} (b);
				\path[->>,bend angle=40] (c) edge [bend left,looseness=0.7] 
							node [above] {$(e_1)_X$} (b);
		\end{tikzpicture}
	\end{center}
	i.e., there exists $g_{21}\in \Alg(\mathsf{T})(\mathbf{(Q_2)_X},\mathbf{(Q_1)_X})$ such that $g_{21}\circ (e_2)_X=(e_1)_X$. Similarly, from the fact that 
	$\mathbf{(Q_2)_X}\in \md(\E_1)$, we get that there exists $g_{12}\in \Alg(\mathsf{T})(\mathbf{(Q_1)_X},\mathbf{(Q_2)_X})$ such that $g_{12}\circ (e_1)_X=(e_2)_X$. 
	Hence we have that:
	\[
		(e_2)_X=g_{12}\circ (e_1)_X=g_{12}\circ g_{21}\circ (e_2)_X
	\]
	which implies that $g_{12}\circ g_{21}=id_{(Q_2)_X}$ since $(e_2)_X$ is epi by (B1). Similarly, $g_{21}\circ g_{12}=id_{(Q_1)_X}$, which implies that 
	$g_{12}$ is an isomorphism such that $g_{12}\circ (e_1)_X=(e_2)_X$ which is a contradiction. Hence $\mathbf{(Q_1)_X}\notin \md(\E_2)$ or 
	$\mathbf{(Q_2)_X}\notin \md(\E_1)$ and, by the previous lemma, $\mathbf{(Q_i)_X}\in \md(\E_i)$, which implies that $\md (\E_1)\neq \md(\E_2)$.
\end{proof}

\noindent The next proposition shows that, under conditions (B1), (B2) and (B4), every class defined by an equational $\mathsf{T}$--theory is a variety of $\mathsf{T}$--algebras.

\begin{proposition}\label{hspclosureb}
	Let $\mathcal{D}$ be a complete category, $\mathsf{T}=(T,\eta,\mu)$ a monad on $\mathcal{D}$, $\mathscr{E}/\mathscr{M}$ a factorization system on $\mathcal{D}$ and 
	$\mathcal{D}_0$ a full subcategory of $\mathcal{D}$. Assume (B1), (B2) and (B4). Let $\E$ be an equational $\mathsf{T}$--theory on $\mathcal{D}_0$. Then 
	$\md(\E)$ is a variety of $\mathsf{T}$--algebras.
\end{proposition}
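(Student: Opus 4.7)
The plan is to verify each of the three closure properties defining a variety of $\mathsf{T}$--algebras directly from the definition of satisfaction, using one of the assumptions (B1), (B2), or (B4) to handle each case.

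First I would handle closure under products, which needs nothing beyond completeness of $\mathcal{D}$ and the universal property. Given a family $\{\mathbf{A}_i\}_{i\in I}$ in $\md(\E)$ with product $\mathbf{A} = \prod_i \mathbf{A}_i$ and projections $\pi_i$, and given any $X \in \mathcal{D}_0$ and $f \in \Alg(\mathsf{T})(\mathbf{TX}, \mathbf{A})$, each composite $\pi_i \circ f$ factors through $e_X$ as $\pi_i \circ f = g_i \circ e_X$ since $\mathbf{A}_i \models \E$. Assembling the $g_i$ via the universal property of the product yields $g : \mathbf{Q_X} \to \mathbf{A}$ with $\pi_i \circ g \circ e_X = \pi_i \circ f$ for all $i$, hence $g \circ e_X = f$, so $\mathbf{A} \models \E$.

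Next I would treat closure under $\mathscr{M}$--subalgebras. Suppose $\mathbf{A} \in \md(\E)$ and $m \in \Alg(\mathsf{T})(\mathbf{B}, \mathbf{A}) \cap \mathscr{M}$. Given $f \in \Alg(\mathsf{T})(\mathbf{TX}, \mathbf{B})$, the composite $m \circ f$ factors through $e_X$ as $m \circ f = h \circ e_X$ for some $h \in \Alg(\mathsf{T})(\mathbf{Q_X}, \mathbf{A})$. Now consider the commutative square with $e_X \in \mathscr{E}$ on the left and $m \in \mathscr{M}$ on the right, with $f$ on top and $h$ on the bottom. By (B1) and (B4), Lemma \ref{lemmafs1} lifts the factorization system to $\Alg(\mathsf{T})$, so the diagonal fill--in produces a $\mathsf{T}$--algebra morphism $d : \mathbf{Q_X} \to \mathbf{B}$ with $d \circ e_X = f$, showing $\mathbf{B} \models \E$.

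Finally, for closure under $\mathscr{E}$--quotients, suppose $\mathbf{A} \in \md(\E)$ and $e \in \Alg(\mathsf{T})(\mathbf{A}, \mathbf{B}) \cap \mathscr{E}$. Given $f \in \Alg(\mathsf{T})(\mathbf{TX}, \mathbf{B})$, use (B2): since $\mathbf{TX}$ is projective with respect to $\mathscr{E}$, there is $g \in \Alg(\mathsf{T})(\mathbf{TX}, \mathbf{A})$ with $e \circ g = f$. Since $\mathbf{A} \models \E$, we can write $g = g' \circ e_X$, hence $f = e \circ g' \circ e_X$ factors through $e_X$, so $\mathbf{B} \models \E$. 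The conceptual heart of the argument is this last step, where projectivity is exactly the property needed to move satisfaction across an $\mathscr{E}$--quotient; this is also the step that would fail without assumption (B2), making it the key ingredient among the three. Combining the three closures shows $\md(\E)$ is a variety of $\mathsf{T}$--algebras.
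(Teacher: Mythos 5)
Your proof is correct and follows essentially the same route as the paper's: projectivity (B2) for $\mathscr{E}$--quotients, factoring $m\circ f$ through $e_X$ and then applying the diagonal fill--in of the lifted factorization system for $\mathscr{M}$--subalgebras, and the universal property of the product for products. The only cosmetic difference is that the paper concludes the subalgebra case by cancelling the mono $m$ rather than citing the fill--in's commutativity directly; these are equivalent.
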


\begin{proof}
	$\md (\E)$ is nonempty by Lemma \ref{lemmafree}. Put $\E=\{TX\overset{e_X}{\lepi} Q_X\}_{X\in \mathcal{D}_0}$, then:
	\begin{enumerate}[i)]
		\item $\md(\E)$ is closed under $\mathscr{E}$--quotients: Let $\mathbf{A},\mathbf{B}\in \Alg(\mathsf{T})$ with $\mathbf{A}\in \md(\E)$ and let 
			$e\in \Alg(\mathsf{T})(\mathbf{A},\mathbf{B})\cap \mathscr{E}$. Let $f\in \Alg(\mathsf{T}) (\mathbf{TX},\mathbf{B})$ such that $X\in \mathcal{D}_0$, then we 
			have the following commutative diagram:
			\begin{center}	
				\begin{tikzpicture}[>=stealth,shorten >=3pt,
					node distance=2.5cm,on grid,auto,initial text=,
					accepting/.style={thick,double distance=2.5pt}]
					\node (b) at (4,2) {$A$};
					\node (c) at (0,2) {$TX$};
					\node (d) at (2,2) {$Q_X$};
					\node (e) at (6,2) {$B$};
					\path[->>] (c) edge [] node [above] {$e_X$} (d)
							(b) edge [] node [above] {$e$} (e);
					\path[->,dashed] (d) edge [] node [above] {$g_k$} (b);
					\path[->,bend angle=35] (c) edge [bend left,looseness=0.6] 
								node [above] {$k$} (b);
					\path[->,bend angle=50] (c) edge [bend left,looseness=0.7] 
								node [above] {$f$} (e);
				\end{tikzpicture}
			\end{center}
			where $k\in \Alg(\mathsf{T})(\mathbf{TX},\mathbf{A})$ was obtained from $f$ using (B2) and $g_k\in \Alg(\mathsf{T})(\mathbf{Q_X},\mathbf{A})$ from the fact that 
			$\mathbf{A}\in \md (\E)$. Therefore $f$ factors through $e_X$, i.e., $\mathbf{B}\in \md (\E)$.
		\item $\md(\E)$ is closed under $\mathscr{M}$--subalgebras: Let $\mathbf{A},\mathbf{B}\in \Alg(\mathsf{T})$ with $\mathbf{A}\in \md(\E)$ and let 
			$m\in \Alg(\mathsf{T})(\mathbf{B},\mathbf{A})\cap\mathscr{M}$. Let $f\in \Alg(\mathsf{T}) (\mathbf{TX},\mathbf{B})$ such that $X\in\mathcal{D}_0$, 
			then we have the following commutative diagram:
			\begin{center}	
				\begin{tikzpicture}[>=stealth,shorten >=3pt,
					node distance=2.5cm,on grid,auto,initial text=,
					accepting/.style={thick,double distance=2.5pt}]
					\node (b) at (4,2) {$B$};
					\node (c) at (0,2) {$TX$};
					\node (d) at (2,2.4) {$Q_X$};
					\node (e) at (6,2) {$A$};
					\path[->>] (c) edge [] node [above] {$e_X$} (d);
					\path[right hook->] (b) edge [] node [above] {$m$} (e);
					\path[->,dashed] (d) edge [] node [above] {$k$} (b);
					\path[->,bend angle=10] (c) edge [bend right,looseness=0.7] 
								node [below] {$f$} (b);
					\path[->,bend angle=25,dashed] (d) edge [bend left,looseness=0.7] 
								node [above] {$g_{m\circ f}$} (e);

				\end{tikzpicture}
			\end{center}
			where $g_{m\circ f}\in \Alg(\mathsf{T})(\mathbf{Q_X},\mathbf{A})$ was obtained from the fact that $\mathbf{A}\in \md (\E)$, and 
			$k\in \Alg(\mathsf{T})(\mathbf{Q_X},\mathbf{B})$ was obtained by the diagonal fill--in property of the factorization system $\mathscr{E}/\mathscr{M}$. 
			Since $m$ is mono, from $m\circ k\circ e_X=m\circ f$, we get $k\circ e_X=f$ which implies that $\mathbf{B}\in \md (\E)$. 
		\item $\md(\E)$ is closed under products: Let $\mathbf{A}_i\in \md(\E)$, $i\in I$, and let $\mathbf{A}=\prod_{i\in I}\mathbf{A}_i$ be their product in $\Alg(\mathsf{T})$ 
			with projections $\pi_i:A\to A_i$. Let $f\in \Alg(\mathsf{T}) (\mathbf{TX},\mathbf{A})$ such that $X\in \mathcal{D}_0$, then we have the following commutative diagram:
			\begin{center}	
				\begin{tikzpicture}[>=stealth,shorten >=3pt,
					node distance=2.5cm,on grid,auto,initial text=,
					accepting/.style={thick,double distance=2.5pt}]
					\node (b) at (4,2) {$A$};
					\node (c) at (0,2) {$TX$};
					\node (d) at (2,2.4) {$Q_X$};
					\node (e) at (6,2) {$A_i$};
					\path[->>] (c) edge [] node [above] {$e_X$} (d);
					\path[->] (b) edge [] node [above] {$\pi_i$} (e);
					\path[->,dashed] (d) edge [] node [above] {$g$} (b);
					\path[->,bend angle=10] (c) edge [bend right,looseness=0.7] 
								node [below] {$f$} (b);
					\path[->,bend angle=25,dashed] (d) edge [bend left,looseness=0.7] 
								node [above] {$g_{\pi_i\circ f}$} (e);
				\end{tikzpicture}
			\end{center}
			where $g_{\pi_i\circ f}\in \Alg(\mathsf{T})(\mathbf{Q_X},\mathbf{A}_i)$ was obtained from the fact that 
			$\mathbf{A}_i\in \md (\E)$, and $g\in \Alg(\mathsf{T})(\mathbf{Q_X},\mathbf{A})$ was obtained by the universal property of the product. Finally, we have that 
			$g\circ e_X=f$ since $\pi_i\circ g\circ e_X=\pi_i\circ f$ for every $i\in I$.
	\end{enumerate}
\end{proof}

\noindent Theorem \ref{birkthm} follows from \cite{bana} as follows. We have that the facts in \cite{bana} hold for any $(E,M)$--category as it is mentioned before 
\cite[Example 1]{bana} (see \cite{herrlich} for basic facts and examples about $(E,M)$--categories). Now, by the assumptions of Theorem \ref{birkthm} we have that 
the category $\Alg(\mathsf{T})$ is an $(E,M)$--category, by \cite[Corollary 15.21]{ahs}, and hence a class $K$ of $\mathsf{T}$--algebras, viewed as a full subcategory of 
$\Alg(\mathsf{T})$, is $E$--equational in $\Alg(\mathsf{T})$ if and only if $K$ is a variety. In this case, the inclusion functor $H:K\to \Alg(\mathsf{T})$ has a left adjoint 
$F:\Alg(\mathsf{T})\to K$ such that its unit $\eta: Id_{\Alg(\mathsf{T})}\Rightarrow HF$ is such that all its componets are in $E=\mathscr{E}$ and, by the assumptions, 
the equational $\mathsf{T}$--theory $\E=\{TX\overset{\eta_\mathbf{TX}}{\lepi} FHTX\}_{X\in \mathcal{D}_0}$ defines $K$ (see, e.g., \cite[Proposition 3 and Remark 1]{bana}). 
The fact that $\E$ is an equational $\mathsf{T}$--theory follows from naturality of $\eta$. Finally, by Proposition \ref{hspclosureb} every equational $\mathsf{T}$--theory 
defines a variety and the correspondence between equational $\mathsf{T}$--theories and varieties is bijective by Proposition \ref{injth} and uniqueness of left adjoint.

\subsubsection{Details for Example \ref{exjan11}}\label{aex1}

We prove that the notion of a coequational $\mathsf{B}$--theory coincides with the notion of a ``variety of languages'' given in \cite[Definition 35]{jan1}.

\begin{defi}[\text{\cite[Definition 35]{jan1}}]
	A {\it variety of languages} is an operator $\mathcal{V}$ on $\set$ such that for every $X\in \set$, $\mathcal{V}(X)\subseteq \set(X^*,2)$ and it satisfies the following:
	\begin{enumerate}[i)]
		\item for every $L\in \mathcal{V}(X)$ we have that $\coeq(X^*/\eq \langle L\rangle)\subseteq \mathcal{V}(X)$;
		\item if $\coeq(X^*/C_i)\subseteq \mathcal{V}(X)$, where $C_i$ a monoid congruence of $X^*$, $i\in I$, then we have that $\coeq(X^*/\bigcap_{i\in I}C_i)\subseteq \mathcal{V}(X)$;
		\item for every $Y\in \set$, if $L\in \mathcal{V}(Y)$ and $\eta:Y^*\to Y^*/\eq \langle L\rangle$ denotes the quotient morphism, then for each monoid morphism 
			$\varphi: X^*\to Y^*$ we have $\coeq(X^*/\ker(\eta\circ \varphi))\subseteq \mathcal{V}(X)$.
	\end{enumerate}
\end{defi}

\begin{defi}
	A coequational $\mathsf{B}$--theory is an operator $\mathcal{L}$ on $\set$ such that for every $X\in \set$ we have that:
	\begin{enumerate}[i)]
		\item $\mathcal{L}(X)\in \CABA$ and it is a subalgebra of $\set(A^*,2)$.
		\item $\mathcal{L}(X)$ is closed under left and right derivatives. That is, if $L\in \mathcal{L}(X)$ and $x\in X$ then $_xL,L_x\in \mathcal{L}(X)$, where $_xL(w)=L(wx)$ 
			and $L_x(w)=L(xw)$, $w\in X^*$.
		\item $\mathcal{L}$ is closed under morphic preimages. That is, for every $Y\in \set$, homomorphism of monoids $h:Y^*\to X^*$ and $L\in \mathcal{L}(X)$, we have that 
			$L\circ h\in \mathcal{L}(Y)$.
	\end{enumerate}
\end{defi}

\noindent The equivalence of a coequational $\mathsf{B}$--theory with the operator $\Lan$ defined above follows from Example \ref{exua} (see also Example \ref{exvv1}).

\noindent We show that the two notions above coincide. We prove that every $\mathcal{V}$ above satisfies the conditions of the $\mathcal{L}$ above and vice versa. 

\begin{lemma}\label{lj1}
	For every $X\in \set$ and $L\in \set(X^*,2)$ we have that $\coeq (X^*/\eq \langle L\rangle)=\langle\mkern-3mu\langle L\rangle\mkern-3mu\rangle$ where 
	$\langle\mkern-3mu\langle L\rangle\mkern-3mu\rangle$ is the $\mathsf{B}$--coalgebra generated by $L$. 
\end{lemma}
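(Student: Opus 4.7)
The plan is to establish both inclusions via the duality between $\set$ and $\CABA$ (lifted to $\Alg(\mathsf{T})$ and $\Coalg(\mathsf{B})$). I first recall the two objects: by construction, $\coeq(X^*/\eq\langle L\rangle)$ is the image in $\set(X^*,2)$ of $\set(q_L,2)$, where $q_L\colon X^*\twoheadrightarrow X^*/\eq\langle L\rangle$ is the quotient by the syntactic congruence of $L$; while $\langle\!\langle L\rangle\!\rangle$ is the smallest $\mathsf{B}$-subcoalgebra of $\set(X^*,2)$ containing $L$ (i.e.\ the smallest complete atomic Boolean subalgebra containing $L$ that is closed under left and right derivatives).

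For the inclusion $\coeq(X^*/\eq\langle L\rangle)\supseteq \langle\!\langle L\rangle\!\rangle$, I would verify that $\coeq(X^*/\eq\langle L\rangle)$ is a $\mathsf{B}$-subcoalgebra of $\set(X^*,2)$ that contains $L$, and then appeal to the minimality of $\langle\!\langle L\rangle\!\rangle$. It contains $L$ because, by definition of the syntactic congruence, $L$ factors through $q_L$. It is a complete atomic Boolean subalgebra because $\set(q_L,2)$ is an injective morphism in $\CABA$. Closure under derivatives follows from the fact that $q_L$ is a monoid homomorphism: for any $f\in\set(X^*/\eq\langle L\rangle,2)$ and any $x\in X$, the left derivative ${}_x(f\circ q_L)$ equals $(f\circ L_{q_L(x)})\circ q_L$, where $L_{q_L(x)}$ denotes left multiplication by $q_L(x)$ in the syntactic monoid, and analogously for right derivatives. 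This is exactly the reasoning used in item b) of Example~\ref{exuaf}, applied to $T_KX=X^*$ and the type of monoids.

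For the reverse inclusion $\coeq(X^*/\eq\langle L\rangle)\subseteq\langle\!\langle L\rangle\!\rangle$, I would use the duality from Proposition~\ref{monadtocomonad}. The subcoalgebra $\langle\!\langle L\rangle\!\rangle$ corresponds, under duality, to a surjective monoid homomorphism $q\colon X^*\twoheadrightarrow M$ with $\Imag(\set(q,2))=\langle\!\langle L\rangle\!\rangle$, via the construction described in item b) of Example~\ref{exuaf} (the relation $\theta_S$ for $S=\langle\!\langle L\rangle\!\rangle$). Since $L\in\langle\!\langle L\rangle\!\rangle=\Imag(\set(q,2))$, the language $L$ factors through $q$, which means $\ker(q)\subseteq \eq\langle L\rangle$ by the maximality characterization of the syntactic congruence. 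Therefore $q_L$ factors through $q$, and dually $\set(q_L,2)$ factors through $\set(q,2)$, which gives $\coeq(X^*/\eq\langle L\rangle)=\Imag(\set(q_L,2))\subseteq \Imag(\set(q,2))=\langle\!\langle L\rangle\!\rangle$.

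The main obstacle is a bookkeeping one rather than a conceptual one: checking that the Boolean-algebra/coalgebra structure on $\langle\!\langle L\rangle\!\rangle$ coming from closure under Boolean combinations and derivatives coincides, under the duality, with the $\mathsf{T}$-algebra quotient structure on $X^*/\eq\langle L\rangle$; this is precisely the content of item b) of Example~\ref{exuaf} for the particular type $\T=\{e,\cdot\}$ of monoids, so no new argument is needed beyond citing that construction.
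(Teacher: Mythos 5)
Your proof is correct and is essentially an expanded version of the paper's own (much terser) argument: the paper cites that $X^*/\eq\langle L\rangle$ is the syntactic monoid of $L$ and asserts that its universal property dualizes to the defining property of $\langle\mkern-3mu\langle L\rangle\mkern-3mu\rangle$, which is precisely the pair of inclusions you verify (minimality of the generated subcoalgebra on one side, maximality of the syntactic congruence among congruences saturating $L$ on the other). Only a cosmetic slip: with the paper's convention $_xL(w)=L(wx)$, the derivative $_x(f\circ q_L)$ dualizes to \emph{right} multiplication by $q_L(x)$ in the syntactic monoid, not left; this does not affect the argument.
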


\begin{proof}
	By \cite[Corollary 8]{jan1} we have that the monoid $X^*/\eq \langle L\rangle$ is the syntactic monoid of $L$. The universal property 
	of the syntactic monoid of $L$ is, by duality, the property that $\coeq (X^*/\eq \langle L\rangle)=\langle\mkern-3mu\langle L\rangle\mkern-3mu\rangle$. 
	This property of $\langle\mkern-3mu\langle L\rangle\mkern-3mu\rangle$ being the dual of the syntactic monoid of $L$ was also mentioned in \cite[Section 6]{gehrke0}.
\end{proof}

\begin{lemma}\label{lj2}
	Let $\mathcal{V}$ be a variety of languages and $X\in \set$, then 
	\[
		\mathcal{V}(X)=\coeq\left(X^*/\bigcap_{L\in \mathcal{V}(X)}\eq \langle L\rangle\right).
	\]
\end{lemma}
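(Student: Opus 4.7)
The plan is to establish the two inclusions separately, using directly the closure properties (i) and (ii) of $\mathcal{V}$ together with the identification, provided by Lemma \ref{lj1}, of $\coeq(X^*/\eq\langle L\rangle)$ with the subcoalgebra $\langle\mkern-3mu\langle L\rangle\mkern-3mu\rangle$ generated by $L$.

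For the inclusion $\mathcal{V}(X)\subseteq \coeq(X^*/\bigcap_{L\in\mathcal{V}(X)}\eq\langle L\rangle)$, I would argue as follows. Fix any $L\in\mathcal{V}(X)$. Since $\eq\langle L\rangle$ is the syntactic congruence of $L$, the language $L$ is recognized by the quotient $X^*\lepi X^*/\eq\langle L\rangle$, i.e., $L\in \coeq(X^*/\eq\langle L\rangle)$. Now write $C:=\bigcap_{L'\in\mathcal{V}(X)}\eq\langle L'\rangle$. Since $C\subseteq \eq\langle L\rangle$ as monoid congruences on $X^*$, there is a canonical surjective monoid morphism $X^*/C\lepi X^*/\eq\langle L\rangle$, and composing the recognizing map of $L$ with this surjection shows that $L$ is already recognized by $X^*/C$, i.e., $L\in \coeq(X^*/C)$.

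For the inclusion $\coeq(X^*/\bigcap_{L\in\mathcal{V}(X)}\eq\langle L\rangle)\subseteq \mathcal{V}(X)$, I plan to apply properties (i) and (ii) of a variety of languages in sequence. By (i), for every $L\in \mathcal{V}(X)$ we have $\coeq(X^*/\eq\langle L\rangle)\subseteq \mathcal{V}(X)$. Since each $\eq\langle L\rangle$ is a monoid congruence of $X^*$, property (ii) applied to the family $\{\eq\langle L\rangle\}_{L\in\mathcal{V}(X)}$ yields $\coeq(X^*/\bigcap_{L\in\mathcal{V}(X)}\eq\langle L\rangle)\subseteq \mathcal{V}(X)$, which is exactly the desired inclusion.

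I do not expect a serious obstacle here: the whole argument is a direct unpacking of the definitions of a variety of languages and of $\coeq$, combined with Lemma \ref{lj1} which identifies $\coeq(X^*/\eq\langle L\rangle)$ as the syntactic data of $L$. The only point requiring a little care is to verify that the family $\{\eq\langle L\rangle\}_{L\in\mathcal{V}(X)}$ indeed consists of monoid congruences on $X^*$ so that condition (ii) applies, which is immediate since each $\eq\langle L\rangle$ is by construction the syntactic congruence of $L$.
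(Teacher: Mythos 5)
Your proposal is correct and follows essentially the same route as the paper: the inclusion $\coeq(X^*/\bigcap_{L}\eq\langle L\rangle)\subseteq\mathcal{V}(X)$ is obtained directly from properties i) and ii) of a variety of languages, and the reverse inclusion from the canonical surjection $X^*/\bigcap_{L}\eq\langle L\rangle\lepi X^*/\eq\langle L'\rangle$ (which the paper phrases as dualizing to a monomorphism of coequations, and you phrase equivalently as composition of recognizing maps). No gaps.
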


\begin{proof}
	($\supseteq$): Follows from properties i) and ii) of $\mathcal{V}$ being a variety of languages.

	\noindent ($\subseteq$): Consider the canonical epimorphism of monoids $e_{_{L'}}:X^*/\bigcap_{L\in \mathcal{V}(X)}\eq \langle L\rangle\to X^*/\eq \langle L'\rangle$, 
	$L'\in \mathcal{V}(X)$. Then, by duality, i.e., applying $\coeq$, gives us the monomorphism 
	$m_{_{L'}}:\langle\mkern-3mu\langle L'\rangle\mkern-3mu\rangle\to \coeq\left(X^*/\bigcap_{L\in \mathcal{V}(X)}\eq \langle L\rangle\right)$ which implies that 
	$L'\in \coeq\left(X^*/\bigcap_{L\in \mathcal{V}(X)}\eq \langle L\rangle\right)$ since $L'\in \langle\mkern-3mu\langle L'\rangle\mkern-3mu\rangle$.	
\end{proof}

\begin{lemma}\label{lj3}
	For every $X\in \set$ and every $L\in \set(X^*,2)$ we have that $L=\bigcup_{w\in L}w/\eq\langle L\rangle$, where $w/\eq\langle L\rangle$ 
	denotes the equivalence class of $w$ in $X^*/\eq \langle L\rangle$.
\end{lemma}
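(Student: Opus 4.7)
The plan is to prove the two inclusions separately, relying entirely on the fundamental saturation property of the syntactic congruence: a language $L$ is a union of equivalence classes of $\eq\langle L\rangle$. Recall that the syntactic congruence $\eq\langle L\rangle$ is, by construction (and by the characterization of $X^*/\eq\langle L\rangle$ as the syntactic monoid recognizing $L$, see also Lemma \ref{lj1}), the largest monoid congruence on $X^*$ that saturates $L$, i.e., satisfies $u \eq\langle L\rangle v \Rightarrow (u\in L \Leftrightarrow v\in L)$.

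The inclusion $(\supseteq)$ is immediate: for every $w\in L$, the word $w$ belongs to its own class $w/\eq\langle L\rangle$, so $w \in \bigcup_{w\in L} w/\eq\langle L\rangle$ gives $L\subseteq \bigcup_{w\in L} w/\eq\langle L\rangle$. For the converse inclusion $(\subseteq)$, suppose $v\in w/\eq\langle L\rangle$ for some $w\in L$. Then $v \eq\langle L\rangle w$, and the saturation property yields $v\in L$ iff $w\in L$; since $w\in L$, we conclude $v\in L$.

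The only nontrivial ingredient is the saturation property itself, which is the main (and essentially unique) obstacle; however, this is exactly what it means for $L$ to factor through the quotient $X^* \twoheadrightarrow X^*/\eq\langle L\rangle$ (equivalently, for $L$ to be in the image of $\set(e_{\eq\langle L\rangle},2)$, where $e_{\eq\langle L\rangle}$ is the canonical projection), which follows from the universal property of the syntactic monoid already invoked in the proof of Lemma \ref{lj1}. Hence both inclusions hold and $L=\bigcup_{w\in L} w/\eq\langle L\rangle$, completing the proof.
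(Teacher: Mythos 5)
Your proof is correct in substance and follows the same two--inclusion skeleton as the paper's; the only real difference is how the nontrivial inclusion is discharged. You reduce everything to the saturation property ($u\eq\langle L\rangle v \Rightarrow (u\in L \Leftrightarrow v\in L)$) and then justify saturation by appealing to the universal property of the syntactic monoid, i.e., to the fact that $L$ is recognized by $X^*/\eq\langle L\rangle$. The paper instead derives saturation directly from the concrete characterization of $\eq\langle L\rangle$ used in \cite{jan1}: $(u,v)\in\eq\langle L\rangle$ forces $L_u=L_v$, and evaluating at the empty word $\epsilon$ gives $u\in L\Leftrightarrow v\in L$. Be aware that your route skirts circularity: as the paper itself remarks right after the proof, this lemma \emph{is} the statement that the syntactic monoid recognizes $L$, so ``it follows from the universal property'' only works if you take from \cite{jan1} not just minimality but also the fact that $X^*/\eq\langle L\rangle$ recognizes $L$ --- which is essentially the claim being proved. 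The paper's derivative computation buys a self-contained argument from the definition of $\eq\langle L\rangle$, whereas yours buys brevity at the cost of leaning on the external recognition fact. Two cosmetic points: your inclusion labels are swapped relative to their content (the part you label $(\supseteq)$ proves $L\subseteq\bigcup_{w\in L}w/\eq\langle L\rangle$ and vice versa), and in the second inclusion you reuse $w$ both as the bound variable of the union and as the chosen representative, which is worth disentangling.
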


\begin{proof}
	($\subseteq$): obvious.

	\noindent ($\supseteq$):  Let $u\in \bigcup_{w\in L}w/\eq\langle L\rangle$, then there exists $v\in L$ such that $(u,v)\in \eq \langle L\rangle$. In particular, 
	$L_u=L_v$. Now, using the fact that $v\in L$ we get the following implications:
	\[
		v\in L\Rightarrow \epsilon \in L_v=L_u \Rightarrow \epsilon \in L_u
	\]	
	i.e., $u\in L$.
\end{proof}

\noindent The previous lemma basically says that the syntactic monoid of $L$ recognizes $L$.

\noindent Lemma \ref{lj2} says that $\mathcal{V}(X)\in \CABA$ for every $X\in \set$, since $\coeq(X^*/C)\cong \mathcal{P}(X^*/C)$ for every monoid congruence $C$ of $X^*$ 
\cite[Proposition 15]{jan1}. Lemma \ref{lj1} together with property i) of $\mathcal{V}$ being a variety of languages imply that 
$\mathcal{V}(X)$ is closed under left and right derivatives. That is, every variety of languages $\mathcal{V}$ satisfies properties i) and ii) of a coequational $\mathsf{B}$--theory. Now 
we show that $\mathcal{V}$ also satisfies property iii) of a coequational $\mathsf{B}$--theory.

\begin{lemma}
	Let $\mathcal{V}$ be a variety of languages. Then for every $X,Y\in \set$, homomorphism of monoids $h:X^*\to Y^*$ and $L\in \mathcal{V}(Y)$ we have that 
	$L\circ h\in \mathcal{V}(X)$.
\end{lemma}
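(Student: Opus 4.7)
The plan is to apply property iii) of a variety of languages directly, then verify that $L \circ h$ lies in the resulting Boolean algebra using Lemma \ref{lj3}. First, let $\eta \colon Y^* \to Y^*/\eq\langle L\rangle$ denote the quotient morphism, as in the statement of property iii). Applying this property to $L \in \mathcal{V}(Y)$ and the monoid homomorphism $h \colon X^* \to Y^*$ yields $\coeq(X^*/\ker(\eta \circ h)) \subseteq \mathcal{V}(X)$. It therefore suffices to prove that $L\circ h \in \coeq(X^*/\ker(\eta \circ h))$.

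Recall that, via the duality between $\set$ and $\CABA$, for any monoid congruence $C$ on $X^*$ the Boolean algebra $\coeq(X^*/C)$ is (isomorphic to) the collection of languages in $\set(X^*,2)$ that are constant on every $C$-equivalence class, i.e., those languages that factor through the quotient $X^* \to X^*/C$. Hence the required membership $L\circ h \in \coeq(X^*/\ker(\eta \circ h))$ reduces to showing that $L\circ h$ is constant on each $\ker(\eta\circ h)$-class.

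To check this, let $(u,v) \in \ker(\eta\circ h)$. Then $\eta(h(u)) = \eta(h(v))$, which is to say $(h(u), h(v)) \in \eq\langle L\rangle$. By Lemma \ref{lj3}, $L = \bigcup_{w\in L} w/\eq\langle L\rangle$, so $L$ is a union of $\eq\langle L\rangle$-equivalence classes; in particular $L(h(u)) = L(h(v))$, i.e., $(L\circ h)(u) = (L\circ h)(v)$. Therefore $L\circ h \in \coeq(X^*/\ker(\eta\circ h)) \subseteq \mathcal{V}(X)$, as required.

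There is no real obstacle here: the argument is essentially the observation that property iii) of a variety of languages was designed so as to transport membership along monoid homomorphism preimages, and Lemma \ref{lj3} supplies precisely the fact that $L$ is compatible with its own syntactic congruence, which is all that is needed to push $L$ through $h$ into the correct coequational block.
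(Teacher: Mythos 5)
Your proof is correct and follows essentially the same route as the paper's: invoke property iii) to get $\coeq(X^*/\ker(\eta\circ h))\subseteq \mathcal{V}(X)$, then show $L\circ h$ lies in that block by checking it is a union of $\ker(\eta\circ h)$-classes, using Lemma \ref{lj3} to see that $L$ is saturated for $\eq\langle L\rangle$. The paper phrases this as an explicit set identity $L\circ h=\bigcup\{w/\ker(\eta\circ h)\mid h(w)\in L\}$ proved by double inclusion, but that is just a reformulation of your "constant on each class" argument.
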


\begin{proof}
	By property iii) of $\mathcal{V}$ being a variety of languages we have that $\coeq(X^*/\ker(\eta\circ h))\subseteq \mathcal{V}(X)$. We will show that 
	$L\circ h\in \coeq(X^*/\ker(\eta\circ h))\subseteq \mathcal{V}(X)$. In fact, \\

	{\it Claim:} $L\circ h=\bigcup \{ w/\ker (\eta\circ h) \mid w\in X^*\text{ s.t. }h(w)\in L\}$.

	Let $v\in X^*$, then:

	($\subseteq$): $v\in L\circ h\Rightarrow h(v)\in L\Rightarrow v\in \bigcup \{ w/\ker (\eta\circ h) \mid w\in X^*\text{ s.t. }h(w)\in L\}$.

	($\supseteq$): Assume $v\in \bigcup \{ w/\ker (\eta\circ h) \mid w\in X^*\text{ s.t. }h(w)\in L\}$, i.e., there exists $u\in X^*$ with $h(u)\in L$ such that 
	$(v,u)\in \ker(\eta\circ h)$. Now, we have
	\[
		(v,u)\in \ker(\eta\circ h)\Rightarrow (h(v),h(u))\in \ker(\eta)=\eq \langle L\rangle \Rightarrow h(v)\in L
	\]
	where the last implication follows from Lemma \ref{lj3} since $h(u)\in L$. Finally, from $h(u)\in L$ we get $u\in L\circ h$. This finishes the proof of the claim.\\

	\noindent From the claim we have that $L\circ h\in \coeq(X^*/\ker(\eta\circ h))\subseteq \mathcal{V}(X)$.
\end{proof}

\noindent Until now we proved the following.

\begin{prop}
	Let $\mathcal{V}$ be a variety of languages. Then $\mathcal{V}$ is a coequational $\mathsf{B}$--theory.
\end{prop}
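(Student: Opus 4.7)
The statement to be proved is essentially a bookkeeping result that collects the three preceding lemmas (Lemma \ref{lj1}--\ref{lj3}) and the last displayed lemma on morphic preimages, checking that together they yield the three defining clauses of a coequational $\mathsf{B}$--theory. My plan is therefore to verify the three conditions i), ii), iii) of the coequational $\mathsf{B}$--theory separately, using what has already been established.

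For condition i), I would invoke Lemma \ref{lj2} to rewrite $\mathcal{V}(X)=\coeq(X^*/\bigcap_{L\in \mathcal{V}(X)}\eq\langle L\rangle)$. Since $\bigcap_{L\in \mathcal{V}(X)}\eq\langle L\rangle$ is a monoid congruence on $X^*$, and since for any monoid congruence $C$ of $X^*$ one has $\coeq(X^*/C)\cong \mathcal{P}(X^*/C)$ (cited from \cite[Proposition 15]{jan1} in the paragraph immediately following Lemma \ref{lj3}), the object $\mathcal{V}(X)$ is a complete atomic Boolean algebra. That it sits as a subalgebra of $\set(X^*,2)$ is the dual side of the quotient morphism $X^*\lepi X^*/\bigcap_{L\in \mathcal{V}(X)}\eq\langle L\rangle$; concretely, the inclusion is given by pullback along this quotient, which on the coalgebraic side is the canonical monomorphism in $\Coalg(\mathsf{B})$.

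For condition ii), I would argue directly from Lemma \ref{lj1}. For any $L\in \mathcal{V}(X)$, property i) of a variety of languages gives $\coeq(X^*/\eq\langle L\rangle)\subseteq \mathcal{V}(X)$. By Lemma \ref{lj1} this set equals $\langle\!\langle L\rangle\!\rangle$, the $\mathsf{B}$--subcoalgebra of $\set(X^*,2)$ generated by $L$. Being a $\mathsf{B}$--subcoalgebra of the cofree coalgebra on the one-element set means precisely being closed under left and right derivatives (as spelled out in Examples \ref{exua} and \ref{exvv1}). In particular ${}_xL, L_x\in \langle\!\langle L\rangle\!\rangle\subseteq \mathcal{V}(X)$ for every $x\in X$, which is exactly condition ii).

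Condition iii) is nothing but the displayed lemma immediately preceding the proposition, which I have already established in the excerpt. Putting the three items together proves that $\mathcal{V}$ is a coequational $\mathsf{B}$--theory. There is no real obstacle to address: the only mildly delicate point is recognising that ``closed under left and right derivatives'' is the concrete manifestation of ``is a $\mathsf{B}$--subcoalgebra of the cofree $\mathsf{B}$--coalgebra $\set(X^*,2)$'', and that this is precisely what $\langle\!\langle L\rangle\!\rangle$ encodes. Once that translation is recorded, conditions i)--iii) follow by straightforward composition of the lemmas.
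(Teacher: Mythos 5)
Your proposal is correct and follows essentially the same route as the paper: the paper likewise obtains condition i) from Lemma \ref{lj2} together with $\coeq(X^*/C)\cong\mathcal{P}(X^*/C)$, condition ii) from Lemma \ref{lj1} combined with property i) of a variety of languages (identifying $\coeq(X^*/\eq\langle L\rangle)$ with the $\mathsf{B}$--subcoalgebra $\langle\!\langle L\rangle\!\rangle$ generated by $L$), and condition iii) from the displayed lemma on morphic preimages. The only minor imprecision is calling $\set(X^*,2)$ the cofree coalgebra ``on the one-element set''; it is cofree on the dual of $X$ in $\CABA$, but this does not affect the argument.
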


Now we prove.

\begin{prop}
	Let $\mathcal{L}$ be a coequational $\mathsf{B}$--theory. Then $\mathcal{L}$ is a variety of languages.
\end{prop}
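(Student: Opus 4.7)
The plan is to verify the three defining clauses of \cite[Definition 35]{jan1} for the operator $\mathcal{L}$ one by one, relying chiefly on Lemma \ref{lj1} to translate between syntactic-monoid quotients and subcoalgebras generated by a language, together with the observation that condition (i) on a coequational $\mathsf{B}$-theory actually asserts that $\mathcal{L}(X)$ is a \emph{complete} subalgebra of $\set(X^*,2)$ in $\CABA$ (since morphisms in $\CABA$ are complete Boolean algebra homomorphisms), and that condition (ii) (closure under left and right derivatives) upgrades this to a $\mathsf{B}$-subcoalgebra of $\set(X^*,2)$.

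For clause (i), I would take $L\in\mathcal{L}(X)$ and rewrite $\coeq(X^*/\eq\langle L\rangle)$ as the $\mathsf{B}$-subcoalgebra $\langle\mkern-3mu\langle L\rangle\mkern-3mu\rangle$ generated by $L$, using Lemma \ref{lj1}. Since $\mathcal{L}(X)$ is itself a $\mathsf{B}$-subcoalgebra of $\set(X^*,2)$ containing $L$, it must contain $\langle\mkern-3mu\langle L\rangle\mkern-3mu\rangle$, yielding the required inclusion $\coeq(X^*/\eq\langle L\rangle)\subseteq\mathcal{L}(X)$.

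For clause (ii), the plan is to describe each $L\in\coeq(X^*/\bigcap_{i\in I}C_i)$ explicitly as a union of $(\bigcap_iC_i)$-classes, each of the form $\bigcap_{i\in I}[w]_{C_i}$. Each factor $[w]_{C_i}$ lies in $\coeq(X^*/C_i)\subseteq\mathcal{L}(X)$ by hypothesis, and since $\mathcal{L}(X)$ is closed under the arbitrary meets and joins needed to assemble $L$, we get $L\in\mathcal{L}(X)$. For clause (iii), I would factor $\eta\circ\varphi\colon X^*\to Y^*/\eq\langle L\rangle$ as $\iota\circ\pi$, where $\pi\colon X^*\twoheadrightarrow X^*/\ker(\eta\circ\varphi)$ is the canonical projection and $\iota$ is the induced injective monoid morphism. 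A typical element of $\coeq(X^*/\ker(\eta\circ\varphi))$ has the form $f\circ\pi$ for some $f\colon X^*/\ker(\eta\circ\varphi)\to 2$; extending $f$ along $\iota$ to some $f'\colon Y^*/\eq\langle L\rangle\to 2$ yields $f\circ\pi=(f'\circ\eta)\circ\varphi$. By Lemma \ref{lj1} and the assumption $L\in\mathcal{L}(Y)$ we have $f'\circ\eta\in\langle\mkern-3mu\langle L\rangle\mkern-3mu\rangle\subseteq\mathcal{L}(Y)$, and closure of $\mathcal{L}$ under morphic preimages then delivers $(f'\circ\eta)\circ\varphi\in\mathcal{L}(X)$, as required.

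The principal obstacle will be clause (ii): one must lean on $\mathcal{L}(X)$ being closed under \emph{arbitrary} (not merely finite) joins and meets in order to handle a possibly infinite index set $I$, and this is where the reading ``$\mathcal{L}(X)\in\CABA$ and is a subalgebra of $\set(X^*,2)$ in $\CABA$'' matters. Once this is made explicit, clauses (i) and (iii) are straightforward consequences of Lemma \ref{lj1} together with the three structural conditions defining a coequational $\mathsf{B}$-theory.
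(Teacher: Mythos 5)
Your proof is correct, and for clauses (i) and (iii) it follows essentially the same route as the paper: identify $\coeq(X^*/\eq\langle L\rangle)$ with the generated subcoalgebra $\langle\mkern-3mu\langle L\rangle\mkern-3mu\rangle$ via Lemma \ref{lj1} and use that $\mathcal{L}(X)$ is a $\mathsf{B}$--subcoalgebra; and for (iii), push a typical element of $\coeq(X^*/\ker(\eta\circ\varphi))$ forward to an element of $\coeq(Y^*/\eq\langle L\rangle)\subseteq\mathcal{L}(Y)$ and pull it back along $\varphi$ (your $f'\circ\eta$ is exactly the paper's $L''$, and your factorization $\eta\circ\varphi=\iota\circ\pi$ packages the paper's elementwise verification that $L'=L''\circ\varphi$). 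The genuine divergence is in clause (ii). The paper argues categorically: it shows, by a universal--property computation in the category of monoids (the quotient $X^*\twoheadrightarrow X^*/\bigcap_i C_i$ is the image factorization of the induced map into $\prod_i X^*/C_i$, hence the largest quotient through which all $\eta_j$ factor), that $\coeq(X^*/\bigcap_{i\in I}C_i)$ is the \emph{least} $\mathsf{B}$--subcoalgebra of $\set(X^*,2)$ containing every $\coeq(X^*/C_i)$, and then invokes the fact that $\mathcal{L}(X)$ is itself a subcoalgebra containing them. You instead write each $\bigcap_i C_i$--class as $\bigcap_{i\in I}[w]_{C_i}$ and assemble $L$ by arbitrary intersections and unions inside $\mathcal{L}(X)$, relying only on the reading of condition (i) as ``complete subalgebra in $\CABA$'' (which is indeed the intended reading, and which the paper also uses implicitly). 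Your argument is more elementary and self--contained at this point, at the price of not exhibiting the structural fact that the paper records (the characterization of $\coeq(X^*/\bigcap_i C_i)$ as a generated subcoalgebra), which is the reusable content of the paper's version; both are valid.
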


\begin{proof}
	We have to prove that $\mathcal{L}$ satisfies properties i), ii) and iii) that define a variety of languages. In fact, let $X\in \set$, then:
	\begin{enumerate}[i)]
		\item Properties i) and ii) of $\mathcal{L}$ being a coequational $\mathsf{B}$--theory say that $\mathcal{L}(X)$ is a $\mathsf{B}$--subcoalgebra of 
			$\set(X^*,2)$. In particular, for every $L\in \mathcal{L}(X)$ we have $\coeq(X^*/\eq \langle L\rangle)=\langle\mkern-3mu\langle L\rangle\mkern-3mu\rangle
			\subseteq \mathcal{L}(X)$.
		\item To prove property ii) we show that for a monoid congruence $C_i$ of $X^*$, $i\in I$, the $\mathsf{B}$--coalgebra $\coeq(X^*/\bigcap_{i\in I}C_i)$ 
			is the $\mathsf{B}$--subcoalgebra of $\set(X^*,2)$ generated by the family $\{\coeq(X^*/C_i)\}_{i\in I}$. We show this by duality, i.e., in the category of monoids. 
			We have the following setting:
			\begin{center}	
				\begin{tikzpicture}[>=stealth,shorten >=3pt,
						node distance=2.5cm,on grid,auto,initial text=,
						accepting/.style={thick,double distance=2.5pt}]
						\node (a) at (3,2) {$X^*$};
						\node (b) at (6,0) {$X^*/\bigcap_{i\in I}C_i$};
						\node (c) at (3,0) {$P$};
						\node (d) at (0,0) {$X^*/C_j$};
						\path[right hook->] (b) edge [] node [below] {$m_\eta$} (c);
						\path[->>] (a) edge [] node [above] {$\eta_j$} (d)
									edge [] node [above] {$e_\eta$} (b);
						\path[->] (a) edge [] node [left] {$\eta$} (c)
								(c) edge [] node [below] {$\pi_j$} (d);
				\end{tikzpicture}
			\end{center}
			where:
			\begin{enumerate}[-]
				\item $\eta_j:X^*\to X^*/C_j$ is the canonical homomorphism, $j\in I$, 
				\item $P$ is the product $P=\prod_{i\in I}X^*/C_i$ with projections $\pi_j:P\to X^*/C_j$, $j\in I$, 
				\item $\eta$ is obtained from $\eta_j$, $j\in I$, by the universal property of $P$, and
				\item $\eta=m_\eta\circ e_\eta$ is the factorization of $\eta$, i.e., $\ker(\eta)=\bigcap_{i\in I}C_i$.
			\end{enumerate}
			Now we prove, by duality, that the $\mathsf{B}$--coalgebra $\coeq(X^*/\bigcap_{i\in I}C_i)$ is the least $\mathsf{B}$--subcoalgebra of $\set(X^*,2)$ containing 
			each of $\coeq(X^*/C_i)$. Let $e:X^*\to X^*/C$ be an epimorphism of monoids such that each $\eta_j$ factors through $e$, $j\in I$. That is, there exists 
			$g_j:X^*/C\to X^*/C_j$ such that $\eta_j=g_j\circ e$, $j\in I$. Therefore, $C\subseteq C_j$, $j\in I$, and hence $C\subseteq \bigcap_{i\in I}C_i$, which means that 
			there exists $g:X^*/C\to X^*/\bigcap_{i\in I}C_i$ such that $e_\eta=g\circ e$.

			Now, $\mathcal{L}$ satisfying property ii) of a variety of languages follows from the observation above. In fact, if $\mathcal{L}(X)$ contains 
			$\coeq(X^*/C_i)$, $i\in I$, then, by using the fact that $\mathcal{L}(X)$ is a $\mathsf{B}$--subcoalgebra of $\set(X^*,2)$, it contains the least 
			$\mathsf{B}$--subcoalgebra of $\set(X^*,2)$ containing each of $\coeq(X^*/C_i)$, $i\in I$, which is $\coeq(X^*/\bigcap_{i\in I}C_i)$.
		\item Let $Y\in \set$, $L\in \mathcal{L}(Y)$ and $\eta:Y^*\to Y^*/\eq \langle L\rangle$ be the quotient morphism. Let $\varphi: X^*\to Y^*$ be a monoid morphism. 
			We have to show that $\coeq(X^*/\ker(\eta\circ \varphi))\subseteq \mathcal{L}(X)$. In fact, let $L'\in \coeq(X^*/\ker(\eta\circ \varphi))$, i.e., $L'$ is of the form 
			$L'=\bigcup_{w\in W}w/\ker(\eta\circ \varphi)$ for some $W\subseteq X^*$. Define $L''$ as 
			$L''=\bigcup_{w\in W}\varphi(w)/\ker(\eta)=\bigcup_{w\in W}\varphi(w)/\eq \langle L \rangle$. Then we have that $L''\in \coeq (Y^*/\eq\langle L\rangle)$ 
			which by i) implies that $L''\in \mathcal{L}(Y)$, since $ \coeq (Y^*/\eq\langle L\rangle)\subseteq \mathcal{L}(Y)$. Since $\mathcal{L}$ is a coequational 
			$\mathsf{B}$--theory then $L''\circ \varphi\in \mathcal{L}(X)$. To finish the proof we prove the following:
		
			{\it Claim:} $L'=L''\circ \varphi$.

			Let $u\in X^*$, then:

			($\subseteq$): Assume that $u\in L'$. Then there exists $w\in W$ such that $(u,w)\in \ker(\eta\circ \varphi)$. This implies that 
			$(\varphi(u),\varphi(w))\in \ker(\eta)=\eq\langle L\rangle$ with $w\in W$, i.e., $\varphi(u)\in L''$ which means that $u\in L''\circ \varphi$.
			
			($\supseteq$): Assume that $u\in L''\circ \varphi$, i.e., $\varphi(u)\in L''$. Then there exists $w\in W$ such that $(\varphi(u),\varphi(w))\in \ker(\eta)$. This implies that 
			$(u,w)\in \ker(\eta\circ \varphi)$ with $w\in W$, i.e., $u\in L'$.
	\end{enumerate}

\end{proof}

\subsubsection{Details for Example \ref{exuoa}}\label{auoa}

The duality between $\poset$ and $\ACDL$ is given by the hom-set functors $\poset(\underline{\ \ },\mathbf{2}_c):\poset \to \ACDL$ and 
$\ACDL(\underline{\ \ },\mathbf{2}_c):\ACDL\to \poset$, where $\mathbf{2}_c$ is the two--element chain `schizophrenic' object in $\poset$ and in $\ACDL$. 
Note that for any $\mathbf{P}\in \poset$ the object $\poset(\mathbf{P},\mathbf{2}_c)$ is the set of downsets of $\mathbf{P}$ with the inclusion order and for any 
$\mathbf{A}\in \ACDL$ the object $\ACDL(\mathbf{A},\mathbf{2}_c)$ is the set of all completely join--prime elements of $\mathbf{A}$ with the order inherited from $\mathbf{A}$. 
Remember that an element $a$ of $\mathbf{A}$ is {\it completely join--prime} if $a\leq \bigvee S$ implies $a\leq s$ for some $s\in S$.

\noindent A variety of ordered algebras in $K$ is defined by an equational $\mathsf{T}_K$--theory $\{T_KX\overset{e_X}{\lepi}Q_X\}_{X\in \set}$ which by duality gives us 
the coequational $\mathsf{B}$--theory $\{\poset(e_X,\mathbf{2}_c)\}_{X\in \set}$ which is equivalently defined by the image of every embedding $\poset(e_X,\mathbf{2}_c)$, 
i.e., we define $\Lan(X):=\Imag(\poset(e_X,\mathbf{2}_c))$. Closure of $\Lan(X)$ under derivatives with respect to the type $\T$ follows from the fact that each 
morphism $e_X$ in an equational $\mathsf{T}_{\T}$--theory is a homomorphism of ordered algebras. In fact, similar to Example \ref{exua} b), we have that for every 
for every $g \in \T$ of arity $n_g$, every $1\leq i\leq n_g$, every $t_j\in T_KX$, $1\leq j< n_g$, $t\in T_KX$ and $f\in \poset(Q_X,\mathbf{2}_c)$ we have 
$(f\circ e_X)^{(i)}_{(g,t_1,\ldots ,t_{n_g-1})}=f^{(i)}_{(g,e_X(t_1),\ldots ,e_X(t_{n_g-1}))}\circ e_X$, where the function $f^{(i)}_{(g,e_X(t_1),\ldots ,e_X(t_{n_g-1}))}\in \set(Q_X,2)$ is 
defined for every $q\in Q_X$ as $f^{(i)}_{(g,e_X(t_1),\ldots ,e_X(t_{n_g-1}))}(q)=f(g(e_X(t_1),\ldots,e_X(t_{i-1}),q,e_X(t_{i}),\ldots,e_X(t_{n_g-1})))$. We only need to prove that 
$f^{(i)}_{(g,e_X(t_1),\ldots ,e_X(t_{n_g-1}))}\in \poset(Q_X,\mathbf{2}_c)$. In fact, for any $p\leq q$ in $Q_X$ we have that 
$g(e_X(t_i),\ldots, p,\ldots , e_X(t_{n_g-1}))\leq g(e_X(t_i),\ldots, q,\ldots , e_X(t_{n_g-1}))$, where $u$ and $v$ are in the $i$--th position, which implies that 
\begin{align*}
	f^{(i)}_{(g,e_X(t_1),\ldots ,e_X(t_{n_g-1}))}(p)&=f(g(e_X(t_1),\ldots,p,\ldots,e_X(t_{n_g-1})))\\
				&\leq f(g(e_X(t_1),\ldots,q,\ldots,e_X(t_{n_g-1})))\\
				&=	f^{(i)}_{(g,e_X(t_1),\ldots ,e_X(t_{n_g-1}))}(q).
\end{align*}
since $f\in \poset(Q_X,\mathbf{2}_c)$. Therefore, $\Lan(X)$ is closed under derivatives with respect to the type $\T$.\\
	
\noindent Conversely, any $S\in \ACDL$ closed under derivatives with respect to the type $\T$ such that $S$ is a subalgebra of $\poset(T_KX,\mathbf{2}_c)=\set(T_KX,2)\in \ACDL$ will define, 
by duality, the surjective function  $e_S:T_KX\to \ACDL(S,\mathbf{2}_c)$ such that $e_S(w)(L)=L(w)$, $w\in T_KX$ and $L\in S$, which is a morphism in 
$\poset(T_KX,\ACDL(S,\mathbf{2}_c))$. 
We only need to show that for every $g\in \T$, $t_j\in T_KX$, $1\leq j<n_g$ and $u,v\in T_KX$ the inequality $e_S(u)\leq e_S(v)$ implies 
that $e_S(g(t_1,\ldots, u,\ldots, t_{n_g-1}))\leq e_S(g(t_1,\ldots, v,\ldots, t_{n_g-1}))$ (see \cite[1.3. Proposition]{bloom}). In fact, assume that $e_S(u)\leq e_S(v)$, i.e., for every $L\in S$ we have that $L(u)\leq L(v)$. 
Now, assume by contradiction that $e_S(g(t_1,\ldots, u,\ldots, t_{n_g-1}))\not\leq e_S(g(t_1,\ldots, v,\ldots, t_{n_g-1}))$, i.e., there exists $L'\in S$ such that 
$L'(g(t_1,\ldots, u,\ldots, t_{n_g-1}))=1$ and $L'(g(t_1,\ldots, v,\ldots, t_{n_g-1}))=0$, i.e., $L'^{(i)}_{(g,t_1,\ldots,t_{n_g-1})}(u)=1$ and $L'^{(i)}_{(g,t_1,\ldots,t_{n_g-1})}(v)=0$ 
with $L'^{(i)}_{(g,t_1,\ldots,t_{n_g-1})}\in S$ by closure under derivatives with respect to the type $\T$, which 
contradics the fact that $e_S(u)\leq e_S(v)$. Therefore $e_S$ is a $\mathsf{T}_K$--algebra morphism in $\mathscr{E}$.

\subsubsection{Details for Example \ref{exvv4}}\label{avec}

The duality between $\fvec$ and $\tvec$ is given by the hom-set functors $\fvec(\underline{\ \ },\mathbb{K}):\fvec \to \tvec$ and 
$\tvec(\underline{\ \ },\mathbb{K}):\tvec\to \fvec$. 

\noindent A variety of $\mathbb{K}$--algebras is defined by an equational $\mathsf{T}$--theory $\{\mathsf{V}(X^*)\overset{e_X}{\lepi}Q_X\}_{X\in \mathcal{D}_0}$ which by duality gives us 
the coequational $\mathsf{B}$--theory $\{\fvec(e_X,\mathbb{K})\}_{X\in \mathcal{D}_0}$ which is equivalently defined by the image of every monomorphism $\fvec(e_X,\mathbb{K})$, 
i.e., we define $\Lan(X):=\Imag(\fvec(e_X,\mathbb{K}))$. Closure of $\Lan(X)$ under left and right derivatives follows from the fact that each morphism $e_X$ in an equational 
$\mathsf{T}$--theory is a homomorphism of $\mathbb{K}$--algebras. In fact, for every $v,w\in \mathsf{V}(X^*)$ and $f\in \fvec(Q_X,\mathbb{K})$ we have that
\[
	(f\circ e_X)_v(w)=(f\circ e_X)(vw)=f(e_X(v)\cdot e_X(w))=(f_{e_X(v)} \circ e_X)(w)
\]
where the function $f_{e_X(v)}\in \set(Q_X,\mathbb{K})$ is defined as $f_{e_X(v)}(q)=f(e_X(v)\cdot q)$, where $\cdot$ is the product operation in $\mathbf{Q_X}$, $q\in Q_X$. Note that 
$f_{e_X(v)}\in \fvec(Q_X,\mathbb{K})$ since for any $k\in \mathbb{K}$ and $p,q\in Q_X$ we have that 
\[
	f_{e_X(v)}(kp+q)=f(e_X(v)\cdot (kp+q))= kf(e_X(v)\cdot p)+f(e_X(v)\cdot q)=kf_{e_X(v)}(p)+f_{e_X(v)}(q)
\]
since $f\in \fvec(Q_X,\mathbb{K})$. Therefore, $(f\circ e_X)_x=f_x\circ e_X \in \Lan(X)$, i.e., $\Lan(X)$ is closed under right derivatives. Closure under left derivatives is proved in a 
similar way.\\
	
\noindent Conversely, any $S\in \tvec$ closed under left and right derivatives such that $S$ is a subspace of $\fvec(\mathsf{V}(X^*),\mathbb{K})\in \tvec$ will define, 
by duality, the surjective function  $e_S:\mathsf{V}(X^*)\to \tvec(S,\mathbb{K})$ such that $e_S(w)(L)=L(w)$, $w\in \mathsf{V}(X^*)$ and $L\in S$, which is a morphism in 
$\fvec(\mathsf{V}(X^*),\tvec(S,\mathbb{K}))$. We only need to show that for every $u,v,w\in \mathsf{V}(X^*)$ the equality $e_S(u)=e_S(v)$ implies 
that $e_S(wu)=e_S(wv)$ and $e_S(uw)=e_S(vw)$. In fact, assume that $e_S(u)=e_S(v)$, i.e., for every $L\in S$ we have that $L(u)=L(v)$. Now, assume by contradiction that 
$e_S(wu)\neq e_S(wv)$, i.e., there exists $L'\in S$ such that $L'(wu)\neq L'(wv)$, i.e., $L'_w(u)\neq L'_w(v)$ with $L'_w\in S$ by closure under right derivatives, which is 
a contradiction. The equality $e_S(uw)=e_S(vw)$ is proved in a similar way by using closure under left derivatives. Therefore $e_S$ is a $\mathsf{T}$--algebra morphism in $\mathscr{E}$.

\subsubsection{Details for Example \ref{exvv5}}\label{ajsl}

Define the monad $\mathsf{T}=(T,\eta,\mu)$ on $\JSL$ as $T(X,\lor)=(\mathcal{P}_f(X^*)/\theta,\cup_\theta)$ where $\theta$ is the least equivalence relation on $\mathcal{P}_f(X^*)$ 
such that:
\begin{enumerate}[i)]
	\item for every $x,y\in X$ $\{x\lor y\}\theta \{x,y\}$,
	\item for every $A,B,C,D\in \mathcal{P}_f(X^*)$, $A\theta B$ and $C\theta D$ imply $AC\theta BD$, and 
	\item for every $A,B,C,D\in \mathcal{P}_f(X^*)$, $A\theta B$ and $C\theta D$ imply $A\cup C\theta B\cup D$.
\end{enumerate}
and $\cup_\theta$ is defined as $A/\theta \cup_\theta B/\theta=(A\cup B)/\theta$ which is well--defined by property iii). We should use a notation like $\theta_{(X,\lor)}$ for the relation 
defined above, but we will denote it by $\theta$ for simplicity. It will be clear from the context to which $\theta$ we are refering to in each case. 
If $h\in \JSL((X,\lor),(Y,\lor))$ then $Th$ is defined as 
\[
	(Th)(\{w_1.\ldots,w_n\}/\theta)=\{h^*(w_1),\ldots,h^*(w_n)\}/\theta.
\]
The unit of the monad is defined as $\eta_{(X,\lor)}(x)=\{x\}/\theta$ and the multiplication as:
\[
	\mu_{(X,\lor)}(\{W_1,\ldots W_n\}/\theta)=\left( \bigcup_{i=1}^n \left(\prod_{j=1}^{m_i} W_j^{(i)}\right)\right)\Big/\theta
\]
where each $W_i\in (\mathcal{P}_f(X^*))^*$ is such that $W_i=W_1^{(i)}\cdots W_{m_i}^{(i)}$ and $W_j^{(i)}\in \mathcal{P}_f(X^*)$, $1\leq i\leq n$ and $1\leq j\leq m_i$.\\

\noindent We have that $\Alg(\mathsf{T})$ is the category of idempotent semirings. 

\begin{lemma}
	Consider the object $(\mathcal{P}_f(X),\cup)\in \JSL$, then $T(\mathcal{P}_f(X),\cup)$ is isomorphic to $(\mathcal{P}_f(X^*),\cup)$ in $\JSL$.
\end{lemma}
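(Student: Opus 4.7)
The plan is to exhibit an explicit isomorphism $\bar\phi \colon T(\mathcal{P}_f(X),\cup) \to (\mathcal{P}_f(X^*),\cup)$ by first defining a candidate map on the underlying object $\mathcal{P}_f((\mathcal{P}_f(X))^*)$ and then showing it descends to and is inverted on the quotient. Concretely, define
\[
\phi \colon \mathcal{P}_f((\mathcal{P}_f(X))^*) \longrightarrow \mathcal{P}_f(X^*),\qquad
\phi(\{W_1,\dots,W_n\}) \;=\; \bigcup_{i=1}^n \prod_{j=1}^{m_i} A^{(i)}_j,
\]
where each $W_i = A^{(i)}_1\cdots A^{(i)}_{m_i}$ with $A^{(i)}_j \in \mathcal{P}_f(X)$, and the product denotes concatenation of languages inside $X^*$, i.e.\ the set of words $a_1\cdots a_{m_i}$ with $a_j \in A^{(i)}_j$. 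It is immediate that $\phi$ preserves $\cup$, so it is a $\JSL$-morphism.

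Next I would verify that $\phi$ respects the three generators of $\theta$, which forces $\phi$ to factor through $\theta$ as a $\JSL$-morphism $\bar\phi \colon \mathcal{P}_f((\mathcal{P}_f(X))^*)/\theta \to \mathcal{P}_f(X^*)$. For (i), $\phi(\{A\cup B\}) = A\cup B = \phi(\{A,B\})$; for (ii), $\phi(S\cdot S') = \phi(S)\cdot\phi(S')$ by distributivity of concatenation over union inside $X^*$; for (iii), $\phi(S\cup S') = \phi(S)\cup \phi(S')$ by definition. Surjectivity of $\bar\phi$ is transparent: for $L = \{w_1,\ldots, w_n\}\in \mathcal{P}_f(X^*)$ with $w_i = x^{(i)}_1 \cdots x^{(i)}_{m_i}$, the set $\{\{x^{(1)}_1\}\cdots\{x^{(1)}_{m_1}\}, \dots, \{x^{(n)}_1\}\cdots\{x^{(n)}_{m_n}\}\}$ maps to $L$, giving a natural candidate inverse $\psi(L)$.

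The main obstacle, and the part that actually uses the defining properties of $\theta$ in a non-trivial way, is injectivity of $\bar\phi$. The key lemma I would prove is a \emph{normal form}: every $S \in \mathcal{P}_f((\mathcal{P}_f(X))^*)$ satisfies $S \,\theta\, \psi(\phi(S))$, i.e.\ $S$ is $\theta$-equivalent to a finite set of words all of whose letters are singletons from $\mathcal{P}_f(X)$. I would prove this in two inductive steps. First, for a single letter $A\in \mathcal{P}_f(X)$, induction on $|A|$ using (i) and (iii) gives $\{A\} \,\theta\, \{\{a\} \,:\, a\in A\}$. Second, for a length-$m$ word $A_1\cdots A_m$, induction on $m$ using (ii) with the previous step gives $\{A_1\cdots A_m\} \,\theta\, \{\{a_1\}\cdots\{a_m\} \,:\, a_j \in A_j\}$. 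Finally, (iii) lets us glue these together across the elements of a general $S = \{W_1,\dots,W_n\}$, yielding $S \,\theta\, \psi(\phi(S))$. This immediately gives $\psi\circ\bar\phi = \mathrm{id}$, while $\bar\phi\circ\psi = \mathrm{id}$ is a direct unfolding of the definitions. Together these yield the desired $\JSL$-isomorphism.
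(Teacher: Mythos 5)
Your proof is correct and rests on the same normal--form idea as the paper's: every element of $\mathcal{P}_f((\mathcal{P}_f(X))^*)$ is $\theta$--equivalent to a set of words all of whose letters are singletons. The difference is one of organization, and it is in your favour. The paper defines the map $\varphi$ \emph{into} the quotient and disposes of injectivity/well--definedness with the one--line remark that ``uniqueness follows since $(\mathcal{P}_f(X),\cup)$ is the free join semilattice''; what actually justifies that remark is precisely the content of your first step, namely that the evaluation map $\phi$ respects the three generating clauses of $\theta$ and therefore descends to a morphism $\bar\phi$ out of the quotient, which then separates distinct normal forms. So your write--up supplies the soundness direction that the paper elides, and the two halves ($\bar\phi$ well defined; $S\,\theta\,\psi(\phi(S))$) cleanly give the two composites being identities. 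This is the more self--contained version of the same argument.

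One caveat, which you inherit from the paper rather than introduce: the base case $|A|=0$ of your induction on letters requires $\{\emptyset\}\,\theta\,\emptyset$, i.e.\ that the one--letter word whose letter is the zero of $\mathcal{P}_f(X)$ is identified with the empty set. Clause (i) of the definition of $\theta$ is binary ($\{x\lor y\}\,\theta\,\{x,y\}$) and its instance $x=y=\emptyset$ only yields $\{\emptyset\}\,\theta\,\{\emptyset\}$, so this identification is not derivable as stated; the paper's own decomposition $\{x_1,\dots,x_n\}=\{x_1\}\cup\cdots\cup\{x_n\}$ silently assumes $n\geq 1$ and has the same gap. The fix is to add the nullary analogue of (i) (identify $\{0\}$ with $\emptyset$) to the generators of $\theta$ — as one should anyway for $T(X,\lor)$ to be the free idempotent semiring in the category of join semilattices \emph{with zero} — after which both your induction and the lemma go through verbatim.
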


\begin{proof}
	By definition we have that 
	\[
		T(\mathcal{P}_f(X),\cup)=\left(\mathcal{P}_f\left(\mathcal{P}_f(X)^*\right)/\theta, \cup_\theta \right)
	\]
	Now, every element in $\mathcal{P}_f(X)$ is of the form $\{x_1,\ldots ,x_n\}=\{x_1\}\cup \cdots \cup \{x_n\}$, which by property i) and iii) of the definition of $\theta$ we have that:
	\[
		\{\{x_1,\ldots ,x_n\}\}\theta \{\{x_1\},\ldots ,\{x_n\}\}
	\]
	Therefore, by using the defining properties of $\theta$ we have that every element in $\mathcal{P}_f\left(\mathcal{P}_f(X)^*\right)$ is equivalent to a unique 
	element of the form:
	\[
		\left\{\left\{x_1^{(1)}\right\}\cdots \left\{x_{n_1}^{(1)}\right\},\ldots ,\left\{x_1^{(m)}\right\}\cdots \left\{x_{n_m}^{(m)}\right\}\right\}
	\]
	where uniqueness follows since $(\mathcal{P}_f(X),\cup)$ is the free join semilattice. Hence, the join semilattice homomorphism 
	$\varphi: (\mathcal{P}_f(X^*),\cup)\to T(\mathcal{P}_f(X),\cup)$ given by:
	\[
		\varphi (\{x_1^{(1)}\cdots x_{n_1}^{(1)},\ldots ,x_1^{(m)}\cdots x_{n_m}^{(m)} \})=\left\{\left\{x_1^{(1)}\right\}\cdots \left\{x_{n_1}^{(1)}\right\},\ldots ,
									\left\{x_1^{(m)}\right\}\cdots \left\{x_{n_m}^{(m)}\right\}\right\}\Big/ \theta
	\]
	is an isomorphism in $\JSL$.
\end{proof}

\noindent We considered $\mathcal{D}_0=\{(\mathcal{P}_f(X),\cup)\mid X\in \set\}$. As every semiring is an $\mathscr{E}$--quotient of $(\mathcal{P}_f(X^*),\cup)$, by the previous Lemma 
we have that condition (B3) is satisfied.\\

\noindent Now, in the definition of the operator $\Lan$ we should formally have that $\Lan(X)$ is a subspace of $\JSL(\mathcal{P}_f(X^*),\mathbf{2})$ but for simplicity we work with 
$\set(X^*,2)$ which is isomorphic to $\JSL(\mathcal{P}_f(X^*),\mathbf{2})$ in $\TBJSL$ under the correspondence $f\mapsto f\circ \eta_{X^*}$ and $L\mapsto L^\sharp$, 
$f\in \JSL(\mathcal{P}_f(X^*),\mathbf{2})$ and $L\in \set(X^*,2)$, where $\eta_{X^*}$ and $L^\sharp$ are defined as 
$\eta_{X^*}(w)=\{w\}$ and $L^\sharp (\{w_1,\ldots,w_n\})=\bigvee_{i=1}^n L(w_i)$. \\

\noindent The duality between $\JSL$ and $\TBJSL$ is given by the hom-set functors $\JSL(\underline{\ \ },\mathbf{2}):\JSL \to \TBJSL$ and 
$\TBJSL(\underline{\ \ },\mathbf{2}):\TBJSL\to \JSL$, where $\mathbf{2}$ is the two--element join semilattice.\\

\noindent A variety of idempotent semirings is defined by an equational $\mathsf{T}$--theory $\{\mathcal{P}_f(X^*)\overset{e_X}{\lepi}Q_X\}_{X\in \mathcal{D}_0}$ which by duality gives us 
the coequational $\mathsf{B}$--theory $\{\JSL(e_X,\mathbf{2})\}_{X\in \mathcal{D}_0}$ which is equivalently defined by the image of every monomorphism $\JSL(e_X,\mathbf{2})$, 
i.e., we define $\Lan(X):=\Imag(\JSL(e_X,\mathbf{2}))$. Closure of $\Lan(X)$ under left and right derivatives follows from the fact that each morphism $e_X$ in an equational 
$\mathsf{T}$--theory is a homomorphism of idempotent semirings. In fact, for every $v,w\in X^*$ and $f\in \JSL(Q_X,\mathbf{2})$ we have that
\[
	(f\circ e_X\circ \eta_{X^*})_v(w)=(f\circ e_X\circ \eta_{X^*})(vw)=(f_{e_X(\{v\})}\circ e_X\circ \eta_{X^*})(w)
\]
where the function $f_{e_X(\{v\})}\in \set(Q_X,2)$ is defined as $f_{e_X(\{v\})}(q)=f(e_X(\{v\})\cdot q)$, where $\cdot$ is the product operation in $\mathbf{Q_X}$, $q\in Q_X$. Note that 
$f_{e_X(\{v\})}\in \JSL(Q_X,\mathbf{2})$ since for any $p,q\in Q_X$ we have that 
\[
	f_{e_X(\{v\})}(p\lor q)=f(e_X(\{v\})\cdot (p\lor q))= f(e_X(\{v\})\cdot p)\lor f(e_X(\{v\})\cdot q)=f_{e_X(\{v\})}(p)\lor f_{e_X(\{v\})}(q)
\]
since $f\in \JSL(Q_X,\mathbf{2})$. Therefore, $(f\circ e_X\circ \eta_{X^*})_x=f_x\circ e_X\circ \eta_{X^*} \in \Lan(X)$, i.e., $\Lan(X)$ is closed under right derivatives. Closure under left derivatives is proved in a similar way.\\
	
\noindent Conversely, any $S\in \TBJSL$ closed under left and right derivatives such that $S$ is a subspace of $\JSL(\mathcal{P}_f(X^*),\mathbf{2})\in \TBJSL$ will define, 
by duality, the surjective function  $e_S:\mathcal{P}_f(X^*)\to \TBJSL(S,\mathbf{2})$ such that $e_S(\{w\})(L)=L(\{w\})$, $w\in X^*$ and $L\in S$, which is a morphism in 
$\JSL(\mathcal{P}_f(X^*),\TBJSL(S,\mathbf{2}))$. We only need to show that for every $w\in X^*$ and $U,V\in \mathcal{P}_f(X^*)$ the 
equality $e_S(U)=e_S(V)$ implies that $e_S(\{w\}U)=e_S(\{w\}V)$ and $e_S(U\{w\})=e_S(V\{w\})$. In fact, assume that $e_S(U)=e_S(V)$, i.e., for every $L\in S$ we have that 
$L(U)=L(V)$. Now, assume by contradiction that $e_S(\{w\}U)\neq e_S(\{w\}V)$, i.e., there exists $L'\in S$ such that $L'(\{w\}U)\neq L'(\{w\}V)$, i.e., 
$(L'\circ \eta_{X^*})_w(u)\neq (L'\circ \eta_{X^*})_w(v)$ with $(L'\circ \eta_{X^*})_w\in S$ by closure under right derivatives, which is 
a contradiction. The equality $e_S(U\{w\})=e_S(V\{w\})$ is proved in a similar way by using closure under left derivatives. Therefore $e_S$ is a $\mathsf{T}$ algebra morphism in 
$\mathscr{E}$.

\subsection{Details for Section \ref{secreit}}

\subsubsection{Proof of Birkhoff's Theorem for finite $\mathsf{T}$--algebras}

In this subsection, we provide a proof of Theorem \ref{reitthm}. 
We start by proving that models of pseudoequational $\mathsf{T}$--theories are pseudovarieties of $\mathsf{T}$--algebras.

\begin{proposition}\label{hspclosurer}
	Let $\mathcal{D}$ be a complete concrete category such that its forgetful functor preserves epis, monos and products, $\mathsf{T}$ a monad on $\mathcal{D}$, $\mathcal{D}_0$ 
	a full subcategory of $\mathcal{D}_0$ and $\mathscr{E}/\mathscr{M}$ a factorization system on $\mathcal{D}$. Assume (B$_f$1), (B$_f$2) and (B$_f$4). Let $\Ps$ be 
	a pseudoequational $\mathsf{T}$--theory on $\mathcal{D}_0$. Then $\md_f(\Ps)$ is a pseudovariety of $\mathsf{T}$--algebras.
\end{proposition}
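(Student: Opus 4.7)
The plan is to verify the three closure properties directly, mirroring the structure of Proposition \ref{hspclosureb} but exploiting the directedness clause in Definition \ref{defpet}~i) for the product case, and using the hypotheses on the forgetful functor $U$ to secure finiteness at each step.

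First I would handle $\mathscr{E}$-quotients. Given $\mathbf{A}\in \md_f(\Ps)$ and $e\in \Alg(\mathsf{T})(\mathbf{A},\mathbf{B})\cap \mathscr{E}$, the image $\mathbf{B}$ is finite because $U$ preserves epis. For any $X\in \mathcal{D}_0$ and $f\in \Alg(\mathsf{T})(\mathbf{TX},\mathbf{B})$, projectivity (B$_f$2) lifts $f$ along $e$ to some $k\in \Alg(\mathsf{T})(\mathbf{TX},\mathbf{A})$ with $e\circ k=f$. Since $\mathbf{A}\models \Ps$, the morphism $k$ factors as $k=g\circ e_X$ for some $e_X\in \Ps(X)$; then $f=(e\circ g)\circ e_X$ factors through $e_X$, so $\mathbf{B}\models \Ps$.

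Next I would handle $\mathscr{M}$-subalgebras. Given $m\in \Alg(\mathsf{T})(\mathbf{B},\mathbf{A})\cap \mathscr{M}$ with $\mathbf{A}\in \md_f(\Ps)$, finiteness of $\mathbf{B}$ follows because $U$ preserves monos. For $f\in \Alg(\mathsf{T})(\mathbf{TX},\mathbf{B})$, apply $\mathbf{A}\models\Ps$ to $m\circ f$ to get $e_X\in \Ps(X)$ with codomain $\mathbf{Q}$ and $g\in \Alg(\mathsf{T})(\mathbf{Q},\mathbf{A})$ satisfying $g\circ e_X=m\circ f$. The diagonal fill--in of the factorization system applied to the square $(e_X,m,f,g)$ produces $d\in \Alg(\mathsf{T})(\mathbf{Q},\mathbf{B})$ with $d\circ e_X=f$, so $\mathbf{B}\models \Ps$. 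The required lifting to $\Alg(\mathsf{T})$ is provided by Lemma \ref{lemmafs1}, using (B$_f$1) and (B$_f$4).

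The main obstacle is closure under finite products; this is where the condition (i) in Definition \ref{defpet} is essential. Let $\{\mathbf{A}_i\}_{i\in I}$ be a finite family in $\md_f(\Ps)$ with product $\mathbf{A}=\prod_{i\in I}\mathbf{A}_i$ and projections $\pi_i$; $\mathbf{A}$ is finite because $U$ preserves products and $I$ is finite. For $f\in \Alg(\mathsf{T})(\mathbf{TX},\mathbf{A})$, each $\pi_i\circ f$ factors as $g_i\circ f_i$ with $f_i\in \Ps(X)$ by $\mathbf{A}_i\models \Ps$. Since $I$ is finite, Definition \ref{defpet}~i) yields a single $f\,'\in \Ps(X)$ with codomain $\mathbf{Q}$ through which every $f_i$ factors, say $f_i=h_i\circ f\,'$; hence $\pi_i\circ f=(g_i\circ h_i)\circ f\,'$ for all $i\in I$. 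The universal property of the product then produces $g\in \Alg(\mathsf{T})(\mathbf{Q},\mathbf{A})$ with $\pi_i\circ g=g_i\circ h_i$, and $\pi_i\circ g\circ f\,'=\pi_i\circ f$ for all $i$ forces $g\circ f\,'=f$. Thus $f$ factors through $f\,'\in \Ps(X)$, giving $\mathbf{A}\models \Ps$ and completing the proof.
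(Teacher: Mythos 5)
Your proof is correct and follows essentially the same route as the paper's: finiteness at each stage via the hypotheses on the forgetful functor, projectivity (B$_f$2) for $\mathscr{E}$--quotients, the lifted factorization system for $\mathscr{M}$--subalgebras, and the directedness clause of Definition \ref{defpet}~i) for finite products. The only cosmetic difference is that in the subalgebra step you invoke the diagonal fill--in where the paper instead uses uniqueness of $\mathscr{E}/\mathscr{M}$--factorizations (the two are equivalent consequences of Lemma \ref{lemmafs1}), and the paper separately records non--emptiness of $\md_f(\Ps)$ via the terminal algebra, which your product argument already covers as the case $I=\emptyset$.
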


\begin{proof}
	Clearly $\md_f (\Ps)$ is non empty since $1=(1,!_{T1}\colon T1\to 1)\in \md_f (\Ps)$, where $1$ is the terminal object in $\mathcal{D}$, which is finite since the forgetful functor 
	from $\mathcal{D}$ to $\set$ preserves products. Now we have:
	\begin{enumerate}[i)]
		\item $\md_f(\Ps)$ is closed under $\mathscr{E}$--quotients: Let $\mathbf{A},\mathbf{B}\in \Alg(\mathsf{T})$ with $\mathbf{A}\in \md_f(\Ps)$ and let 
			$e\in \Alg(\mathsf{T})(\mathbf{A},\mathbf{B})\cap \mathscr{E}$. We have that $B$ is finite since $e$ is epi and the forgetful functor from $\mathcal{D}$ to 
			$\set$ preserves epis. Let $f\in \Alg(\mathsf{T})(\mathbf{TX},\mathbf{B})$, $X\in \mathcal{D}_0$. Using (B$_f$2), there exists 
			$k\in \Alg(\mathsf{T})(\mathbf{TX},\mathbf{A})$ such that $f=e\circ k$. As $\mathbf{A}\in \md_f(\Ps)$ then 
			$k$ factors through some $e'\in \Ps(X)$ as $k=g\circ e'$. Then $f=e\circ k=e\circ g\circ e'$ with $e'\in \Ps(X)$, i.e., $\mathbf{B}\in \md_f (\Ps)$.
		\item $\md_f(\Ps)$ is closed under $\mathscr{M}$--subalgebras: Let $\mathbf{A},\mathbf{B}\in \Alg(\mathsf{T})$ with $\mathbf{A}\in \md_f(\Ps)$ and let 
			$m\in \Alg(\mathsf{T})(\mathbf{B},\mathbf{A})\cap\mathscr{M}$. We have that $B$ is finite since $m$ is mono and the forgetful functor from $\mathcal{D}$ 
			to $\set$ preserves monos. Let $f\in \Alg(\mathsf{T})(\mathbf{TX},\mathbf{B})$, $X\in\mathcal{D}_0$. As $\mathbf{A}\in \md_f(\Ps)$ then 
			$m\circ f$ factors through some $e\in \Ps(X)$ as $m\circ f=g\circ e$. Let $f=m_f\circ e_f$ and $g=m_g\circ e_g$ be the factorizations of $f$ and $g$, respectively. 
			We have that $(m\circ m_f)\circ e_f=m\circ f=g\circ e=m_g\circ (e_g\circ e)$ where $m\circ m_f, m_g\in \mathscr{M}$ and $e_f,e_g\circ e\in \mathscr{E}$. Then 
			by uniqueness of the factorization we have that there is an isomorphism $\phi$ such that $\phi\circ e_g\circ e=e_f$. Therefore $f=m_f\circ e_f=m_f\circ \phi\circ e_g\circ e$ 
			with $e\in \Ps(X)$, i.e., $\mathbf{B}\in \md_f (\Ps)$. 
		\item $\md_f(\Ps)$ is closed under finite products: Let $\mathbf{A}_i\in \md_f(\Ps)$, $i\in I$ with $I$ finite, and let $\mathbf{A}=\prod_{i\in I}\mathbf{A}_i$ be their product 
			in $\Alg(\mathsf{T})$ with projections $\pi_i:A\to A_i$. We have that $A$ is finite since the forgetful functor from $\mathcal{D}$ to $\set$ preserves products, $I$ 
			is finite, and each $A_i$ is finite. Let $f\in \Alg(\mathsf{T})(\mathbf{TX},\mathbf{A})$, $X\in \mathcal{D}_0$. As $\mathbf{A}_i\in \md_f(\Ps)$ then 
			$\pi_i\circ f$ factors through some $e_i\in \Ps(X)$ as $\pi_i\circ f=g_i\circ e_i$. Since $\Ps$ is a pseudoequational $\mathsf{T}$--theory there exists $e\in \Ps(X)$ 
			such that every $e_i$ factors through $e$ as $h_i\circ e=e_i$, $i\in I$. Let $\mathbf{Q}$ be the codomain of $e$. Now, by definition of $\mathbf{A}$ there exists 
			$h\in \Alg(\mathsf{T})(\mathbf{Q},\mathbf{A})$ such that $\pi_i\circ h=g_i\circ h_i$. As $\pi_i\circ f=\pi_i\circ h\circ e$ for every $i\in I$, then $f=h\circ e$, 
			$e\in \Ps(X)$, which means that $\mathbf{A}\in \md_f(\Ps)$.
	\end{enumerate}
\end{proof}

\noindent Given a class $K$ of algebras in $\Alg_f(\mathsf{T})$ define the operator $\Ps_K$ on $\mathcal{D}_0$ as follows:
\[
	\Ps_K(X)=\text{$\mathsf{T}$--algebra morphisms in $\mathscr{E}$ with domain $\mathbf{TX}$ and codomain in $K$.}
\]

\begin{proposition}
	Let $\mathcal{D}$ be a complete concrete category such that its forgetful functor preserves epis, monos and products $\mathsf{T}$ a monad on $\mathcal{D}$, $\mathcal{D}_0$ 
	a full subcategory of $\mathcal{D}_0$ and $\mathscr{E}/\mathscr{M}$ a factorization system on $\mathcal{D}$. Assume (B$_f$1) and (B$_f$4). Let $K$ be a pseudovariety of 
	$\mathsf{T}$--algebras. Then $\Ps_K$ is a pseudoequational $\mathsf{T}$--theory on $\mathcal{D}_0$.
\end{proposition}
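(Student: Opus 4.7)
The plan is to verify that $\Ps_K$ satisfies the four requirements in Definition \ref{defpet}: nonemptiness of each $\Ps_K(X)$, directedness (i), downward closure under $\mathscr{E}$--postcomposition (ii), and closure under substitution (iii). Each requirement translates directly into one of the three closure properties of the pseudovariety $K$, plus the lifting of $\mathscr{E}/\mathscr{M}$ to $\Alg(\mathsf{T})$ (Lemma \ref{lemmafs1}, guaranteed by (B$_f$1) and (B$_f$4)). All codomains that arise are automatically finite because the forgetful functor $U\colon \mathcal{D}\to\set$ preserves epis, monos and products, so $\mathscr{E}$--quotients, $\mathscr{M}$--subalgebras, and finite products of finite algebras remain finite.

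Nonemptiness of $\Ps_K(X)$ and condition (ii) are immediate. The terminal $\mathsf{T}$--algebra $\mathbf{1}$ is the empty product, hence lies in $K$, and the unique morphism $\mathbf{TX}\lepi \mathbf{1}$ belongs to $\Ps_K(X)$. For (ii), if $e\in \Ps_K(X)$ with codomain $\mathbf{A}\in K$ and $e'\colon \mathbf{A}\lepi \mathbf{B}$ lies in $\mathscr{E}$, then $\mathbf{B}$ is an $\mathscr{E}$--quotient of $\mathbf{A}$, so $\mathbf{B}\in K$; since $\mathscr{E}$ is closed under composition in any factorization system, $e'\circ e\in \mathscr{E}$, and thus $e'\circ e\in \Ps_K(X)$.

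For (i), given a finite family $\{f_i\colon \mathbf{TX}\lepi \mathbf{A}_i\}_{i\in I}$ in $\Ps_K(X)$, form the product $\mathbf{A}=\prod_{i\in I}\mathbf{A}_i$ in $\Alg(\mathsf{T})$ with projections $\pi_i$. Since $K$ is closed under finite products, $\mathbf{A}\in K$. The universal property of the product yields $f\colon \mathbf{TX}\to \mathbf{A}$ with $\pi_i\circ f=f_i$. Factor $f=m_f\circ e_f$ with $e_f\in \mathscr{E}$, $m_f\in \mathscr{M}$; the codomain $\mathbf{S}$ of $e_f$ is an $\mathscr{M}$--subalgebra of $\mathbf{A}\in K$, hence $\mathbf{S}\in K$, so $e_f\in \Ps_K(X)$. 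Each $f_i=(\pi_i\circ m_f)\circ e_f$ factors through $e_f$, as required.

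For (iii), given $f\in \Ps_K(X)$ with codomain $\mathbf{A}\in K$ and $h\in \Alg(\mathsf{T})(\mathbf{TY},\mathbf{TX})$, factor $f\circ h=m_{f\circ h}\circ e_{f\circ h}$. The codomain of $e_{f\circ h}$ is an $\mathscr{M}$--subalgebra of $\mathbf{A}$, hence lies in $K$, so $e_{f\circ h}\in \Ps_K(Y)$. The main subtlety — and the only place the assumptions are delicately used — is verifying that every intermediate codomain is finite and lies in $K$: this hinges on $U$ preserving monos (so $\mathscr{M}$--subalgebras of finite algebras are finite) and on the simultaneous use of closure under $\mathscr{M}$--subalgebras and finite products. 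There are no further obstacles; the proof is essentially a bookkeeping check that the three pseudovariety closure properties of $K$ align exactly with the three axioms of a pseudoequational theory.
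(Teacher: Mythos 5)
Your proof is correct and follows essentially the same route as the paper's: form the finite product and take the $\mathscr{E}/\mathscr{M}$--factorization of the induced map for directedness, use closure under $\mathscr{E}$--quotients for condition (ii), and closure under $\mathscr{M}$--subalgebras for condition (iii). You additionally check nonemptiness via the terminal algebra, which the paper's proof of this proposition omits (though it uses the same device elsewhere); strictly one should factor the unique morphism $\mathbf{TX}\to\mathbf{1}$ and take its $\mathscr{E}$--part rather than assert it lies in $\mathscr{E}$, but this is immediate from your argument for (i) applied to the empty family.
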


\begin{proof}
	We have to prove properties i), ii), and ii) of Definition \ref{defpet}. In fact:
	\begin{enumerate}[i)]
		\item Let $X\in \mathcal{D}_0$, $I$ a finite set and $f_i\in \Ps_K(X)$, $i\in I$. Let $\mathbf{A}_i\in K$ be the codomain of $f_i$. Let 
			$\mathbf{A}=\prod_{i\in I}\mathbf{A}_i$ with projections $\pi_i\in \Alg(\mathsf{T})(\mathbf{A},\mathbf{A}_i)$. We have $\mathbf{A}\in K$. Now, by definition of 
			$\mathbf{A}$, there exists $f\in \Alg(\mathsf{T})(\mathbf{TX},\mathbf{A})$ such that $\pi_i\circ f=f_i$. Let $f=m_f\circ e_f$ be the factorization of $f$ with 
			$e_f\in \Alg(\mathsf{T})(\mathbf{TX},\mathbf{Q})\cap\mathscr{E}$ and $m_f\in \Alg(\mathsf{T})(\mathbf{Q},\mathbf{A})\cap\mathscr{M}$. 
			We have that $\mathbf{Q}\in K$. Then $e_f$ is a morphism in $\Ps_K(X)$ such that every $f_i$ factors through $e_f$.
		\item Let $X\in \mathcal{D}_0$, $e\in \Ps_K(X)$ with codomain $\mathbf{A}\in K$, and $e'\in \Alg(\mathsf{T})(\mathbf{A},\mathbf{B})\cap\mathscr{E}$. 
			We have that $\mathbf{B}$ is finite and that $\mathbf{B}\in K$. Therefore $e'\circ e\in\Ps_K(X)$.
		\item Let $X,Y\in \mathcal{D}_0$, $f\in \Ps_K(X)$ with codomain $\mathbf{A}\in K$, and $h\in \Alg(\mathsf{T})(\mathbf{TY},\mathbf{TX})$. Let 
			$f\circ h=m_{f\circ h}\circ e_{f\circ h}$ be the factorization of $f\circ h$ such that $e_{f\circ h}\in \Alg(\mathsf{T})(\mathbf{TY},\mathbf{Q})\cap\mathscr{E}$ and 
			$m_{f\circ h}\in \Alg(\mathsf{T})(\mathbf{Q},\mathbf{A})\cap\mathscr{M}$. Then $\mathbf{Q}\in K$, which implies $e_{f\circ h}\in \Ps_K(Y)$.			
	\end{enumerate}
\end{proof}

\begin{lemma}\label{lemmafr}
	Let $\mathcal{D}$ be a complete concrete category such that its forgetful functor preserves epis,monos and products, $\mathsf{T}$ a monad on $\mathcal{D}$, $\mathcal{D}_0$ 
	a full subcategory of $\mathcal{D}_0$ and $\mathscr{E}/\mathscr{M}$ a factorization system on $\mathcal{D}$. Assume (B$_f$1), (B$_f$2) and (B$_f$4). Let $\Ps$ be a 
	pseudoequational $\mathsf{T}$--theory on $\mathcal{D}_0$. Let $X\in \mathcal{D}_0$ and $e\in \Ps(X)$ with codomain $\mathbf{A}\in \Alg_f(\mathsf{T})$, 
	then $\mathbf{A}\in \md_f(\Ps)$.
\end{lemma}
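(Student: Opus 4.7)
The goal is to show that $\mathbf{A}\models \Ps$, so I must verify that for every $Y\in\mathcal{D}_0$ and every $h\in \Alg(\mathsf{T})(\mathbf{TY},\mathbf{A})$, the morphism $h$ factors through some element of $\Ps(Y)$. The plan is to use projectivity of free algebras (assumption (B$_f$2)) to lift $h$ through $e$ to a map $\mathbf{TY}\to \mathbf{TX}$, and then invoke closure of $\Ps$ under substitution (property iii) of Definition \ref{defpet}) to obtain a witness in $\Ps(Y)$ through which $h$ factors.

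In more detail, I would fix $Y\in\mathcal{D}_0$ and $h\in \Alg(\mathsf{T})(\mathbf{TY},\mathbf{A})$. Since $e\in \Ps(X)\subseteq \mathscr{E}$ is a $\mathsf{T}$--algebra morphism with codomain $\mathbf{A}$ and $\mathbf{TY}$ is projective with respect to $\mathscr{E}$ in $\Alg(\mathsf{T})$ by (B$_f$2), there exists $g\in \Alg(\mathsf{T})(\mathbf{TY},\mathbf{TX})$ such that $e\circ g=h$. Now consider the $\mathscr{E}/\mathscr{M}$--factorization $h=m_h\circ e_h=e\circ g=m_{e\circ g}\circ e_{e\circ g}$. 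By property iii) of Definition \ref{defpet} applied to $e\in \Ps(X)$ and $g\in \Alg(\mathsf{T})(\mathbf{TY},\mathbf{TX})$, we have $e_{e\circ g}\in \Ps(Y)$, and by uniqueness of the factorization (up to isomorphism) we may identify $e_h$ with $e_{e\circ g}$, so $e_h\in \Ps(Y)$.

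It remains to check that the codomain of $e_h$ is finite, which is required so that the factoring witness actually belongs to $\Ps(Y)$ as defined (morphisms with finite codomain). This follows because $m_h\in \mathscr{M}$ is a monomorphism (by (B$_f$1)) into the finite object $\mathbf{A}$, and since the forgetful functor $U:\mathcal{D}\to\set$ preserves monos, the carrier of the codomain of $e_h$ injects into $U(A)$ and is hence finite. Therefore $h$ factors through $e_h\in \Ps(Y)$, showing $\mathbf{A}\models \Ps$, i.e., $\mathbf{A}\in \md_f(\Ps)$.

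There is no real obstacle here: the whole argument is a direct unwinding of the definitions, where projectivity (B$_f$2) translates the hypothesis $h:\mathbf{TY}\to \mathbf{A}$ into a substitution $g:\mathbf{TY}\to\mathbf{TX}$ that can be fed into condition iii) of the pseudoequational theory. The only minor subtlety worth spelling out carefully is the finiteness of the codomain of $e_h$, which leverages the hypothesis that $U$ preserves monos together with finiteness of $\mathbf{A}$.
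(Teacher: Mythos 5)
Your proof is correct and follows essentially the same route as the paper: lift $h$ through $e$ via projectivity (B$_f$2) to get $g$ with $e\circ g=h$, then apply the substitution-closure property iii) of Definition \ref{defpet} to conclude that the $\mathscr{E}$--part of the factorization of $e\circ g=h$ lies in $\Ps(Y)$. Your extra remark on finiteness of the codomain (via $m_h$ being mono into the finite $\mathbf{A}$ and $U$ preserving monos) is a sound sanity check that the paper leaves implicit.
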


\begin{proof}
	Let $Y\in \mathcal{D}_0$ and $f\in \Alg(\mathsf{T})(\mathbf{TY},\mathbf{A})$. We have to show that $f$ factors through some element in $\Ps(Y)$. 
	We have the following commutative diagram:
	\begin{center}	
	\begin{tikzpicture}[>=stealth,shorten >=3pt,
			node distance=2.5cm,on grid,auto,initial text=,
			accepting/.style={thick,double distance=2.5pt}]
			\node (b) at (3,1.3) {$TX$};
			\node (c) at (3,0) {$Q$};
			\node (d) at (6,2) {$A$};
			\node (e) at (0,2) {$TY$};
			\path[->>] (b) edge [] node [below] {\ \ \ $e$} (d)
					(e) edge [] node [below] {$e_{e\circ k}\ \ $} (c);
			\path[right hook->] (c) edge [] node [below] {\ \ \ \ \ $m_{e\circ k}$} (d);
			\path[->] (e) edge [] node [below] {$k$} (b)
						edge [] node [above] {$f$} (d);
	\end{tikzpicture}
	\end{center}
	where:
	\begin{enumerate}[-]
		\item the morphism $k$ is obtained from $f$ and $e$ by using (B$_f$2),
		\item $e\circ k=m_{e\circ k}\circ e_{e\circ k}$ is the factorization of $e\circ k$.
	\end{enumerate}
	From the previous diagram we have that $e_{e\circ k}\in \Ps(Y)$, since $e\in \Ps(X)$ and $\Ps$ is a pseudoequational $\mathsf{T}$--theory. Therefore 
	$f$ factors through $e_{e\circ k}\in \Ps(Y)$, which implies that $\mathbf{A}\in \md_f(\Ps)$.
\end{proof}

\noindent To finish the proof of Theorem \ref{reitthm} we establish the following one--to--one correspondence between pseudoequational $\mathsf{T}$--theories and 
pseudovarieties of $\mathsf{T}$--algebras. 

\begin{proposition}
	Let $\mathcal{D}$ be a complete concrete category such that its forgetful functor preserves epis, monos and products, $\mathsf{T}$ a monad on $\mathcal{D}$, $\mathcal{D}_0$ 
	a full subcategory of $\mathcal{D}_0$ and $\mathscr{E}/\mathscr{M}$ a factorization system on $\mathcal{D}$. Assume (B$_f$1), (B$_f$2) and (B$_f$4). Let $\Ps$ be a 
	pseudoequational $\mathsf{T}$--theory on $\mathcal{D}_0$ and let $K$ be a pseudovariety of $\mathsf{T}$--algebras. Then:
	\begin{enumerate}[i)]
		\item $\Ps_{\md_f(\Ps)}=\Ps$.
		\item Assume (B$_f$3), then $\md_f(\Ps_K)=K$.
	\end{enumerate}
\end{proposition}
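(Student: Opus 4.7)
The plan is to prove the two equalities separately, using in both cases the factorization system together with the structural conditions of Definition \ref{defpet} and Lemma \ref{lemmafr}.

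For part (i), I would first show the easy inclusion $\Ps(X)\subseteq \Ps_{\md_f(\Ps)}(X)$: if $e\in \Ps(X)$ with codomain $\mathbf{A}$, then $e$ is, by definition, a $\mathsf{T}$--algebra morphism in $\mathscr{E}$ with domain $\mathbf{TX}$ and finite codomain, and by Lemma \ref{lemmafr} its codomain $\mathbf{A}$ belongs to $\md_f(\Ps)$; hence $e\in \Ps_{\md_f(\Ps)}(X)$. For the converse inclusion, take $e\in \Ps_{\md_f(\Ps)}(X)$, so $e\colon \mathbf{TX}\to \mathbf{A}$ lies in $\mathscr{E}$ with $\mathbf{A}\in \md_f(\Ps)$. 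Applying $\mathbf{A}\models \Ps$ to the morphism $e$ itself, there exist $e'\in \Ps(X)$ and $g$ with $e=g\circ e'$. Since $e\in \mathscr{E}$, Lemma \ref{lemmafs} gives $g\in \mathscr{E}$. Then condition (ii) of Definition \ref{defpet} (closure of $\Ps(X)$ under post--composition with $\mathscr{E}$--morphisms) yields $e=g\circ e'\in \Ps(X)$, completing the equality.

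For part (ii), I would prove $K\subseteq \md_f(\Ps_K)$ first. Let $\mathbf{A}\in K$, $X\in \mathcal{D}_0$, and $f\in \Alg(\mathsf{T})(\mathbf{TX},\mathbf{A})$. Factor $f=m_f\circ e_f$ with $e_f\in \mathscr{E}$ and $m_f\in \mathscr{M}$ in $\Alg(\mathsf{T})$ (using (B$_f$1) and (B$_f$4) together with Lemma \ref{lemmafs1}). The codomain $\mathbf{Q}$ of $e_f$ is an $\mathscr{M}$--subalgebra of $\mathbf{A}\in K$, hence is in $K$ by closure under $\mathscr{M}$--subalgebras; so $e_f\in \Ps_K(X)$ and $f$ factors through it, showing $\mathbf{A}\models \Ps_K$. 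For the reverse inclusion $\md_f(\Ps_K)\subseteq K$, let $\mathbf{A}\in \md_f(\Ps_K)$. By (B$_f$3) there exist $X_A\in \mathcal{D}_0$ and $s_A\in \Alg(\mathsf{T})(\mathbf{TX_A},\mathbf{A})\cap \mathscr{E}$. Since $\mathbf{A}\models \Ps_K$, there are $e\in \Ps_K(X_A)$ with codomain $\mathbf{B}\in K$ and a morphism $g$ with $s_A=g\circ e$. From $s_A\in \mathscr{E}$, Lemma \ref{lemmafs} again yields $g\in \mathscr{E}$, so $\mathbf{A}$ is an $\mathscr{E}$--quotient of $\mathbf{B}\in K$. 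Closure of $K$ under $\mathscr{E}$--quotients then gives $\mathbf{A}\in K$.

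The only nontrivial point in either direction is the repeated use of Lemma \ref{lemmafs} to upgrade the ``right factor'' $g$ of a morphism in $\mathscr{E}$ into a morphism in $\mathscr{E}$; once this is in hand, the two closure properties of a pseudoequational theory (condition (ii)) and a pseudovariety (closure under $\mathscr{E}$--quotients and $\mathscr{M}$--subalgebras) mesh perfectly with the factorization system to close the loop. I do not expect any obstacle beyond this; no appeal to (B$_f$3) is needed for (i), which is why (ii) is the part that explicitly requires it.
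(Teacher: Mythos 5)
Your proof is correct and follows essentially the same route as the paper's: part (i) via Lemma \ref{lemmafr} for one inclusion and the factorization-through-$\Ps(X)$ argument plus Lemma \ref{lemmafs} and downward closure (Definition \ref{defpet}(ii)) for the other; part (ii) via the $\mathscr{E}/\mathscr{M}$-factorization and closure under $\mathscr{M}$-subalgebras, respectively (B$_f$3) and closure under $\mathscr{E}$-quotients. The only detail the paper makes explicit that you elide is that the factoring morphism $g$ is itself a $\mathsf{T}$-algebra morphism (via (B$_f$1) and (B$_f$4)), but this is routine.
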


\begin{proof}\hfill
	\begin{enumerate}[i)]
		\item Let $X\in \mathcal{D}_0$, we have to prove that $\Ps_{\md_f(\Ps)}(X)=\Ps(X)$.\\
			($\subseteq$): Let $e\in \Ps_{\md_f(\Ps)}(X)$ with codomain $\mathbf{A}\in \md_f(\Ps)$. As $\mathbf{A}\in \md_f(\Ps)$, there exists $e'\in \Ps(X)$ 
			such that $e$ factors through $e'$ as $g\circ e'=e$. By (B$_f$1) and (B$_f$4) we have that $g$ is a $\mathsf{T}$--algebra morphism. As $g\circ e'=e\in \mathscr{E}$, 
			then $g\in \mathscr{E}$, and, as $\Ps$ is a pseudoequational $\mathsf{T}$--theory, then $g\circ e'=e\in \Ps(X)$.\\ 
			($\supseteq$): Let $e\in \Ps(X)$ with codomain $\mathbf{A}$. By Lemma \ref{lemmafr}, $\mathbf{A}\in \md_f(\Ps)$, i.e., $e\in \Ps_{\md_f(\Ps)}(X)$.
		\item Let $\mathbf{A}$ be an object in $\Alg_f(\mathsf{T})$.\\
			($\supseteq$): Assume that $\mathbf{A}\in K$. We have to show that $\mathbf{A}\in \md_f(\Ps_K)$. In fact, let $X\in \mathcal{D}_0$ 
			and $f\in \Alg(\mathsf{T})(\mathbf{TX},\mathbf{A})$. Let $f=m_f\circ e_f$ be the factorization of $f$ with  
			$e_f\in \Alg(\mathsf{T})(\mathbf{TX},\mathbf{Q})\cap\mathscr{E}$ and $m_f\in\Alg(\mathsf{T})(\mathbf{Q},\mathbf{A})\cap\mathscr{M}$. Then $\mathbf{Q}\in K$, 
			which implies that $e_f\in \Ps_K(X)$, i.e., $\mathbf{A}\in \md_f(\Ps_K)$.\\
			($\subseteq$): Assume that $\mathbf{A}\in \md_f(\Ps_K)$. By (R3) there exists an object $X_A\in \mathcal{D}_0$ and 
			$e\in \Alg(\mathsf{T})(\mathbf{TX_A},\mathbf{A})\cap\mathscr{E}$. As $\mathbf{A}\in \md_f(\Ps_K)$, $e$ factors through some 
			$e'\in \Ps_K(X_A)$ as $e=g\circ e'$. Let $\mathbf{Q}\in K$ be the codomain of $e'$. As $g\circ e'=e\in \mathscr{E}$, then $g\in \mathscr{E}$ and 
			$g\in \Alg(\mathsf{T})(\mathbf{Q},\mathbf{A})$ which implies that $\mathbf{A}\in K$ since $\mathbf{Q}\in K$.		
	\end{enumerate}
\end{proof}

\subsubsection{Details for Example \ref{exuoaf}}

Let $\Ps$ be a pseudoequational $\mathsf{T}_K$--theory on $\mathcal{D}_0$ and let $\Lan$ be an operator on $\mathcal{D}_0$ satisfying the properties i), ii) and iii). Then:
\begin{enumerate}[a)]
	\item Define the operator $\Lan_{\Ps}$ on $\mathcal{D}_0$ as $\Lan_{\Ps}(X):=\bigcup_{e\in \Ps(X)}\Imag(\poset(e,\mathbf{2}_c))$. 
		Then $\Lan_{\Ps}$ satisfies properties i), ii) and iii). The proof is similar to \ref{exuaf} a). Note that the directed union of finite objects in 
		$\ACDL$ that are subobjects of $\poset(T_KX,\mathbf{2}_c)\cong \set(T_KX,2)$ is a distributive sublattice of $\poset(T_KX,\mathbf{2}_c)\cong \set(T_KX,2)$.
	\item Define the operator $\Ps_{\Lan}$ on $\mathcal{D}_0$ such that $\Ps_{\Lan}(X)$ is the collection of all $\mathsf{T}_K$--algebra morphisms 
		$e\in \mathscr{E}$ with domain $\mathbf{T_KX}$ and finite codomain such that $\Imag(\poset(e,\mathbf{2}_c))\subseteq \Lan(X)$. We claim that $\Ps_{\Lan}$ is a pseudoequational 
		$\mathsf{T}_K$--theory. Non--emptiness and properties ii) and iii) from Definition \ref{defpet} are proved in a similar way as in \ref{exuaf} b). Now, to prove property i) in 
		Definition \ref{defpet}, consider a family $\{T_KX\overset{e_i}{\lepi} A_i\}_{i\in I}$ in $\Ps_{\Lan}(X)$ with $I$ finite such that $\Imag(\poset(e_i,\mathbf{2}_f))\subseteq \Lan(X)$, 
		we need to find a morphism $e\in \Ps_{\Lan}(X)$ such that every $e_i$ factors through $e$. In fact, let $\mathbf{A}$ be the product of $\prod_{i\in I}\mathbf{A_i}$ with projections 
		$\pi_i:A\to A_i$, then, by the universal property of $\mathbf{A}$ there exists a $\mathsf{T}_K$--algebra morphism $f:T_KX\to A$ such that $\pi_i\circ f=e_i$, for every $i\in I$. 
		Let $f=m_f\circ e_f$ be the factorization of $f$ in $\Alg(\mathsf{T}_K)$. We claim that $e=e_f$ is a morphism in $\Ps_{\Lan}(X)$ such that every $e_i$ factors through $e$. 
		Clearly, from the construction above, each $e_i$ factors through $e=e_f$. Now, let's prove that $\Imag(\poset(e,\mathbf{2}_c))\subseteq \Lan(X)$. In fact, let $\mathbf{S}$ 
		be the codomain of $e=e_f$ and let $g\in \poset(S,\mathbf{2}_c)$. We have to prove that $g\circ e\in \Lan(X)$ which follows from the following identity:
		\[
			g\circ e=\bigcup_{s\in g}\left(\bigcap_{i\in I} h_{i,s}\circ e_i\right)
		\]
		where $h_{i,s}\in \poset(A_i,\mathbf{2}_c)$ is defined as $h_{i,s}(x)=1$ iff $x\geq \pi_i(m_f(s))$. In fact, for any $w\in T_KX$ we have that $(g\circ e)(w)=1$ 
		implies $(h_{i,e(w)}\circ e_i)(w)=1$ for every $i\in I$, on the other hand, if there is $s\in g$ such that $(h_{i,s}\circ e_i)(w)=1$ for every $i\in I$ then 
		$e_i(w)\geq (\pi_i\circ m_f)(s)$, i.e., $(\pi_i\circ m_f\circ e)(w)\geq (\pi_i\circ m_f)(s)$ for every $i\in I$ (since $e_i=\pi_i\circ m_f\circ e$), which implies that 
		$(m_f\circ e)(w)\geq m_f(s)$ (since the order in $\mathbf{A}$ is componentwise) and the later implies that $e(w)\geq s$ (since $m_f$ is an embedding). Therefore,  
		$(g\circ e)(w)=1$ since $s\in g$ (i.e., $g(s)=1$).\\		

		\noindent Now, for every $s\in S$ and $i\in I$ the composition $h_{i,s}\circ e_i$ belongs to $\Lan(X)$ since 
		$h_{i,s}\circ e_i\in \Imag(\poset(e_i,\mathbf{2}_c))\subseteq \Lan(X)$. As $S$ and $I$ are finite then $g\circ e\in \Lan(X)$ because $\Lan(X)$ is a distributive lattice.
	\item We have that $\Ps=\Ps_{\Lan_{\Ps}}$. In fact, for every $X\in \mathcal{D}_0$ the inclusion $\Ps(X)\subseteq \Ps_{\Lan_{\Ps}}(X)$ is obvious. Now, to prove that 
		$\Ps_{\Lan_{\Ps}}(X)\subseteq \Ps(X)$, let $e'\in \Alg(\mathsf{T}_K)(\mathbf{T_KX},\mathbf{A})\cap\mathscr{E}$ with finite codomain such that $e'\in \Ps_{\Lan_{\Ps}}(X)$, i.e., 
		$\Imag(\poset(e'.\mathbf{2}_f))\subseteq \bigcup_{e\in \Ps(X)}\Imag(\poset(e,\mathbf{2}_c))$. Then the previous inclusion means that for every 
		$f\in \poset(\mathbf{A},\mathbf{2}_c)$ there exists $e_f\in \Ps(X)$ and $g_f$ such 
		that $f\circ e'=g_f\circ e_f$. As $\{e_f\mid f\in \poset(\mathbf{A},\mathbf{2}_c)\}$ is finite, then there exists $e\in \Ps(X)$ such that each $e_f$ factors through $e$. 
		We will prove that $e'$ 	factors through $e\in \Ps(X)$ which will imply that $e'\in \Ps(X)$, since $\Ps$ is a pseudoequational $\mathsf{T}_K$--theory. It is enough to show that 
		for all $u,v\in T_KX$ $e(u)\leq e(v)$ implies $e'(u)\leq e'(v)$. 
		In fact, assume that $e(u)\leq e(v)$ and define $f'\in\poset(\mathbf{A},\mathbf{2}_c)$ as $f'(x)=1$ iff $e'(u)\leq x$. Then, as $e_{f'}$ factors through $e$ we have that 
		$e_{f'}(u)\leq e_{f'}(v)$. By applying $g_{f'}$ to the last inequality, and using the fact that $f'\circ e'=g_{f'}\circ e_{f'}$, we get $1=f'(e'(u))\leq f'(e'(v))$ which implies that 
		$e'(u)\leq e'(v)$ by definition of $f'$.
	\item Similar to \ref{exuaf} d) by making the obvious changes.
\end{enumerate}

\subsubsection{Details for Example \ref{expvv4}}

Let $\Ps$ be a pseudoequational $\mathsf{T}$--theory on $\mathcal{D}_0$ and let $\Lan$ be an operator on $\mathcal{D}_0$ satisfying the properties i), ii) and iii). Then:

\begin{enumerate}[a)]
	\item Define the operator $\Lan_{\Ps}$ on $\mathcal{D}_0$ as $\Lan_{\Ps}(X):=\bigcup_{e\in \Ps(X)}\Imag(\fvec(e,\mathbb{K}))$. 
		We claim that $\Lan_{\Ps}$ satisfies properties i), ii) and iii). The proof is similar to \ref{exuaf} a). Note that the directed union of finite objects in 
		$\tvec$ that are subobjects of $\fvec(\mathsf{V}(X^*),\mathbb{K})$ is a $\mathbb{K}$--vector space which is a subspace of $\fvec(\mathsf{V}(X^*),\mathbb{K})$.
	\item Define the operator $\Ps_{\Lan}$ on $\mathcal{D}_0$ such that $\Ps_{\Lan}(X)$ is the collection of all $\mathsf{T}$--algebra morphisms 
		$e\in \mathscr{E}$ with domain $\mathbf{TX}$ and finite codomain such that $\Imag(\fvec(e,\mathbb{K}))\subseteq \Lan(X)$. We claim that $\Ps_{\Lan}$ is a pseudoequational 
		$\mathsf{T}$--theory. Non--emptiness and properties ii) and iii) from Definition \ref{defpet} are proved in a similar way as in \ref{exuaf} b). Now, to prove property i) in 
		Definition \ref{defpet}, consider a family $\{TX\overset{e_i}{\lepi} A_i\}_{i\in I}$ in $\Ps_{\Lan}(X)$ with $I$ finite such that $\Imag(\fvec(e_i,\mathbb{K}))\subseteq \Lan(X)$, 
		we need to find a morphism $e\in \Ps_{\Lan}(X)$ such that every $e_i$ factors through $e$. In fact, let $\mathbf{A}$ be the product of $\prod_{i\in I}\mathbf{A_i}$ with projections 
		$\pi_i:A\to A_i$, then, by the universal property of $\mathbf{A}$ there exists a $\mathsf{T}$--algebra morphism $f:TX\to A$ such that $\pi_i\circ f=e_i$, for every $i\in I$. 
		Let $f=m_f\circ e_f$ be the factorization of $f$ in $\Alg(\mathsf{T})$. We claim that $e=e_f$ is a morphism in $\Ps_{\Lan}(X)$ such that every $e_i$ factors through $e$. 
		Clearly, from the construction above, each $e_i$ factors through $e=e_f$. Now, let's prove that $\Imag(\fvec(e,\mathbb{K}))\subseteq \Lan(X)$. In fact, 
		let $\mathbf{S}$ be the codomain of $e=e_f$ and let $g\in \fvec(\mathbf{S},\mathbb{K})$. Let $\hat{g}\in \fvec(\mathbf{A},\mathbb{K})$ such that $\hat{g}\circ m_f=g$ 
		(this can be done since $\mathbb{K}$ is injective. In fact, define $\hat{g}$ as zero in $A\smallsetminus \Imag(m_f)$) and let $\iota_i\in \fvec(\mathbf{A}_i,\mathbf{A})$ 
		such that $\pi_i\circ \iota_i=id_{A_i}$ and $(\pi_{i'}\circ \iota_i)(y)=0$ if $i'\neq i$. Note that for every $x\in A$ we have $x=\sum_{i\in I} (\iota_i\circ \pi_i)(x)$. We have to prove 
		that $g\circ e\in \Lan(X)$ which follows from the following identity:
		\[
			g\circ e=\sum_{i\in I} \hat{g}\circ \iota_i\circ e_i
		\]
		In fact, for any $x\in \mathsf{V}(X^*)$ we have:
		\begin{align*}
			(g\circ e)(x)&=(\hat{g}\circ m_f\circ e)(x)=\hat{g}(m_f(e(x)))=\hat{g}\left(\sum_{i\in I}(\iota_i\circ \pi_i)(m_f(e(x)))\right)\\
				&=\sum_{i\in I}(\hat{g}\circ \iota_i\circ \pi_i\circ m_f\circ e)(x)=\sum_{i\in I}(\hat{g}\circ \iota_i\circ e_i)(x)
		\end{align*}
		\noindent From that we get that $g\circ e\in \Lan(X)$ since each $\hat{g}\circ \iota_i\circ e_i \in \Lan(X)$ and $\Lan(X)$ is a subspace of $\fvec(\mathsf{V}(X^*),\mathbb{K})$. 
	\item We have that $\Ps=\Ps_{\Lan_{\Ps}}$. In fact, for every $X\in \mathcal{D}_0$ the inclusion $\Ps(X)\subseteq \Ps_{\Lan_{\Ps}}(X)$ is obvious. Now, to prove that 
		$\Ps_{\Lan_{\Ps}}(X)\subseteq \Ps(X)$, let $e'\in \Alg(\mathsf{T})(\mathbf{TX},\mathbf{A})\cap\mathscr{E}$ with finite codomain such that $e'\in \Ps_{\Lan_{\Ps}}(X)$, i.e., 
		$\Imag(\fvec(e'.\mathbb{K}))\subseteq \bigcup_{e\in \Ps(X)}\Imag(\fvec(e,\mathbb{K}))$. Then the previous inclusion means that for every 
		$f\in \fvec(\mathbf{A},\mathbb{K})$ there exists $e_f\in \Ps(X)$ and $g_f$ such 
		that $f\circ e'=g_f\circ e_f$. As $\{e_f\mid f\in \fvec(\mathbf{A},\mathbb{K})\}$ is finite, then there exists $e\in \Ps(X)$ such that each $e_f$ factors through $e$. 
		We will prove that $e'$ 	factors through $e\in \Ps(X)$ which will imply that $e'\in \Ps(X)$, since $\Ps$ is a pseudoequational $\mathsf{T}$--theory. It is enough to show that 
		for all $u,v\in X^*$ $e(u)=e(v)$ implies $e'(u)=e'(v)$. 
		In fact, assume that $e(u)=e(v)$ and suppose by contradiction that $e'(u)\neq e'(v)$, then there exist $f'\in\fvec(\mathbf{A},\mathbb{K})$ 
		such that $(f'\circ e')(u)\neq (f'\circ e')(v)$, but then $e(u)=e(v)$ implies $(g_{f'}\circ e_{f'})(u)=(g_{f'}\circ e_{f'})(v)$, since $e_{f'}$ factors through $e$, 
		which is a contradiction since $g_{f'}\circ e_{f'}=f'\circ e'$.
	\item Similar to \ref{exuaf} d) by making the obvious changes. 
\end{enumerate}

\subsubsection{Details for Example \ref{expvv5}}

Let $\Ps$ be a pseudoequational $\mathsf{T}$--theory and let $\Lan$ be an operator on $\mathcal{D}_0$ satisfying the properties i), ii) and iii). Then:
\begin{enumerate}[a)]
	\item Define the operator $\Lan_{\Ps}$ on $\mathcal{D}_0$ as $\Lan_{\Ps}(X):=\bigcup_{e\in \Ps(X)}\Imag(\JSL(e,\mathbf{2}))$. 
		We claim that $\Lan_{\Ps}$ satisfies properties i), ii) and iii). The proof is similar to \ref{exuaf} a). Note that the directed union of finite objects in 
		$\TBJSL$ that are subobjects of $\JSL(\mathcal{P}_f(X^*),\mathbf{2})\cong \set(X^*,2)$ is a join subsemilattice of $\JSL(\mathcal{P}_f(X^*),\mathbf{2})\cong \set(X^*,2)$.
	\item Define the operator $\Ps_{\Lan}$ on $\mathcal{D}_0$ such that $\Ps_{\Lan}(X)$ is the collection of all $\mathsf{T}$--algebra morphisms 
		$e\in \mathscr{E}$ with domain $\mathbf{TX}$ and finite codomain such that $\Imag(\JSL(e,\mathbf{2}_c))\subseteq \Lan(X)$. We claim that $\Ps_{\Lan}$ is a pseudoequational 
		$\mathsf{T}$--theory. Non--emptiness and properties ii) and iii) from Definition \ref{defpet} are proved in a similar way as in \ref{exuaf} b). Now, to prove property i) in 
		Definition \ref{defpet}, consider a family $\{TX\overset{e_i}{\lepi} A_i\}_{i\in I}$ in $\Ps_{\Lan}(X)$ with $I$ finite such that $\Imag(\JSL(e_i,\mathbf{2}))\subseteq \Lan(X)$, 
		we need to find a morphism $e\in \Ps_{\Lan}(X)$ such that every $e_i$ factors through $e$. In fact, let $\mathbf{A}$ be the product of $\prod_{i\in I}\mathbf{A_i}$ with projections 
		$\pi_i:A\to A_i$, then, by the universal property of $\mathbf{A}$ there exists a $\mathsf{T}$--algebra morphism $f:TX\to A$ such that $\pi_i\circ f=e_i$, for every $i\in I$. 
		Let $f=m_f\circ e_f$ be the factorization of $f$ in $\Alg(\mathsf{T})$. We claim that $e=e_f$ is a morphism in $\Ps_{\Lan}(X)$ such that every $e_i$ factors through $e$. 
		Clearly, from the construction above, each $e_i$ factors through $e=e_f$. Now, let's prove that $\Imag(\JSL(e,\mathbf{2}))\subseteq \Lan(X)$. In fact, 
		let $\mathbf{S}$ be the codomain of $e=e_f$ and let $g\in \JSL(S,\mathbf{2})$. Let $\hat{g}\in \JSL(\mathbf{A},\mathbf{2})$ such that $\hat{g}\circ m_f=g$ 
		(this can be done since $\mathbf{2}$ is an injective semilattice, see \cite[Lemma 1]{bl}) and let $\iota_i\in \Alg(\mathsf{T})(\mathbf{A}_i,\mathbf{A})$ such that 
		$\pi_i\circ \iota_i=id_{A_i}$ and $(\pi_{i'}\circ \iota_i)(y)=0$ if $i'\neq i$. Note that for every $x\in A$ we have 
		$x=\bigvee_{i\in I} (\iota_i\circ \pi_i)(x)$. We have to prove that $g\circ e\in \Lan(X)$ which follows from the following identity:
		\[
			g\circ e=\bigvee_{i\in I} \hat{g}\circ \iota_i\circ e_i
		\]
		In fact, for any $W\in \mathcal{P}_f(X^*)$ we have:
		\begin{align*}
			(g\circ e)(W)&=(\hat{g}\circ m_f\circ e)(W)= \hat{g}\left(\bigvee_{i\in I}(\iota_i\circ \pi_i \circ m_f \circ e)(W)\right)\\
					&\bigvee_{i\in I}(\hat{g}\circ \iota_i\circ \pi_i \circ m_f \circ e)(W)=\left(\bigvee_{i\in I} \hat{g}\circ \iota_i\circ e_i \right)(W)
		\end{align*}
		\noindent From that we get that $g\circ e\in \Lan(X)$ since each $\hat{g}\circ \iota_i\circ e_i \in \Lan(X)$ and $\Lan(X)$ is a join subsemilattice of 
		$\JSL(\mathbf{TX},\mathbf{2})\cong \set(X^*,2)$. 
	\item We have that $\Ps=\Ps_{\Lan_{\Ps}}$. In fact, for every $X\in \mathcal{D}_0$ the inclusion $\Ps(X)\subseteq \Ps_{\Lan_{\Ps}}(X)$ is obvious. Now, to prove that 
		$\Ps_{\Lan_{\Ps}}(X)\subseteq \Ps(X)$, let $e'\in \Alg(\mathsf{T})(\mathbf{TX},\mathbf{A})\cap\mathscr{E}$ with finite codomain such that $e'\in \Ps_{\Lan_{\Ps}}(X)$, i.e., 
		$\Imag(\JSL(e'.\mathbf{2}))\subseteq \bigcup_{e\in \Ps(X)}\Imag(\JSL(e,\mathbf{2}))$. Then the previous inclusion means that for every 
		$f\in \JSL(\mathbf{A},\mathbf{2})$ there exists $e_f\in \Ps(X)$ and $g_f$ such 
		that $f\circ e'=g_f\circ e_f$. As $\{e_f\mid f\in \JSL(\mathbf{A},\mathbf{2})\}$ is finite, then there exists $e\in \Ps(X)$ such that each $e_f$ factors through $e$. 
		We will prove that $e'$ 	factors through $e\in \Ps(X)$ which will imply that $e'\in \Ps(X)$, since $\Ps$ is a pseudoequational $\mathsf{T}$--theory. It is enough to show that 
		for all $u,v\in TX$ $e(u)=e(v)$ implies $e'(u)=e'(v)$. In fact, assume that $e(u)=e(v)$ and suppose by contradiction that $e'(u)\neq e'(v)$, then there exist 
		$f'\in\JSL(\mathbf{A},\mathbf{2})$ such that $(f'\circ e')(u)\neq (f'\circ e')(v)$, but then $e(u)=e(v)$ implies 
		$(g_{f'}\circ e_{f'})(u)=(g_{f'}\circ e_{f'})(v)$, since $e_{f'}$ factors through $e$, which is a contradiction since $g_{f'}\circ e_{f'}=f'\circ e'$.
	\item Similar to \ref{exuaf} d) by making the obvious changes.
\end{enumerate}

\subsection{Details for Section \ref{seclocal}}

\subsubsection{Proof of local Birkhoff's Theorem for $\mathsf{T}$--algebras}

\begin{lemma}\label{llemmafree}
	Let $\mathcal{D}$ be a category, $\mathscr{E}/\mathscr{M}$ a factorization system on $\mathcal{D}$, $\mathsf{T}=(T,\eta,\mu)$ a monad on $\mathcal{D}$ and  
	$X\in\mathcal{D}$. Assume (b2). Let $TX\overset{e_X}{\lepi} Q_X$ be a local equational $\mathsf{T}$--theory on $X$. Then $\mathbf{Q_X}\in \md(e_X)$.
\end{lemma}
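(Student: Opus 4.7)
The plan is to mimic the proof of Lemma \ref{lemmafree} in the local setting, just with a single object $X$ in place of the family indexed by $\mathcal{D}_0$. Concretely, I need to show that every $\mathsf{T}$--algebra morphism $f\in \Alg(\mathsf{T})(\mathbf{TX},\mathbf{Q_X})$ factors through $e_X$.

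First I would use assumption (b2), i.e., the projectivity of $\mathbf{TX}$ with respect to $\mathscr{E}$ in $\Alg(\mathsf{T})$, applied to the given $f\colon \mathbf{TX}\to \mathbf{Q_X}$ and to $e_X\colon \mathbf{TX}\lepi \mathbf{Q_X}\in \mathscr{E}$. This yields a $\mathsf{T}$--algebra morphism $g\in \Alg(\mathsf{T})(\mathbf{TX},\mathbf{TX})$ such that $e_X\circ g = f$.

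Next I would invoke the defining property of a local equational $\mathsf{T}$--theory (Definition \ref{defleqth}) applied to this $g$: there exists $g'\in \Alg(\mathsf{T})(\mathbf{Q_X},\mathbf{Q_X})$ making the square commute, that is, $g'\circ e_X = e_X\circ g$. Combining the two equalities gives $g'\circ e_X = e_X\circ g = f$, which exhibits the required factorization of $f$ through $e_X$; hence $\mathbf{Q_X}\models e_X$, i.e., $\mathbf{Q_X}\in \md(e_X)$.

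There is no real obstacle here: the argument is essentially a diagram chase that glues together the projectivity lifting (b2) with the closure-under-substitution property built into the definition of a local equational $\mathsf{T}$--theory. The only subtlety to check is that the morphism $g$ produced by projectivity is indeed a $\mathsf{T}$--algebra morphism with domain $\mathbf{TX}$ (so that the equational-theory clause applies), which is exactly what (b2) guarantees.
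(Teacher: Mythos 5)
Your argument is correct and coincides with the paper's proof, which simply specializes Lemma \ref{lemmafree} to the case $\mathcal{D}_0=\{X\}$: projectivity (b2) lifts $f$ to some $g\in\Alg(\mathsf{T})(\mathbf{TX},\mathbf{TX})$ with $e_X\circ g=f$, and the substitution-closure clause of Definition \ref{defleqth} supplies $g'$ with $g'\circ e_X=e_X\circ g=f$. The only detail worth adding is that membership in $\md(e_X)$ in the local setting also requires $\mathbf{Q_X}$ to be $X$--generated, which is immediate since $e_X$ itself lies in $\Alg(\mathsf{T})(\mathbf{TX},\mathbf{Q_X})\cap\mathscr{E}$.
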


\begin{proof}
	Same as in Lemma \ref{lemmafree} by considering $\mathcal{D}_0=\{X\}$.
\end{proof}

\begin{proposition}\label{linjth}
	Let $\mathcal{D}$ be a category, $\mathscr{E}/\mathscr{M}$ a factorization system on $\mathcal{D}$, $\mathsf{T}=(T,\eta,\mu)$ a monad on $\mathcal{D}$ and  
	$X\in\mathcal{D}$. Assume (b1) and (b2). Let $TX\overset{(e_i)_X}{\lepi} (Q_i)_X$ be a local equational $\mathsf{T}$--theory on $X$, $i=1,2$. If 
	$(e_1)_X\neq (e_2)_X$ then $\md ((e_1)_X)\neq \md((e_2)_X)$.
\end{proposition}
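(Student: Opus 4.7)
The plan is to mirror the proof of Proposition \ref{injth} almost verbatim, specialized to the case where the subcategory of variables has a single object $X$. Since a local equational $\mathsf{T}$--theory on $X$ is just a single $\mathsf{T}$--algebra morphism $(e_i)_X$ rather than a family, the whole argument collapses onto this single morphism.

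First, I would invoke the local analogue (Lemma \ref{llemmafree}) to ensure that $\mathbf{(Q_i)_X}\in \md((e_i)_X)$ for $i=1,2$; this uses (b2) and also requires noting that $\mathbf{(Q_i)_X}$ is $X$--generated (which is immediate since $(e_i)_X\in \mathscr{E}$). Then I would argue by contradiction: suppose $(e_1)_X\neq (e_2)_X$ in the sense that no isomorphism $\phi\in\Alg(\mathsf{T})(\mathbf{(Q_1)_X},\mathbf{(Q_2)_X})$ satisfies $\phi\circ (e_1)_X=(e_2)_X$, yet $\md((e_1)_X)=\md((e_2)_X)$. Then in particular $\mathbf{(Q_1)_X}\in \md((e_2)_X)$ and $\mathbf{(Q_2)_X}\in \md((e_1)_X)$, giving morphisms $g_{21}\in \Alg(\mathsf{T})(\mathbf{(Q_2)_X},\mathbf{(Q_1)_X})$ with $g_{21}\circ (e_2)_X=(e_1)_X$ and $g_{12}\in \Alg(\mathsf{T})(\mathbf{(Q_1)_X},\mathbf{(Q_2)_X})$ with $g_{12}\circ (e_1)_X=(e_2)_X$.

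Next I would compose: $(e_2)_X = g_{12}\circ (e_1)_X = g_{12}\circ g_{21}\circ (e_2)_X$, and since $(e_2)_X$ is epi by (b1) (as $\mathscr{E}/\mathscr{M}$ is proper), this forces $g_{12}\circ g_{21}=id_{(Q_2)_X}$. Symmetrically $g_{21}\circ g_{12}=id_{(Q_1)_X}$, so $g_{12}$ is an isomorphism with $g_{12}\circ (e_1)_X=(e_2)_X$, contradicting the assumption that no such isomorphism exists. Hence $\mathbf{(Q_1)_X}\notin \md((e_2)_X)$ or $\mathbf{(Q_2)_X}\notin \md((e_1)_X)$, and combined with Lemma \ref{llemmafree} this yields $\md((e_1)_X)\neq \md((e_2)_X)$.

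There is no real obstacle here: the argument is structurally identical to Proposition \ref{injth}, and the only thing to double--check is that the restriction to $X$--generated algebras in the definition of $\md(e_X)$ does not interfere --- but it does not, because the objects $\mathbf{(Q_1)_X}$ and $\mathbf{(Q_2)_X}$ that we use to separate the classes are automatically $X$--generated via $(e_i)_X\in \mathscr{E}$. Thus the proof reduces to carefully transporting the epi--cancellation step through the proper factorization system.
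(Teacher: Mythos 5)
Your proof is correct and follows exactly the route the paper takes: the paper simply states that the argument of Proposition \ref{injth} carries over with $\mathcal{D}_0=\{X\}$, which is precisely what you have spelled out, including the correct observation that $\mathbf{(Q_i)_X}$ is automatically $X$--generated.
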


\begin{proof} 
	Same as in Proposition \ref{injth} by considering $\mathcal{D}_0=\{X\}$.
\end{proof}

\begin{proposition}\label{lhspclosureb}
	Let $\mathcal{D}$ be a complete category, $\mathsf{T}=(T,\eta,\mu)$ a monad on $\mathcal{D}$, $\mathscr{E}/\mathscr{M}$ a factorization system on $\mathcal{D}$ and 
	$X\in \mathcal{D}$. Assume (b1), (b2) and (b3). Let $e_X$ be a local equational $\mathsf{T}$--theory on $X$. Then $\md(e_X)$ is a local variety of 
	$X$--generated $\mathsf{T}$--algebras.
\end{proposition}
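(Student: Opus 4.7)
The plan is to adapt the proof of Proposition \ref{hspclosureb} to the local setting, replacing arbitrary products by subdirect products and carrying along the $X$-generated condition. First I would observe that $\md(e_X)$ is nonempty because $\mathbf{Q_X} \in \md(e_X)$ by the local analogue Lemma \ref{llemmafree}, and $\mathbf{Q_X}$ is trivially $X$-generated via $e_X$ itself. The assumptions (b1)--(b3) guarantee, via Lemma \ref{lemmafs1}, that $\mathscr{E}/\mathscr{M}$ lifts to a proper factorization system on $\Alg(\mathsf{T})$, which will be used repeatedly to produce diagonal fill-ins at the algebra level.

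For closure under $\mathscr{E}$-quotients, given $\mathbf{A} \in \md(e_X)$ and $e \in \Alg(\mathsf{T})(\mathbf{A},\mathbf{B}) \cap \mathscr{E}$, the composite $e \circ e_A \in \mathscr{E}$ (where $e_A : \mathbf{TX} \twoheadrightarrow \mathbf{A}$ witnesses that $\mathbf{A}$ is $X$-generated) witnesses that $\mathbf{B}$ is $X$-generated. For any $f \in \Alg(\mathsf{T})(\mathbf{TX},\mathbf{B})$, use projectivity (b2) to lift $f$ along $e$ to some $k : \mathbf{TX} \to \mathbf{A}$, factor $k$ through $e_X$ using $\mathbf{A} \models e_X$, and then compose with $e$. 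For closure under $X$-generated $\mathscr{M}$-subalgebras, given $m \in \Alg(\mathsf{T})(\mathbf{B},\mathbf{A}) \cap \mathscr{M}$ with $\mathbf{A} \in \md(e_X)$ and $\mathbf{B}$ assumed $X$-generated, factor $m \circ f$ through $e_X$ as $g_{m \circ f} \circ e_X$, and apply the diagonal fill-in property to the square built from $e_X \in \mathscr{E}$, $m \in \mathscr{M}$, $f$, and $g_{m \circ f}$ to get $k : \mathbf{Q_X} \to \mathbf{B}$ with $k \circ e_X = f$ (uniqueness of $k \circ e_X = f$ follows from $m$ being mono).

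For closure under subdirect products, let $\mathbf{S}$ be the subdirect product of a family $\{(\mathbf{A}_i, e_i)\}_{i \in I}$ with each $\mathbf{A}_i \in \md(e_X)$, fitting into the diagram of Definition of subdirect product with $e_e : \mathbf{TX} \twoheadrightarrow \mathbf{S}$ in $\mathscr{E}$ and $m_e : \mathbf{S} \hookrightarrow \prod_i \mathbf{A}_i$ in $\mathscr{M}$. So $\mathbf{S}$ is $X$-generated. To show $\mathbf{S} \models e_X$, let $f \in \Alg(\mathsf{T})(\mathbf{TX}, \mathbf{S})$ be arbitrary. For each $i \in I$, $\pi_i \circ m_e \circ f \in \Alg(\mathsf{T})(\mathbf{TX}, \mathbf{A}_i)$ factors through $e_X$ as $g_i \circ e_X$ because $\mathbf{A}_i \models e_X$; by the universal property of $\prod_i \mathbf{A}_i$ these assemble into a morphism $g : \mathbf{Q_X} \to \prod_i \mathbf{A}_i$ with $g \circ e_X = m_e \circ f$. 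Now apply the diagonal fill-in to the commutative square
\begin{center}
\begin{tikzpicture}[>=stealth,shorten >=3pt,node distance=2.5cm,on grid,auto]
\node (a) at (0,1.5) {$TX$};
\node (b) at (3,1.5) {$S$};
\node (c) at (0,0) {$Q_X$};
\node (d) at (3,0) {$\prod_i A_i$};
\path[->>] (a) edge node[left] {$e_X$} (c);
\path[right hook->] (b) edge node[right] {$m_e$} (d);
\path[->] (a) edge node[above] {$f$} (b)
          (c) edge node[below] {$g$} (d);
\path[->,dashed] (c) edge node[above] {$d$} (b);
\end{tikzpicture}
\end{center}
to obtain $d : \mathbf{Q_X} \to \mathbf{S}$ with $d \circ e_X = f$, showing $\mathbf{S} \in \md(e_X)$. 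The main subtle point is the subdirect product case, where one cannot directly invoke the product argument (since $\mathbf{S}$ is only an $\mathscr{M}$-subalgebra of the product): the diagonal fill-in is precisely what transports the factorization from the ambient product down to $\mathbf{S}$, and this is where properness (b1) and the lifting of the factorization system to $\Alg(\mathsf{T})$ are essential.
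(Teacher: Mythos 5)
Your proof is correct and follows essentially the same route as the paper's: nonemptiness via Lemma \ref{llemmafree}, closure under $\mathscr{E}$--quotients and $X$--generated $\mathscr{M}$--subalgebras exactly as in Proposition \ref{hspclosureb}, and subdirect products by assembling the factorizations $g_i\circ e_X$ through the ambient product. The only cosmetic difference is the last step of the subdirect--product case: where you apply the diagonal fill--in to the square $m_e\circ f=g\circ e_X$, the paper instead factors $f$ and $g$ and invokes uniqueness of $\mathscr{E}/\mathscr{M}$--factorizations to produce a comparison isomorphism; the two arguments are interchangeable.
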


\begin{proof}
	$\md (e_X)$ is nonempty  by Lemma \ref{llemmafree}. Put $TX\overset{e_X}{\lepi} Q_X$, then:
	\begin{enumerate}[i)]
		\item $\md(e_X)$ is closed under $\mathscr{E}$--quotients: similar proof to that of Proposition \ref{hspclosureb}. Note that an $\mathscr{E}$--quotient of an $X$--generated 
			$\mathsf{T}$--algebra is $X$--generated.
		\item $\md(e_X)$ is closed under $X$--generated $\mathscr{M}$--subalgebras: similar proof to that of Proposition \ref{hspclosureb}.
		\item $\md(e_X)$ is closed under subdirect products: Let $\mathbf{A}_i\in \md(e_X)$, $i\in I$, and let $\mathbf{S}$ be the subdirect product of the family 
			$\{(\mathbf{A}_i,e_i)\}_{i\in I}$, where $e_i\in \Alg(\mathsf{T})(\mathbf{TX},\mathbf{A}_i)\cap\mathscr{E}$, $i\in I$. Let $f\in \Alg(\mathsf{T}) (\mathbf{TX},\mathbf{S})$ 
			then we have the following commutative diagram:
			\begin{center}	
				\begin{tikzpicture}[>=stealth,shorten >=3pt,
					node distance=2.5cm,on grid,auto,initial text=,
					accepting/.style={thick,double distance=2.5pt}]
					\node (a) at (0,0) {$TX$};
					\node (b) at (4,0) {$S$};
					\node (c) at (2,1) {$R$};
					\node (d) at (0,-1.5) {$Q_X$};
					\node (e) at (2,-1) {$P$};
					\node (f) at (4,-1.5) {$\prod_{i\in I}A_i$};
					\node (g) at (4,-3) {$A_j$};
					\node (h) at (6.5,-1.5) {$TX$};
					\path[->>] (a) edge [] node [left] {$e_X$} (d)
								edge [] node [above] {$e_f$} (c)
							(d) edge [] node [above] {$e_g$} (e)
							(h) edge [] node [above] {$e_e$} (b)
								edge [] node [below] {$e_j$} (g);
					\path[->] (a) edge [] node [above] {$f$} (b)
							(h) edge [] node [above] {$e$} (f)
							(d) edge [] node [below] {$g$} (f)
								edge [] node [below] {$g_j$} (g)
							(f) edge [] node [right] {$\pi_j$} (g);
					\path[right hook->] (e) edge [] node [above] {$m_g$} (f)
									(c) edge [] node [above] {$m_f$} (b)
									(b) edge [] node [right] {$m_e$} (f);
				\end{tikzpicture}
			\end{center}
			where:
			\begin{enumerate}[-]
				\item the two right triangles are obtained from the construction of $\mathbf{S}$,
				\item $f=m_f\circ e_f$ is the factorization of $f$,
				\item $g_j$ is obtained from the property that $\mathbf{A}_j\in \md(e_X)$, i.e., $\pi_j\circ m_e\circ f=g_j\circ e_X$, $j\in I$,
				\item $g$ is obtained from the morphisms $g_j$ by using the universal property of the product $\prod_{i\in I}\mathbf{A}_i$, and 
				\item $g=m_g\circ e_g$ is the factorization of $g$.
			\end{enumerate}
			Then, by the universal property of the product, we have that $m_e\circ m_f\circ e_f=m_g\circ e_g\circ e_X$, where $e_g\circ e_X, e_f\in \mathscr{E}$ and 
			$m_g,m_e\circ m_f\in \mathscr{M}$. Hence, by uniquenes of factorization we have that there is an isomorphism $\phi$ such that 
			$e_f=\phi\circ e_g\circ e_X$, which implies that $f=m_f\circ e_f=m_f\circ \phi\circ e_g\circ e_X$, i.e., $\mathbf{S}\in \md(e_X)$.
	\end{enumerate}
\end{proof}

\begin{proposition}\label{lpropvartoeq}
	Let $\mathcal{D}$ be a complete category, $\mathscr{E}/\mathscr{M}$ a factorization system on $\mathcal{D}$, $\mathsf{T}=(T,\eta,\mu)$ a monad on $\mathcal{D}$ and 
	$X\in \mathcal{D}$. Assume (b1), (b3) and (b4).
	Let $V$ be a local variety of $X$--generated $\mathsf{T}$--algebras. Then $V=\md(e_X)$ for some local equational $\mathsf{T}$--theory $e_X$ on $X$.
\end{proposition}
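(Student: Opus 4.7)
The plan is to construct the local equational $\mathsf{T}$--theory $e_X$ directly from $V$ by taking a kind of ``cogenerator'' for all the $X$--generated algebras that lie in $V$. Using (b4), let $\{e_i\colon \mathbf{TX}\lepi \mathbf{A}_i\}_{i\in I}$ be a set of representatives, up to isomorphism, of all morphisms in $\Alg(\mathsf{T})(\mathbf{TX},\mathbf{A})\cap \mathscr{E}$ with $\mathbf{A}\in V$. I would then form the subdirect product $\mathbf{Q_X}$ of the family $\{(\mathbf{A}_i,e_i)\}_{i\in I}$, whose induced epi $e_X\colon \mathbf{TX}\lepi \mathbf{Q_X}$ comes from factoring the universal map $\mathbf{TX}\to \prod_{i\in I}\mathbf{A}_i$ as $m_e\circ e_X$ with $m_e\in \mathscr{M}$ and $e_X\in \mathscr{E}$. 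By construction, $\mathbf{Q_X}\in V$ since $V$ is closed under subdirect products, and every $e_i$ factors through $e_X$ via $\pi_i\circ m_e$.

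Next I would verify that $e_X$ is a local equational $\mathsf{T}$--theory on $X$ in the sense of Definition \ref{defleqth}. Given $g\in \Alg(\mathsf{T})(\mathbf{TX},\mathbf{TX})$, I would factor $e_X\circ g=m'\circ e'$ with $e'\in \mathscr{E}$ and $m'\in \mathscr{M}$, where (b1) and (b3) together with Lemma \ref{lemmafs1} ensure this factorization lives in $\Alg(\mathsf{T})$. The codomain of $e'$ is then an $X$--generated $\mathscr{M}$--subalgebra of $\mathbf{Q_X}\in V$, hence in $V$, so $e'$ is (up to iso) one of the $e_i$ and therefore factors through $e_X$ as $e'=h\circ e_X$. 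Setting $g':=m'\circ h$ yields $g'\circ e_X=m'\circ e'=e_X\circ g$ as required.

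To conclude $V=\md(e_X)$, I would show both inclusions. For $V\subseteq \md(e_X)$: given $\mathbf{A}\in V$ and $f\in \Alg(\mathsf{T})(\mathbf{TX},\mathbf{A})$, factor $f=m_f\circ e_f$; the codomain of $e_f$ is an $X$--generated $\mathscr{M}$--subalgebra of $\mathbf{A}\in V$, hence in $V$, so $e_f$ belongs to our family and thus factors through $e_X$, giving the factorization of $f$ through $e_X$. For $\md(e_X)\subseteq V$: any $\mathbf{A}\in \md(e_X)$ is $X$--generated, so some $e_A\in \Alg(\mathsf{T})(\mathbf{TX},\mathbf{A})\cap \mathscr{E}$ exists; by assumption $e_A$ factors through $e_X$ as $e_A=h\circ e_X$, and Lemma \ref{lemmafs} applied to $e_A\in \mathscr{E}$ forces $h\in \mathscr{E}$, exhibiting $\mathbf{A}$ as an $\mathscr{E}$--quotient of $\mathbf{Q_X}\in V$, hence in $V$.

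The main obstacle will be the substitution step: it is there that the interplay between the subdirect-product construction, closure under $X$--generated $\mathscr{M}$--subalgebras, and the universal nature of $e_X$ is really used. Note that, in contrast to Theorem \ref{birkthm}, projectivity of $\mathbf{TX}$ (condition (B2)/(b2)) plays no role here; the argument only needs the factorization system and (b4) to index the family of ``test quotients'', which is why the local version succeeds under the weaker hypotheses (b1), (b3), (b4).
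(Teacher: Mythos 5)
Your proposal is correct and follows essentially the same route as the paper: index (via (b4)) all $\mathscr{E}$--quotients of $\mathbf{TX}$ with codomain in $V$, take their subdirect product to obtain $e_X\colon \mathbf{TX}\lepi \mathbf{Q_X}\in V$, verify the substitution property by observing that the image of $e_X\circ g$ is an $X$--generated $\mathscr{M}$--subalgebra of $\mathbf{Q_X}$ and hence again one of the indexed quotients, and prove the two inclusions exactly as you do. Your closing remark that projectivity (b2) is not needed here is also consistent with the paper, which invokes (b2) only afterwards (via Proposition \ref{linjth}) to get uniqueness of the defining theory.
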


\noindent Notice that, if we assume (b2), the local equational $\mathsf{T}$--theory $e_X$ on $X$ is unique by Proposition \ref{linjth}.

\begin{proof}
	We prove the proposition in two steps: i) the construction of $e_X$, and ii) to show that $V=\md(e_X)$. In fact:
	\begin{enumerate}[i)]
		\item Let $H=\{TX\overset{e_i}{\lepi} P_i\}_{i\in I}$ be the collection of all $\mathsf{T}$--algebra morphisms, up to isomorphism, 
			in $\mathscr{E}$ with domain $TX$ and codomain in the variety $V$. By (b4), $H$ is a set. Put $\mathbf{P}=\prod_{i\in I}\mathbf{P}_i$ and let 
			$\pi_i\in \Alg(\mathsf{T})(\mathbf{P},\mathbf{P}_i)$ be the $i$th--projection. Then we have the following commutative diagram in $\Alg(\mathsf{T})$:
			\begin{center}	
				\begin{tikzpicture}[>=stealth,shorten >=3pt,
					node distance=2.5cm,on grid,auto,initial text=,
					accepting/.style={thick,double distance=2.5pt}]
					\node (a) at (2,0) {$Q_X$};
					\node (b) at (6,0.5) {$P_i$};
					\node (c) at (0,0.5) {$TX$};
					\node (d) at (4,0.5) {$P$};
					\path[right hook->,dashed] (a) edge [] node [below] {$\ \ m_X$} (d);
					\path[->>,dashed] (c) edge [] node [below] {$e_X$} (a);
					\path[->,dashed] (c) edge [] node [above] {$k$} (d);
					\path[->] (d) edge [] node [above] {$\pi_i$} (b);
					\path[->>,bend angle=25] (c) edge [bend left,looseness=0.7] 
							node [above] {$e_i$} (b);
				\end{tikzpicture}
			\end{center}	
			where $k\in \Alg(\mathsf{T})(\mathbf{TX},\mathbf{P})$ is obtained from the universal property of the product $\mathbf{P}$ and $k=m_X\circ e_X$ is the 
			factorization of $k$, i.e., $m_X\in \mathscr{M}$ and $e_X\in \mathscr{E}$. Observe that $\mathbf{Q_X}\in V$ since it is a subdirect product of elements in $V$.\\
	
			\noindent {\it Claim:} $TX\overset{e_X}{\lepi} Q_X$ is a local equational $\mathsf{T}$--theory on $X$.\\

			\noindent Let $g\in \Alg(\mathsf{T})(\mathbf{TX},\mathbf{TX})$. We have to prove that there exists $g'\in \Alg(\mathsf{T})(\mathbf{Q_X},\mathbf{Q_Y})$ such 
			that $g'\circ e_X=e_Y\circ g$. In fact, we have the following commutative diagram:
			\begin{center}	
				\begin{tikzpicture}[>=stealth,shorten >=3pt,
					node distance=2.5cm,on grid,auto,initial text=,
					accepting/.style={thick,double distance=2.5pt}]
					\node (b) at (2,0) {$P$};
					\node (c) at (0,1.5) {$TX$};
					\node (d) at (0,0) {$Q_X$};
					\node (e) at (6.5,1.5) {$TX$};
					\node (f) at (6.5,0) {$Q_X$};
					\node (g) at (4,0) {$S=P_j$};
					\path[right hook->] (g) edge [] node [below] {$m_{e_X\circ g}$} (f)
							(d) edge [] node [below] {$m_X$} (b);
					\path[->>] (c) edge [] node [left] {$e_X$} (d)
							(e) edge [] node [right] {$e_X$} (f)
							(c) edge [] node [right] {$\ \ e_{e_X\circ g}=e_j$} (g);
					\path[->] 	(c) edge [] node [above] {$g$} (e)
							(b) edge [] node [below] {$\pi_j$} (g);			
				\end{tikzpicture}
			\end{center}	
			where $e_X\circ g=m_{e_X\circ g}\circ e_{e_X\circ g}$ is the factorization of $e_X\circ g$ and $\mathbf{S}$ is the codomain of $e_{e_X\circ g}$. 
			From that we have then that $\mathbf{S}$ is an $X$--generated $\mathscr{M}$--subalgebra of $\mathbf{Q_X}\in V$. Hence $\mathbf{S}\in V$ and therefore 
			$\mathbf{S}=\mathbf{P}_j$ and $e_{e_X\circ g}=e_j$ for some $j\in I$. Finally, commutativity of the triangle follows from the definition of $\mathbf{Q_X}$ above.
			Therefore, $e_X$ is a local equational $\mathsf{T}$--theory on $X$.
		\item Let us prove that $V=\md(e_X)$.\\
			$(\supseteq)$: Let $\mathbf{A}\in \Alg(\mathsf{T})$ such that $\mathbf{A}\in \md (e_X)$. Since $\mathbf{A}$ is $X$--generated, there exists 
			$s_A\in \Alg(\mathsf{T})(\mathbf{TX},\mathbf{A})\cap\mathscr{E}$. As $\mathbf{A}\in \md (e_X)$, the morphism $s_A$ factors through $e_X$ as 
			$s_A=g_{s_A}\circ e_X$. As we have that $g_{s_A}\circ e_X=s_A\in \mathscr{E}$ then, by (b1) and Lemma \ref{lemmafs}, we have that 
			$g_{s_A}\in \Alg(\mathsf{T})(\mathbf{Q_X},\mathbf{A})\cap \mathscr{E}$, and hence $\mathbf{A}\in V$ since it is an $\mathscr{E}$--quotient of $\mathbf{Q_X}\in V$.\\

			\noindent $(\subseteq)$: Let $\mathbf{A}\in \Alg(\mathsf{T})$ such that $\mathbf{A}\in V$. Let $f\in \Alg(\mathsf{T})(\mathbf{TX},\mathbf{A})$, then we have the following 
			commutative diagram:
			\begin{center}	
				\begin{tikzpicture}[>=stealth,shorten >=3pt,
					node distance=2.5cm,on grid,auto,initial text=,
					accepting/.style={thick,double distance=2.5pt}]
					\node (a) at (0,0) {$TX$};
					\node (b) at (3.5,0) {$Q_X$};
					\node (c) at (3.5,1.5) {$P$};
					\node (d) at (3.5,3) {$Z=P_i$};
					\node (e) at (0,3) {$A$};
					\path[->>] (a) edge [] node [below] {$e_X$} (b)
							edge [] node [left] {$e_{f}=e_i$} (d);
					\path[->] (a) edge [] node [left] {$f$} (e)
							(c) edge [] node [right] {$\pi_i$} (d);
					\path[right hook->] (b) edge [] node [right] {$m_X$} (c)
								(d) edge [] node [above] {$m_{f}$} (e);
				\end{tikzpicture}
			\end{center}	
			where $f=m_{f}\circ e_{f}$ is the factorization of $f$ with $m_{f}\in \mathscr{M}$ and $e_{f}\in \mathscr{E}$, which implies that $\mathbf{Z}\in V$ since 
			it is an $X$--generated $\mathscr{M}$--subalgebra of $\mathbf{A}\in V$. Therefore, $\mathbf{Z}=\mathbf{P}_i$ and $e_{f}=e_i$ for some $i\in I$. Hence the 
			factorization of $f$ through $e_X$ follows from the definition of $e_X$ (see i) above) which implies that $\mathbf{A}\in \md(e_X)$.
	\end{enumerate}
\end{proof}

\subsubsection{Proof of local Birkhoff's Theorem for finite $\mathsf{T}$--algebras}

\begin{proposition}\label{lhspclosurer}
	Let $\mathcal{D}$ be a concrete complete category such that its forgetful functor preserves epis, monos and products, $\mathsf{T}$ a monad on $\mathcal{D}$, $X\in \mathcal{D}$ 
	and $\mathscr{E}/\mathscr{M}$ a factorization system on $\mathcal{D}$. Assume (b$_f$1) to (b$_f$3). Let $\mathtt{P}_X$ be a local pseudoequational $\mathsf{T}$--theory 
	on $X$. Then $\md_f(\mathtt{P}_X)$ is a local pseudovariety of $X$--generated $\mathsf{T}$--algebras.
\end{proposition}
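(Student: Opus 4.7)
The plan is to mimic the proof of Proposition \ref{hspclosurer}, but restrict attention to $X$--generated algebras and replace the closure under finite products by the closure under finite subdirect products in the style of Proposition \ref{lhspclosureb}(iii). First I would observe that $\md_f(\mathtt{P}_X)$ is nonempty, since if $e\in \mathtt{P}_X$ has codomain $\mathbf{A}$ then $\mathbf{A}$ is finite, $X$--generated, and, by property i) of Definition \ref{deflpet} applied to the singleton family $\{e\}$, satisfies $\mathtt{P}_X$ (one can also argue directly as in Lemma \ref{lemmafr}).

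For closure under $\mathscr{E}$--quotients, let $e\in \Alg(\mathsf{T})(\mathbf{A},\mathbf{B})\cap\mathscr{E}$ with $\mathbf{A}\in \md_f(\mathtt{P}_X)$. Then $\mathbf{B}$ is finite since the forgetful functor preserves epis, and $X$--generated since composing any $\mathscr{E}$--morphism $\mathbf{TX}\lepi\mathbf{A}$ with $e$ stays in $\mathscr{E}$. Given $f\in \Alg(\mathsf{T})(\mathbf{TX},\mathbf{B})$, use (b$_f$2) to lift $f$ to $k\in \Alg(\mathsf{T})(\mathbf{TX},\mathbf{A})$ with $e\circ k=f$; factor $k$ through some $e'\in \mathtt{P}_X$ and conclude as in Proposition \ref{hspclosurer}(i). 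For closure under $X$--generated $\mathscr{M}$--subalgebras, let $m\in \Alg(\mathsf{T})(\mathbf{B},\mathbf{A})\cap \mathscr{M}$ with $\mathbf{A}\in \md_f(\mathtt{P}_X)$ and $\mathbf{B}$ $X$--generated. Then $\mathbf{B}$ is finite by preservation of monos, and for $f\in \Alg(\mathsf{T})(\mathbf{TX},\mathbf{B})$ the morphism $m\circ f$ factors through some $e\in \mathtt{P}_X$, at which point the argument of Proposition \ref{hspclosurer}(ii) (factor $f$ and $g$, apply uniqueness of factorization) produces a factorization of $f$ through an element of $\mathtt{P}_X$.

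The main obstacle is closure under \emph{finite} subdirect products, which combines the product argument of Proposition \ref{hspclosurer}(iii) with the subdirect product argument of Proposition \ref{lhspclosureb}(iii). Let $\mathbf{S}$ be the subdirect product of a finite family $\{(\mathbf{A}_i,e_i)\}_{i\in I}$ with each $\mathbf{A}_i\in \md_f(\mathtt{P}_X)$, embedded as $m_e\colon\mathbf{S}\hookrightarrow \prod_{i\in I}\mathbf{A}_i$ with projections $\pi_i$. Then $\mathbf{S}$ is finite since the forgetful functor preserves finite products and monos, and it is $X$--generated by construction. For $f\in \Alg(\mathsf{T})(\mathbf{TX},\mathbf{S})$, each $\pi_i\circ m_e\circ f$ factors through some $p_i\in \mathtt{P}_X$ as $\pi_i\circ m_e\circ f=g_i\circ p_i$; by property i) of Definition \ref{deflpet}, there is $p\in \mathtt{P}_X$ with codomain $\mathbf{Q}$ such that every $p_i$ factors through $p$, yielding morphisms $h_i\colon \mathbf{Q}\to \mathbf{A}_i$ with $\pi_i\circ m_e\circ f=h_i\circ p$, and by the universal property of the product $h\in \Alg(\mathsf{T})(\mathbf{Q},\prod \mathbf{A}_i)$ with $\pi_i\circ h=h_i$, so $m_e\circ f=h\circ p$.

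To conclude, factor $f=m_f\circ e_f$ in $\Alg(\mathsf{T})$; then $(m_e\circ m_f)\circ e_f=h\circ p$ is a square with $p\in \mathscr{E}$ on the top and $m_e\circ m_f\in \mathscr{M}$ on the bottom, so the diagonal fill-in of Lemma \ref{lemmafs1} produces $d\in \Alg(\mathsf{T})(\mathbf{Q},\mathrm{dom}(e_f))$ with $d\circ p=e_f$ and $(m_e\circ m_f)\circ d=h$. Since $e_f=d\circ p\in \mathscr{E}$, Lemma \ref{lemmafs} forces $d\in \mathscr{E}$, and then property ii) of Definition \ref{deflpet} applied to $p\in \mathtt{P}_X$ and $d\in \mathscr{E}$ gives $e_f=d\circ p\in \mathtt{P}_X$, so $f=m_f\circ e_f$ factors through an element of $\mathtt{P}_X$, i.e., $\mathbf{S}\in \md_f(\mathtt{P}_X)$. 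The crucial ingredients in this last step are the directedness axiom i) and the downward-closure axiom ii) of a local pseudoequational $\mathsf{T}$--theory, together with the diagonal fill-in lifted to $\Alg(\mathsf{T})$.
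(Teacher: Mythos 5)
Your proof is correct and follows essentially the same route as the paper's: the paper likewise reduces the $\mathscr{E}$--quotient and $\mathscr{M}$--subalgebra cases to Proposition \ref{hspclosurer}, and handles finite subdirect products by using directedness (axiom i) of Definition \ref{deflpet}) to obtain a common refinement, the universal property of the product, and the $(\mathscr{E},\mathscr{M})$--factorization of $f$; your final step differs only in that you invoke the diagonal fill--in together with the downward--closure axiom ii) to place $e_f$ itself in $\mathtt{P}_X$, whereas the paper compares the two factorizations of $m_e\circ f$ and concludes directly that $f$ factors through the common refinement $h\in\mathtt{P}_X$ using axiom i) alone. One caveat: your primary justification of nonemptiness is a non sequitur --- axiom i) applied to the singleton $\{e\}$ only yields a member of $\mathtt{P}_X$ through which $e$ factors, not that \emph{every} morphism $\mathbf{TX}\to\mathbf{A}$ factors through some member of $\mathtt{P}_X$ --- but your parenthetical fallback via Lemma \ref{lemmafr} (which uses projectivity (b$_f$2) and the substitution axiom iii)) is exactly the paper's argument, so the proof stands.
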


\begin{proof}
	$\md_f (\mathtt{P}_X)$ is nonempty by Lemma \ref{lemmafr} with $\mathcal{D}_0=\{X\}$. Now we have:
	\begin{enumerate}[i)]
		\item $\md_f(\mathtt{P}_X)$ is closed under $\mathscr{E}$--quotients: similar proof to that of Proposition \ref{hspclosurer}. Note that 
			an $\mathscr{E}$--quotiend of an $X$--generated algebra is also $X$--generated.
		\item $\md_f(\mathtt{P}_X)$ is closed under $X$--generated $\mathscr{M}$--subcoalgebras: similar proof to that of Proposition \ref{hspclosurer}.
		\item $\md_f(\mathtt{P}_X)$ is closed under finite subdirect products: Let $I$ be a finite set and let $\mathbf{A}_i\in \md_f(\mathtt{P}_X)$, $i\in I$. Let $\mathbf{S}$ be 
			the subdirect product of the family $\{(\mathbf{A}_i,e_i)\}_{i\in I}$, where $e_i\in \Alg(\mathsf{T})(\mathbf{TX},\mathbf{A}_i)\cap\mathscr{E}$, $i\in I$. Let 
			$f\in \Alg(\mathsf{T}) (\mathbf{TX},\mathbf{S})$ then we have the following commutative diagram:
			\begin{center}	
				\begin{tikzpicture}[>=stealth,shorten >=3pt,
					node distance=2.5cm,on grid,auto,initial text=,
					accepting/.style={thick,double distance=2.5pt}]
					\node (a) at (0,0) {$TX$};
					\node (b) at (4,0) {$S$};
					\node (c) at (2,1) {$R$};
					\node (d) at (-2,-3) {$Q_j$};
					\node (e) at (2,-1) {$P$};
					\node (f) at (4,-1.5) {$\prod_{i\in I}A_i$};
					\node (g) at (4,-3) {$A_j$};
					\node (h) at (0,-1.5) {$Q$};
					\node (i) at (6.5,-1.5) {$TX$};
					\path[->>] (a) edge [] node [left] {$e_j$} (d)
								edge [] node [right] {$h$} (h)
								edge [] node [above] {$e_f$} (c)
							(h) edge [] node [above] {$e_g$} (e)
							(i) edge [] node [above] {$e_e$} (b)
								edge [] node [below] {$e_j$} (g);
					\path[->] 	(a) edge [] node [above] {$f$} (b)
							(h) edge [] node [below] {$g$} (f)
							(d) edge [] node [below] {$g_j$} (g)
							(f) edge [] node [right] {$\pi_j$} (g)
							(h) edge [] node [right] {\ \ $h_j$} (d)
							(i) edge [] node [above] {$e$} (f);
					\path[right hook->] (e) edge [] node [above] {$m_g$} (f)
									(b) edge [] node [right] {$m_e$} (f)
									(c) edge [] node [above] {$m_f$} (b);
				\end{tikzpicture}
			\end{center}
			where:
			\begin{enumerate}[-]
				\item the two right triangles are obtained from the construction of $\mathbf{S}$,
				\item $f=m_f\circ e_f$ is the factorization of $f$,
				\item $g_j$ is obtained from the property that $\mathbf{A}_j\in \md_f(\mathtt{P}_X)$, i.e., $\pi_j\circ m_e\circ f=g_j\circ e_j$, $j\in I$ and $e_j\in \mathtt{P}_X$,
				\item $h\in\mathtt{P}_X$ is obtained from the morphisms $e_j\in \mathtt{P}_X$ by using the property that $\mathtt{P}$ is a local pseudoequational 
					$\mathsf{T}$--theory on $X$, i.e., $h_j\circ h=e_j$, $j\in I$,
				\item $g$ is obtained from the morphisms $g_j\circ h_j$ by using the universal property of the product $\prod_{i\in I}\mathbf{A}_i$, and 
				\item $g=m_g\circ e_g$ is the factorization of $g$.
			\end{enumerate}
			Then, by the universal property of the product, we have that $m_e\circ m_f\circ e_f=m_g\circ e_g\circ h$. Now, as $e_g\circ h, e_f\in \mathscr{E}$ and 
			$m_g,m_e\circ m_f\in \mathscr{M}$, by uniquenes of factorization we have that there is an isomorphism $\phi$ such that $e_f=\phi\circ e_g\circ h$, which implies 
			$f=m_f\circ e_f=m_f\circ \phi\circ e_g\circ h $, i.e., $\mathbf{S}\in \md_f(\mathtt{P}_X)$.
	\end{enumerate}
\end{proof}

\noindent Given a class $K$ of $X$--generated algebras in $\Alg_f(\mathsf{T})$ define the collection of morphisms $\mathtt{P}_X(K)$ as follows:
\[
	\mathtt{P}_X(K)=\text{$\mathsf{T}$--algebra morphisms in $\mathscr{E}$ with domain $\mathbf{TX}$ and codomain in $K$.}
\]

\begin{proposition}
	Let $\mathcal{D}$ be a concrete complete category such that its forgetful functor preserves epis, monos and products, $\mathsf{T}$ a monad on $\mathcal{D}$, $X\in \mathcal{D}$ 
	and $\mathscr{E}/\mathscr{M}$ a factorization system on $\mathcal{D}$. Assume (b$_f$1) and (b$_f$3). Let $K$ be a local pseudovariety of 
	$X$--generated $\mathsf{T}$--algebras. Then $\mathtt{P}_X(K)$ is a local pseudoequational $\mathsf{T}$--theory on $X$.
\end{proposition}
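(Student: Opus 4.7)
The plan is to verify directly the three conditions of Definition~\ref{deflpet} for $\mathtt{P}_X(K)$, following the template of the global proof for $\Ps_K$ in Section~\ref{secreit} but replacing the use of closure under finite products by closure under finite subdirect products — which forces one to route the construction through an $\mathscr{E}/\mathscr{M}$--factorization. Nonemptiness is immediate: the terminal algebra $\mathbf{1}$ is in $K$ (as the nullary subdirect product), and the unique morphism $\mathbf{TX}\to\mathbf{1}$ lies in $\mathscr{E}$, witnessing $\mathtt{P}_X(K)\neq\emptyset$. Finiteness of codomains in $\mathtt{P}_X(K)$ is automatic since $K$ consists of finite algebras. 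Throughout, (b$_f$1) and (b$_f$3) are used only to lift $\mathscr{E}/\mathscr{M}$ to $\Alg(\mathsf{T})$ via Lemma~\ref{lemmafs1}, so that all factorizations below make sense.

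Conditions ii) and iii) will be the easy parts, each reducing to one of the closure properties of $K$. For ii), given $e\in\mathtt{P}_X(K)$ with codomain $\mathbf{A}\in K$ and a $\mathsf{T}$--algebra morphism $e'\in\mathscr{E}$ with domain $\mathbf{A}$, the composite $e'\circ e$ is in $\mathscr{E}$ (the $\mathscr{E}$--class is closed under composition) and its codomain lies in $K$ by closure under $\mathscr{E}$--quotients. For iii), given $f\in \mathtt{P}_X(K)$ and $h\in\Alg(\mathsf{T})(\mathbf{TX},\mathbf{TX})$, the factorization $f\circ h=m_{f\circ h}\circ e_{f\circ h}$ exhibits the codomain of $e_{f\circ h}$ as both an $\mathscr{M}$--subalgebra of the codomain of $f$ (itself in $K$) and as $X$--generated by $e_{f\circ h}$ itself; closure under $X$--generated $\mathscr{M}$--subalgebras then delivers $e_{f\circ h}\in\mathtt{P}_X(K)$.

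The hard part is condition i), because the obvious move — form $\prod_{i\in I}\mathbf{A}_i$ and declare it in $K$ — is unavailable: local pseudovarieties are closed only under \emph{subdirect} products, not arbitrary finite products. The repair leverages the fact that every $f_i\in\mathtt{P}_X(K)$ already lies in $\mathscr{E}$, so each $\mathbf{A}_i$ is canonically $X$--generated via $f_i$. Given a finite family $\{f_i\colon\mathbf{TX}\lepi\mathbf{A}_i\}_{i\in I}$ in $\mathtt{P}_X(K)$, I would form the universal morphism $f\colon\mathbf{TX}\to\prod_{i\in I}\mathbf{A}_i$ satisfying $\pi_i\circ f=f_i$, factor it as $f=m_f\circ e_f$, and observe that by the definition of subdirect product recalled in Section~\ref{seclocal} the codomain $\mathbf{S}$ of $e_f$ is exactly the subdirect product of $\{(\mathbf{A}_i,f_i)\}_{i\in I}$. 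Closure of $K$ under finite subdirect products then gives $\mathbf{S}\in K$, so $e_f\in\mathtt{P}_X(K)$; and each $f_i=\pi_i\circ m_f\circ e_f$ factors through $e_f$, completing condition i) and hence the proof.
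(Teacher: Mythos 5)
Your proof is correct and follows essentially the same route as the paper's: conditions ii) and iii) reduce to closure under $\mathscr{E}$--quotients and $X$--generated $\mathscr{M}$--subalgebras respectively, and for condition i) you factor the induced map $\mathbf{TX}\to\prod_{i\in I}\mathbf{A}_i$ and recognize the $\mathscr{E}$--part's codomain as the subdirect product of $\{(\mathbf{A}_i,f_i)\}_{i\in I}$, exactly as in the paper. Your explicit treatment of nonemptiness (which the paper's proof leaves implicit) is a welcome addition, though strictly speaking what the empty subdirect product puts in $K$ is the $\mathscr{E}/\mathscr{M}$--image of $\mathbf{TX}\to\mathbf{1}$ rather than $\mathbf{1}$ itself; this does not affect the conclusion that $\mathtt{P}_X(K)\neq\emptyset$.
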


\begin{proof}
	We have to prove properties i), ii), and ii) of Definition \ref{deflpet}. In fact:
	\begin{enumerate}[i)]
		\item Let $I$ be a finite set and $f_i\in \mathtt{P}_X(K)$, $i\in I$. Let $\mathbf{A}_i\in K$ be the codomain of $f_i$. Let 
			$\mathbf{A}=\prod_{i\in I}\mathbf{A}_i$ with projections $\pi_i\in \Alg(\mathsf{T})(\mathbf{A},\mathbf{A}_i)$. Now, by definition of 
			$\mathbf{A}$, there exists $f\in \Alg(\mathsf{T})(\mathbf{TX},\mathbf{A})$ such that $\pi_i\circ f=f_i$. Let $f=m_f\circ e_f$ be the factorization of $f$ with 
			$e_f\in \Alg(\mathsf{T})(\mathbf{TX},\mathbf{Q})\cap\mathscr{E}$ and $m_f\in \Alg(\mathsf{T})(\mathbf{Q},\mathbf{A})\cap\mathscr{M}$. 
			We have that $\mathbf{Q}\in K$ since it is the subdirect product of $\{(\mathbf{A}_i,f_i)\}_{i\in I}$. Hence, $e_f\in \mathtt{P}_X(K)$ and every $f_i$ factors through it.
		\item Follows from the property that $K$ is closed under $\mathscr{E}$--quotients.
		\item Let $f\in \mathtt{P}_X(K)$ with codomain $\mathbf{A}\in K$, and $h\in \Alg(\mathsf{T})(\mathbf{TX},\mathbf{TX})$. Let 
			$f\circ h=m_{f\circ h}\circ e_{f\circ h}$ be the factorization of $f\circ h$ such that $e_{f\circ h}\in \Alg(\mathsf{T})(\mathbf{TX},\mathbf{Q})\cap\mathscr{E}$ and 
			$m_{f\circ h}\in \Alg(\mathsf{T})(\mathbf{Q},\mathbf{A})\cap\mathscr{M}$. Then $\mathbf{Q}\in K$ since it is an $X$--generated $\mathscr{M}$--subcoalgebra of 
			$\mathbf{A}\in K$, which implies $e_{f\circ h}\in \mathtt{P}_X(K)$.
	\end{enumerate}
\end{proof}

\begin{proposition}
	Let $\mathcal{D}$ be a concrete complete category such that its forgetful functor preserves epis, monos and products, $\mathsf{T}$ a monad on $\mathcal{D}$, $X\in \mathcal{D}$ 
	and $\mathscr{E}/\mathscr{M}$ a factorization system on $\mathcal{D}$. Assume (b$_f$1) to (b$_f$3). Let $\mathtt{P}_X$ be a local pseudoequational $\mathsf{T}$--theory on $X$ 
	and let $K$ be a local pseudovariety of $X$--generated $\mathsf{T}$--algebras. Then:
	\begin{enumerate}[i)]
		\item $\mathtt{P}_X({\md_f(\mathtt{P}_X)})=\mathtt{P}_X$.
		\item $\md_f(\mathtt{P}_X(K))=K$.
	\end{enumerate}
\end{proposition}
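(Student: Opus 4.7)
My plan is to obtain Proposition \ref{leilpvar} as a straightforward composition of two bijections: the one supplied by the local Birkhoff theorem for finite $\mathsf{T}$--algebras (Theorem \ref{lreitthm}), and a bijection between local pseudoequational $\mathsf{T}$--theories on $X$ and local pseudocoequational $\mathsf{B}$--theories on $Y$ obtained via the lifted duality of Proposition \ref{monadtocomonad}. Concretely, Theorem \ref{lreitthm} under (b$_f$1)--(b$_f$3) already gives a bijection between local pseudovarieties of $X$--generated $\mathsf{T}$--algebras and local pseudoequational $\mathsf{T}$--theories on $X$, so the only work is to set up the second bijection; once that is in place the result follows by composition.

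For the duality step, Proposition \ref{monadtocomonad} lifts the contravariant equivalence $F\dashv G$ between $\mathcal{C}$ and $\mathcal{D}$ to a contravariant equivalence $\widehat{F}\dashv\widehat{G}$ between $\Coalg(\mathsf{B})$ and $\Alg(\mathsf{T})$. Under this lifted duality the free $\mathsf{T}$--algebra $\mathbf{TX}$ corresponds to the cofree $\mathsf{B}$--coalgebra $\mathbf{BY}$, and the factorization system $\mathscr{E}/\mathscr{M}$ on $\mathcal{D}$ (lifted to $\Alg(\mathsf{T})$ by Lemma \ref{lemmafs1}) is carried to the dual factorization system on $\mathcal{C}$ and $\Coalg(\mathsf{B})$, swapping $\mathscr{E}$ and $\mathscr{M}$. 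So I would define the map $\mathtt{P}_X\mapsto \mathtt{R}_Y$ by sending a local pseudoequational $\mathsf{T}$--theory $\mathtt{P}_X$ to $\mathtt{R}_Y := \{\widehat{G}(f)\mid f\in \mathtt{P}_X\}$, and conversely use $\widehat{F}$.

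The verification then amounts to checking that each of the three closure properties in Definition \ref{deflpet} dualizes to the corresponding property in Definition \ref{deflpcet}. Specifically: (i) ``every finite family factors through a common $f\in \mathtt{P}_X$'' dualizes to ``every finite family factors through a common $m\in \mathtt{R}_Y$'', since $\widehat{G}$ reverses arrows and preserves finite factorizations; (ii) ``$\mathtt{P}_X$ is closed under postcomposition with $\mathscr{E}$'' dualizes to ``$\mathtt{R}_Y$ is closed under precomposition with $\mathscr{M}$'', since $\mathscr{E}$ is sent to $\mathscr{M}$; and (iii) the substitution closure under $h\in \Alg(\mathsf{T})(\mathbf{TX},\mathbf{TX})$, together with the factorization $f\circ h = m_{f\circ h}\circ e_{f\circ h}$, dualizes term-for-term to closure under $\widehat{G}(h)\in \Coalg(\mathsf{B})(\mathbf{BY},\mathbf{BY})$ with the dual factorization. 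The nonemptiness and finiteness conditions transport along the duality as well, since $\widehat{G}$ is an equivalence of categories.

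The main obstacle I anticipate is the finiteness clause: Definition \ref{deflpet} requires $\mathtt{P}_X$ to consist of morphisms with \emph{finite codomain}, while Definition \ref{deflpcet} requires $\mathtt{R}_Y$ to consist of morphisms with \emph{finite domain}. Strictly, this needs the duality between $\mathcal{C}$ and $\mathcal{D}$ (and its lift) to send finite objects to finite objects, which is implicit in the running examples (e.g.\ $\set/\CABA$, $\poset/\ACDL$, $\fvec/\tvec$, $\JSL/\TBJSL$ all restrict to equivalences on finite objects) but should be flagged as an additional hypothesis, together with the usual concreteness assumptions matching those of (b$_f$1)--(b$_f$3). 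Once this compatibility is granted, the bijection $\mathtt{P}_X\leftrightarrow \mathtt{R}_Y$ is immediate and composition with Theorem \ref{lreitthm} yields the desired one--to--one correspondence.
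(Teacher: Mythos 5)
Your proposal does not address the statement at hand. The proposition to be proved is the pair of identities $\mathtt{P}_X({\md_f(\mathtt{P}_X)})=\mathtt{P}_X$ and $\md_f(\mathtt{P}_X(K))=K$, i.e., the fact that the assignments $\mathtt{P}_X\mapsto \md_f(\mathtt{P}_X)$ and $K\mapsto \mathtt{P}_X(K)$ are mutually inverse. This is purely a statement about the monad side: no dual category $\mathcal{C}$, comonad $\mathsf{B}$, or coequational theory occurs in it. It is precisely the ingredient that establishes the bijection asserted in Theorem \ref{lreitthm}. Your argument instead targets Proposition \ref{leilpvar} (the abstract Eilenberg--type correspondence for local pseudovarieties) and takes Theorem \ref{lreitthm} as a given input, so with respect to the actual statement your proposal is circular: you assume the bijection whose two halves you were asked to verify, and then spend all your effort on a duality step that is not part of this proposition.

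What is actually needed is an elementary diagram chase. For i), the inclusion $\mathtt{P}_X\subseteq \mathtt{P}_X(\md_f(\mathtt{P}_X))$ follows from the fact that every codomain of a morphism in $\mathtt{P}_X$ is itself a model (Lemma \ref{lemmafr} specialized to $\mathcal{D}_0=\{X\}$, which in turn uses projectivity (b$_f$2) and the substitution closure iii) of Definition \ref{deflpet}); for the reverse inclusion, if $e$ has codomain $\mathbf{A}\in\md_f(\mathtt{P}_X)$ then $e=g\circ e'$ for some $e'\in\mathtt{P}_X$, Lemma \ref{lemmafs} forces $g\in\mathscr{E}$, and closure property ii) of Definition \ref{deflpet} gives $e\in\mathtt{P}_X$. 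For ii), the inclusion $K\subseteq \md_f(\mathtt{P}_X(K))$ uses that the $\mathscr{E}/\mathscr{M}$--image of any $f\colon\mathbf{TX}\to\mathbf{A}$ with $\mathbf{A}\in K$ is an $X$--generated $\mathscr{M}$--subalgebra of $\mathbf{A}$, hence lies in $K$; the reverse inclusion uses that an $X$--generated model $\mathbf{A}$ admits some $e\in\mathscr{E}$ from $\mathbf{TX}$ which factors through a member of $\mathtt{P}_X(K)$ with codomain $\mathbf{Q}\in K$, exhibiting $\mathbf{A}$ as an $\mathscr{E}$--quotient of $\mathbf{Q}$ and hence in $K$. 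None of these steps appears in your write-up, so the proof of the stated proposition is missing in its entirety.
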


\begin{proof}\hfill
	\begin{enumerate}[i)]
		\item ($\subseteq$): Let $e\in \mathtt{P}_X({\md_f(\mathtt{P}_X)})$ with codomain $\mathbf{A}\in \md_f(\mathtt{P}_X)$. As $\mathbf{A}\in \md_f(\mathtt{T})$, 
			there exists $e'\in \mathtt{P}_X$ such that $e$ factors through $e'$ as $g\circ e'=e$. By (b$_f$2) and (b$_f$4) we have that $g$ is a $\mathsf{T}$--algebra morphism. 
			As $g\circ e'=e\in \mathscr{E}$, then $g\in \mathscr{E}$, and, as $\mathtt{P}_X$ is a pseudoequational $\mathsf{T}$--theory, then $g\circ e'=e\in \mathtt{P}_X$.\\ 
			($\supseteq$): Let $e\in \mathtt{P}_X$ with codomain $\mathbf{A}$. Using Lemma \ref{lemmafr} with $\mathcal{D}_0=\{X\}$, $\mathbf{A}\in\md_f(\mathtt{P}_X)$, 
			i.e., $e\in \mathtt{P}_X({\md_f(\mathtt{P}_X)})$.
		\item Let $\mathbf{A}$ be an $X$--generated algebra in $\Alg_f(\mathsf{T})$.\\
			($\supseteq$): Assume that $\mathbf{A}\in K$. We have to show that $\mathbf{A}\in \md_f(\mathtt{P}_X(K))$. In fact, let $f\in \Alg(\mathsf{T})(\mathbf{TX},\mathbf{A})$ 
			and $f=m_f\circ e_f$ be the factorization of $f$ with $e_f\in \Alg(\mathsf{T})(\mathbf{TX},\mathbf{Q})\cap\mathscr{E}$ and 
			$m_f\in\Alg(\mathsf{T})(\mathbf{Q},\mathbf{A})\cap\mathscr{M}$. Then $\mathbf{Q}\in K$ since it is an $X$--generated $\mathscr{M}$--subcoalgebra of 
			$\mathbf{A}\in K$, which implies that $e_f\in \mathtt{P}_X(K)$, i.e., $\mathbf{A}\in \md_f(\mathtt{P}_X(K))$.\\
			($\subseteq$): Assume that $\mathbf{A}\in \md_f(\mathtt{P}_X(K))$. Since $\mathbf{A}$ is an $X$--generated $\mathsf{T}$--algebra, there exists 
			$e\in \Alg(\mathsf{T})(\mathbf{TX},\mathbf{A})\cap\mathscr{E}$. As $\mathbf{A}\in \md_f(\mathtt{P}_X(K))$, $e$ factors through some $e'\in \mathtt{P}_X(K)$ as 
			$e=g\circ e'$. Let $\mathbf{Q}\in K$ be the codomain of $e'$. 	As $g\circ e'=e\in \mathscr{E}$, then $g\in \mathscr{E}$ and 
			$g\in \Alg(\mathsf{T})(\mathbf{Q},\mathbf{A})$ which implies that $\mathbf{A}\in K$ since it is an $\mathscr{E}$--quotient of $\mathbf{Q}\in K$.
	\end{enumerate}
\end{proof}

\end{document}